% Use the University of Michigan thesis class.
\documentclass[thesis,backref]{thesis-umich}
% Packages and Document Specific Commands
\usepackage{amsthm} % must be loaded before newtxmath, for theorems and related environments
\usepackage{newtxmath} % Changes math mode from computer modern to Times... because IEEE, thought they publish the best journals, is dumb as bricks when it comes to typesetting, and now I've gotten used to having extra script letters without having to load any other packages
\usepackage[euler]{textgreek} % upright greek characters
\usepackage{enumitem} % Changes enumerate environments
\usepackage{stmaryrd} % allows blackboard bold square brackets
\usepackage{hhline} % double hline without interupting vertical lines in tables
\usepackage{IEEEtrantools} % allows IEEEeqnarray
\usepackage{colortbl} % Allows change of hline color via \arrayrulecolor{color}\hline
\usepackage{cancel} % allows \cancel command, which puts a slash through characters
\usepackage{notoccite}% prevents cites in captions from misnumbering your references
\usepackage{rotating} % allow landscape figures
\usepackage{pdflscape} % landscape mode
\usepackage{afterpage} % ????? used with pdflscape here but not sure why https://tex.stackexchange.com/questions/433411/floating-sideways-table-on-a-rotated-page-in-pdf-viewer/433414#433414
\usepackage{multicol} % For multiple columns in the acknowledgmenets
\usepackage{lipsum}
\usepackage{changepage} % margin manipulation

% Algorithms
\usepackage{algorithm} % algorithm environment
\usepackage{algpseudocode} % algorithmic environment
\definecolor{commentgray}{gray}{0.4}

% % SI Units

% \usepackage{siunitx} % Unit formatting
% \sisetup{
% per-mode=fraction,
% fraction-function=\tfrac,
% %output-exponent-marker=\text{e}
% output-exponent-marker=\text{\scriptsize E}
% }
% \DeclareSIUnit{\poise}{P}
% \DeclareSIUnit{\pixel}{pixel}
% \DeclareSIUnit{\byte}{B}

% Dependencies:
% [euler]{textgreek}

% Roman letter vectors

\newcommand{\evec}{\mathbf{e}}
\newcommand{\fvec}{\mathbf{f}}
\newcommand{\gvec}{\mathbf{g}}

\newcommand{\rvec}{\mathbf{r}}

\newcommand{\uvec}{\mathbf{u}}

\newcommand{\xvec}{\mathbf{x}}
\newcommand{\yvec}{\mathbf{y}}

% Greek letter vectors

\newcommand{\etavec}{\textnormal{\textbf{\texteta}}}

\newcommand{\phivec}{\textnormal{\textbf{\textphi}}}

% Operators
\newcommand{\partialfrac}[2]{\frac{\partial #1}{\partial #2}}
\newcommand{\jacobian}[2]{\partialfrac{#1}{#2}}
\newcommand{\norm}[1]{\left\lVert#1\right\rVert}
\DeclareMathOperator*{\argmin}{arg\,min}

\DeclareMathOperator*{\mean}{mean}

\DeclareMathOperator*{\RMS}{RMS}
\DeclareMathOperator*{\range}{range}

% Fields
\newcommand{\real}{\mathbb{R}}

\newcommand{\integer}{\mathbb{Z}}
% ------------------------------------------------
% Revision
% ------------------------------------------------
\usepackage[normalem]{ulem}
% \definecolor{revgreen}{RGB}{0,100,0}
% \newcommand{\rev}[1]{\color{revgreen} #1 \color{black}}
% \newcommand{\cut}[1]{\color{red}\sout{#1}\color{black}}

% ------------------------------------------------
% Environments
% ------------------------------------------------

% Proofs and such
\theoremstyle{definition}
\newtheorem{definition}{Definition}[chapter]
\theoremstyle{plain}
\newtheorem{theorem}{Theorem}[chapter]
\newtheorem{corollary}{Corollary}[theorem]
\newtheorem{lemma}{Lemma}
\theoremstyle{remark}
\newtheorem{remark}{Remark}[chapter]

% ------------------------------------------------
% Bookkeeping
% ------------------------------------------------

% So you can have comments instead of numbers for ``numbered'' equations
\newcommand{\eqcomment}[1]{%
  \text{\phantom{(#1)}} \tag{#1}
}

% Improved hyperlinking in IEEEeqnarray with subequations
\newcommand{\eqlabel}[1]{\addtocounter{equation}{-1}\refstepcounter{equation}\label{#1}}
% Example usage
% \begin{IEEEeqnarray}{RL}
% \eqlabel{eq:mainlabel}
% \IEEEyesnumber\IEEEyessubnumber*
%  y&=mx+b
%  \label{eq:subeqa}
%  \\
%  r=x^2+y^2
%  \label{eq:subeqb}
% \end{IEEEeqnarray}

% Clone labels
\makeatletter
\newcommand{\clonelabel}[2]{\@bsphack
  \expandafter\ifx\csname r@#2\endcsname\relax
  \else\protected@write\@auxout{}{\string\newlabel{#1}%
    {\csname r@#2\endcsname}}%
  \fi
  \expandafter\ifx\csname r@#2@cref\endcsname\relax
  \else\protected@write\@auxout{}{\string\newlabel{#1@cref}%
    {\csname r@#2@cref\endcsname}}%
  \fi
  \@esphack}
\makeatother
% Example Usage
% \begin{equation}
%     y=mx+b
%     \label{eq:line}
% \end{equation}
% \clonelabel{eq:line}{eq:linear}

 % "Fancy" Subequations
 % (Jankier than IEEEeqnarray but works with align environment)
\newcounter{subeq}

% Usage:
% \begin{align}
% \newsubeqblock
% \mysubeq
% x&=expression1
% \\
% \mysubeq
% y&=expression2
% \end{align}

% Suspend and resume enumeration iteming
\newcommand{\suspend}[1]{
\newcounter{#1}\setcounter{#1}{\value{enumi}}
}
\newcommand{\resume}[1]{
\setcounter{enumi}{\value{#1}}
}
% Use example
% \begin{enumerate}
%     \item Foo
%     \suspend{breakA}
% \end{enumerate}
% blahblahblah
% \begin{enumerate}
%     \resume{breakA}
%     \item Foo
%     \suspend{breakB}
% \end{enumerate}
% blahblahblah
% \begin{enumerate}
%     \resume{breakB}
%     \item Foo
% \end{enumerate}

% References to ILC model restrictions
\newcommand{\refRmu}{\ref{R1}}
\newcommand{\refRlin}{\ref{R2}}
\newcommand{\refRTI}{\ref{R3}}
\newcommand{\refRsmooth}{\ref{Rsmooth}}

% CH 4
\newcommand{\directAILC}{(\ref{eq:ILCclassic}), (\ref{eq:gammaprior})}
\newcommand{\originalmod}{(\ref{eqn:sys})} % equation number for model definition at beginning of NILC description
\newcommand{\linsys}{(\ref{eq:linall})} % linearization of originalmod about a trajectory
\newcommand{\SAILControl}{(\ref{eq:ILCclassic}), (\ref{eq:gammanew})}

% CH 5
\newcommand{\sysDef}{(\ref{eq:sysDefPWA})}
\newcommand{\baseAssumptions}{\ref{C:reachable}-\ref{C:muc}}
\newcommand{\exclusiveSwitching}{(A5.10)}
\newcommand{\controlModel}{(\ref{eq:monox})-(\ref{eq:monoy})}

% ------------------------------------------------
% Text Spacing and Formatting
% ------------------------------------------------

% Footnote text without marker for indicating where the chapter material was published
\newcommand\blfootnote[1]{%
  \begingroup
  \renewcommand\thefootnote{}\footnote{#1}%
  \addtocounter{footnote}{-1}%
  \endgroup
}

% Acknowledgments
\newcommand{\twoliner}[1]{
\begin{singlespace}
\noindent
#1
\end{singlespace}
}

% CH 2
\hyphenation{parab-oloidal}

% CH 5
 % so matrices can be initiated \begin{bmatrix}[stretch] where stretch is a scalar... to stop matrices with fractions from being scrunched
\makeatletter
\renewcommand*\env@matrix[1][\arraystretch]{%
  \edef\arraystretch{#1}%
  \hskip -\arraycolsep
  \let\@ifnextchar\new@ifnextchar
  \array{*\c@MaxMatrixCols c}}
\makeatother

% ------------------------------------------------
% Acronyms and Terminology
% ------------------------------------------------

% CH 2

\newcommand{\muAM}{\textmu-AM}
\newcommand{\ejet}{e-jet printing}

% CH 4

\newcommand{\AILC}{NILC}
\newcommand{\nonmin}{non-minimum}
\newcommand{\nonminphase}{\ac{NMP}}
\newcommand{\ILILC}{ILILC}
% Change algorithm title to Procedure
\floatname{algorithm}{Procedure}
\makeatletter 
 
\@addtoreset{algorithm}{chapter} 
\makeatother

% CH 5
\newcommand{\PWA}{PWA}
\newcommand{\NILC}{NILC}
\newcommand{\NMP}{NMP}

% ------------------------------------------------
% Notation
% ------------------------------------------------

% CH 2
\newcommand{\shift}{\mathcal{T}}
\newcommand{\cv}{CV}
\newcommand{\vol}{\mathcal{V}}
\newcommand{\inputcoef}{b}
\newcommand{\reset}{\psi}
\newcommand{\reseto}{\reset_0}
\newcommand{\reseti}{\reset_1}

% CH 3
\newcommand{\uDim}{{n_u}}
\newcommand{\xDim}{{n_x}}
\newcommand{\yDim}{{n_y}}
\newcommand{\planeVec}{p}

\newcommand{\planeQuant}{n_P}
\newcommand{\offsetNum}{b}
\newcommand{\offsetVec}{\beta}
\newcommand{\state}{x}
\newcommand{\uMIMO}{u}
\newcommand{\yMIMO}{y}
\newcommand{\sysvec}{\begin{bmatrix}\state(k)\\\uMIMO(k)\end{bmatrix}}
\newcommand{\sysvecT}{\begin{bmatrix}\state(k)^T, &\uMIMO(k)^T\end{bmatrix}^T}
\newcommand{\setQuant}{20}
\newcommand{\expQuant}{50}

\newcommand{\gLength}{949}
\newcommand{\RAM}{\SI{16}{\giga\byte}}
\newcommand{\CPU}{\SI{4}{\giga\hertz}}
\newcommand{\casadiTime}{29 seconds}
\newcommand{\matlabTime}{1.3 hours}
\newcommand{\convZero}{8}
\newcommand{\convMean}{12}
\newcommand{\convEnv}{14}
\newcommand{\modelErrVec}{e_{\theta}}
\newcommand{\modelError}{\norm{\modelErrVec}_2}
\newcommand{\region}{Q}
\newcommand{\regionQuant}{{|\region|}}
\newcommand{\qdx}{q}
\newcommand{\learnMat}{L}
\newcommand{\tdx}{\ell}
\newcommand{\thetamod}{\hat{\theta}}

% CH4
\newcommand{\mfin}{m_{\text{final}}}
\newcommand{\idx}{\kappa} % arbitrary time indexing
\newcommand{\xmod}{\hat{x}}
\newcommand{\flin}{\bar{f}}
\newcommand{\fmod}{\hat{f}}
\newcommand{\fmodsim}{\tilde{f}}
\newcommand{\fetasim}{\tilde{f}_{\eta}}

\newcommand{\ymod}{\hat{y}}
\newcommand{\yvecmod}{\hat{\yvec}}
\newcommand{\ypreview}{\hat{\mathscr{y}}}
\newcommand{\hmod}{\hat{h}}
\newcommand{\hlin}{\bar{h}}
\newcommand{\gtrue}{\gvec}
\newcommand{\glin}{\bar{\gvec}}
\newcommand{\gmod}{\hat{\gvec}}
\newcommand{\etamod}{\hat{\eta}}
\newcommand{\etamodsim}{\tilde{\eta}}
\newcommand{\etavecsim}{\tilde{\etavec}}
\newcommand{\Asim}{\tilde{A}}
\newcommand{\Bsim}{\tilde{B}}
\newcommand{\force}{c}
\newcommand{\angtruth}{\psi}
\newcommand{\Mp}{M_p}
\newcommand{\Mpmod}{\hat{M}_p}
\newcommand{\Mc}{M_c}
\newcommand{\Mcmod}{\hat{M}_c}
\newcommand{\dptruth}{d_p}
\newcommand{\dctruth}{d_c}
\newcommand{\dcmod}{\hat{d}_c}
\newcommand{\dpmod}{\hat{d}_p}
\newcommand{\grav}{\mathscr{g}}
\newcommand{\kmod}{\hat{\kappa}}
\newcommand{\unoise}{\omega_\force}
\newcommand{\ynoise}{\omega_y}
\newcommand{\rprcv}{r^*} % effective reference
\newcommand{\param}{\modelErrVec}
\newcommand{\parammod}{\hat{\param}}
\newcommand{\convset}{\mathcal{C}} % set of convergent trials
\newcommand{\convRate}{\mathcal{R}} % transient convergence rate
\newcommand{\stab}{{\mathscr{s}}}
\newcommand{\unstab}{{\mathscr{u}}}
\newcommand{\evalq}{n_\stab}

% CH 5
\newcommand{\kappadx}{_\kappa}
\newcommand{\kdxplus}[1]{_{k+#1}}
\newcommand{\kdxminus}[1]{_{k-#1}}
\newcommand{\kdx}{_k}
\newcommand{\dmu}{\mu_g} % Dynamic relative degree
\newcommand{\Qset}{\region}
\newcommand{\Asum}{\mathbf{A}}
\newcommand{\Bsum}{\mathbf{B}}
\newcommand{\Csum}{\mathbf{C}}
\newcommand{\Dsum}{\mathbf{D}}
\newcommand{\Fsum}{\mathbf{F}}
\newcommand{\Gsum}{\mathbf{G}}
\newcommand{\Msum}{\mathbf{M}}
\newcommand{\sumqQ}{\sum_{q=1}^{\regionQuant}}
\newcommand{\locvec}{\delta}
\newcommand{\sigvec}{\delta^*}
\newcommand{\sigvecset}{\Delta^*}
\newcommand{\offsetvec}{\offsetVec}
\newcommand{\inv}{\overline}
\newcommand{\prevC}{\mathcal{C}}
\newcommand{\prevD}{\mathcal{D}}
\newcommand{\prevG}{\mathcal{G}}
\newcommand{\anticaus}{\Psi}
\newcommand{\decoup}{\tilde}
\newcommand{\statedc}{\decoup{\state}}
\newcommand{\Asumdc}{\decoup{\Asum}}
\newcommand{\Bsumdc}{\decoup{\Bsum}}
\newcommand{\Fsumdc}{\decoup{\Fsum}}
\newcommand{\Pdc}{\decoup{P}}
\newcommand{\extracts}{\mathscr{I}^\stab}
\newcommand{\extractu}{\mathscr{I}^\unstab}
\newcommand{\refr}{r}
\newcommand{\controltotal}{\force}
\newcommand{\controlfb}{y}
\newcommand{\Acm}{\hat{A}} % control model
\newcommand{\Bcm}{\hat{B}}
\newcommand{\Ccm}{\hat{C}}
\newcommand{\Dcm}{\hat{D}}
\newcommand{\xcm}{\hat{x}}
\newcommand{\statemod}{\xmod}
\newcommand{\Asummod}{\hat{\Asum}}
\newcommand{\Bsummod}{\hat{\Bsum}}
\newcommand{\Fsummod}{\hat{\Fsum}}
\newcommand{\Csummod}{\hat{\Csum}}
\newcommand{\Dsummod}{\hat{\Dsum}}
\newcommand{\Gsummod}{\hat{\Gsum}}
\newcommand{\ldx}{\tdx}
\newcommand{\uLift}{\uvec}
\newcommand{\rLift}{\rvec}
\newcommand{\yLift}{\yvec}
\newcommand{\yLiftmod}{\hat{\yLift}}
\newcommand{\gLift}{\gvec}
\newcommand{\ginvLift}{\hat{\gLift}^{-1}}

% Title of the thesis
\title{
Hybrid Systems, 
Iterative Learning Control, 
and
Non-minimum Phase
}

% Author name
\author{Isaac A. Spiegel}

% ORCID iD
\orcid{0000-0002-4415-9190}

% Email
\email{ispiegel@umich.edu}

% Department
\department{Mechanical Engineering}

% Year of completion
\year=2021

% Front Matter
% Frontispiece
\frontispiece{\includegraphics[scale=1.3]{}}

% Default style for front pages
\frontpagestyle{1}

% Dedication
\dedication{ %
To those who value this,
and to those who benefit from it.
}

\acknowledgments{
It is difficult to find the right words for this acknowledgement. 
There are so many people I want to address in so many different ways.

This dissertation would not have been possible without funding from the University of Michigan Department of Mechanical Engineering, the National Science Foundation, and the National Institute of Standards and Technology. Equally essential are the members of my doctoral committee: Kira Barton, Tulga Ersal, Chinedum Okwudire, and Necmiye Ozay.
And I am especially grateful for all the collaborators that have contributed to this research: Kira Barton, Tom Oomen, Nard Strijbosch, Robin de Rozario, Patrick Sammons, and Tom van de Laar.

As I conclude this doctoral program, I have also been reflecting on a bigger picture that sprawls beyond the walls of graduate schools and my time within them.
To me, this dissertation has not been 5 years, but rather 13-20 years in in the making.
Because I see this as the culmination of my life to date, I keep thinking back over 
my life's full span.
There's a lot to be humble about.

I bring this 
up because over 
time
I've accrued a great list people to whom I'm deeply indebted, but have not repaid and probably cannot repay, and that's a humbling feeling.
To thank in words is not enough.
These are debts of friendship, time, effort, grace, pain, forgiveness, and love.
I wish I could give you all what you are worth or at least 
give you the feeling deep inside that you are worth more than can be given, and that you are loved in kind.
I realize this is 
a bit dramatic, but what can I do?
So much of me has been shaped by you.
I'm carried by the gifts you've given me.
And I know that you are carried by gifts given to you by others.
There are all these gifts begetting gifts, rippling outward.
So if it is not within my power to repay you, I hope to at least pay it forward as best I can.

You've made my life special.
I will do my best by you, and by those I meet as time goes on.

\vfill

\noindent
So here's to 
those that shaped me,
from the beginning:

\vfill
\pagebreak

\begin{adjustwidth}{0in}{-0.3in}
\begin{multicols}{3}
\noindent
Lisa Spiegel
\\
David Spiegel
\\
Daniel Spiegel
\\
Breanna Spiegel
\\
Lauren Rugge
\\
Jessica Rugge
\\
Sean Kennedy
\\
Dick Rugge
\\
Marilyn Stamper
\\
Hannah Rosenfeld
\\
Colleen Rosenfeld
\\
Marc Rosenfeld
\\
Ian Rosenfeld

\twoliner{Master Mark W. Pattison, \\ \emph{Sir}}
\noindent
Master Danielle Page-Pattison
\\
Kevin Shaw
\\
Dani Gradisher
\\
Lauren Shaw
\\
Anita Hayworth
\\
Dave Shaw
\\
Henry Sweat
\\
Pauline Disch
\\
Alex Sweat
\\
Jola Prosceno
\\
Mark Sweat
\\
Ryan Giza
\\
John McBlair
\\
Master Matt Goodsell
\\
Merl Goodsell
\\
Master Adam Rosenberg
\\
Amanda Rosenberg
\\
Master David Lam
\\
Dan Gehlhaar
\\
Heidi Hill
\\
Michael Eckstein
\\
Dave Lambillotte
\\
Nick Lococo

\twoliner{Erika Shaw, \\ \emph{Honorary High Admiral V}}
\noindent
Matthew Eliceiri
\\
Margie Wilhelm
\\
Judy Holland
\\
Linda Rohmund
\\
Jon Rasmussen
\\
Steven Walker
\\
Eric Smith
\\
Tyler Workinger
\\
Christian Workinger
\\
Lynn Light
\\
Sally Ptak
\\
Sydney Spiegel
\\
Ezra Spiegel
\\
Charlene Spiegel
\\
Charles Spiegel
\\
Lily-Jean Ciccotelli
\\
Tony Ciccotelli
\\
Steve Ciccotelli
\\
Fred Ciccotelli
\\
Eleonore Stump
\\
Donald Stump
\\
Nathan Stump
\\
Aaron Stump
\\
Monica Stump
\\
Cathy Noord
\\
Thomas Noord
\\
Mick Noord
\\
Francesca Noord
\\
Jake Bert
\\
Jason Marshal
\\
McKenna Taylor
\\
Andrew Rappolt
\\
Ben Love
\\
Jen Wilson

\twoliner{Matt Golman,\\ \emph{Founder}}

\twoliner{Asa Puckette,\\ \emph{Co-Founder}}

\twoliner{Alec Asperslag,\\ \emph{Honorary High Admiral I}}

\twoliner{Grant Lawrence Thompson,\\ \emph{Honorary Vice Admiral I}}
\noindent
Elena Clémençon-Charles
\\
Megan Gaffney
\\
Andrea Reyes
\\
Ana Reyes
\\
Micael Maya-Peinl
\\
Julian Noble
\\
Natalie Van Valkenberg
\\
Geoff Van Valkenberg
\\
Alex Van Valkenberg
\\
Maria Van Valkenberg
\\
William Van Valkenberg
\\
Max Van Valkenberg
\\
George Stimson
\\
Rie Tsuboi
\\
Darlene Blanchard
\\
Justin Conn
\\
Michael Santos
\\
Tatiana Roy
\\
Suki Berry
\\
Annie Tarabini
\\
Lonnie Safarik
\\
Amy Kuo
\\
Emma Lindley

\twoliner{Ariel Jones,\\\emph{Honorary High Admiral III}}

\twoliner{Stan Austin,\\\emph{Honorary High Admiral IV}}
\noindent
Sammy Ness
\\
Savonnah Turner
\\
Jaden Pratt
\\
Mike Shook
\\
Blaze Newman
\\
Ryan Cardenes
\\
Caroline Pollock
\\
Scott Huntley
\\
Jason Berend
\\
Karen des Jardins
\\
Stephen des Jardin
\\
Darryl Walton
\\
Bobbie Walton
\\
Nick Foote
\\
Kevin Brice
\\
Willie Saake
\\
Karen Saake
\\
Andrew Dalager
\\
Gabe Cemaj
\\
Andy Colletta
\\
Mark Leon
\\
Dana Pede
\\
Betty Huang
\\
Kyla Wilson
\\
Victoria Ly
\\
Olivia Perry
\\
Linda Lam
\\
Logan Mercer
\\
Zack Mayeda
\\
Cassidy Mayeda
\\
Laura O'Hagan
\\
Dustin Atlas
\\
Ben Atlas
\\
Tory Bader
\\
Evan Wong
\\
Zyad Hammad
\\
Megan Bradley
\\
Sean Holcomb
\\
Greg Jackson
\\
Yukina Tanaka
\\
Haruka Nomura
\\
Satoru Takagi
\\
Ali Candlin
\\
Zoë Winkworth
\\
Michael Reza Farzi
\\
Sydney Lefabvre
\\
Emily Doherty
\\
Michael McCutchen
\\
Lara Harding
\\
Alyx Barbeau
\\
Delani Davis
\\
Drew Spiller
\\
Haley Kovacs
\\
Claire Li
\\
Jessica Resnick
\\
Adam Resnick
\\
Andy Packard
\\
Emily Jensen
\\
Oliver O'Reilly
\\
James Casey
\\
Amando Miller
\\
Chris Bulpitt
\\
Cynthia Tan
\\
Danny Wilson
\\
Louis Malito
\\
Bardia Ganji
\\
Roshena Macpherson
\\
John Wallace
\\
Ben Chen
\\
Brian Graf
\\
Nicole Schauser
\\
Derek Chou
\\
Pol Llado
\\
Alex Cuevas
\\
Mandy Huo
\\
Marc Russel
\\
Stephen Rhodes
\\
Taito Nabeshima
\\
Hiroo Yugami
\\
Fumitada Iguchi
\\
Makoto Shimizu
\\
Asaka Kohiyama
\\
Kyotaka Konno
\\
Hiroaki Kobayashi
\\
Mari Suzuki
\\
Kunihiko Yanagisawa
\\
Hiroki Sato
\\
Kasemchai Chaiprasobphol
\\
Syo Onodera
\\
Taro Fukushige
\\
Jun Sakai
\\
Yuta Fujiwara
\\
Yoshikazu Shibata
\\
Ryusuke Mihara
\\
Takahiro Kumagai
\\
Shoya Murayama
\\
Shoma Onuki
\\
Daniel Kenji Pederson
\\
Lea Rossander
\\
Frej Rossander
\\
Eivind Rossander
\\
Jenny Rossander
\\
Tatsuki Momose
\\
Sayaka Ono
\\
Shogo Onishi
\\
Nick Gustafson
\\
Charlie Heckroth
\\
Lei Lee
\\
Perry Hobbs
\\
Yuri Hobbs
\\
Nick Ong
\\
Juri Mizuki
\\
Haruka Ochisai
\\
Meng Han
\\
Robert Baumann
\\
Gracie Burdeos
\\
Oskar Södergren
\\
Kaori Ohrui
\\
Marian Lumongsod
\\
Ryan Thompson
\\
Takuya Yashima
\\
Mizuki Temma
\\
Lauri Pokka
\\
Samuli Koivu
\\
Philip Grajetzki
\\
Megumi Sugano
\\
Midori Hirose
\\
Heraldo Stefanon
\\
Christopher Tacub
\\
Amrita Srinivasan
\\
Ammama
\\
Tathappa
\\
Ganesh Srinivasan
\\
Terri Srinivasan
\\
Gita Srinivasan
\\
Michael Srinivasan
\\
Raghu Srinivasan
\\
Lisa Tran
\\
Melis Şahinöz
\\
Narayanan Kidambi
\\
Katie McLaughlin
\\
Xiao-Yu Fu
\\
CJ Linton
\\
Meg Brennan
\\
Kira Barton
\\
Alex Shorter
\\
Efe Balta
\\
Ilya Kovalenko
\\
Shreyas Kousik
\\
Miguel Saez
\\
Zheng Wang
\\
Deema Totah
\\
Mike Quann
\\
Ding Zhang
\\
Delaramm Afkhami
\\
Nazanin Farjam
\\
Anne Gu
\\
Joaquin Gabaldon
\\
Meghna Menon
\\
Max Wu
\\
Dory Yang
\\
Yaqing Xu
\\
Loubna Baroudi
\\
Max Toothman
\\
Mingjie Bi
\\
Tyler Toner
\\
Yassine Qamsane
\\
Lai-Yu Leo Tse
\\
Christopher Pannier
\\
Berk Altın
\\
Max Toothman
\\
Bo Fu
\\
Allison White
\\
C. David Remy
\\
Robin Rodríguez
\\
Ahmet Mazacioglu
\\
Nils Smit-Anseeuw
\\
Audrey Sedal
\\
Mario Medina
\\
Suhak Lee
\\
Matthew Porter
\\
Rachel Vitali
\\
Patrick Holmes
\\
Dan Bruder
\\
Kim Ingraham
\\
Misaki Nozawa
\\
Hasnaa Rabbat
\\
Remy Pelzer
\\
Nico Mbolamena
\\
Sherry Lin
\\
Shuyu Long
\\
Tom Oomen
\\
Nard Strijbosch
\\
Noud Mooren
\\
Nic Dirkx
\\
Robin de Rozario
\\
Lennart Blanken
\\
Enzo Evers
\\
Fahim Sakib
\\
Camiel Beckers
\\
Robert van der Weijst
\\
Joey Reinders
\\
Chyannie Amarillio
\\
Rishi Mohan
\\
Daniel Veldman
\\
Frans Verbruggen
\\
Tom van de Laar
\end{multicols}
\end{adjustwidth}

}

% Preface
\preface{

Technology advances. As it does so, both the complexity of 
our 
machines and the performance demands upon them grow.
To meet these performance demands, ever more sophisticated automatic control schemes are developed for these machines.
Most controllers are designed based on knowledge---i.e. a mathematical model---of a machine's dynamics, and the performance achievable by such controllers is usually a function of how well the dynamical model represents the true system behavior.
The simultaneous increases in system complexity and performance requirements thus clash when the system complexity exceeds the ability of a mathematical modeling framework to capture.

The development of new, more flexible dynamical modeling frameworks (hybrid systems) is a clear response to this problem, one to which the controls engineering community has taken 
heartily
\cite{Tabuada2009,Heemels2001,Cassandras2008}.
However, improved model fidelity is only half the battle. 
There must also be \emph{compatibility} between a 
controller synthesis method and a model framework.
In other words, a second clash occurs when model complexity exceeds the ability of a control framework to utilize.

The main purpose of this dissertation is to bridge the gap between high fidelity modeling and controller synthesis for systems required to perform a task repetitively, such as manufacturing robots producing many copies of the same product. 
This gap-bridging is done by contributing new mathematical tools to both the theory of modeling and the theory of controller synthesis for repetitive systems (\ac{ILC}). 

There are three primary contributions. First is a new mathematical representation of hybrid systems that admits several mathematical operations crucial to \ac{ILC} synthesis, but that had been undefined for prior hybrid system representations.
The second and third contributions deal with the particularly challenging case of dynamical models that have unstable inverses (often called \ac{NMP} models).
Because many \ac{ILC} syntheses use model 
inversion
\cite{Bristow2006},
such \ac{NMP} models can cause instability of the overall controlled system.
To combat this problem for both hybrid and non-hybrid systems, the second contribution is a new \ac{ILC} framework that enables the incorporation of alternative notions of model inversion with \ac{ILC}, circumventing the instability problem.
Finally, the third contribution is a 
method for deriving such an alternative, stable inverse specifically for a class of hybrid \ac{NMP} models.
Thus, the three primary contributions culminate in the ability to perform \ac{ILC} with hybrid models, even when the models have unstable inverses.

These results address challenges faced by a broad array of systems, from pick-and-place robots to piezoactuators \cite[ch. 6]{DeWit1996}, \cite{Hogan2020,Schitter2002}.
However, the original impetus for this research is 
\ac{e-jet} Printing,
which epitomizes the challenges arising from the tandem growth of performance demands and system complexity.
E-jet printing, like commonplace inkjet printing, revolves around the ejection of liquid from a nozzle onto a target surface. 
Unlike inkjet printing, 
\ac{e-jet} 
printing is driven by the delicate interaction of applied electric fields with particles in the liquid meniscus at the nozzle tip.
See Figure \ref{fig:ejetintro} for illustration.
This gives e-jet the capacity for submicron resolution patterning for printed optics and electronics 
\cite{Cho2020,Afkhami2020}. 
But e-jet printers are sensitive machines, and require automatic control to perform reliably. Unfortunately, the ejection process is too small and too fast for online control of the fluid flow \emph{during} ejection to be practical. Thus, the clearest path to control e-jet printing is by analyzing performance \emph{after} each ejection and seeking to improve performance from ejection to ejection, i.e. \ac{ILC}.

However, the system electrohydrodynamics are complicated enough that no traditional control-oriented modeling effort has been able to capture the complete ejection process's dynamics. Thus began research on the hybrid modeling of e-jet printing 
and the \ac{ILC} of hybrid systems, resulting in the first primary contribution.
The hybrid \ac{e-jet} modeling research is provided in this dissertation as 
an 
application-focused
contribution motivating the theoretical work.

Because the complexity and sensitivity of e-jet printing produces substantial confounding factors, preliminary physical implementation of this 
new
theory was attempted on a comparatively simple inkjet printhead positioning system. 
The \ac{ILC} scheme in focus failed to produce stable control.
This spurred the research into the \ac{ILC} failure mechanisms, leading to the discovery of the \ac{ILC} scheme's fundamental inability to leverage models with unstable inverses, which the printhead positioning system happened to have. The second and third major contributions address this problem.

In other words, this dissertation contributes new broadly applicable control-theoretic tools born from challenges encountered in practice on physical systems.

\begin{figure}
    \centering
    \includegraphics[scale=0.25]{}
    \\
    \vspace{0.2in}
    \includegraphics[scale=0.52]{}
    \caption{
    \emph{Top:}
    Photograph of the e-jet printer at the University of Michigan. The conductive nozzle and substrate are connected by a high voltage amplifier to apply electric fields to the fluid at the nozzle tip.
    \emph{Bottom:}
    Schematic and time lapse photography 
    illustrating the ejection process
    (with \SI{0}{\milli\second} set at the beginning of a voltage pulse).
    The meniscus base radius $r_M$, determined by the nozzle outer diameter, is \SI{22}{\micro\meter} here, and in general can be as small as $\SI{1}{\micro\meter}$. The speed of the ejection is determined largely by the fluid viscosity, and can be as fast as \SI{10}{\micro\second} for inviscid fluids such as water. The \SI{2.4}{\milli\second} process shown here is for a fluid 300 times the viscosity of water, i.e. about the viscosity of castor oil. Further details are given in Chapter \ref{ch:2}.
    }
    \label{fig:ejetintro}
\end{figure}
}
% \preface{\input{textPrefaceFake}}

% Committee
\committee{ %
Associate Professor Kira Barton, 
Chair
\\
Associate Research Scientist Tulga Ersal%, University of Michigan
\\
Associate Professor Chinedum Okwudire%, University of Michigan
\\
Associate Professor Necmiye Ozay%, University of Michigan
}

% Chair must be entered separately for formatting reasons.
\chair{Kira Barton}

% Commands to hide or show lists of figures, tables, etc.
%\hidelistoftables
% \showlistofprograms
% \showlistofappendices

% Definition of any abbreviations used.
\abbreviations{
 \acro{AM}{Additive Manufacturing}
 \acro{uAM}[\textmu-AM]{Microscale Additive Manufacturing}
 \acro{CV}{Control Volume}
 \acro{DoD}{Drop on Demand}
 \acro{e-jet}{Electrohydro\-dynamic Jet}
 \acro{FDM}{Fused Deposition Modeling}
 \acro{ILC}{Iterative Learning Control}
 \acro{ILILC}{Invert-Linearize Iterative Learning Control}
 \acro{LTI}{Linear Time Invariant}
 \acro{MLD}{Mixed Logical Dynamical}
 \acro{MPC}{Model Predictive Control}
 \acro{NILC}{Newton Iterative Learning Control}
 \acro{NMP}{Non-minimum Phase}
 \acro{ODE}{Ordinary Difference/Differential Equation}
 \acro{PD}{Proportional-Derivative}
 \acro{PWA}{Piecewise Affine}
 \acro{PWD}{Piecewise Defined}
 \acro{ZPETC}{Zero Phase Error Tracking Control}
}

% Some abstract text
\abstract{Hybrid systems have steadily grown in popularity over the last few decades because they ease the task of modeling complicated nonlinear systems. 
Legged locomotion, robotic manipulation, and additive manufacturing are representative examples of systems benefiting from hybrid modeling.
They are also prime examples of repetitive processes; gait cycles in walking, product assembly tasks in robotic manipulation, and material deposition in additive manufacturing.
Thus, they would also benefit substantially from Iterative Learning Control (ILC),
a class of feedforward controllers for repetitive systems
that achieve
high performance in output reference tracking
by learning from the errors of past process cycles.
However, the literature is bereft of ILC syntheses from hybrid models.
The main thrust of this dissertation is to provide a broadly applicable theory of ILC for deterministic, discrete-time hybrid systems, i.e. piecewise defined (PWD) systems.

A type of ILC called Newton ILC (NILC) serves as the foundation for this mission due to its admittance of an unusually broad range of nonlinearities.
Preventing the synthesis of NILC from hybrid models is the fact that
contemporary hybrid modeling frameworks do not admit closed-form function composition of a single state transition formula capturing the complete hybrid system dynamics.
This dissertation offers a new, closed-form PWD modeling framework to solve this problem.

However, NILC itself is not without flaw. This dissertation's research reveals that it generally fails to converge when synthesized from models with unstable inverses (i.e. non-minimum phase (NMP) models), a class that includes flexible-link robotic manipulators.
Thus, to fulfill the goal of providing the most broadly applicable control theory possible, improvement to NILC must be made to avoid the operation that causes divergence when applied to NMP systems (a particular matrix inversion).

Stable inversion---a technique for generating stable state trajectories from unstable systems by decoupling their stable and unstable modes---is identified as a valuable tool in this endeavor.
This concept is well-explored for linear time invariant systems,
but 
stable inversion for hybrid systems has not been explored 
by the prior art.
Thus, to focus the research, this dissertation specifically examines piecewise affine (PWA) systems (a subset of PWD systems) for the study of NMP hybrid system control.
For PWA systems (and their PWD superset), in addition to a lack of stable inversion, a general, closed-form solution to the conventional inversion problem is also absent from the literature. Having a closed-form conventional inverse model is a prerequisite for stable inversion, but inversion of PWA models is nontrivial because the uniqueness of PWA system inverses is not guaranteed as it is for ordinary affine systems.
Therefore, to achieve the first ILC of a hybrid system with an unstable inverse, 
theory for both
conventional inversion and stable inversion must be delivered for PWA systems.

In summary, the three main gaps addressed by this dissertation are (1) the lack of compatibility between existing hybrid modeling frameworks and ILC synthesis techniques, (2) the failure of NILC for NMP systems, and (3) the lack of inversion and stable inversion theory for PWA systems. These issues are addressed by (1) developing a closed-form representation for PWD systems, (2) developing a new ILC framework informed by NILC but free of matrix inversion, and (3) deriving conventional and stable model inversion theories for PWA systems.}
% \hideabstractpagenumber

%% DOCUMENT AREA
\begin{document}

\chapter{Introduction}
\label{ch:1}
\acresetall

\section{Motivation: E-jet Printing}

The main contributions to hybrid systems and control theory are presented in Chapters \ref{ch:3}-\ref{ch:5}. 
These chapters validate their theoretical contributions with simulations on a diverse set of example systems derived from mechanics principles or data collected from physical systems. Additionally, the simulations feature numerous types of model errors in order to stress test the proposed controllers.
However, it is desirable to further ground theoretical controls research in the needs of 
application-based
research.
To this end Chapter \ref{ch:2} presents \ac{e-jet} printing in detail, specifically the first efforts to create end-to-end \ac{ODE}-based models of the fluid flow from the nozzle to the substrate over the course of an ejection.

Two models are derived. The first is focused on extending physics-based nonlinear modeling as much as possible to achieve an end-to-end dynamical model of the flow rate of fluid through the nozzle outlet. The second is focused on capturing the volume of fluid actually deposited on the substrate. Additionally, the second model seeks to accomplish this while limiting the system nonlinearities to a \ac{PWA} definition of the dynamics.

As explained thoroughly in the chapter, this modeling research is a direct response to the
future
desire to achieve high performance control of deposited droplet volume.
In this endeavor, automatic control is required due to the high uncertainty in system behavior, and \ac{ILC} is the clearest choice because physical sensing limitations preclude real-time feedback control. The desire for \ac{ILC} motivates the desire for an \ac{ODE}-based model for controller synthesis.
As demonstrated in Chapter \ref{sec:physicsModeling}, hybridness is ultimately necessary to capture the end-to-end process dynamics. 
This is in spite of contributing substantial improvements to the physics-based smooth nonlinear electrohydrodynamics 
model,
resulting in a five-fold increase to the range of well-modeled meniscus deformation.
Thus, in addition to raw scientific contributions to \ac{e-jet} modeling, Chapter \ref{ch:2} concretely establishes the need for a theory merging \ac{ILC} and hybrid systems.

Note however, that this dissertation does not contain a physical implementation of the developed \ac{ILC} theory on an \ac{e-jet} printer.
This is because the main objective here is to deliver a foundational theory of \ac{ILC} for hybrid systems, and \ac{e-jet} printing presents additional challenges to \ac{ILC} beyond model hybridness.
These challenges are explained in the description of future work, Section \ref{sec:futurework}.
For a foundational \ac{ILC} theory such issues were deemed less critical than the treatment of the \ac{NMP} behavior described later in this chapter, which fundamentally impacts the stability of \ac{ILC} for many systems.
In other words, while \ac{e-jet} printing is representative of systems requiring iterative learning for control and requiring hybridness for end-to-end \ac{ODE}-based modeling, more \ac{e-jet}-printing-specific research is 
necessary before safe and effective automatic control is achievable.

\section{Aim 1: ILC of Piecewise Defined Systems}

Concretely defined,
\ac{ILC} is the process of learning an optimal feedforward control input over multiple trials of a repetitive process based on 
feedback measurements from previous trials. 
ILC is used when typical real-time-feedback and/or feedforward control techniques yield too much output tracking error (as determined by case-specific criteria) because of their reactive nature or model error, respectively. 
Notable
past applications
include robot-assisted stroke rehabilitation \cite{Freeman2015}, high speed train control \cite{Yu2018}, and laser additive manufacturing \cite{Rafajlowicz2019}, all of which use nonlinear 
models.
In fact, while the majority of ILC literature focuses on linear systems, the prevalence of nonlinear dynamics in 
real-world systems
has motivated the development of numerous ILC syntheses from discrete-time nonlinear models 
\cite{Jang1994,Saab1995,Wang1998,Sun2003}.

Nonlinear modeling can be extremely challenging.
Chapter \ref{ch:2} illustrates that there are repetitive systems for which even extensive traditional (i.e. continuous) nonlinear modeling is insufficient for capturing a process's full dynamics.
Hybrid system modeling offers a more flexible formal modeling framework.

The term ``hybrid systems'' encompasses a wide variety of modeling frameworks that are composed of a set of traditional dynamical models---e.g. systems of 
\acp{ODE}---and a set of rules regarding switching between which of these ``component'' models is governing the system state evolution at a particular point in 
time \cite{Heemels2001}.
The set of switching rules can usually be interpreted as a discrete event system
\cite{Cassandras2008,Tabuada2009}.
Each discrete state, or ``location,'' is associated with one of the component dynamical models. The transitioning between locations is conditioned on the dynamical states and control inputs.
This dissertation specifically considers deterministic hybrid systems, in which satisfaction of any transition condition enforces a location switch at the moment in time the condition is satisfied. This is in opposition to frameworks based on transition guard conditions and location invariant conditions, which may \emph{allow} a transition if satisfied, but do not typically \emph{enforce} switching.
Deterministic switching behavior (and state resets) can in general be captured by piecewise definition of a state space system via augmentation of the state dimension \cite{Torrisi2004}. Thus, most deterministic hybrid system frameworks are a subset of \ac{PWD} systems \cite{Heemels2001}.

In the past couple decades, hybrid systems have become broadly popular because their ability to stitch simple dynamical systems together to produce complicated state trajectories has greatly eased the task of modeling many physical systems. 
Examples include cyber-physical systems in general \cite{Tabuada2009}, automobile driver behavior \cite{Kim2004}, power systems \cite{Hiskens2000}, legged locomotion \cite{Hiskens2001}, conveyor systems \cite{Saez2017}, and additive manufacturing \cite{Spiegel2017}.

Clearly, there is overlap between repetitive systems and systems well-modeled as hybrid systems: gait cycles in legged locomotion are repetitive, as are many manufacturing processes. In fact, of myriad robotics-related uses, manufacturing is one of the primary fields in which ILC is applied \cite{Bristow2006}.
However, the combination of ILC and hybrid modeling is absent from the literature.
This is the main gap of Chapter \ref{ch:3}.

The first step to achieving ILC of \ac{PWD} systems is to choose a particular ILC scheme to build off of.
However, literature on the ILC of nonlinear systems can trend towards hyper-specialization with respect to the system 
model, making it more restrictive than desired.
 Even amongst the more general literature such as \cite{Jang1994,Saab1995,Wang1998,Sun2003,Chi2008} mentioned above,
nearly all published ILC 
theory for discrete-time nonlinear systems
feature at least one of the following model restrictions
\begin{enumerate}[label=(R\arabic*),leftmargin=*]
\item
\label{R1}
relative degree of either 0 or 1 \cite{Saab1995,Wang1998},
\item
\label{R2}
affineness in the input \cite{Jang1994,Saab1995,Wang1998,Chi2008},
\item
\label{R3}
time-invariance \cite{Jang1994,Sun2003}, and
\item
\label{Rsmooth}
smoothness of the state transition formula and output functions (Lipschitz continuity at the most relaxed) \cite{Jang1994,Saab1995,Wang1998,Sun2003,Chi2008}.
\end{enumerate}

This is problematic because many practical systems violate these constraints.
\refRmu{} may be violated in the position control of myriad systems including piezoactuators \cite{Shieh2008}, motors \cite{Hackl2013}, robotic manipulators \cite{Geniele1997}, and vehicles \cite{Munz2011}.
\refRlin{} may be violated by piezoactuators \cite{Shieh2008}, electric power converters \cite{Escobar1999}, wind energy systems \cite{DeBattista2004}, magnetic levitation systems \cite{Gutierrez2005}, e-jet printing (Chapter \ref{sec:physicsModeling}) and flexible-link manipulators \cite{Geniele1997}.
\refRTI{} may be violated by any feedforward-input-to-output model of systems using both feedforward and feedback control, as is often done for robotic manipulation \cite{Khosla1988}.
Finally, and of primary concern here, \refRsmooth{} may be violated by gain switching feedback control systems (e.g. for motor control \cite{Kalman1955}), power converters \cite{Escobar1999}, legged locomotion \cite{Garcia1998}, e-jet printing (Chapter \ref{ch:2}), and robotic manipulation \cite{Hogan2020}.
The fact that many of these example systems violate multiple
restrictions 
illustrates that it can be challenging 
to find a model-based ILC synthesis scheme appropriate for 
many real-world applications.
Indeed, flexible-link manipulators violate all four, and they are relevant to the fast and cost-effective automation of pick-and-place and assembly tasks as well as to the control of large structures such as cranes \cite[ch. 6]{DeWit1996}. 
Such application spaces would
benefit from having
a versatile ILC scheme free from \refRmu-\refRsmooth.

Additionally, while
ILC seeks to converge to a satisfactorily low error, this learning is not immediate, and trials executed before the satisfactory error threshold is passed may be seen as costly failures from the perspective of the process 
specification.
It is thus desirable to develop ILC schemes that converge as quickly as possible.

There is one published \ac{ILC} scheme that meets the need for versatility and speed:
the application of Newton's root finding algorithm to a complete finite error time series (as opposed to individual points in time).
This technique was first proposed in \cite{Avrachenkov1998}, and is called \ac{NILC} here.
\AILC's synthesis procedure and convergence analysis are unusually broad in that they are free of \refRmu-\refRsmooth{} \cite{Avrachenkov1998}.
Additionally, 
Newton's method has been shown to deliver faster convergence in ILC than more basic schemes such as P-type ILC \cite[ch. 5]{Xu2003}.

However, while the convergence conditions of \ac{NILC} do not preclude their application to hybrid systems, the synthesis of \ac{NILC} requires a closed-form lifted system model.
This lifted model is a vector-input-vector-output function taking in the control input time series and outputting the system output time series.
To construct this model from state space systems requires function composition and differentiation of a single closed-form state transition formula capturing the model's entire dynamics.
Existing hybrid system formalisms rarely possess such monolithic state transition formulas, and the ones that do cannot explicitly nest calls to them via function composition (further details on this point are given in the introduction to Chapter \ref{ch:3}).
Thus the approach to addressing the first main gap, i.e. the first main contribution, is the development of a closed-form state space representation of \ac{PWD} systems. This enables \ac{ILC} of a large swath of hybrid systems.

To support this main contribution, 
Chapter \ref{ch:3} also delivers two ancillary contributions related to the implementation of \ac{NILC}.
The original \ac{NILC} literature \cite{Avrachenkov1998} assumes an appropriate lifted model is given. Synthesis of this model from state-space systems requires careful handling of the system relative degree, which has been neglected in subsequent works leveraging \ac{NILC}, such as \cite{Lin2006}. 
Neither have time-varying nonlinear system dynamics been considered in the lifted system derivation.
Thus, the first ancillary contribution is an explicit formalization of \ac{NILC} for time-varying nonlinear systems of any relative degree $\geq 1$.
Additionally, the differentiation in Newton's method has been challenging for past authors due to computational cost.
The second ancillary contribution is the incorporation of automatic differentiation (see \cite{Andersson2018,Naumann2008}) into \ac{NILC} implementation, dramatically reducing this cost.

\section{Aim 2: Versatile, Fast ILC of NMP Nonlinear Systems}

\ac{NILC} has one of the least restrictive sets of sufficient conditions for convergence published in the prior art.
However, Chapter \ref{ch:4} reveals 
that when synthesized from models with unstable
inverses\footnote{
More specifically: models for which the inverse's linearization about the input trajectory is unstable. See Section \ref{sec:NILCfailure} for details.
},
\AILC{} typically generates control signals that diverge 
to enormous
magnitudes.
In other words \AILC{} is not compatible with these models, which are often called 
\ac{NMP}
models. 
This is problematic because a number of important physical systems are well represented by \ac{NMP} models.
Examples include
piezoactuators \cite{Schitter2002}, electric power converters \cite{Escobar1999}, wind energy systems \cite{DeBattista2004}, DC motor and tachometer assemblies \cite{Awtar2004}, and flexible-link manipulators \cite{Geniele1997}.
Thus, incompatibility of \ac{NILC} and \ac{NMP} models is the main gap of Chapter \ref{ch:4}.

Note that
the 
full original definition of \ac{NMP} refers to the property of a frequency response function having the minimum possible phase change from $\omega=0$ to $\omega\rightarrow\infty$ for a given magnitude trajectory. For \ac{LTI} systems this is achieved if and only if the system and its inverse are causal and stable.
In other words, strictly proper systems cannot be minimum phase regardless of the stability of their inverses.
However, because the lack of causality is rarely an obstacle in feedforward control when the entire reference is known 
in advance,
\ac{NMP} is often used as jargon for inverse instability (equivalently, as an abbreviation for ``\nonmin{} phase \emph{zero dynamics}'') in the feedforward control community, and thus in this dissertation as well.

For linear 
 models with unstable inverses,  
a
common way to obtain feedforward control signals 
is
to systematically synthesize 
approximate dynamical  
models
with stable inverses 
by individually changing the model zeros and poles, e.g. \ac{ZPETC} \cite{Tomizuka1987}.
However, it is difficult to prescribe analogous systematic approximation methods for nonlinear models
because the poles and zeros do not necessarily manifest as distinct binomial factors that can be individually inverted or modified in the system transfer function.

An alternative 
is to 
harness the fact that a scalar difference equation that is unstable 
when evolved forward in time from an initial condition
is stable if evolved backwards in time from a terminal 
condition.
If the stable and unstable modes of a system are decoupled and evolved in opposite 
directions, a stable total trajectory can be obtained.
This process is called 
stable inversion.
For linear systems on a bi-infinite timeline, with boundary conditions at time $\pm\infty$, stable inversion gives an exact solution to the output tracking problem posed by the unstable inverse model.
In practice on a finite timeline, a high-fidelity approximation is obtained by ensuring the reference is designed with sufficient room for pre- and post-actuation, i.e. with a ``flat'' beginning and end.
Additionally, unlike ILC, stable inversion alone cannot account for model error.
To address this,
\cite{Zundert2016} 
details
stable inversion and presents an ILC scheme for linear systems that 
incorporates
a 
process similar to stable inversion.

The main contribution of Chapter \ref{ch:4} is to fill the gap of \ac{NILC}'s failure for \ac{NMP} systems by developing a new 
nonlinear systems
\ac{ILC} framework, \ac{ILILC},
which has the ability to incorporate stable inversion into the controller synthesis.
\ac{ILILC} also retains \ac{NILC}'s advantages in speed and broad applicability 
because it preserves the fundamental learning structure on which the \ac{NILC} convergence analysis is founded.

Because stable inversion for hybrid systems does not appear in the prior art, Chapter \ref{ch:4} focuses on smooth nonlinear systems. Consequently, a supporting contribution is made to the theory of smooth nonlinear discrete-time stable inversion to expand its domain of applicability to include representations of systems under both feedback and feedforward control. Specifically, a relaxed set of sufficient conditions is proven for the convergence of the stable inversion procedure.

\section{Aim 3: 
(Stable) Inversion of Piecewise Affine Systems}

The aforementioned lack of stable inversion theory for Hybrid systems provides an obvious final gap that must be filled to achieve the first \ac{ILC} of a hybrid system with unstable inverse dynamics.
However, such a gap statement belies the fact that there is no published general solution to the conventional inversion of \ac{PWD} systems, let alone stable inversion.
To focus the research on solving both the inversion and stable inversion problems for a class of hybrid system, Chapter \ref{ch:5} considers a subset of \ac{PWD} systems called \ac{PWA} systems.

\ac{PWA} systems are simply \ac{PWD} systems with the component dynamics restricted to affine models.
Examples
include current transformers \cite{Ferrari-Trecate2003}, one-sided spring supports \cite{Bonsel2004}, gain switching \cite{Kalman1955}, and e-jet printing (Chapter \ref{sec:controlModeling}).
As with hybrid systems in general, 
the mathematical rigor provided by the \PWA{} framework facilitates analysis and control theory development for these systems.
Examples include stabilizing state feedback control \cite{Mignone2000} and model reference adaptive control \cite{DiBernardo2013}.

To date, such research has focused
primarily on feedback control.
Feedforward control
has not been thoroughly addressed in the \PWA{} literature. In addition to preventing the control of systems with specific needs for feedforward control, this gap  inhibits the implementation of existing feedback control theory that requires feedforward control components. Indeed, \cite{VandeWouw2008} presents a solution to the output reference tracking problem for a class of \PWA{} systems using both feedback and feedforward control elements, but does not present a method to compute the feedforward signal. The validation is instead limited to a master-slave synchronization example in which the feedforward input to the master system is known 
in advance.

In other words, while Chapter \ref{ch:5}'s contributions of rigorous theory for the inversion and stable inversion of \ac{PWA} systems do yield the first \ac{ILC} of an \ac{NMP} hybrid system, they also have broader ramifications for the control of \ac{PWA} systems.

\section{Contribution Synergy}

One can see that the control-theoretic contributions of Aims 1-3 are not wholly independent of one another.
Instead, they build off one another and share responsibility for enabling the \ac{ILC} of different classes of systems.
Figure \ref{fig:gardenBasic} gives a graphical representation of this synergy 
by treating the fundamental contributions as building blocks that work together to hold up both each other and the specific classes of systems for which they enable control.

\begin{figure}
    \centering
    \includegraphics[scale=1.3]{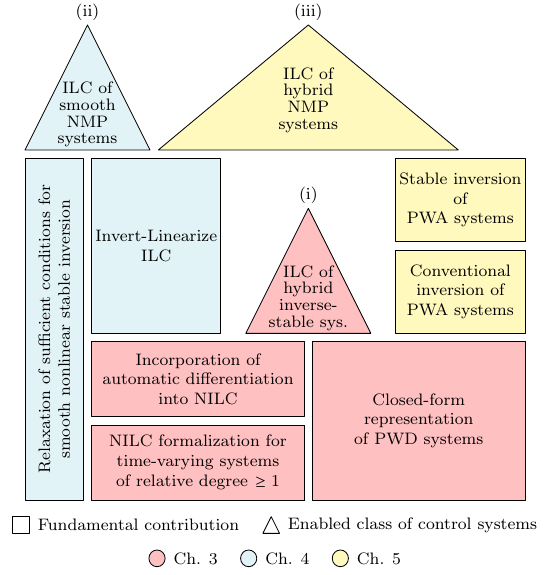}
    \caption{
    Castle of Control Contributions.
    Rectangular ``building blocks'' represent the fundamental control-theoretic contributions.
    Conical ``spires'' represent a class of nonlinear, potentially time-varying systems supported by the underlying theoretical contributions.
    Note that more elevated rectangular building blocks also depend on the building blocks beneath them.
    For example, theory for the stable inversion of \ac{PWA} systems \emph{requires} theory for the conventional inversion of \ac{PWA} systems. 
    The validation of the theory in each chapter is executed via control of an example system from the corresponding spire. Practical target applications for each class include (i) piecewise mass-spring-dampers (e.g. ankle-foot orthosis emulation \cite[ch. 3-4]{Totah2020}), (ii) cart-and-pendulum systems (e.g. bridge and tower cranes \cite{Fatehi2014,Solihin2010}), and (iii) micro-positioning systems (See Chapter \ref{sec:val}).
    }
    \label{fig:gardenBasic}
\end{figure}

\section{A Note On Notation}

Chapter \ref{ch:2} prioritizes the modeling of \ac{e-jet} printing, while Chapters \ref{ch:3}-\ref{ch:5} prioritize hybrid systems and \ac{ILC} theory. Thus, other than the most basic conventions, effort to unify the notation between these two sections has been limited. Overloaded symbol definitions occur, but are highly distinguished by context. For example $Q$ is used to indicate volumetric flow rate in Chapter \ref{ch:2} and a set of subspaces of the real vector space $\real^n$ in Chapters \ref{ch:3}-\ref{ch:5}.

Within Chapters \ref{ch:3}-\ref{ch:5} notation is unified with one exception. Chapters \ref{ch:3}-\ref{ch:4} represent the time argument in functions of discrete time as $x(k)$, with $k$ as the time step index. Chapter \ref{ch:5} uses the subscript notation $x_k$. This is done to compact a number of otherwise very long expressions.

\chapter{
Repetitive Processes Needing Hybrid Modeling:
\texorpdfstring{\\}{ }
Electrohydrodynamic Jet Printing Studies
}
\blfootnote{
Content of this chapter also published as:
\\
\indent\indent
I. A. Spiegel, P. Sammons and K. Barton, ``Hybrid Modeling of Electrohydrodynamic Jet Printing,'' in \emph{IEEE Transactions on Control Systems Technology}, vol. 28, no. 6, pp. 2322-2335, Nov. 2020, \href{https://doi.org/10.1109/TCST.2019.2939963}{https://doi.org/10.1109/TCST.2019.2939963}
\textcopyright IEEE 2020. Reprinted with permission.
\\
\indent\indent
I. A. Spiegel, T. van de Laar, T. Oomen and K. Barton, ``A Control-Oriented Dynamical Model of Deposited Droplet Volume in Electrohydrodynamic Jet Printing.'' in \emph{Proceedings of the ASME 2020 Dynamic Systems and Control Conference}. Virtual: ASME, 2020. \href{https://doi.org/10.1115/DSCC2020-3238}{https://doi.org/10.1115/DSCC2020-3238}
\textcopyright ASME 2020. Reprinted with permission.
}
\label{ch:2}

\ac{AM}
is a growing class of 
processes that fabricate components in a layer-by-layer fashion.
However, 
several obstacles 
inhibit the
widespread adoption of AM.  Chief among these 
is the lack of appropriate process descriptions for both implementing satisfactory process planners and integrating process feedback control to enable repeatable, accurate part fabrication.  A major reason for this is that many AM processes are governed by complex physical phenomena, e.g. the melting, heat transfer and solidification processes in Laser Engineered Net Shaping (LENS) and Selective Laser Melting (SLM) \cite{LENSphysics}, and the jetting and binder-burnout processes in Binder Jetting \cite{binderjetting}, resulting in strongly nonlinear mappings between process inputs and process outputs.
Because of these nonlinearities, models that may be useful for control design are only applicable in small regions of the operating space, 
accounting for only short periods within the total process.
Therefore, in order to enable control that addresses AM processes in a holistic manner, there is a need to develop models that are capable of describing the complex, interconnected, physical phenomena of AM processes.

One important subdivision of AM that has 
 attracted significant attention is 
\ac{uAM}.
 \muAM{} comprises a number of processes
characterized by their ability to produce feature sizes on the order of \SI{100}{\nano\meter} to \SI{100}{\micro\meter}.  
Processes typically classed as \muAM{} 
include ink jet \cite{inkjetrev} and other direct-write technologies such as dip-pen nanolithography \cite{dippennano}, and stereolithography \cite{microSTL}.  
Applications for these processes include patterning, printed electronic components, and biological and pharmaceutical devices.
For such applications, \muAM{} offers potential advantages in operating cost, speed, waste reduction, buildable geometries, and integration of fabricated components with larger systems compared to more traditional fabrication techniques involving lithography, spin-coating, or gas deposition \cite{amtech,ejetelectronic,ejetsensor}.
However, many \muAM{} technologies suffer from having coarser resolutions and smaller portfolios of usable materials than traditional technologies.

Electrohydrodynamic jet (e-jet) printing is a relatively nascent \muAM{} technology that seeks to overcome both of these challenges, and already has proven applications in printed electronics such as thin film transistors \cite{ejetelectronic} and resistive sensors \cite{ejetsensor}, and in biotechnology such as patterned cell cultures \cite{ejetbiotech}.  
In e-jet printing, a microcapillary nozzle filled with a polarizable ink is suspended above a flat substrate and an electric potential is applied between the two.  
The resultant electric  field induces a stress in the liquid surface. When this stress reaches a critical value,
a 
thin
jet of ink issues from the liquid surface towards the substrate.  
Once a 2D pattern of liquid has been deposited on the substrate, a process such as sintering or curing solidifies the liquid. More liquid can then be deposited atop the previously solidified liquid to build 3D structures \cite{Han2015JMP}.

Unlike back-pressure-driven processes such as inkjet and aerosol jet printing, 
the physical phenomenon driving jetting in \ac{e-jet} printing is localized to the fluid surface.
Consequently, 
\ac{e-jet} can print liquids with viscosities multiple orders of magnitude greater than inkjet printing's max viscosity \cite{Jang2013},
allowing the use of new polymers and metallic nanoparticle solutions with higher concentrations of the active ingredient.
Also due to the surface-localization of the electrohydrodynamics, 
while inkjet droplets pinch off the meniscus near the nozzle tip and thus have diameters approximately equal to that of the nozzle, electrohydrodynamic jets thin considerably as they extend to strike the substrate. This enables submicron line widths and gap sizes compared to inkjet printing's $\sim$\SI{10}{\micro\meter} minimum \cite{Park2007}.

Reliably fulfilling the potential for submicron resolution
requires closed-loop control for the rejection of disturbances introduced by variations in nozzle shape, fluid properties, and environmental factors like temperature and humidity \cite{Sammons2017}.
However, the computer-vision-based measurement process, \cite{Spiegel2017}, is too slow compared to the jetting process for traditional real-time feedback control to be practical. Thus,
recent research 
has focused on closing the loop in the iteration domain rather than the time domain.
Specifically, this research uses measurements from previous trials of an \ejet{} task in conjunction with dynamical models of nominal system behavior to inform the feedforward control input signal for the subsequent trial \cite{Afkhami2021,Wang2018a,Altin2019}, a technique known as iterative learning control (ILC).

Thus far, this research has been limited to using models of droplet spreading over a substrate to determine the droplet volume necessary for achieving a desired final print topography. It has not used knowledge of the dynamics between the applied voltage signal and the volume of fluid ultimately ejected from the 
nozzle, instead assuming perfectly known static relationships.
Dynamical models 
relating applied voltage to fluid flow
could be used 
to decouple the problems of learning the correct volume to deposit and learning the applied voltage signal necessary to achieve that volume, potentially decreasing the number of trials
required to achieve satisfactory performance.

For a model to be compatible with ILC it must be founded on 
ordinary differential or difference equations (ODEs), but
the 
majority of current modeling efforts for e-jet printing can be broadly classified as either 
partial-differential-equation-driven finite element/volume methods
or static scaling laws.  In the former category, a significant amount of work has been aimed at studying the fluid-air surface profile as a function of electric field and, to a lesser extent, specific material properties such as fluid conductivity \cite{higuera2013,wei2013,collins2008Nature}.  Additionally, there has been work towards developing models of the spreading and coalescence 
of e-jet-printed droplets on substrates \cite{pannier2017,carter2014}, as well as developing relationships between input parameters, (e.g. applied voltage), and deposited droplet properties (e.g. contact angle and volume).  In the latter category, several authors proposed static scaling law relationships between process inputs such as applied voltage and certain process outputs such as the frequency at which jets issued from the fluid \cite{choi2008,collins2013,chen2006}.  While both the high fidelity simulation models and the static scaling models provide benefits for some applications, they are not suitable for use in model-based feedback control algorithms.  In particular, there is a lack of compact models that are capable of holistically describing the distinct dynamic regimes in e-jet printing. An attractive framework for accomplishing this task is that of hybrid dynamical systems,
as they enable the capture of complicated, varying dynamics while still maintaining a foundation in ODEs rather than in partial differential equations.

After some necessary further technical details on \ac{e-jet} printing in Section \ref{ejet}, this chapter presents two hybrid modeling frameworks for \ac{e-jet} printing.
First, Section \ref{sec:physicsModeling} presents a continuous-time model with nonlinear components that highlights the necessity of hybridness for ODE-based modeling of \ac{e-jet} printing by using as much physics-based modeling as possible and then capturing the remaining unmodeled parts of the ejection process with data driven techniques. These physics- and data-driven models are linked by a hybrid framework. This constitutes the first-ever complete model of the ejection process based on ODEs. However, this model only captures fluid flow at the nozzle outlet, and does not explicitly output the final droplet volume deposited on the substrate. Thus, Section \ref{sec:controlModeling} presents a discrete-time, piecewise affine (PWA) model featuring simplified flow dynamics and a new framework for capturing the final droplet volume.
Both sections feature empirical validation of their models.

\section{Electrohydrodynamic Jet Printing}
\label{ejet}
\label{sec:ejetBack}

The conventional e-jet printing setup requires an ink-filled emitter, typically a conductive (or conductively coated) microcapillary nozzle, and an attractor, typically a flat, conducting, grounded substrate.  The emitter is positioned vertically above the attractor and an electric potential is applied between the two components.
Figure \ref{fig:ejetschematic} gives a schematic of this configuration, labeling 
both 
the important geometric constants of the printer setup and two important dynamical process variables: volumetric flow rate of ink out of the nozzle, $Q$, and the position of the meniscus tip in space, $h$.
Volumetric flow rate $Q$ is important because of its obvious relevance to the fluid volume ultimately deposited on the substrate. 
Meniscus position
$h$ is important both because it captures important ``milestones'' in the ejection process (e.g. the transition between dynamic regimes, impingement of the jet on the substrate, and the breaking of the jet) and because its dynamics are coupled to those of $Q$ by capturing 
the 
change in the system's capacitor geometry.

\begin{figure}
    \centering
    \includegraphics[scale=0.5]{./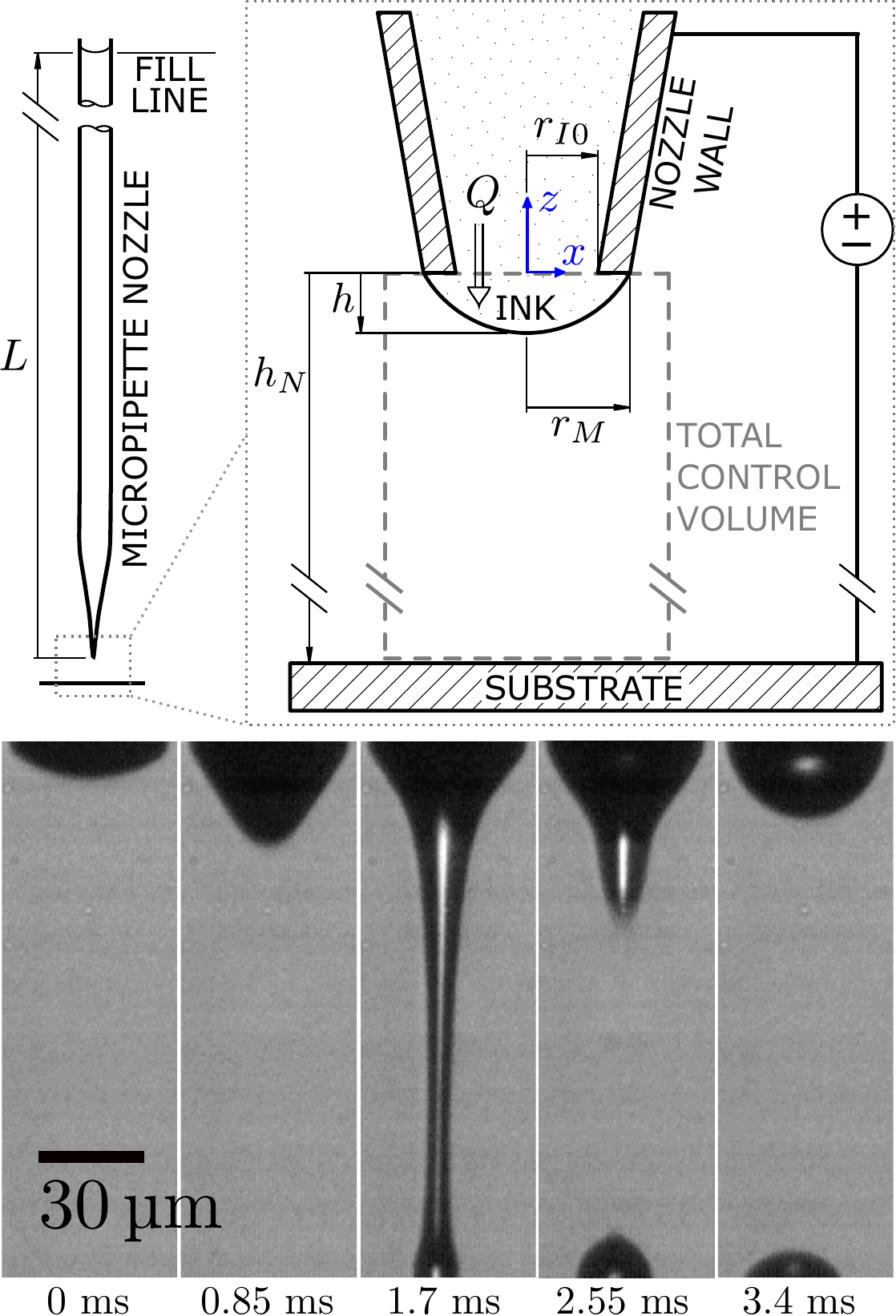}
    \caption{(\emph{Top}) Schematic of e-jet printing setup. Dimensions represent signed displacements in the direction of the single-sided arrows with respect to the inertial $(x,z)$ coordinate system. The volumetric flow rate $Q$, labeled via block arrow, is also a signed quantity, but its sign is with respect to the control volume denoted by the dashed box. (\emph{Bottom}) Time lapse photography of an ejection with time stamps from the rising edge of a voltage pulse.}
    \label{fig:ejetschematic}
\end{figure}

The electric field induced between the liquid meniscus and the attractor causes the meniscus to deform. This deformation can be stable at low voltages, and as applied voltage increases the equilibrium shape of the meniscus changes from a spherical cap to a sharp point known as a Taylor cone. If the applied voltage is high enough, the meniscus becomes unstable and a jet of ink issues from the tip of the Taylor cone towards the attractor.
While the voltage difference between the emitter and attractor is maintained and there exists material contiguity between the emitter and attractor, a redistribution of charge occurs in the fluid until 
the electrically induced surface stress becomes weak enough that the natural liquid surface tension
causes the jet to retract, leaving a droplet of ink on the attractor.

If voltage is held high after this ejection, charge will again accumulate at the meniscus and another ejection will occur.
The indefinite repetition of this cycle at constant high voltage is termed DC printing.
\ac{DoD}
printing is an alternative method in which distinct pulses of length $T_p$ and high voltage $V_h$ are used against a constant low voltage bias $V_l$ to control the timing and size of droplet deposition \cite{Spiegel2017}.
DoD printing can be further subdivided into subcritical printing and the complementary supercritical printing.
Subcritical printing, upon which this work focuses, is defined as \ac{DoD} printing in which $T_p$ is short enough that the high voltage pulse falls before the natural cessation of the jet can begin \cite{Spiegel2017}. In other words, subcritical printing is where each pulse corresponds to a single ejection which is stopped artificially by the falling edge of the pulse. (Supercritical printing is any \ac{DoD} printing that is not subcritical, and is not considered in this work).

The remaining sections in this chapter subdivide the ejection process itself into a set of distinct dynamic regimes.
These regimes are the locations of the hybrid model.
For both hybrid models in this chapter, the division of the total ejection process into partial processes is based on physically significant events. However, because the priorities of the two models differ, so do the location definitions. Section \ref{sec:physicsModeling} partitions the system dynamics about the extension of the meniscus tip beyond its maximum stable elongation, while Section \ref{sec:controlModeling} partitions the system dynamics about impingement of the jet on the substrate.

\section{First-Principles-Based Modeling}
\label{sec:physicsModeling}
To leverage as much as possible existing knowledge of \ac{e-jet} printing physics, this section divides the ejection process into the following three partial processes.
The ``build-up'' regime describes the initial deformation of the meniscus into a Taylor Cone when voltage is stepped high. 
The ``jetting'' regime describes the development of the jet at the Taylor cone tip, its approach towards the attractor, the fluid flow while the jet is fully developed, and the retraction of the jet back to the Taylor cone.
Finally, the ``relaxation'' regime describes the settling of the meniscus from a Taylor cone back to a stable equilibrium position while the voltage is low.
A complete ejection consists of switching from build-up to jetting to relaxation.
This process, and the distinction between dynamic regimes and actuation methods, is illustrated in Figure \ref{fig:ejetcycle}.

\begin{figure}
    \centering
    \includegraphics[scale=0.9]{./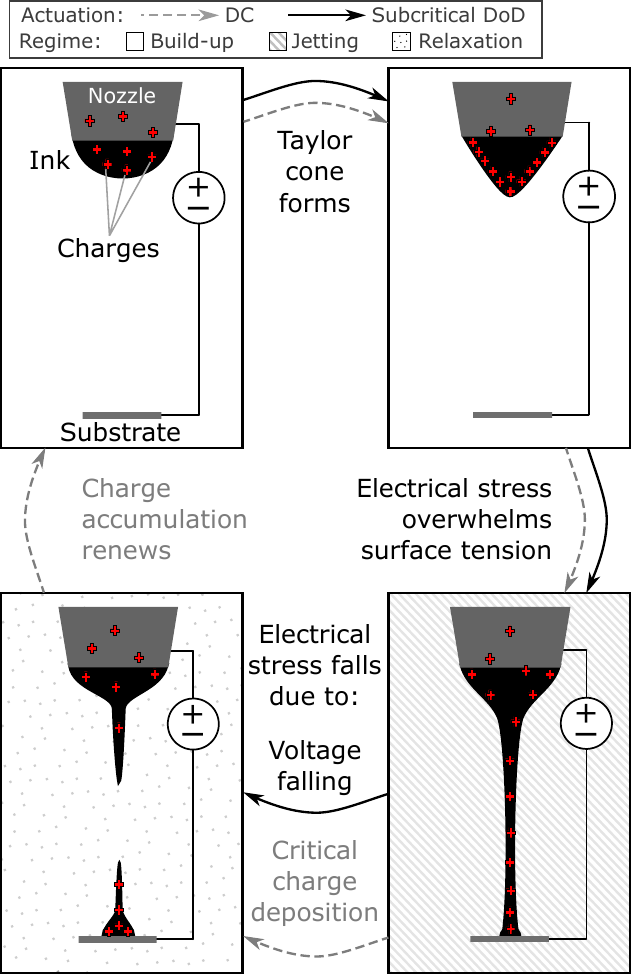}
    \caption{Ejection process in terms of 
    the 
    physical phenomena
    at the liquid-air interface. Distinction is drawn between the process under a constant DC applied voltage, for which ejections begin and end repeatedly under the natural high voltage electrohydrodynamics, and under a single subcritical pulse, which yields a single ejection whose cessation is induced by the falling of the pulse.}
    \label{fig:ejetcycle}
\end{figure}

The rest of the section is structured as follows.  
Section \ref{hybridFrame} defines the mathematical representation of e-jet as a hybrid dynamical system, and derives the individual model components.  
Section \ref{datProc} describes the experimental system, the procedure for collecting and processing data, and the system identification techniques for defining the data-driven portions of the model.
Section \ref{modelVal} presents a validation of the identified model.
Finally, Section \ref{conc} provides a conclusive summary of the section.

\subsection{Hybrid System Model Framework}
\label{hybridFrame}
A hybrid dynamical system is a synthesis of a discrete event system and a dynamical system governed by differential or difference equations. 
Because this section emphasizes the physics-driven modeling of \ac{e-jet} printing, continuous-time differential equations are
used\footnote{
This section is the exception to the rule of using discrete-time systems in this dissertation. 
This section thus uses a somewhat different hybrid system formalization than Section \ref{sec:controlModeling} onwards.
}.
Like discrete event systems, hybrid systems are often formalized as a set, called a hybrid automaton, in which each element 
describes a different feature of the system behavior.
Here,
the hybrid automaton definition of Cassandras and Lafortune \cite{Cassandras2008} serves as a basis for this work's formalism, in which a hybrid system is given as a 10-tuple $\mathcal{G}=(P,X,U,Y,f,g,\phi,\psi,p_0,\xvec_0)$.  $P$ is a set of discrete states or ``locations,'' $X=\mathbb{R}^n$ is the dynamical state space, $U=\mathbb{R}^m$ is the dynamical control input space, $Y=\mathbb{R}^q$ is the dynamical output space, $f:P \times X \times U \rightarrow X$ is a vector field denoting the dynamics of each location, $g \colon P \times X \rightarrow Y$ is the arithmetic map from the dynamical states to the output space in a given location, $\phi : P \times Y \rightarrow P$ is the transition function determining the discrete state based on the dynamical output (usually an inequality condition), $\psi : P\times P\times X \rightarrow X$ is the reset function which can instantaneously change the dynamical state when a transition occurs, and $p_0$, $\xvec_0$ are initial conditions of the discrete and dynamical states.
This definition is reminiscent of a \ac{PWD} nonlinear system.
However, here the automaton structure helps guide the modeling effort, and the use of continuous time rather than discrete time complicates the translation of resets into a strict \ac{PWD} framework, which usually relies on a notion of sample period \cite{Torrisi2004}.

Furthermore, this
definition features one significant structural simplification when compared to most general contemporary hybrid systems, which imposes a limitation on system behavior.
It does not allow for uncertain discrete transitions because it does not have invariants on the locations or guards on the transitions. (Invariants and guards allow for separate, potentially overlapping, restrictions on the viability of each location and transition). Instead, the transition function $\phi$ enforces 
a one-to-one mapping from the dynamic outputs and the current location to the next location. 
This limitation is imposed to keep the scope of the mathematical framework comparable to the physical modeling objectives of this work---in which system stochasticity is not considered.

In this work, the e-jet system is modeled with three locations, $P=\{p_1,p_2,p_3\}$ corresponding to build-up, jetting, and relaxation, respectively. Thus, at any point in time, the state of the system is described by $p(t)\in P$ and the dynamical state vector $\xvec(t) \in X$, and is driven by the input vector $\uvec(t) \in U$ via the differential equations
\begin{align}
    \dot{\xvec}(t) 
    \hspace{-0.75mm}
    &= 
    \hspace{-0.75mm}
    f(p(t),\xvec(t),\uvec(t)) 
    \hspace{-0.75mm}
    = 
    \hspace{-0.75mm}
    \begin{cases}
    f_1(\xvec(t),\uvec(t)) & p(t)=p_1 \\
    f_2(\xvec(t),\uvec(t)) & p(t)=p_2
    \\
    f_3(\xvec(t),\uvec(t)) & p(t)=p_3
    \end{cases}
    \hspace{-3mm}
    &
    \\
    \yvec(t) 
    &= 
    g(p(t),\xvec(t)) =
    \begin{cases}
    g_1(\xvec(t)) & p(t) = p_1 \\
    g_2(\xvec(t)) & p(t) = p_2 \\
    g_3(\xvec(t)) & p(t) = p_3
    \end{cases}
    &
    \\
    \label{eq:instantTrans}
    p(t) 
    &= 
    \phi(p_-(t),\yvec(t))
    &
\end{align}
Note that (\ref{eq:instantTrans}) 
captures
the instantaneousness of discrete
transitions, with $p_-(t)$ 
being
the value of $p(t)$ prior to $\phi$ evaluation.

In this work the outputs are the physical system values, $\yvec(t)=[h(t), \, Q(t)]^T$.
The exact definitions of $\xvec$ and $\uvec$ depend on the details of the differential equations to be derived,
but for the sake of clarity, they are preemptively defined as
% % ----------------------------------------------
% % ONE COLUMN
% % ----------------------------------------------
% \begin{align}
%     \xvec(t) & = \begin{bmatrix} \delta h(t) & \delta Q(t) & \delta \dot{Q}(t) \end{bmatrix}^T =
%     \begin{cases}
%     \begin{bmatrix}
%     h(t)-0 & Q(t)-0 & \dot{Q}(t)
%     \end{bmatrix}^T
%     & p(t)\in \{p_1,\,p_3\}
%      \\
%     \begin{bmatrix}
%     h(t)-h_N & Q(t)-Q_f & \dot{Q}(t)
%     \end{bmatrix}^T
%     & p(t)=p_2
%     \end{cases}\\
%     %
%     \uvec(t) & = u(t) =
%     \begin{cases}
%     V(t)^2 - 0^2   & p(t)\in \{p_1,\,p_3\} \\
%     V(t)^2 - V_h^2 & p(t)=p_2
%     \end{cases}
% \end{align}
% ----------------------------------------------
% ONE COLUMN v2
% ----------------------------------------------
\begin{align}
    \xvec(t) & = \begin{bmatrix} \delta h(t) \\ \delta Q(t) \\ \delta \dot{Q}(t) \end{bmatrix} =
    \begin{cases}
    \begin{bmatrix}
    h(t)-0 \\ Q(t)-0 \\ \dot{Q}(t)
    \end{bmatrix}
    & p(t)\in \{p_1,\,p_3\}
     \\
    \begin{bmatrix}
    h(t)-h_N \\ Q(t)-Q_f \\ \dot{Q}(t)
    \end{bmatrix}
    & p(t)=p_2
    \end{cases}\\
    \uvec(t) & = u(t) =
    \begin{cases}
    V(t)^2 - 0^2   & p(t)\in \{p_1,\,p_3\} \\
    V(t)^2 - V_h^2 & p(t)=p_2
    \end{cases}
\end{align}
% ----------------------------------------------
% TWO COLUMN
% ----------------------------------------------
% \begin{align}
%     \xvec(t) & = \begin{bmatrix} \delta h(t) & \delta Q(t) & \delta \dot{Q}(t) \end{bmatrix}^T = \\
%     & \, \, \begin{cases}
%     \begin{bmatrix}
%     h(t)-0 & Q(t)-0 & \dot{Q}(t)
%     \end{bmatrix}^T
%     & p(t)\in \{p_1,\,p_3\}
%     \nonumber \\
%     \begin{bmatrix}
%     h(t)-h_N & Q(t)-Q_f & \dot{Q}(t)
%     \end{bmatrix}^T
%     & p(t)=p_2
%     \end{cases}\\
%     %
%     \uvec(t) & = u(t) =
%     \begin{cases}
%     V(t)^2 - 0^2   & p(t)\in \{p_1,\,p_3\} \\
%     V(t)^2 - V_h^2 & p(t)=p_2
%     \end{cases}
% \end{align}
where $h_N$ is the displacement of the substrate from the nozzle outlet and $Q_f$ is the flow rate
at the end of the voltage pulse.
This model structure allows for the dynamical states $\delta h$ and $\delta Q$ to be equal to the outputs $h$ and $Q$ during build-up and relaxation (i.e. when the models are physics-based), and equal to deviations from output equilibria $h_N$ and $Q_f$ during jetting (i.e. when the model is a black box).

A graphical representation of the hybrid automaton for an e-jet ejection is given in Figure \ref{fig:hybridAutomaton}.

The remainder of this section focuses on deriving $f_1$, $f_2$, $f_3$, $\phi$, and $\psi$.
For all these derivations, it is assumed that nozzle and the substrate have zero velocity with respect to one another, and that the substrate is flat and clean (i.e. free of pre-deposited substances).

\begin{figure}
    \centering
    \includegraphics[scale=0.625]{./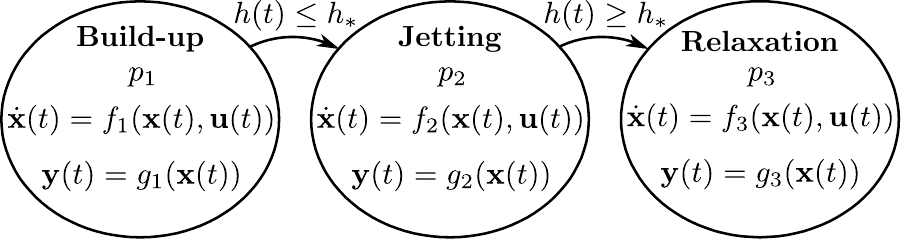}
    \caption{Schematic of the system as a hybrid automaton. The transition inequalities are labeled explicitly on the automaton edges for completeness, and are derived in section \ref{sec:Transition}.}
    \label{fig:hybridAutomaton}
\end{figure}

\subsubsection{Build-up \& Relaxation}
\label{sec:BuildUp}
Physically, the build-up location represents the 
accumulation of charge and mass in the meniscus under high voltage leading up to jetting. The relaxation state represents the 
convergence of the meniscus to an equilibrium shape at low voltage.
Despite involving nontrivial changes to the meniscus volume and shape, throughout both of these states the liquid's form factor is approximately that of a typical pendant droplet,
whose deformations under 
applied electric fields have been the subject of much study \cite{Larrondo1981,Joffre1982,Harris1993}.
This prior art enables these locations to be modeled from a perspective of physical first-principles. Such a model was first proposed by Wright, Krein, and Chato \cite{Wright1993} in 1993, further developed by Yang, Kim, Cho and Chung \cite{Yang2014} in 2014, and serves as a foundation for the physics-driven modeling in this work.

The premise of Wright's dynamic model is a 1-dimensional (along the nozzle axis) Newtonian force balance across the plane of the nozzle outlet (i.e. across the boundary of the control volume via which $Q$ is defined). Algebraic manipulation of this force balance yields the derivative of flow rate as a function of pressures at the plane. This is represented as
\begin{equation}
     \dot{Q} = \frac{\pi r_{I0}^2}{\rho L}P_{net} \label{flowGod}
\end{equation}
where $r_{I0}$ is the inner radius of the nozzle outlet, $\rho$ is the mass density of the ink, $L$ is the length of the fluid column, and $P_{net}$ is the net pressure difference across the nozzle outlet, equal to the sum of pressure changes arising from the dominant physical phenomena in the system. These physical phenomena are
the pressure due to gravity $P_g$ (taken to be hydrostatic pressure),
the pressure due to viscous flow $P_\mu$ (assumed equal to Hagen-Poiseuille flow in the nozzle),
the pressure due to surface tension $P_\gamma$ (from the Young-Laplace equation applied to the meniscus tip), 
and 
the electric pressure due to applied field $P_E$ (equal to the electrical energy density of the system at the meniscus tip \cite{Calero1988,Woodson1968}).
Mathematically, these pressures are given by
\begin{align}
    P_g &= \rho g L \label{Pg}\\
    P_\mu(t) &= -\frac{8\mu L Q(t)}{\pi r_I^4} \label{PmuWright}\\
    P_\gamma(t) &= -\frac{2\gamma}{R(t)} \label{eq:Pgamma} \\
    P_E(t) &= \frac{1}{2}\varepsilon E(t)^2 \label{PE}
\end{align}
where $g$ is the gravitational constant, $\mu$ is the dynamic viscosity of the ink, $r_I$ is a measure of the nozzle shaft inner radius, $\gamma$ is the liquid-air surface tension coefficient of the ink, $R$ is the radius of curvature of the meniscus tip, $\varepsilon$ is the permittivity of the medium between the meniscus and the ground plate (usually air), and $E$ is the electric field at the liquid-air interface at the meniscus tip.

This model provides a concrete ODE-based representation of flow rate dynamics in e-jet printing.
However, Wright does not perform any 
empirical validations of this model.
The model variant presented by Yang in \cite{Yang2014} sees some validation, but is limited by the fact that 
it relies on a time-varying variable that is not explicitly modeled, and whose value must be gleaned from measured data and updated at each time-step or approximated as constant.
Indeed, while the current work uses equations (\ref{flowGod}-\ref{PE}) virtually unchanged as a model framework,
several of their constitutive elements are substantially modified to facilitate the model's practical implementation.
Specifically, $P_\mu(t)$, $R(t)$, and $E(t)$ are addressed in this work.

In past works $P_\mu$ is given by the Hagen-Poisuille equation, which describes the pressure drop through a long tube of constant cross-section due to shear forces, and uses $r_I=r_{I0}$. However, the nozzles used in e-jet printing do not have constant cross-sections.
To achieve the small outlet diameters required, the nozzles are made from pulled-glass micropipettes, which do not have easily defined axial cross-sections.
Thus, this work treats $r_I$ in (\ref{PmuWright}) as an \emph{effective} inner tube radius, 
\begin{equation}
r_{I}=c_{r_I}r_{I0}
\end{equation}
where $c_{r_I}$ is a correction coefficient to be found via a system identification process described in section \ref{sysID}. This modeling strategy keeps the shear term physically meaningful while enabling a degree of flexibility required for dealing with uncertain nozzle geometry.

Both $R(t)$ and $E(t)$ depend on the shape of the meniscus. For the purpose of modeling $R(t)$ in the surface tension term, equation (\ref{eq:Pgamma}), Wright and Yang both assume the meniscus is a spherical cap. This assumption is clearly only valid for small deformations of the meniscus under an electric field. As the meniscus sharpens into a Taylor cone, it achieves a much smaller $R(t)$ than can be captured with a spherical model.

Wright does not offer a physics-based model for $E(t)$, but Yang models the meniscus (assumed to be conducting) and nozzle together as the one half of a two-sheeted hyperboloid of revolution. To fully determine this hyperboloid the following constraints are imposed:
\begin{enumerate}[label=(C2.\arabic*),leftmargin=*]
    \item 
    \label{C1}
    the surface vertex position is fixed to the meniscus tip,
    \item 
    \label{C2}
    the surface center of symmetry is fixed to the  substrate,% and
    \item
    \label{C3}
    $R(t)$ is equal to the meniscus tip radius of curvature measured from photographs.
\end{enumerate}
These constraints enable the use of an analytical solution for electric field, but \ref{C3} makes the model dependent on an enormous quantity of measured data, as each time step of every voltage step (characterized by the combination of $V_l$ and $V_h$) requires a different value of $R(t)$. Yang circumvents this need by considering only small meniscus deformations, up to roughly 40\% increases in $h(t)$ from an initial condition where the spherical cap assumption holds, 
and assuming $R(t)$ is constant. Additionally, to make up for differences between the theoretical electrode configuration described above and the physical system, the theoretical electric field is multiplied by a time-varying model adjustment coefficient, which itself is a function requiring the system identification of two parameters.

The current work presents a model seeking to capture large meniscus deformations
(greater than 200\% increases in $h(t)$)
while reducing the need for measured data to define the model. This is done by modeling the meniscus as a paraboloid of revolution for all dynamical equation derivations (rather than the mixed spherical and hyperboloidal paradigm), and by leveraging knowledge of the force balance at low voltage equilibrium.

To fully determine the paraboloid at any point in time, only two constraints are necessary: \ref{C1} and
\begin{enumerate}[label=(C2.\arabic*),leftmargin=*]
    \setcounter{enumi}{3}
    \item
    \label{C4}
    The surface intersects the nozzle-ink interface (i.e. the outer edge of the nozzle outlet).
\end{enumerate}
Under these constraints, the radius of curvature and theoretical electric field are given by \cite{Kaiser2004}
\begin{align}
    R(t) &= \frac{r_M^2}{-2h(t)} \label{eq:R} \\
    E_t(t) &= \frac{2V(t)}{R(t) \ln \left( \frac{2(h(t)-h_N)}{R(t)} \right)}
\end{align}
The adjusted electrical field to be plugged into equation (\ref{PE}) is chosen to be
\begin{align}
    E(t) &= c_{E_{eq}} E_{t,eq} + c_{\delta E}\delta E_t(t) \label{Eadjusted} \\
    \delta E_t(t) &= E_t(t) - E_{t,eq}
\end{align}
where $E_{t,eq}$ is the unadjusted field at low voltage
equilibrium, $\delta E_t$ is the change in unadjusted field from $E_{t,eq}$, and the correction coefficients $c_{E_{eq}}$ and $c_{\delta E}$ are both constants. Only $c_{\delta E}$ need be found via system identification involving measured timeseries data. $c_{E_{eq}}$ can be found analytically from the fact that at equilibrium the sum of all forces (equivalent to $P_g+P_\mu+P_\gamma+P_E$ in this case) must be zero. This relation yields
\begin{equation}
c_{E_{eq}} = -\frac{r_M \ln \left(\frac{4 h_{eq} (h_N-h_{eq})}{r_M^2}\right) \sqrt{-g L \rho  r_M^2-4 \gamma  h_{eq}}}
{2 \sqrt{2\varepsilon} h_{eq} V_l}
\label{eq:cEeq}
\end{equation}
where the only empirical information required is the meniscus position at low voltage equilibrium, $h_{eq}$, which is a single datum as opposed to a model parameter that must be regressed on a data set.
Thus the paraboloidal meniscus approximation enables modeling of larger meniscus deformations by capturing the sharpening of the meniscus as it grows in size via equation (\ref{eq:R}). Equation (\ref{eq:R}) also eliminates the need for $R(t)$ to be defined by measured timeseries data, and basing the electric field adjustment scheme on perturbation from equilibrium via (\ref{Eadjusted}) and (\ref{eq:cEeq}) eliminates a parameter in $E(t)$ requiring system identification, thereby reducing the model's dependence on empirical data to be fully defined.

The final point to be addressed in the physics-driven modeling of flow rate is the difference between the build-up and relaxation states, which revolves around $c_{\delta E}$.
The high voltage during build-up, which pushes the system well beyond the region of attraction of any stable equilibria, makes $\delta E_t$ a significant overapproximation, necessitating $c_{\delta E}<1$.
However, in relaxation, when the system is under a low voltage yielding a stable equilibrium, the changes in unadjusted field are smaller. Reducing them is both unnecessary for convergence to equilibrium, and can cause surface tension to unrealistically overwhelm electric stress. Thus during relaxation, $c_{\delta E}$ is simply set to 1.

This fully defines the flow rate dynamics for build-up and relaxation, leaving the dynamics of meniscus position to be derived. To do this, an equation for the volume of the fluid outside the nozzle may be differentiated. This yields $Q$ as a function of $\dot{h}$, which can then be rearranged.
 
The volume equation is that of a paraboloid truncated at the nozzle outlet:
\begin{equation}
    \mathcal{V}(t)=-\frac{\pi}{2}r_M^2h(t)
\end{equation}
The time derivative of this equation, rearranged, yields
\begin{equation}
    \dot{h}(t) = \frac{-2}{\pi r_M^2}Q(t) \label{hdot_BuildUp}
\end{equation}

Equations (\ref{flowGod}) and (\ref{hdot_BuildUp}) can be used to describe the evolution of flow rate and meniscus position during build-up, but two small formalities must be addressed before it can be incorporated with the hybrid model. First, it is a model in $Q$ and $h$ whereas the hybrid system definition has dynamical states of $\delta Q$ and $\delta h$. Second, this model is not second-order in $Q$, while the hybrid system definition is.

Both of these issues are remedied by choosing
\begin{equation}
g_1(\xvec(t))=g_3(\xvec(t))=
\begin{bmatrix}
1 & 0 & 0 \\ 0 & 1 & 0
\end{bmatrix}
\xvec(t) \label{g1}
\end{equation}

This implies that during build-up, $\delta Q=Q$, $\delta h=h$.
Thus, the final form of $f_1$ and $f_3$ is 
\begin{align}
    &\hspace{-0.1mm}
    f_{1,3}(\mathbf{x},\mathbf{u}) = 
    \begin{bmatrix}
    \frac{-2}{\pi r_M^2}\delta Q \\
    \frac{\pi r_{I0}^2}{\rho L}\left(
    \rho g L
    -
    \frac{8\mu L}{\pi 
    r_I
    ^4}\delta Q
    +
    \frac{4\gamma}{r_M^2}\delta h
    +
    \frac{\varepsilon}{2}
    E^2
    \right)
    \\
    0
    \end{bmatrix} \label{eq:f1}
    \\
    &E = -\sqrt{ \frac{-8h_{eq}\gamma-2gLr_M^2\rho}{\varepsilon r_M^2} }
    -
    \frac{4 c_{\delta E}\delta h\sqrt{u}}{r_M^2\ln\left( \frac{4\delta h(h_N-\delta h)}{r_M^2} \right)}
    \label{eq:Eexpand}
\end{align}
where time argument $(t)$ has been dropped for compactness, $c_{\delta E}<1$ for $f_1$, and $c_{\delta E}=1$ for $f_3$.

Finally, because the actuating voltage pulses in e-jet typically rise while the system is at some low-voltage equilibrium, the initial conditions of the system are within the build-up state. Here, we assume simulations to start from a stationary meniscus. Thus
\begin{equation}
    \xvec_0 = \begin{bmatrix}
    h_{eq} & 0 & 0
    \end{bmatrix}^T
    \,
    ,
    \quad
    p_0=p_1
\end{equation}

\subsubsection{Jetting}
\label{sec:jetting}

The physics-driven models $f_1$ and $f_3$ 
describe 
meniscus deformations up to the critical Taylor cone.
However, as the jet forms at the cone tip, the capacitor geometry, charge flow, and fluid flow become more complicated, and these models cease to represent the system's dynamics. At this point, the system transitions to the data-driven model of jetting dynamics. 
This model is developed by finding differential equations yielding signal shapes similar to those of empirical data and fitting the parameters of those differential equations to measurements.

Classification of the empirical signal profiles is dependent on the actuation method.
DC 
printing 
and 
frequently 
supercritical \ac{DoD} printing 
can
yield highly nonlinear dynamics in that flow rate will fall and the jet will cease automatically due to critical charge ejection despite a constant input.
In subcritical printing, however, the falling edge of the voltage pulse is directly responsible for the fall in flow rate and jet cessation. This causality enables the input/output dynamics to be well-captured by a linear time invariant (LTI) model defined in terms of change from the locally steady high-voltage flow, i.e. from the flow with a fully developed contiguous jet. Thus, this work limits the scope of the jetting model to subcritical printing.

Based on qualitative observations of measured jetting signals, a second order LTI system is chosen for flow rate, with model coefficients to be found by least squares regression. For mathematical consistency with the physics-driven models, the continuous-time model
\begin{equation}
    \delta \ddot{Q}(t) = a_{Q1}\delta \dot{Q}(t) + a_{Q0} \delta Q(t) + b_Q u(t)
\end{equation}
is desired. However, because jetting is a high-speed, small-scale phenomenon, sampling can be relatively coarse and noisy. Regression performance is thus significantly improved when using the discrete time model given by
\begin{equation}
    \delta Q(k+2) = \tilde{a}_{Q1}\delta Q(k+1) + \tilde{a}_{Q0} \delta Q(k) + \tilde{b}_Q u(k)
\end{equation}
were $k$ is the discrete time index.

To get the continuous-time model coefficients $a_{Q1}$, $a_{Q0}$, and $b_Q$ from the regressed discrete time coefficients $\tilde{a}_{Q1}$, $\tilde{a}_{Q0}$, and $\tilde{b}_Q$, the poles and final values of the two systems are matched. First, the two poles of the discrete time system are converted to their continuous-time equivalents via
\begin{equation}
    p_{si}=\frac{\ln(p_{zi})}{T_s} \quad i\in\{1,2\}
\end{equation}
where $p_{si}$ is a continuous-time pole, $p_{zi}$ is a discrete time pole, and $T_s$ is the sampling period. Solving the characteristic equation of the continuous-time system then yields
\begin{equation}
    a_{Q1}=p_{s1}+p_{s2} \qquad a_{Q0}=-p_{s1}p_{s2}
\end{equation}
For the input coefficient, the final value theorem expressions at low voltage for the continuous and discrete time systems can be equated to yield
\begin{align}
    \nonumber
    \lim_{s\rightarrow 0}s
    \frac{\delta Q(s)}{U(s)}
    \frac{(V_l^2-V_h^2)}{s}&=
    \lim_{z\rightarrow 1}(z-1)
    \frac{\delta Q(z)}{U(z)}
    \frac{z(V_l^2-V_h^2)}{z-1} \\
    b_{Q} &= \frac{\tilde{b}_Q a_{Q0}}{\tilde{a}_{Q1}+\tilde{a}_{Q0}-1}
\end{align}
where $\frac{\delta Q()}{U()}$ is the transfer function from input to $\delta Q$.

The above analysis fully defines the flow rate model for jetting, and one may note that it is independent of meniscus position. However, the meniscus position is still important in the jetting model, particularly during jet retraction, because it will trigger the transition to the relaxation location.

Inspection of measured meniscus position signals during jet retraction suggest it may be approximated by exponential decay, i.e. a first order LTI system. However, the retraction does not begin until some time after the falling edge of the voltage pulse. This is accounted for by adding a delay on the input. 
This system will be identified in much the same way as flow-rate, with regression and delay identification being performed on the discrete time system
\begin{equation}
    \delta h (k+1) = \tilde{a}_{h0}\delta h(k) + \tilde{b}_h u(k-d)
\end{equation}
where $d$ is the delay in time steps between the step down in voltage and the beginning of retraction. The continuous-time model parameters arising from pole matching and final value theorem matching, respectively, are
\begin{equation}
    a_{h0}=\frac{\ln(\tilde{a}_{h0})}{T_s} \qquad
    b_h=\frac{\tilde{b}_h a_{h0}}{\tilde{a}_{h0}-1}
\end{equation}
which are then plugged into the continuous-time form of the system, given by
\begin{equation}
    \delta \dot{h}(t)=a_{h0}\delta h(t) + b_h \shift^{-dT_s} u(t)
\end{equation}
where $\shift$ is the shift operator, making $\shift^{-dT_s} u(t)=u(t-dT_s)$.

With this, $f_2$ is fully defined as
\begin{equation}
    \dot{\xvec}(t)=f_2(\xvec(t),u(t))=
    % \\
    \begin{bmatrix}
    a_{h0} & 0      & 0 \\
    0      & 0      & 1 \\
    0      & a_{Q0} & a_{Q1}
    \end{bmatrix}
    \xvec(t)
    +
    \begin{bmatrix}
    b_h\shift^{-dT_s}\\0\\b_Q
    \end{bmatrix}
    u(t)
\end{equation}

$g_2$ will account for the fact that unlike the physics-driven models, the jetting model is not derived in terms of absolute $Q$ and $h$. Instead, it is assumed that in subcritical \ac{DoD} printing, the flow at the falling edge of the voltage pulse is fully developed and locally steady. Thus
\begin{align}
    \yvec(t)=g_2(\xvec(t))&=
    \begin{bmatrix}
    1 & 0 & 0 \\
    0 & 1 & 0
    \end{bmatrix}
    \xvec+
    \begin{bmatrix}
    h_N \\ Q_f
    \end{bmatrix}
\end{align}
where $Q_f$ is the empirically determined flow rate at $t=T_p$.

\subsubsection{Transition \& Reset}
\label{sec:Transition}
This work uses analysis of system stability to concretely define the points in the ejection process at which transitions between the physics- and data-driven models should occur.
As described in section \ref{ejet}, up to a critical magnitude of deformation each meniscus shape is a stable equilibrium of the system for a particular input. Deformation in excess of that critical shape marks the transition from build-up to jetting. Likewise, the complementary condition---the point during jet retraction at which the meniscus shape becomes a stable equilibrium for some applied voltage---is used to mark the transition from jetting to relaxation.

This notion of the transition condition was first proposed by the authors in \cite{Spiegel2017}, which leverages the physics-based work of Yarin, Koombhongse, and Reneker on deriving the range of stable equilibria in terms of meniscus shape \cite{Yarin2001}.
In this work, Yarin models the meniscus as a hyperboloid in a prolate spheroidal coordinate system $(\eta,\xi)$ with the horizontal axis free to move between the nozzle and substrate, equivalent to replacing \ref{C2} and \ref{C3} with \ref{C4} in the surface definition. This results in hyperboloids that are underdetermined given the information available for dynamic simulation, but can fit the true meniscus shape more closely if extra information is available at a given point.

In the prolate spheroidal coordinate system, $\xi \in [0,1]$ denotes a hyperbola, and $\eta \in [ 1,\infty )$ denotes an ellipse (see Figure \ref{fig:hyperbCoords}). Yarin shows that regardless of surface tension coefficient, the maximum stable deformation of the meniscus corresponds to the shape of the critical hyperbola $\xi_{\ast} \approx 0.834$. This extra datum fully determines the surface of the critical Taylor cone.

\begin{figure}
    \centering
    \includegraphics[scale=0.825]{./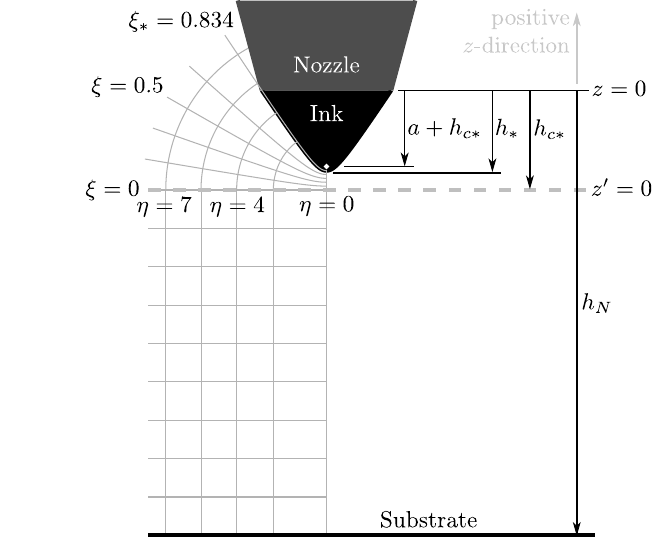}
    \caption{To-scale schematic of a critically deformed meniscus and the prolate spheroidal coordinate system (with horizontal axis at $z'=0$) used to describe the geometry of the meniscus and its local electric field (ellipses) and electric potential contours (hyperbolae). The focus is marked by the white dot at $z'=a$. For $z'<0$ the field and potential contours are assumed ``Cartesian.''
    Note that by the $z$-axis sign convention, $h$-values are negative, and $a$ is positive.}
    \label{fig:hyperbCoords}
\end{figure}

In \cite{Spiegel2017}, $\xi$ is estimated from measured data, and the jetting state is isolated from the build-up state empirically. For simulations independent of measured data, the transition condition must be defined in terms of the model's dynamical states. In other words, for this work a critical meniscus position $h_*$ must be found in terms of $\xi_*$.

The derivation of this relationship starts with the transformation equations between the 2D cross section of the prolate spheroidal coordinate system and a Cartesian coordinate system $(x',z')$ with the same horizontal and vertical axes:
\begin{align}
    \eta &= \frac{1}{2a}\left( \sqrt{x'^2+\left(z'+a\right)^2}+\sqrt{x'^2+\left(z'-a\right)^2} \right) \label{etaraw} \\
    \xi &= \frac{1}{2a}\left( \sqrt{x'^2+\left(z'+a\right)^2}-\sqrt{x'^2+\left(z'-a\right)^2} \right) \label{xiraw} \\
    z' &= a \eta \xi \label{z'raw}
\end{align}
While both coordinate systems are two-dimensional, three equations are needed to relate them. This is because in addition to $\xi$ and $\eta$, the prolate spheroidal coordinate system is also parameterized by the distance $a$ from the horizontal $z'$-axis to the focus of the coordinate system.
The application of \ref{C4} to equation (\ref{xiraw}) and \ref{C1} to equation (\ref{z'raw}) yields
\begin{align}
    \xi_\ast &= \frac{ \sqrt{r_M^2+\left(a-h_{c*}\right)^2}-\sqrt{r_M^2+\left(-a-h_{c*}\right)^2}}{2a} \label{eq:YarinHyp1}
    \\
    h_\ast &= a \xi_\ast + h_{c\ast}  \label{eq:YarinHyp2}
\end{align}
where $h_{c\ast}$ is the displacement of the prolate spheroidal coordinate system from the inertial coordinate system $(x,z)$.

As is, the system is underdetermined, having unknowns $a$, $h_\ast$, and $h_{c\ast}$. This is remedied by reformulating the theory in \cite{Yarin2001} to yield $h_{c\ast}$ as a function of $h_\ast$ and constant system properties via
\begin{align}
h_{c\ast} &= \sqrt{\alpha^2+2\alpha h_\ast}-\alpha \\ 
\textrm{where} \quad
\alpha &= \frac{2\pi \gamma h_N^2 \left( 1-\xi_\ast^2 \right)\left(\ln \frac{1+\xi_\ast}{1-\xi_\ast} \right)^2}{\left(C V_\ast\right)^2},
\end{align}
$C\approx \SI{1.05e-5}{\kilogram\tothe{\frac{1}{2}}\meter\tothe{\frac{1}{2}} \per \second \per \volt}$ is the conversion from volts to statvolts in SI units, and $V_\ast$ is the maximum non-jetting applied voltage. $V_*$ is empirically determined, providing the extra datum required to fully determine the critical hyperboloid.

Now solving the system (\ref{eq:YarinHyp1}), (\ref{eq:YarinHyp2}) will yield four sets of solutions for $h_\ast$, and $a$.
Three of these solutions are spurious: two for returning meniscus positions inside the nozzle and another for yielding $\eta$ outside of its domain when plugged into equation (\ref{etaraw}).
This leaves a single solution for the maximum stable meniscus length, thereby defining the transitions into and out of jetting via the transition function
\begin{align}
p_+&=\phi(p_-,\yvec) = 
\begin{cases} 
p_1 & h \geq h_\ast \land p_-=p_1 \\
p_2 & h < h_\ast \\
p_3 & h \geq h_\ast \land p_-\neq p_1 
\end{cases} \\
\textrm{where } \,
h_\ast &=  \alpha - \frac{\beta}{1-\xi_\ast^2} - 
\sqrt{ \frac{2\alpha \left( 
 \alpha \left( 1-\xi_\ast^2 \right)-\beta \right)
}
{1-\xi_\ast^2} } 
\\ \textrm{and } \,
\beta &=\sqrt{\left( 1-\xi_\ast^2 \right)\left(  \alpha^2\left( 1-\xi_\ast^2 \right) - r_M^2\xi_\ast^2 \right) } 
\end{align}
The ($-$) subscript indicates the state value preceding transition and the ($+$) subscript indicates the state value after transition. The transition is instantaneous.

While the outputs $\yvec$ representing the absolute flow rate and meniscus position should be continuous over transitions between locations, the dynamical states $\xvec$ must undergo a reset at each transition because the build-up and jetting dynamics use different set points from which $\delta h$ and $\delta Q$ are defined.

The reset's objective is thus to ensure $\yvec$ is continuous despite discontinuities in $\xvec$. For $\delta h$ and $\delta Q$, this amounts to subtracting and adding the output offsets of the jetting model when entering and exiting the jetting state, respectively. For $\delta \dot{Q}$, the flow rate dynamics in equations (\ref{eq:f1}-\ref{eq:Eexpand}) can be used to set the initial $\delta \dot{Q}$ when entering jetting, and the state can be set to zero when exiting jetting, as it is unused by the build-up dynamics.

Thus
% --------------------------
% ONE COLUMN
% --------------------------
\begin{equation}
    \xvec_+ = \psi(p_-,p_+,\xvec_-)=
    % =
    \begin{cases}
    \xvec_- +
    \begin{bmatrix}
    -h_N \\ -Q_f \\
    \begin{bmatrix}
    0&1&0
    \end{bmatrix}
    f_1(\xvec_-,\uvec_-)
    \end{bmatrix}
    & 
    \begin{bmatrix}
    p_- \\ p_+
    \end{bmatrix}
    =
    \begin{bmatrix}
    p_1 \\ p_2
    \end{bmatrix}
    % (p_-=p_1) \land (p_+=p_2)
    \\
    \xvec_- +
    \begin{bmatrix}
    h_N \\ Q_f \\ 
    \begin{bmatrix}
    0&-1&0
    \end{bmatrix}
    f_2(\xvec_-,\uvec_-)
    \end{bmatrix}
    & 
    \begin{bmatrix}
    p_- \\ p_+
    \end{bmatrix}
    =
    \begin{bmatrix}
    p_2 \\ p_3
    \end{bmatrix}
    % (p_-=p_2) \land (p_+=p_3)
    \end{cases}
\end{equation}
% --------------------------
% TWO COLUMN
% --------------------------
% \begin{multline}
%     \xvec_+ = \psi(p_-,p_+,\xvec_-)=
%     \\
%     \begin{cases}
%     \xvec_- +
%     \begin{bmatrix}
%     -h_N \\ -Q_f \\
%     \begin{bmatrix}
%     0&1&0
%     \end{bmatrix}
%     f_1(\xvec_-,\uvec_-)
%     \end{bmatrix}
%     & 
%     \begin{bmatrix}
%     p_- \\ p_+
%     \end{bmatrix}
%     =
%     \begin{bmatrix}
%     p_1 \\ p_2
%     \end{bmatrix}
%     \\
%     \xvec_- +
%     \begin{bmatrix}
%     h_N \\ Q_f \\ 
%     \begin{bmatrix}
%     0&-1&0
%     \end{bmatrix}
%     f_2(\xvec_-,\uvec_-)
%     \end{bmatrix}
%     & 
%     \begin{bmatrix}
%     p_- \\ p_+
%     \end{bmatrix}
%     =
%     \begin{bmatrix}
%     p_2 \\ p_3
%     \end{bmatrix}
%     \end{cases}
% \end{multline}

Now, with the dynamics of all locations and the process for transitioning between them defined, a complete end-to-end model of the e-jet cycle under subcritical actuation is achieved. The following sections discuss the experimental procedures and data processing involved in model verification.

\subsection{Experimental Methods}
\label{datProc}

\subsubsection{Data Collection and Processing}
\label{expMeth}
There are two main components of the process used for collecting data for system identification and validation of the hybrid e-jet model proposed above: the physical system used to generate high speed videos and image processing code that is used to extract signals of interest from the raw high speed videos.

The experimental setup consists of an e-jet printer (custom built at the University of Michigan), and a high speed camera (Vision Research, Phantom V9.0) which are automated and synchronized via the drivers of an X-Y-Z nanopositioning stage (Aerotech, “Planar\textsubscript{\emph{DL}}”) and software written in the Aerotech A3200 Motion Composer Integrated Development Environment.
The high speed camera is fixed with a 20x microscope lens assembly yielding \SI{0.65}{\micro\meter\per\pixel} resolution.
Norland Optical Adhesive 81, a UV curable monomer-photocatalyst mixture, is used as the ink for all experiments in this work.
A silicon wafer serves as the substrate. The microcapillary nozzle is supplied by Word Precision Instruments (product TIP30TW1), and sputter coated in house with gold-palladium alloy for conductivity.

Important physical parameters of the printer setup and video capture process are tabulated in table \ref{table:ExpParams}. Twenty single-droplet ejections are recorded for each element of a set $\Omega$ of ten $(T_p,V_h)$ pairs, listed in table \ref{table:Omega}.
All trials begin from the low voltage equilibrium meniscus position over a clean region of substrate with no prior fluid depositions.
For each pair, 10 of the 20 recordings are designated as training data, and the remaining as validation data.
These pairs are chosen to approximately cover the range of voltages within the subcritical regime for the given printer configuration and ink, as well as to offer some variation in $T_p$ where possible (the subcritical regime may be ``narrow'' at particularly low or high values of $V_h$). 

The experimental setup and procedure (other than the specific $(T_p,V_h)$ pairs tested) is identical to those of \cite{Spiegel2017}, which may be referenced for greater detail.
Additionally, Figure \ref{fig:ejetintro} contains a photograph of the experimental system.

\begin{table}
\centering
\caption{Experimental Setup and Process Parameters}
\label{table:ExpParams}
\setstretch{1.2}
\begin{tabular}{|l|l|l|}
\hline
Parameter                                                                     & Symbol & Value                                  \\ \hhline{|=|=|=|}
Nozzle outlet inner radius                                                             & $r_{I0}$        & $\SI{15}{\micro\meter}$                 \\ \hline
\begin{tabular}[l]{@{}l@{}} Nozzle outlet outer radius\\ (meniscus radius)\end{tabular}
& $r_M$           & $\SI{21.8}{\micro\meter}$               \\ \hline
Fluid column height                                                                    & $L$             & $\SI{5}{\centi\meter}$                  \\ \hline
Substrate position                                                                     & $h_N$           & $\SI{150}{\micro\meter}$                 \\ \hline
Ink Density                                                                            & $\rho$          & $\SI{1200}{\kilo\gram\per\meter\cubed}$ \cite{Pannier2015} \\ \hline
Ink Dynamic Viscosity                                                                  & $\mu$           & $\SI{0.3}{\pascal\second}$ \cite{noa81}               \\ \hline
Surface Tension Coefficient
& $\gamma$        & $\SI{0.039}{\newton\per\meter}$ \cite{pannier2017}       \\ \hline
Low Voltage                                                                            & $V_l$           & $\SI{525}{\volt}$                       \\ \hline
Sample Period                                                                          & $T_s$           & $\SI{50}{\micro\second}$
\\ \hline
\end{tabular}
\end{table}

\begin{table}
\centering
\caption{The set $\Omega$ of $\omega=(T_p,V_h)$ pairs}
\label{table:Omega}
\setstretch{1.2}
% \vspace{-0.5mm}
% 
% \begin{tabular}{|c|c|}
% \hline
% $V_h$ {[}V{]} & $T_p$ {[}$\SI[detect-weight=true]{}{\milli\second}${]}
% \\ 
% \hhline{|=|=|}
% 1100       & 2.0     \\ \hline
% 1150       & 2.0     \\ \hline
% 1200       & 2.0     \\ \hline
% 1250       & 2.0     \\ \hline
% 1300       & 1.5     \\ \hline
% \end{tabular}
% \qquad
% \begin{tabular}{|c|c|}
% \hline
% $V_h$ {[}V{]} & $T_p$ {[}$\SI[detect-weight=true]{}{\milli\second}${]}
% \\ 
% \hhline{|=|=|}
% 1300       & 1.8     \\ \hline
% 1300       & 2.0     \\ \hline
% 1300       & 2.3     \\ \hline
% 1350       & 2.0     \\ \hline
% 1370       & 2.0     \\ \hline
% \end{tabular}
\begin{tabular}{|c||c|c|c|c|c|c|c|c|c|c|}
\hline
    $V_h$ {[}V{]} &  1100 & 1150 & 1200 & 1250 & 1300 & 1300 & 1300 & 1300 & 1350 & 1370
    \\ \hline
     $T_p$ {[}$\SI[detect-weight=true]{}{\milli\second}${]} 
     & 2.0 & 2.0 & 2.0 & 2.0 & 1.5 & 1.8 & 2.0 & 2.3 & 2.0 & 2.0 \\\hline 
\end{tabular}
\end{table}

Once the videos are captured, $h$ and $Q$ signals are extracted for each trial via an image processing protocol consisting of substrate position identification, ink-nozzle interface identification, edge finding, region area/volume computation, and numerical differentiation. These operations are also described in detail in \cite{Spiegel2017}. The only procedure change from \cite{Spiegel2017} is that where the previous work discarded trials for which the algorithm failed to identify the ink-nozzle interface (a task made difficult by the dullness of the feature corners), this work retains all recorded video and uses manual interface identification for trials where the algorithm failed.

\subsubsection{System Identification}
\label{sysID}
For the physics-driven locations, there are two major parameters that must be found via system identification: $c_{r_I}$ and $c_{\delta E}$.
The system identification is done via the minimization of flow rate error in simulations performed over a mesh of parameter test values.

Each parameter is given a set of 100 linearly spaced test values on the ranges $c_{r_I}\in [1,35]$ and $c_{\delta E}\in [0.3,0.6]$, yielding 10,000 models in total.
For $c_{r_I}$, this range allows $r_I$ to vary from the nozzle outlet radius to approximately the radius of the main body of the micropipette.
For $c_{\delta E}$, this range is found by trial and error with boundaries chosen such that no $r_I$ in the aforementioned range yields simulation timeseries sufficiently approximating the empirical signals under holistic qualitative assessment.

Each of the 10,000 models is given a scalar error value, $Err$, by the the metric
% --------------------
% ONE COLUMN
% --------------------
\begin{equation}
Err(c_{\delta E},c_{r_I}) = 
\mean_{\omega\in\Omega,i\in I} 
\left(
\RMS_{k\in K_{p_1}(\omega,i)}
\left(
Q_{\omega,i}(kT_s) - \hat{Q}_{\omega,c_{\delta E},c_{r_I}}(kT_s)
\right)
\right)
\end{equation}
% --------------------
% TWO COLUMN
% --------------------
% \begin{multline}
% Err(c_{\delta E},c_{r_I}) = \\
% \mean_{\omega\in\Omega,i\in I} 
% \left(
% \RMS_{k\in K_{p_1}(\omega,i)}
% \left(
% Q_{\omega,i}(kT_s) - \hat{Q}_{\omega,c_{\delta E},c_{r_I}}(kT_s)
% \right)
% \right)
% \end{multline}
where $i\in I=\{1,2,\cdots,10\}$ is the index of the training data recording for a given $\omega$, $K_{p_1}(\omega,i)$ is the set of time steps in the build-up state for the $(\omega,i)$ recording, $Q_{\omega,i}$ is the measured flow rate trajectory of the $(\omega,i)$ recording, and $\hat{Q}_{\omega,c_{\delta E},c_{r_I}}$ is the simulated flow rate trajectory generated with the given test values of $c_{\delta E}$ and $c_{r_I}$, and the input dictated by $\omega$. These simulations are performed using a fourth order Runge-Kutta method with step-size $T_s=\SI{100}{\nano\second}$.

In addition to $c_{r_I}$ and $c_{\delta E}$, the low-voltage equilibrium position $h_{eq}$, the maximum non-jetting voltage $V_*$, and the input delay $d$ for the jetting dynamics of meniscus position must be determined empirically. $h_{eq}$ is found by averaging the empirical $h(0)$ over all training trials. $V_*$ is set to \SI{1000}{\volt} based on the results in \cite{pannier2017}, which uses the same ink and printer setup as the current work. For each $\omega\in\Omega$, $d$ is the median number of time steps during jetting for which $V(t)=V_l$ and $|\delta h(t)|\leq\texttt{tol}$, where \texttt{tol} is a tolerance close to 0.

For the jetting location, this work assumes that \begin{enumerate}[label=(A2.\arabic*),leftmargin=*]
\item
\label{A1}
each $(T_p,V_h)$ pair requires a different set of LTI model parameters, and
\item
\label{A2}
these parameter sets must be derived independently from one another.
\end{enumerate}
The LTI models for each $(T_p,V_h)$ pair are thus derived from only the 10 training recordings corresponding to that pair.
The discrete-time LTI model parameters for $Q$, $\tilde{a}_{Q0}$, $\tilde{a}_{Q1}$, and $\tilde{b}_Q$, are to be fit by basic least squares regression. For $h$, however, a standard least squares regression runs the risk of returning an LTI model whose steady state value under low voltage is near $h_*$ but does not cross $h_*$. Such a model would prevent a transition from jetting to relaxation from ever happening. Thus the system identification for meniscus position will constrain the least squares optimization such that the final value theorem applied to the model under low voltage yields
\begin{equation}
    h(t\rightarrow \infty) \geq 0.95 h_*
\end{equation}
Because the optimization is carried out over the decision variables $\tilde{a}_{h0}$ and $\tilde{b}_h$, this inequality constraint is realized as
\begin{equation}
    \begin{bmatrix}
    h_N - 0.95 h_* & V_h^2 - V_l^2
    \end{bmatrix}
    \begin{bmatrix}
    \tilde{a}_{h0} \\ \tilde{b}_h
    \end{bmatrix}
    \leq
    h_N - 0.95h_*
\end{equation}
which can be implemented via MATLAB's \texttt{lsqlin} function.

\subsection{Model Validation}
\label{modelVal}
This section first presents the results of the system identification processes described above, and uses these results to assess this work's approach to integrating data-driven components into e-jet modeling. Then, an error breakdown of the fully defined model against the validation data is presented, and analysis is given.

\subsubsection{System Identification Results}
\begin{figure}
    \centering
    \includegraphics[scale=0.9]{./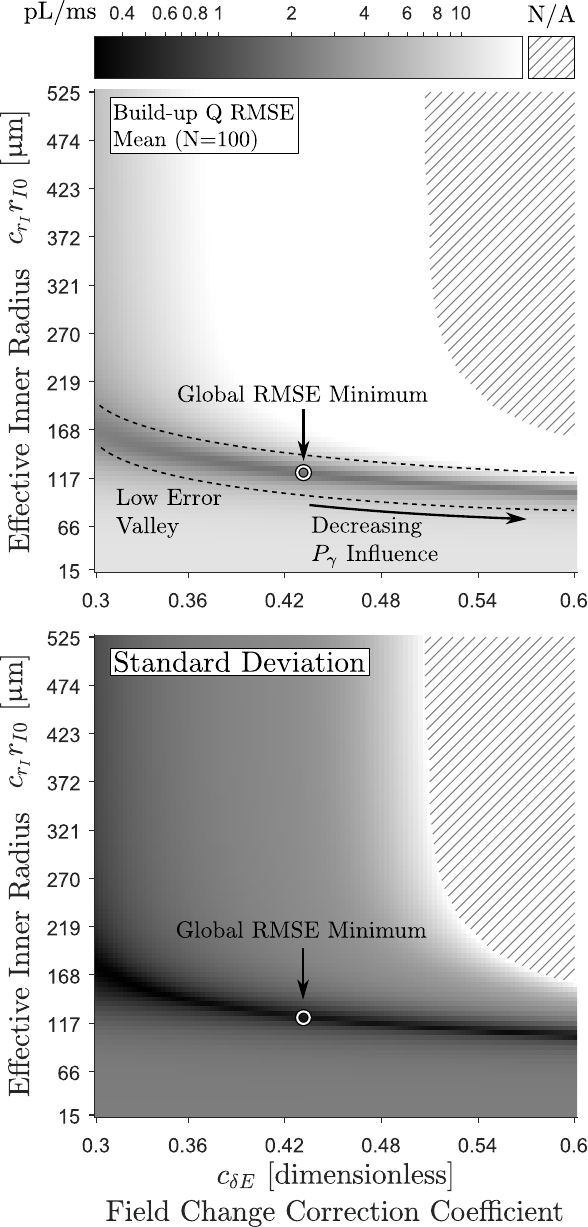}
    \caption{Training error, $Err$, and standard deviation of $Q$ in the build-up state with respect to $c_{\delta E}$ and $c_{r_I}$. The color bar across the top of the figure applies to both plots, and gives the logarithmic relationship between pixel brightness and the magnitude of the mean and standard deviation in picoLiters per millisecond (\SI{}{\pico\liter\per\milli\second}). The hatched regions represent $(c_{\delta E},c_{r_I})$ combinations corresponding to \texttt{Inf} error values.
    The circled point in both images represents the global minimum $Err$ found in this analysis.
    }
    \label{fig:BuildUpErr}
\end{figure}
\begin{figure}
    \centering
    \includegraphics[scale=0.57]{./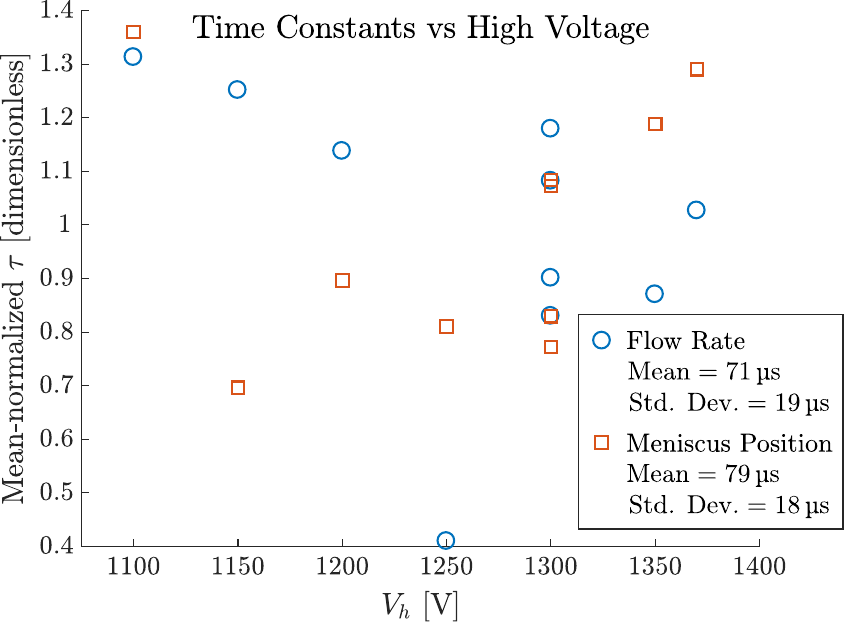}
	\includegraphics[scale=0.57]{./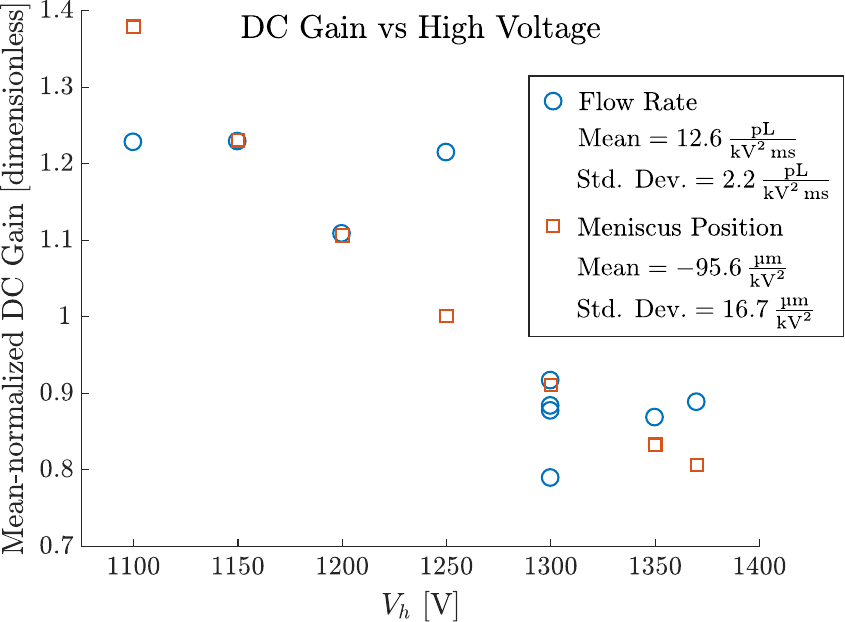}    
	\caption{
    Time constants ($\tau$) and DC Gains ($G$) of $Q$ and $h$ jetting models versus $V_h$. 
    For comparability between $Q$ and $h$, $\tau$ and $G$ are normalized by their mean value.
    No correlation is apparent between $\tau$ and $V_h$. 
    However, DC Gain and $V_h$ are clearly correlated.
    }
    \label{fig:DCGainAndTau}
\end{figure}

Figure \ref{fig:BuildUpErr} gives the error and standard deviation of the build-up $Q$ models over the 100-by-100 array of $c_{r_I}$ and $c_{\delta E}$ choices. The parameter values yielding minimum error are $c_{r_I}=7.87$ and $c_{\delta E}=0.44$, which are applied to all $(T_p,V_h)$ pairs for the given ink and printer configuration. This value of $c_{r_I}$ yields an effective inner shaft radius of $r_I=\SI{118}{\micro\meter}$. This is nearly an order of magnitude larger than the nozzle outlet inner radius, $r_{I0}=\SI{15}{\micro\meter}$. However, it is still significantly closer to $r_{I0}$ than it is to the capillary radius preceding the taper ($\sim400$-$\SI{500}{\micro\meter}$), which makes up $\sim 90\%$ of the fluid column. This verifies the necessity of using $c_{r_I}$ to find an effective radius for the fluid column in the physics-based modeling, as the true nozzle outlet radius and mean column radius would both fall outside of the low error valley seen in Figure \ref{fig:BuildUpErr}, regardless of choice of $c_{\delta E}$.

For the choice of $c_{\delta E}$ itself, the low error valley apparent in Figure \ref{fig:BuildUpErr} may tempt one to believe that the global minimum within this valley is not substantially different from other points along some ``minimum trajectory'' across the mesh, and that choosing a point in the valley yielding the least aggressive model adjustment (i.e. $c_{\delta E}$ closest to $1$) may be a better modeling choice. However, in fact the model behavior changes as $c_{\delta E}$ increases and $c_{r_I}$ correspondingly decreases. Specifically, the pressure due to surface tension ($P_\gamma$) loses influence over the flow rate dynamics compared to the pressures due to electric field ($P_E$) and shear forces ($P_\mu$), as can be seen by inspection of equation (\ref{eq:f1}). This degrades the model's ability to capture the system's transient response to the step increase in applied voltage, which is governed largely by the balance of $P_E$ and $P_\gamma$ due to $P_\mu$ being zero at $t=0$. Thus, while the error increases incurred by choosing $c_{\delta E}$ closer to 1 may be relatively small, this choice has nontrivial ramifications for the physical meaning of the model, which justifies the selection of $c_{\delta E}$ based on global minimum RMSE.

To assess the system identification approach taken for the jetting location, specifically the necessity of assumptions \ref{A1} and \ref{A2}, Figure \ref{fig:DCGainAndTau} illustrates how the transient and steady state behavior of the LTI models vary with applied voltage. This is done by plotting the time constants, $\tau$, and DC gains, $G$, for each $\omega$ against $V_h$.
$\tau$ and $G$ come from the roots of a system's characteristic polynomial (which determine the rate of exponential decay of transient responses in stable LTI systems) and the final value theorem applied to a system with a unit step input, respectively, and are given by
\begin{align}
    \tau_h&=\frac{-1}{a_{h0}} & \tau_Q &=\frac{-2}{a_{Q1}}  \\
    G_h &=  \frac{-b_h}{a_{h0}}  & G_Q &= \frac{-b_Q}{a_{Q0}}
\end{align}
Pearson correlation coefficients, $r$, and the corresponding $\mathfrak{p}$-values (stylized with a fraktur font to distinguish $\mathfrak{p}$ from the hybrid system's discrete state, $p(t)$) of the $\tau$ and $G$ magnitudes versus $V_h^2$ are given in table \ref{table:correlation}.
$r\in [-1,1]$ quantifies the degree to which two variables are linearly correlated (and the direction in which they are correlated).
$\mathfrak{p}$-values indicate the probability that the measured $r$ could arise from randomness given a true relationship of zero correlation. Thus, correlation coefficients associated with high $\mathfrak{p}$-values are marked irrelevant.
$r$ and $\mathfrak{p}$ are computed with respect to $V_h^2$ because the input to the dynamic system is based on squared voltage.

\begin{table}
\centering
\caption{$\mathfrak{p}$-values and correlation coefficients for LTI Metrics
vs. $V_h^2$
}
\label{table:correlation}
\vspace{-0.5mm}
\begin{tabular}{|c|c|c|}
\hline
  LTI Metric          & $\mathfrak{p}$                 & $r$      \\
\hhline{|=|=|=|}
$|\tau_h|$    & $0.67$            & Irrelevant \\
$|\tau_Q|$    & $0.23$             & Irrelevant \\
$|G_h|$& \SI{7e-8}{} & $-0.99$      \\
$|G_Q|$& $0.002$             & $-0.85$    \\\hline
\end{tabular}
\end{table}

The fact that the $\tau$ and $G$ values are far from constant across all $V_h^2$ clearly demonstrates the validity of \ref{A1}. \ref{A2} is more interesting. Time constant shows no correlation with $V_h^2$. 
This means that there is unlikely to be a meaningful linear relationship between $\tau$ and $\omega$, and that some independent system identification may be necessary for each $\omega$.
However, DC Gain exhibits strong correlation. This is a desirable result because it suggests the possibility of reducing the strictness of \ref{A2} by incorporating the dependence of DC Gain on $\omega$ into the framework of jetting-state dynamics, which may reduce the burden of system identification in future work.

\subsubsection{Total Model Validation}

To represent the results of the complete model, Figure \ref{fig:finalErr} gives a normalized root mean squared error (NRMSE) breakdown of the model for each 
location
with respect to the measured data across all $(T_p,V_h)$ pairs from the rising edge of the voltage pulse, $t=0$, to $t=\SI{14.65}{\milli\second}$.
For each
location
$p\in \{p_1,p_2,p_3\}$ and each output $y \in \{h,Q\}$ the NRMSE is calculated by
\begin{align}
\nonumber
    NRMSE(p,y) = \frac
    {
    \RMS\limits_{\omega\in\Omega,i\in I,k\in K_{p}(\omega,i)} 
    y_{\omega,i}(kT_s) - \hat{y}_\omega(kT_s)  
    }
    {\range\limits_{\omega\in\Omega,i\in I,k\in K_p(\omega,i,p)}
    y_{\omega,i}(kT_s)
    }
\end{align}
where $\hat{y}_\omega$ is the simulated output trajectory generated with the input dictated by $\omega$, and $K_p(\omega,i)$ is the set of time-steps for which the physical system trial $(\omega,i)$ is in state $p$.

\begin{figure}
    \centering
    \includegraphics[scale=1]{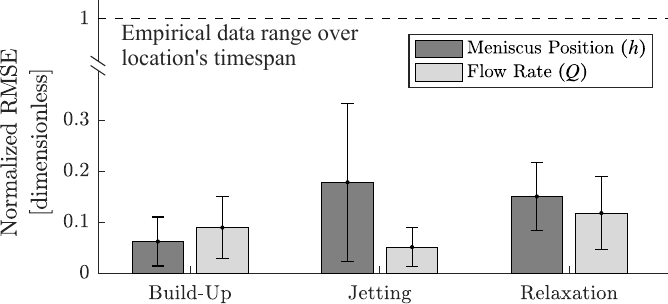}
    \caption{Complete model error given as the RMSE normalized by the range of measured data within each 
    location.
    Each bar is computed from 100 measured timeseries  averaging
    23 points each for build-up,
    34 points for Jetting, and
    237 points for relaxation. Error bars represent plus/minus one standard deviation.}
    \label{fig:finalErr}
\end{figure}

\begin{table}
\centering
\caption{
Transition Timing Error Mean ($e_\mu$) \& Standard Deviation ($e_\sigma$)
}
\label{table:TimingErr}
\setstretch{1.2}
\vspace{-0.5mm}
\begin{tabular}{|c|c|c|}
\hline
Transition            & $e_\mu \, [\SI{}{\milli\second}]$ & $e_\sigma \, [\SI{}{\milli\second}]$ \\ \hhline{|=|=|=|}
Build-up to Jetting   & $-0.26$                        & $0.12$                            \\ \hline
Jetting to Relaxation & $-0.04$                        & $0.05$                            \\ \hline
\end{tabular}
\end{table}

\begin{figure}
    \centering
    \includegraphics[scale=0.9,trim={0.4cm 1.75cm 1.15cm 5.75cm},clip]{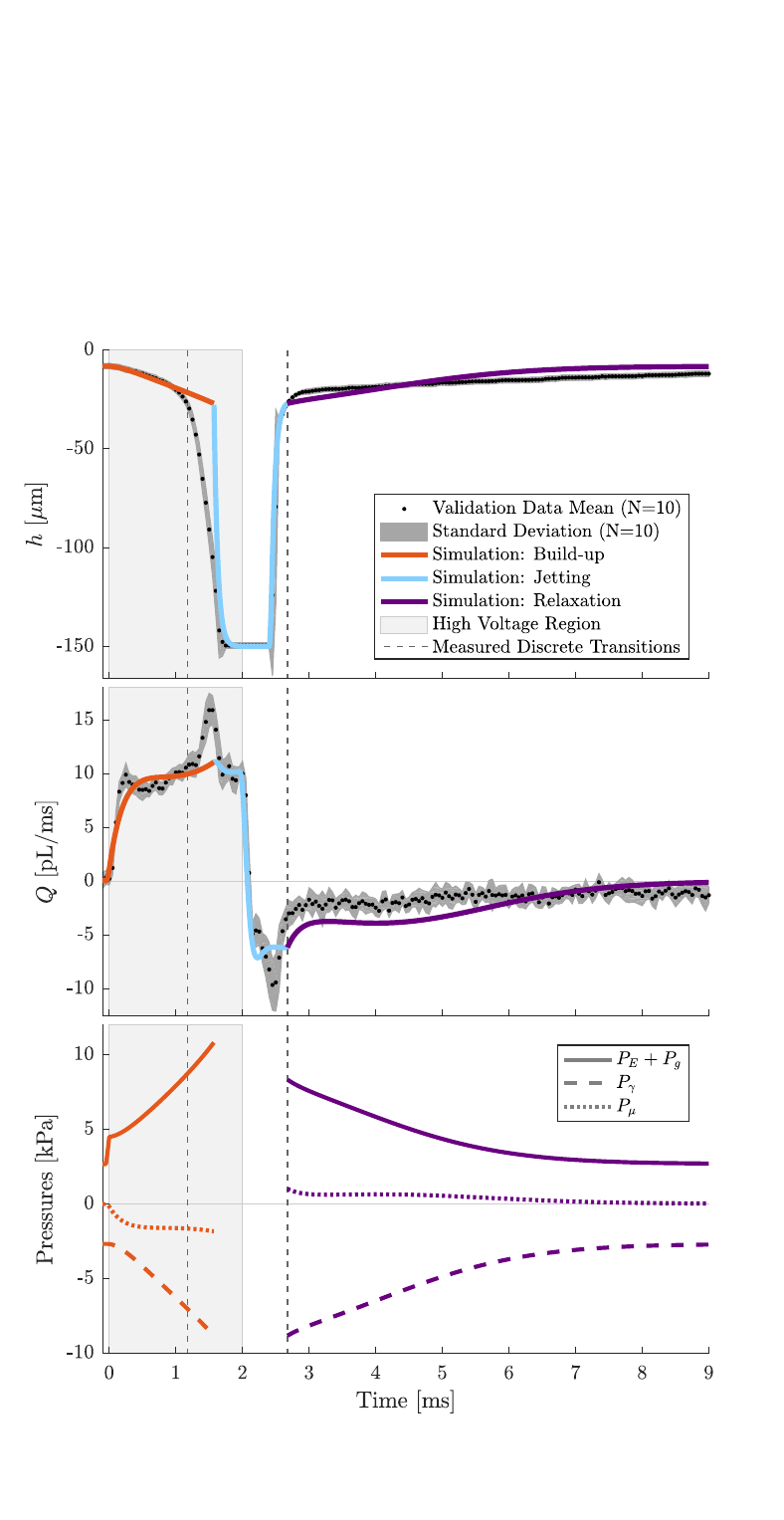}
    \caption{Plot of the simulation and measured data for $h$, $Q$, and the individual pressure components contributing to $Q$ for $T_p=\SI{2}{\milli\second}$ and $V_h=\SI{1150}{\volt}$. $P_E$ and $P_g$ are combined because $P_g$ is constant and they are both always positive, meaning that their sum should be balanced against $P_\gamma$ with the assistance of $P_\mu$.}
    \label{fig:timeseries01}
\end{figure}

Error of state transition times is reported in table \ref{table:TimingErr}. These errors are given as a simple mean and standard deviation, rather than an RMSE-style metric. This is done in order to preserve the sign of the error, and thus indicate whether the simulation transitions early (positive error) or late (negative error).

\begin{sloppypar}
Finally, a representative timeseries plot showing the simulated output trajectories 
against 
the measured trajectories is given in Figure \ref{fig:timeseries01}.
\end{sloppypar}

A qualitative assessment of Figure \ref{fig:timeseries01} shows that despite some notably erroneous features, the overall model reproduces the empirical timeseries satisfactorily. This holistic satisfaction is supported quantitatively by the NRMSE values presented in Figure \ref{fig:finalErr}, which illustrates that the 
range-normalized timeseries errors average to only 
11\% 
for both outputs across all 
locations
(where the average is evenly weighted with respect to the 
locations,
not individual points in time).
However, the standard deviations on these NRMSEs are relatively large, such that the maximum sum of standard deviation and NRMSE reaches 33\%.
While the average sum of the NRMSE and standard deviation is still just 18\%,
indicating that overall the error is acceptable despite some nontrivial spread,
the standard deviations still warrant discussion.

A key reason the standard deviation is large is that the error within a single timeseries is not evenly distributed. Instead, there are areas of small and large errors.
Take, for example, 
the meniscus position during jetting,
which has the largest NRMSE and standard deviation.
Figure \ref{fig:timeseries01} shows that towards the end of the jetting state, there is very little error between the measured and simulated $h$ trajectories. However at the beginning of the measured trajectory's jetting state (upon which the NRMSE computation is based), there is a brief period of large error while the simulated system is still in the build-up state.
As can be seen from table \ref{table:TimingErr}, the simulations' delayed transition into jetting (and thus the large error in $h$ during the period where the 
location
of the simulation and experiment are mismatched) is consistent, and is likely the cause of the relatively large standard deviations. In fact, if the error analysis is restricted to the time span in which the measured and simulated trajectories are in the same
location,
the sum of the standard deviation and NRMSE for $h$ in jetting decreases by 23\%.

While
location
mismatch is not the only factor contributing to the observed errors, it is worth focusing on in particular not only because of its numerical impact illustrated above, but also because it can be ascribed to a specific modeling assumption: that of the meniscus's shape.
While the paraboloidal shape assumption is clearly more accurate than the spherical cap assumption, it still increasingly overestimates the volume of the meniscus as the meniscus elongates, with the critical meniscus volume being overestimated by 16\% on average (compare to 75\% for the spherical cap). This means the model requires a greater liquid volume change per unit change in $h$ than the true system. Consequently, for the same $Q$ the simulated $h$ in this work grows more slowly than the physical system the nearer $h$ draws to $h_*$.
On top of this, any retardation of $h$ due to geometric model mismatch is amplified through the positive coupling of $h$ and $Q$. In other words, the geometric mismatch also causes a reduction in $\dot{Q}$, which further slows meniscus elongation beyond the direct effect of geometric mismatch on $\dot{h}$. This combination of factors ultimately results in delayed transitions to jetting.

Note that a hyperboloidal shape assumption is not intrinsically better than the paraboloidal model in this regard. Indeed, while combining constraints \ref{C1}, \ref{C2}, and \ref{C4} can be used to produce fully determined hyperboloids for use in the dynamic equations, these hyperboloids are virtually identical to the mathematically simpler paraboloids used in this work. This is illustrated in Figure \ref{fig:hypVparab}.
In other words, the ability of hyperboloids to capture the sharpness of the critical meniscus is contingent on the relaxation of \ref{C2}, and the alternative constraint used to derive the transition condition ($\xi=\xi_*$) is only valid at the critical meniscus.

\begin{figure}
    \centering
    \includegraphics[scale=0.7]{./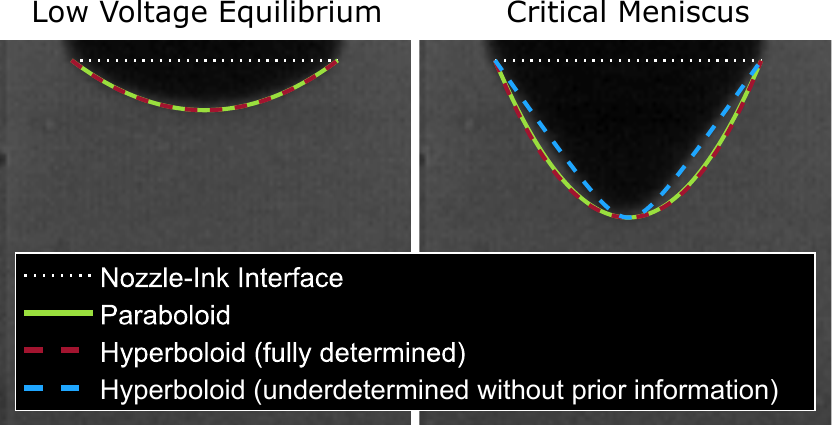}
    \caption{Paraboloidal and hyperboloidal electrode geometry approximations overlaid on nozzle/meniscus photographs at a low voltage equilibrium (left), and at the critical Taylor cone preceding jetting 
    (right). The ``underdetermined'' hyperboloid is given only at the critical meniscus because the necessary 
    prior
    information (in this case, the value of $\xi$) is only known for the critical meniscus (see section \ref{sec:Transition}).
    }
    \label{fig:hypVparab}
\end{figure}

Thus, while there is room for future studies to polish the model minutiae, the above results demonstrate the efficacy of the given hybrid system framework, along with this work's contributions to each of the component models, in capturing the end-to-end dynamics of an e-jet printing process.

\subsection{Physics-Focused Hybrid E-jet Modeling Conclusion}
\label{conc}
In summary, this section delivers the first end-to-end ODE-driven model of an e-jet printing process.
This is achieved by the combination of three major types of contribution.
First, a hybrid systems framework is presented for combining multiple partial process models into an end-to-end model.
Second, the scope of prior partial process models is expanded by means such as increasing the sophistication of geometric modeling, leveraging equilibrium information to structure dynamical equations, and analysis of the necessary areas in which to incorporate data-driven modeling elements. Lastly, transitions between the hybrid system's component models are determined via consideration of the partial process's stability.

\section{Piecewise Affine Modeling}
\label{sec:controlModeling}

The physics-focused model of Section \ref{sec:physicsModeling}
does not explicitly model the deposited fluid volume, instead only modeling the volumetric flow rate of fluid out of the nozzle.
This is in large part because past works have considered droplet volume ill-defined until the jet breaks, at which point the droplet volume was considered constant.
Thus,
there remains a gap 
in the
satisfaction of the requirements that a model be both control-oriented (i.e. ODE-based) and explicitly output deposited droplet volume.

The main contribution of the present section is
a hybrid system model framework for \ejet{} that bridges this gap.
Specifically, this section 
\begin{itemize}[leftmargin=*]%[label=(C2.\arabic*)]
\item
defines the droplet volume as a dynamical state variable that may evolve over time,
\item
presents a new division of the ejection process into partial processes to facilitate droplet volume modeling,
\item
proposes and experimentally validates a mapping between nozzle flow rate and deposited droplet volume 
enabled by the preceding bullets, and
\item
presents a new computer vision technique for taking consistent droplet volume measurements from high speed microscope video.
\end{itemize}

The remainder of the section is organized as follows. 
Section \ref{sec:model} presents the  model 
(i.e. the first three bullets).
Section \ref{sec:methods} presents the experimental methods for measurement, system identification, and model verification, including 
the final bullet.
Section \ref{sec:results} presents and discusses the 
model verification results.
Finally, concluding remarks are given in section \ref{sec:conclusion}.

\subsection{Dynamical Deposited Volume Model}
\label{sec:model}

\subsubsection{Droplet Volume Definition}
\label{sec:dropVolDef}
The plane of the nozzle outlet and the solid substrate surface provide obvious boundaries for a 
\ac{CV}
through which the total volume of fluid outside the nozzle,
$\mathcal{V}(t)$,
and 
the total flow rate through the nozzle outlet,
$Q(t)$,
may be analyzed and modeled.
Similarly, to analyze deposited droplet volume,
$\mathcal{V}_d(t)$,
as a dynamically evolving variable a droplet 
\cv{}
must be defined.
This \cv{} cannot be the same as the total fluid \cv{} because only a fraction of the cumulative flow out of the nozzle up until the jet breaks is deposited on the substrate.
The remainder of the fluid is retracted back into the nozzle under the power of surface tension after the jet breaks and the nozzle-connected fluid body and substrate-connected fluid body become disjoint.

This work introduces a \cv{} with an upper boundary at the $z$-coordinate $h_b$,  where the jet ultimately pinches closed and breaks into two disjoint fluid bodies,
as shown in Figure \ref{fig:CV}.
This \cv{} allows measurements of droplet volume to be made as time series data while remaining consistent with prior notions of droplet volume in that after the jet breaks the volume of fluid in the droplet \cv{} remains constant (assuming negligible evaporation).

\begin{figure}
    \centering
    \hspace*{0.4in}
    \includegraphics[scale=0.6]{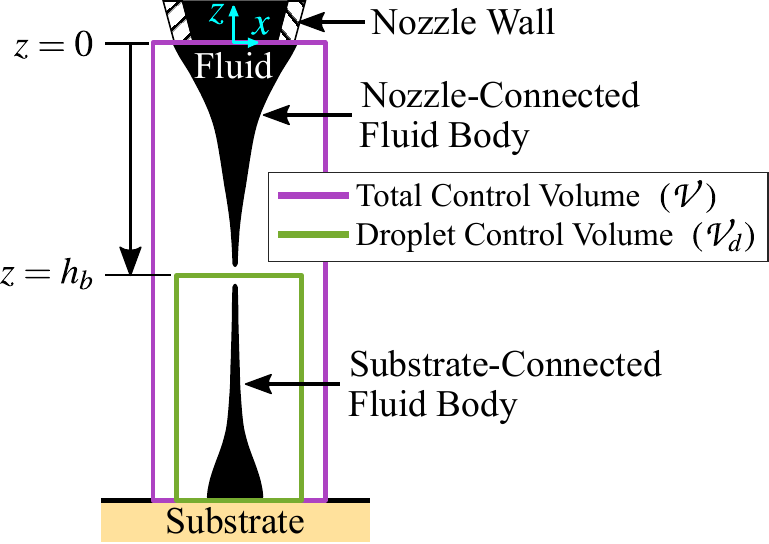}
    \caption{Schematic of the two \cv{}s used in this work superimposed over an illustration of a jet immediately after breaking.}
    \label{fig:CV}
\end{figure}

To avoid dramatic increases in complexity, this work's model does not explicitly use $h_b$.
The jet break point
is only used to facilitate defining droplet volume as a time series signal and for extracting time series measurements of droplet volume from microscope videos.
Theoretical derivation of the jet break point's position is thus beyond the scope of this work, and it is estimated independently for each material ejection
as described in Section \ref{sec:ID}.

\subsubsection{Hybrid System Architecture}
Ultimately, this work's model is given as the cascading of two discrete-time state-space systems with state transition formulas of the form $f:X\times U\times T\rightarrow X$ where $X$ is the state vector space, $U$ is the input vector space, and $T$ is the time vector, all over the field $\real$. These two systems are the input-to-nozzle-flow-rate model
\begin{equation}
    \begin{bmatrix}
    Q(t+T_s)
    \\
    Q(t+2T_s)
    \end{bmatrix}
    =f_Q\left(
    \begin{bmatrix}
    Q(t)
    \\
    Q(t+T_s)
    \end{bmatrix}
    ,\,
    V(t)^2-V_l^2
    ,\,\,
    t
    \right),
\end{equation}
and the nozzle-flow-rate-to-droplet-volume model
\begin{equation}
    \begin{bmatrix}
    \mathcal{V}(t+T_s)
    \\
    \mathcal{V}_d(t+T_s)
    \end{bmatrix}
    =f_{\mathcal{V}_d}\left(
    \begin{bmatrix}
    \mathcal{V}(t)
    \\
    \mathcal{V}_d(t)
    \end{bmatrix}
    ,
    \begin{bmatrix}
    Q(t)
    \\
    Q(t+T_s)
    \end{bmatrix}
    ,\,\,
    t
    \right).\textit{}
\end{equation}
$T_s$ is the sample period in seconds, $t$ is the time from the rising edge of the voltage pulse in seconds, and $V(t)$ is the applied voltage signal.

Both $f_Q$ and $f_{\mathcal{V}_d}$ are piecewise defined to capture switching between partial process dynamics and to capture state resets---functions that execute upon certain switches and alter the dynamical states before the first evaluation of the newly active partial process dynamics.

In 
Section \ref{sec:physicsModeling}
the division of the material ejection process into partial processes 
was done to maximize the use of physics-driven first principles 
model components, and was based on the stretching of the meniscus beyond its maximum stable non-jetting equilibrium extension.

This work presents an alternate breakdown of the material ejection process designed to facilitate modeling of the deposited droplet volume. This breakdown revolves around whether or not there exists a contiguous fluid stream between the nozzle and substrate, as the cessation of this contiguity is synonymous with the cessation of flow into or out of the droplet \cv. Specifically, the complete process is broken into an ``approach'' stage, a ``contiguity'' stage, and a ``retraction'' stage.
To better focus on the mapping between $Q$ and $\vol_d$, instead of modeling the meniscus tip position dynamics this work assumes that the timing of jet impingement and breaking are determined solely by the pulse parameters $V_l$, $V_h$, and $T_p$.
Switching is thus governed by time: with the rising edge of the voltage pulse set to $t=0$, transition from approach to contiguity occurs when $t$ exceeds $t_c(V_l,V_h,T_p)$ and transition from contiguity to retraction occurs when $t$ exceeds $t_r(V_l,V_h,T_p)$, where $t_c$ and $t_r$ are identified from data for each set of pulse parameters as described in Section \ref{sec:ID}.

The lone reset in the system is applied to $\mathcal{V}_d$ upon the switch from approach to contiguity. The complete model architecture is thus visualized by Figure \ref{fig:architecture}.

\begin{figure}
    \centering
    \includegraphics[width=3.25in]{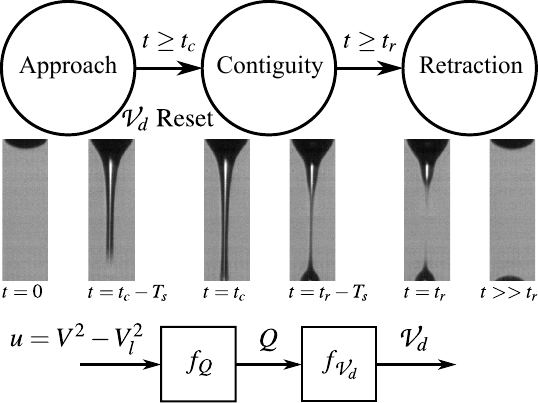}
    \caption{
    System Architecture. \emph{Top:} Automaton
    illustrating the
    timed switching behavior of the system model and the reset determining the initial contiguity droplet volume.
    Each automaton location is accompanied by photographs of the first and final samples of the corresponding partial process from a video with $V_h=\SI{1300}{\volt}$ and $T_p=\SI{1.5}{\milli\second}$.
    \emph{Bottom:} Block diagram illustrating breakdown of a complete input-to-droplet-volume model into a nozzle flow rate model $f_Q$ and a droplet volume model $f_{\vol_d}$, both of which are piecewise defined to capture the switching and reset behavior of the automaton.
    }
    \label{fig:architecture}
\end{figure}

\subsubsection{Deposited Droplet Volume Model}
\label{sec:dropmodel}
If the shape of the fluid-air interface 
 were constant over time and the fluid incompressible, 
the volumetric flow rate out of the nozzle, $Q$, would be equal to the volumetric flow rate into the droplet control volume, $Q_d$.
However, observation of 
video data
indicates that 
the interface broadens slowly but steadily while voltage is high during contiguity. This implies that $Q_d$ is only some fraction of $Q$. The video observation is corroborated by observation of the extracted time series data, such as that shown in Figure \ref{fig:staticJustification}.
This motivates a simple proportional model between $Q$ and $Q_d$ during contiguity:
\begin{equation}
    Q_d(t) = \inputcoef_{Q_d} Q(t)
    \label{eq:static}
\end{equation}
where $\inputcoef_{Q_d}$ is a constant.

Trapezoidal integration of $Q_d(t)$ in equation (\ref{eq:static}) yields the first-order discrete-time droplet volume model for the contiguity stage
\begin{equation}
    \vol_d(t+T_s) = \vol_d(t) + \frac{T_s}{2}\inputcoef_{Q_d}Q(t+T_s) + \frac{T_s}{2}\inputcoef_{Q_d}Q(t)
\end{equation}
where $\vol_d$ is the droplet volume and $T_s$ is the sample period.

\begin{figure}
    \centering
    \includegraphics[width=4.25in]{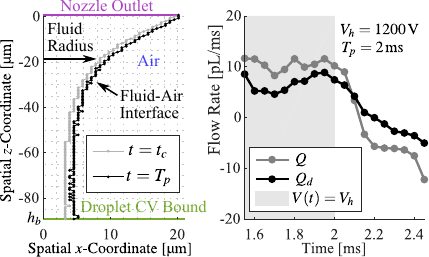}
    \caption{
    Measured data from a particular 
    ejection video illustrating the motivation for a proportional $Q_d$ model.
    \emph{Left:} Half-outline of fluid body (the jet is roughly symmetric about $z$-axis) at the onset of contiguity and at the end of the voltage pulse, representing a $7\%$ increase in volume outside the droplet \cv. \emph{Right:} Flow rate time series data during contiguity illustrating roughly proportional signals between $Q$ and $Q_d$.
    }
    \label{fig:staticJustification}
\end{figure}

Volume is also added to the droplet 
\cv{}
during the approach stage 
when the tip of the meniscus crosses the upper boundary of the droplet 
\cv{}
but has not yet struck the substrate.
However, each ejection video only provides a few samples of this situation, as the jet traverses the distance from the jet break position to the substrate relatively quickly.
It may thus be impractical to identify a dynamical model of volume increase during the approach stage.
Instead, 
this work sets $\vol_d(t+T_s)$ to $\vol_d(t)$ during approach, and uses a reset to give $\vol_d$ an initial condition in the contiguity stage, which accounts for the fluid added to  the droplet \cv{} during approach.

The total volume of the fluid outside the nozzle, $\vol$, at the moment of jet collision with the substrate, $t_c$, may be roughly modeled as the volume of the Boolean union of a 
cylinder and a cone arranged to approximate the fluid body shape.
$\vol_d$
is then some fraction of the cylinder 
volume.
This is equivalent to some fraction, $\reseto$, of the total volume minus the Boolean difference of the cone and the cylinder, $\reseti$. Physics-driven modeling of the jet diameter and break position necessary to explicitly
calculate $\reseto$ and $\reseti$
are beyond the scope of this work, 
but
the structure of the mapping between $\mathcal{V}$ and $\mathcal{V}_d$ at $t=t_c$ arising from this geometric analysis may still be used:
\begin{equation}
    \vol_d^+ = \vol_d^- + \reseto\vol^- + \reseti
    \label{eq:reset}
\end{equation}
where $\reseto$ and $\reseti$ require data-driven identification and the subscripts $+$ and $-$ indicate a state's value before and after reset. $\vol_d^-$ will be 0 unless there was already fluid in the droplet control volume (e.g. if a second pulse is fired over an existing droplet).

Equation (\ref{eq:reset}) requires total volume $\vol$ be captured by the state dynamics, which can be done with a trapezoidal integration of the input $Q$ similar to that of equation (\ref{eq:static}). This addition completes the hybrid model of droplet volume evolution in terms of total flow rate input, which can be given
in totality as
\begin{multline}
    \begin{bmatrix}\vol(t+T_s)\\\vol_d(t+T_s)\end{bmatrix}=
    f_{\mathcal{V}_d}\left(
    \begin{bmatrix}
    	% \vol(t+T_s)\\\vol_d(t+T_s)
    	\vol(t)\\\vol_d(t)
    	\end{bmatrix},
    \begin{bmatrix}
    Q(t)
    \\
    Q(t+T_s)
    \end{bmatrix},\,\,t
    \right)
    =
    \\
    \begin{cases}
    \begin{bmatrix}
    \vol(t) + \mathscr{Q}
    \\
    \vol_d(t)
    \end{bmatrix}
    &
    t<t_c \,\, \lor \,\, t\geq t_r
    \\
    \begin{bmatrix}
    \vol(t) + \mathscr{Q}
    \\
    \reseto\vol(t) + \vol_d(t) + \inputcoef_{Q_d}\mathscr{Q} + \reseti
    \end{bmatrix}
    &
    t_c\leq t < t_c+T_s
    \\
    \begin{bmatrix}
    \vol(t) + \mathscr{Q}
    \\
    \vol_d(t) + \inputcoef_{Q_d}\mathscr{Q}
    \end{bmatrix}
    &
    t_c+T_s \leq t < t_r
    \end{cases}
    \label{eq:volumeModel}
\end{multline}
where
\begin{equation}
    \mathscr{Q}=\frac{T_s}{2}\left(Q(t+T_s) + Q(t)\right)
\end{equation}
and $t<t_c$ corresponds to the approach stage, $t\geq t_r$ corresponds to the retraction stage, $t_c+T_s\leq t<t_r$ corresponds to all but the first time step of the contiguity stage, and $t_c\leq t < t_c+T_s$ corresponds to the first time step of contiguity, in which the reset is applied.

\subsubsection{Nozzle Flow Rate Model}
The main focus of this chapter is the development and validation of the mapping between nozzle flow rate $Q$ and deposited droplet volume $\mathcal{V}_d$.
This could be done by simply injecting measured $Q$ data into equation (\ref{eq:volumeModel}) and assessing the generated $\mathcal{V}_d$ signals against measured droplet volumes.
However, for control there must ultimately be a model 
with input based on applied voltage $V(t)$
rather than $Q(t)$.
To demonstrate the viability of 
equation
(\ref{eq:volumeModel})
for this purpose,
this section presents a simple $V$-to-$Q$ model based on 
Section \ref{sec:physicsModeling},
which is cascaded
with the $\mathcal{V}_d$ model.

The jetting model in 
Section \ref{sec:physicsModeling}
is a second-order LTI system, which may be represented in discrete time as
% % BIND
% \begin{multline}
%     Q(t+2T_s) = 
%     % \\
%     a_{Q,1}(V_l,V_h,T_p)Q(t+T_s) + a_{Q,2}(V_l,V_h,T_p)Q(t) 
%     \\
%     + b_{Q,2}(V_l,V_h,T_p)u(t)
%     \label{eq:priorQ}
% \end{multline}
% % ONLINE
\begin{equation}
    Q(t+2T_s) = 
    % \\
    a_{Q,1}(V_l,V_h,T_p)Q(t+T_s) + a_{Q,2}(V_l,V_h,T_p)Q(t) 
    % \\
    + b_{Q,2}(V_l,V_h,T_p)u(t)
    \label{eq:priorQ}
\end{equation}
where the input $u(t)$ is given as
\begin{equation}
    u(t) = V(t)^2 - V_l^2
\end{equation}
This choice of input is made because the physics-based first principles models of flow rate are driven by the applied voltage squared, and because at the 
low voltage stable equilibrium,
$Q$ should be zero. The model parameters $a_{Q,1}$, $a_{Q,2}$, and $b_{Q,2}$ are identified independently for each pulse definition in 
Section \ref{sec:physicsModeling}.

Section \ref{sec:physicsModeling}
uses equation (\ref{eq:priorQ}) when the nonlinear physics first principles models cease to capture the observed dynamics.
This happens during contiguity and in approach and retraction when the meniscus is sufficiently elongated.
Because the nonlinear models cannot capture the entirety of approach or contiguity, incorporating them into this work's deposited-volume-focused switching framework would substantially complicate the model.
Thus to preserve the model's focus and manage complexity while still accounting for changes in dynamical behavior over the course of ejection, the structure of equation (\ref{eq:priorQ}) is applied to the entire model with separate parameters identified for contiguity and non-contiguity partial processes.
This results in the model
\begin{multline}
    \begin{bmatrix}
    Q(t+T_s) 
    \\
    Q(t+2T_s) 
    \end{bmatrix}
    =
    f_Q\left(
    \begin{bmatrix}
    Q(t) 
    \\
    Q(t+T_s)
    \end{bmatrix},u(t),\,\,t
    \right)=
    \\
    \begin{cases}
    \begin{bmatrix}
    0 & 1 \\ \tilde{a}_{Q,2}
    & \tilde{a}_{Q,1}
    \end{bmatrix}
    \begin{bmatrix}
    Q(t) \\ Q(t+T_s)
    \end{bmatrix}
    +
    \begin{bmatrix}
    0\\\tilde{b}_{Q,2}
    \end{bmatrix}
    u(t)
    &
    t<t_c \,\, \lor \,\, t\geq t_r
    \\
    \begin{bmatrix}
    0 & 1 \\ \overline{a}_{Q,2}
    & \overline{a}_{Q,1}
    \end{bmatrix}
    \begin{bmatrix}
    Q(t) \\ Q(t+T_s)
    \end{bmatrix}
    +
    \begin{bmatrix}
    0\\\overline{b}_{Q,2}
    \end{bmatrix}
    u(t)
    &
    t_c \leq t < t_r
    \end{cases}
    ,
    \label{eq:flowModel}
\end{multline}
where the tilde-topped and overlined parameters are separately identified 
(and have the input arguments $(V_l,V_h,T_p)$ dropped for compactness)
and correspond to the approach and retraction stages and the contiguity stage, respectively.

\subsection{Experimental Methods}
\label{sec:methods}

\subsubsection{Experimental Setup}

The \ac{e-jet} printing setup is identical to that of Section \ref{expMeth}.

Twelve sets of $(V_l,V_h,T_p)$ parameters, referred to as ``experiments,'' are tested, with 20 trials of each experiment being recorded.
All trials begin from the low voltage equilibrium meniscus position over a clean region of substrate with no prior fluid depositions: $V(0)=V_l$, $Q(0)=Q(T_s)=0$, 
$\vol_d(0)=0$, and $\vol(0)$ is the small total fluid volume outside the nozzle at low voltage equilibrium (see Figure \ref{fig:ejetschematic}, \SI{0}{\milli\second}).
Time $t=0$ is defined at the rising edge of the voltage pulse. $V_l=\SI{525}{\volt}$ for all experiments. $V_h$ and $T_p$ values are tabulated in Table \ref{table:exp}.

\begin{table}
\centering
\caption{
Experimental High Voltage and Pulse Width Pairs
}
\label{table:exp}
\renewcommand{\arraystretch}{1.1}
\begin{tabular}{|c|c|}
\hline
$V_h$ {[}V{]} & $T_p$ {[}$\SI[detect-weight=true]{}{\milli\second}${]}
\\ 
\hhline{|=|=|}
1100       & 2.0     \\ \hline
1150       & 2.0     \\ \hline
1200       & 2.0     \\ \hline
1250       & 2.0     \\ \hline
\end{tabular}
\,
\begin{tabular}{|c|c|}
\hline
$V_h$ {[}V{]} & $T_p$ {[}$\SI[detect-weight=true]{}{\milli\second}${]}
\\ 
\hhline{|=|=|}
1300       & 1.5     \\ \hline
1300       & 1.8     \\ \hline
1300       & 2.0     \\ \hline
1300       & 2.3     \\ \hline
\end{tabular}
\,
\begin{tabular}{|c|c|}
\hline
$V_h$ {[}V{]} & $T_p$ {[}$\SI[detect-weight=true]{}{\milli\second}${]}
\\ 
\hhline{|=|=|}
1350       & 2.0     \\ \hline
1370       & 2.0     \\ \hline
1420       & 1.5     \\ \hline 
1470       & 1.5     \\ \hline
\end{tabular}
\end{table}

\subsubsection{High Speed Microscopy \& State Extraction}
\label{sec:compvis}
Each frame of video is a grayscale image containing the nozzle tip, the fluid outside the nozzle, and---if there is fluid near enough to the substrate---a reflection of the fluid off of the substrate.
The image processing protocol used to extract time series measurements of $\vol$ and $\vol_d$ from these images is nearly the same as that 
of
\cite{Spiegel2017}. Edge finding identifies the $(x,z)$ coordinates for the silhouette of the nozzle tip, fluid, and reflection. Corner finding and extremum finding identify the $z$-coordinates of the nozzle-fluid interface and the substrate-fluid interface. Finally, the width of the silhouette at each $z$-coordinate is treated as the diameter of a disk of height equal to the image resolution (i.e. the height of one pixel) for volume determination.
$Q$ and $Q_d$ measurements are numerical derivatives of $\vol$ and $\vol_d$ measurements.
On top of this established procedure, this work introduces a method to harness the reflection for improving volume measurement consistency near the substrate and a method for estimating jet break position, both of which are illustrated in Figure \ref{fig:compvisMethod} and explained in detail below.

In previous works, the reflection was eliminated from volume calculations 
entirely.
However, due to the quantization error associated with fixing the substrate position measurement to a pixel 
edge 
(and possibly other optical or image processing imperfections), this direct calculation leads to nonzero flow of fluid through the bottom of the droplet 
\cv{}
as the droplet spreads.
This change in droplet volume measurement during the retraction stage makes identifying a final droplet volume value difficult.
Thus this work introduces reflection-augmented volume measurement to 
maintain conservation of volume near the substrate.
Given images that extend $\Delta p_r$ pixels below the 
estimated substrate position,
a region $2\Delta p_r$ pixels tall centered on the estimated substrate position is defined as the ``reflection-augmented measurement region.''
The volume of fluid contained in the upper half of this region (i.e. the portion of the region containing direct fluid silhouette rather than reflection) is taken as half the volume computed in the total reflection-augmented measurement region.
In other words, the true fluid volume is taken as the average of the silhouette volume and reflection volume.

\begin{figure}
    \centering
    \includegraphics[scale=0.54]{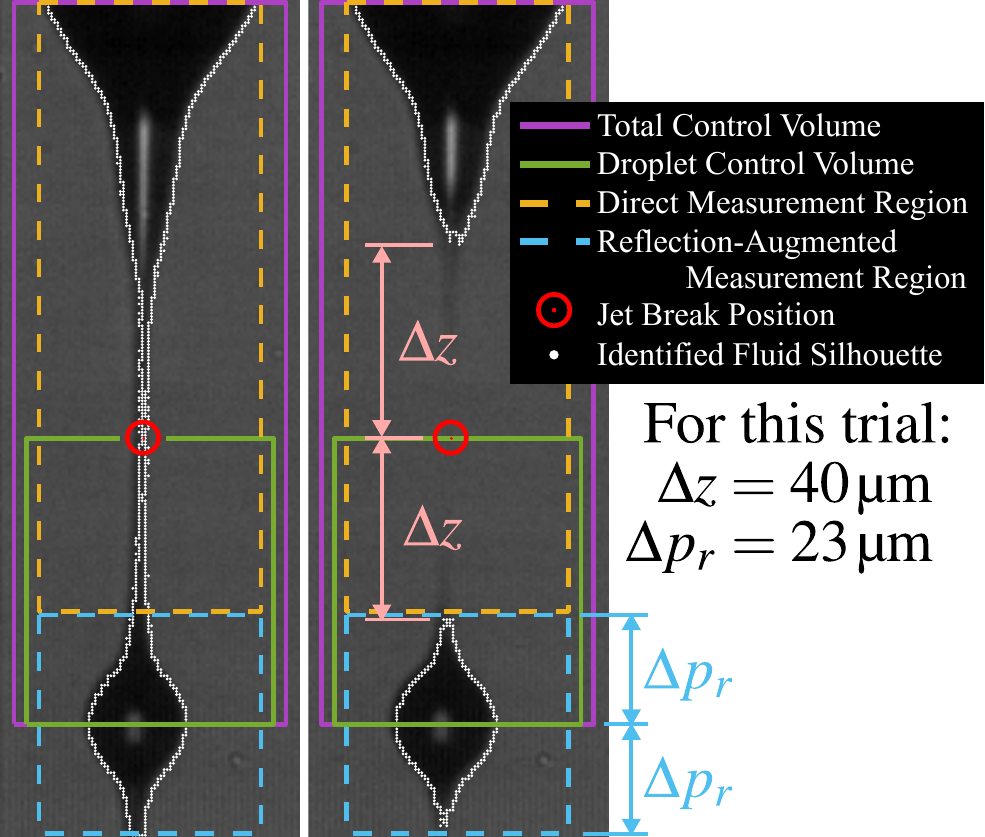}
    \caption{
    Depiction of the two \cv{}s described in Section \ref{sec:dropVolDef}, the distinction between the direct and reflection-augmented measurement regions, and the estimated jet break position superposed on the final contiguity frame and first retraction frame for a particular trial ($V_h=\SI{1300}{\volt}$, $T_p=\SI{1.5}{\milli\second}$). $\Delta z$ is the distance from either fluid body tip to the jet break position.
    }
    \label{fig:compvisMethod}
\end{figure}

The jet break position for each trial is measured from the first video frame in which there are two disjoint fluid bodies (i.e. the first frame of retraction).
The jet break position is estimated to be the midpoint between the tips of these bodies based on the assumption that the initial droplet and meniscus tip velocities and accelerations are equal and opposite at the moment the jet breaks.
This assumption is driven by the fact that the initial jet-breaking and retraction behavior is dominated by surface tension. While this assumption neglects much of the complexity of the true retraction physics, Figure \ref{fig:compvisMethod} suggests it does an acceptable job of identifying the thinnest portion of the jet immediately before breaking, and it circumvents image resolution and noise issues associated with directly computing the thinnest jet point in the final frame of contiguity.

\subsubsection{System Identification}
\label{sec:ID}
\begin{sloppypar}
The parameters to be identified can be grouped into two categories. First are the primary model parameters making up the dynamics of the 
$Q$-to-$\vol_d$
model (\ref{eq:volumeModel}): $\inputcoef_{Q_d}$, $\reset_0$, and $\reset_1$, which are constant over all experiments. Second are the supporting model parameters determining the timing of switching in the volume model (\ref{eq:volumeModel}) and the simulated nozzle flow rate model (\ref{eq:flowModel}): $t_c(V_l,V_h,T_p)$, $t_r(V_l,V_h,T_p)$, $\tilde{a}_{Q,1}(V_l,V_h,T_p)$, $\tilde{a}_{Q,2}(V_l,V_h,T_p)$, $\tilde{b}_{Q,2}(V_l,V_h,T_p)$, $\overline{a}_{Q,1}(V_l,V_h,T_p)$, $\overline{a}_{Q,2}(V_l,V_h,T_p)$, and $\overline{b}_{Q,2}(V_l,V_h,T_p)$, whose values vary with the the experiment parameters. The measured data is divided evenly into training and validation data. For each experiment, 10 trials are reserved for training and 10 for validation.
\end{sloppypar}

In order to keep the timing of switching fixed to a particular sample, for each experiment $t_c$ and $t_r$ are taken as the median time of the first frame of contiguity and retraction over the 10 training trials of that experiment (1 parameter from 10 samples).
All other parameters are identified via least squares regression.
The model coefficients for 
equation
(\ref{eq:flowModel})
are trained independently for each experiment on all the available data in the corresponding process stage (3 parameters from 353 samples for approach and 194 samples for contiguity, on average).
The reset parameters $\reset_0$ and $\reset_1$ are trained on the set of first frames of contiguity (i.e. the $t_c\leq t<t_c+T_s$ sample) from all trials (2 parameters from 120 samples). Finally, $\inputcoef_{Q_d}$ is trained on all contiguity training data (1 parameter from 2337 samples).

\subsubsection{Error Metrics}
This section presents metrics for validating the reflection-augmented volume measurement technique,
and the ability of the total model to predict deposited droplet volume.

The improvement yielded by reflection-augmented volume measurement 
over direct measurement
is 
quantified as 
the mean percent decrease in total variation of the filtered retraction-stage $\vol_d$ timeseries 
between 
the two
techniques.
The total variation $L$ of a time-varying parameter is its total change (as opposed to net change) over a given period of time $T_L$.
In theory, the total variation of $\vol_d$ in the retraction stage is zero,
making its reduction a practical improvement metric.
However, high frequency measurement noise
also contributes to $L$.
To moderate noise's influence, the $\vol_d$ signal is filtered before its total variation is computed.
Thus, the total variation of a trial $j$ is given by
\begin{equation}
    L_j = \sum_{t=t_r}^{t_r+T_L} |\vol_d^f(t+T_s)-\vol_d^f(t)|
    \label{eq:totalVar}
\end{equation}
where $\vol_d^f$ is the filtered volume signal.
The final metric for reflection-augmented volume measurement performance is
\begin{equation}
    \Delta L\% = 100\mean_{j\in\textrm{All Validation Trials}} \frac{L_j^{\textrm{direct}}-L_j^{\textrm{augmented}}}{L_j^{\textrm{direct}}}
    \label{eq:percentDecrease}
\end{equation}.

Here, a Savitzky-Golay filter with a window size of 15 samples is used. $T_L=\SI{10}{\milli\second}$ (200 samples), roughly the time it takes for the droplet to spread and settle to its final shape on the substrate.

The efficacy of the overall model in predicting deposited droplet volume is measured by the mean unsigned error between the modeled $\vol_d(t_r)$ (equal to $\vol_d(t>t_r)$) and the measured final droplet volume. Measured final droplet volume is taken as
\begin{equation}
    \vol^{\textrm{final}}_{d,j} = \mean_{t\in \left[t_r,t_r+\SI{10}{\milli\second}\right]} \vol_{d,j}^{\textrm{meas}}(t)
\end{equation}
where $\vol_{d,j}^{\textrm{meas}}(t)$ is the measured droplet volume time series for a particular trial $j$,
making the mean unsigned error
\begin{equation}
    e_{\vol_d} = \mean_{j\in J}|\vol^{\textrm{final}}_{d,j}-\vol_{d,j}(t_r)|
    \label{eq:dropErr}
\end{equation}
This metric is evaluated over multiple sets of trials $J$. In addition to an aggregate $e_{\vol_d}$ in which $J$ contains 
the validation trials 
of
all but one experiment (that of lowest $V_h$, see Section \ref{sec:dropletVal} for discussion of this exclusion), individual $e_{\vol_d}$ values are computed for each experiment. This is done to examine how model performance changes with the pulse parameters.
Additionally, for each of these sets $J$, both an $e_{\vol_d}$ using $\vol_{d}(t_r)$ generated from injecting measured nozzle flow rate into equation (\ref{eq:volumeModel}) and an $e_{\vol_d}$ using $\vol_{d}(t_r)$ generated from nozzle flow rate simulated by equation (\ref{eq:flowModel}) are computed. This is done to enable both focus on the quality of equation (\ref{eq:volumeModel}) and broader consideration of the ultimate needs for a control-oriented \ejet{} model, respectively.
Finally, along with each $e_{\vol_d}$, a corresponding standard deviation of the signed error is presented.

\subsection{Results \& Discussion}
\label{sec:results}

\subsubsection{Droplet Volume Measurement}
Equations (\ref{eq:totalVar}) and (\ref{eq:percentDecrease}) show that the reflection-augmented image processing yields a 42\% decrease in total variation of measured droplet volume time series data compared to direct measurement, with an associated standard deviation of 10\%. 
This substantial performance improvement can be visualized through the example retraction-stage droplet volume time series
in Figure \ref{fig:totalVar},
in which the direct measurement yields a steady decrease while the reflection-augmented measurement yields a relatively constant droplet volume.

\begin{figure}
    \centering
    \includegraphics[width=4.25in,trim={0.345in 0in 0.5in 0.15in},clip]{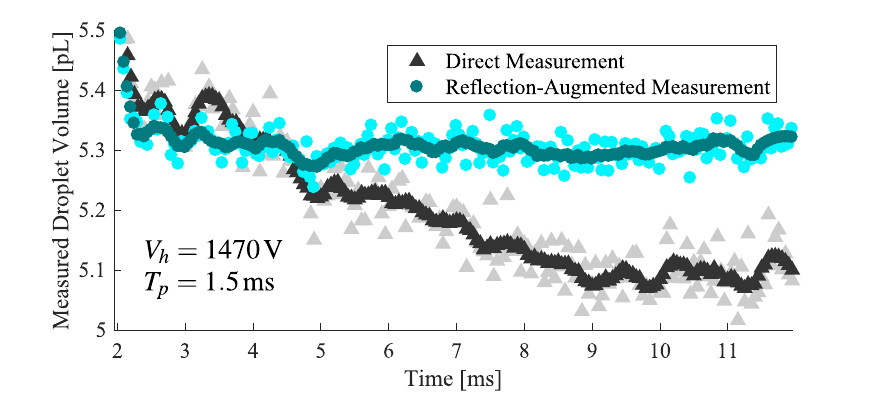}
    \caption{
    Retraction-stage droplet volume measurement taken by direct and reflection-augmented measurement techniques. Light and dark points represent raw and filtered data, respectively. As $\vol_d$ is expected to be constant during retraction, this plot illustrates the reflection-augmented technique's superiority in that it maintains a roughly constant value of \SI{5.3}{\pico\liter} after the transient (i.e. for $t\geq\SI{2.5}{\milli\second}$), while the direct measurement steadily decreases until about $t=\SI{9}{\milli\second}$.
    }
    \label{fig:totalVar}
\end{figure}

However, in both measurement schemes there is a steep transient at the start of retraction. This arises from an inability of these measurement techniques to conserve volume over the collapse of the relatively tall and thin droplet tail (observable in Figure \ref{fig:compvisMethod}, right) into the larger and wider main droplet body.

Thus, these results demonstrate that the reflection-augmented volume measurement scheme is an effective tool that may be useful for future \ejet{} research, but does not address every artifact associated with video-based measurement, which may serve as the subject of future investigations.

\subsubsection{Deposited Droplet Volume Error}
\label{sec:dropletVal}
Figure \ref{fig:3Derror} presents the mean percent error in the final droplet volume 
as computed by equation (\ref{eq:dropErr}),
for each experiment.
The experiment of lowest high voltage ($V_h=\SI{1100}{\volt}$) clearly represents an outlier in this data, having a percent error of 130\% for the predictions driven by measured nozzle flow rate and 170\% for the predictions using simulated nozzle flow rate, more than triple the next highest percent error.

\begin{figure}
    \centering
    \noindent
    \includegraphics[scale=0.7,trim={0in 0in 0in 0in},clip]{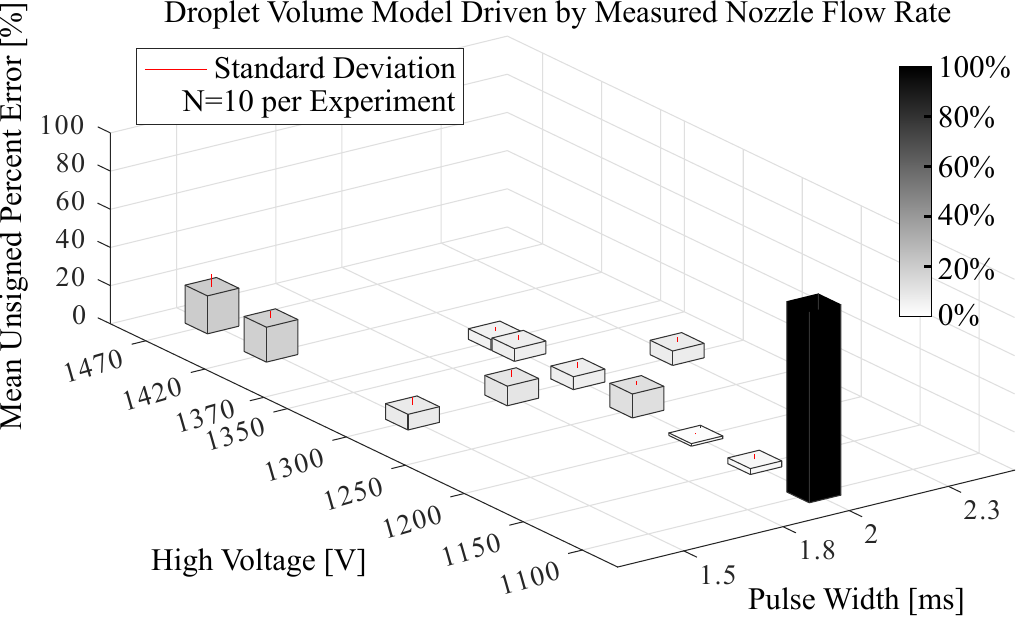}
    \\
    \vspace{4mm}
    \noindent
    \includegraphics[scale=0.7,trim={0in 0in 0in 0in},clip]{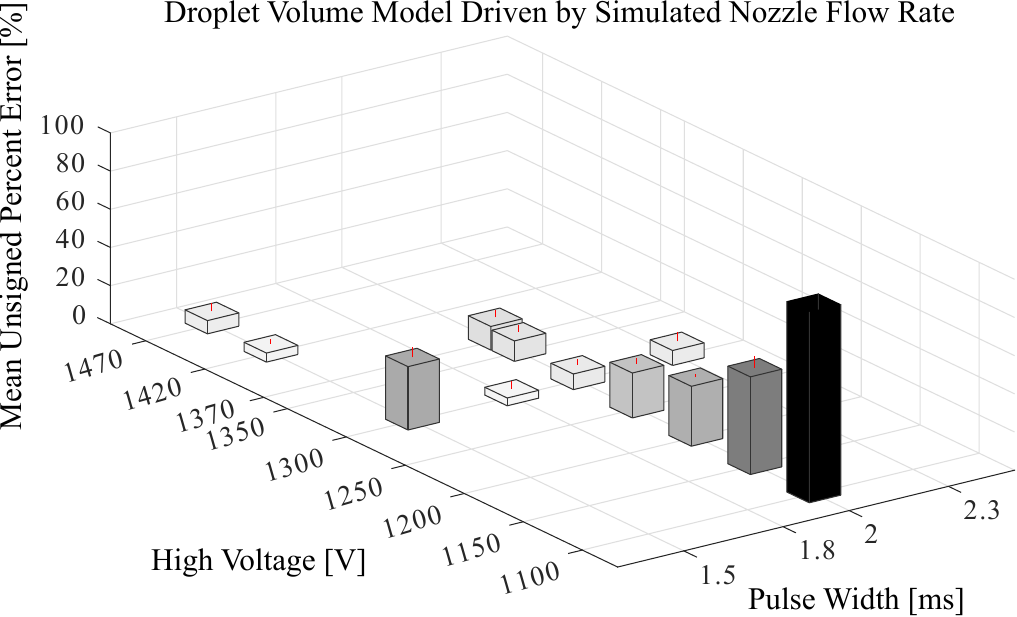}
    \caption{Percent error in final deposited droplet volume, $\vol_d(t_r)$, using measured and simulated $Q$. The lowest voltage experiment exceeds 100\% error in both cases.
    Each bar represents the mean value of $N=10$ samples.
    Measured $Q$ results illustrate the high quality of equation (\ref{eq:volumeModel}) for all but the lowest voltage case. Simulated $Q$ results illustrate increased error associated with increased uncertainty in the cascaded model, motivating future flow rate modeling work.
    }
    \label{fig:3Derror}
\end{figure}

To better discuss this outlying experiment, 
time series plots of 
$\vol_d(t)$
and
a plot of each trial's estimated jet break position, $h_b$, versus 
$V_h$ 
are presented in Figures 
\ref{fig:timeseries}
and 
\ref{fig:breakData}, 
respectively.
From the time series plot, one observes that the reset---the initial step change from zero to non-zero volume---is the most clearly erroneous feature of the low $V_h$ time series.
The reset causes a large overestimation of the initial volume in the contiguity stage that cannot be compensated for by the contiguity dynamics models, which only capture the change in droplet volume from the beginning to the end of contiguity.

The plot of 
$h_b$
in Figure \ref{fig:breakData} lends insight into why this reset error may arise. While the experiments well within the subcritical jetting regime (those from \SI{1150}{\volt} to \SI{1420}{\volt}) show comparable $h_b$ values, the experiment of lowest $V_h$ shows a jet break position markedly closer to the substrate. Because $h_b$ marks the upper boundary of the droplet control volume (a condition necessary for droplet flow rate to be zero after the jet breaks), this lowered jet break position substantially reduces the fraction of total volume that is in the droplet control volume at the first moment of contiguity. This change in the fraction of total volume is not accounted for by the reset model (\ref{eq:reset}), which assumes only the total volume itself is changing (e.g. because of jet diameter variations over applied voltage).

\begin{figure}
    \centering
    \noindent
    \includegraphics[scale=0.9,trim={0.4in 0.2in 0.6in 0.1in},clip]{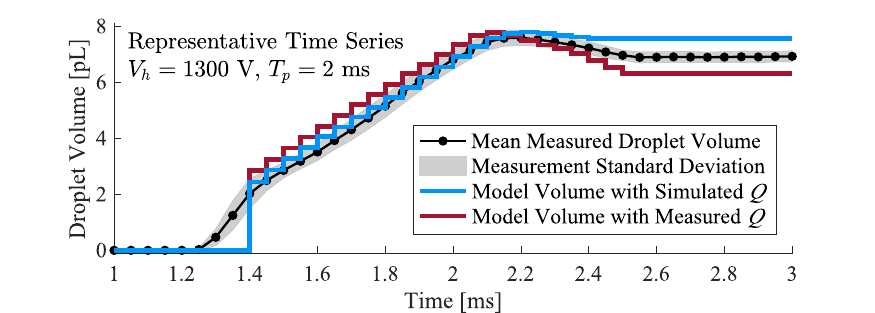}
    \\
    \noindent
    \includegraphics[scale=0.9,trim={0.4in 0in 0.6in 0in},clip]{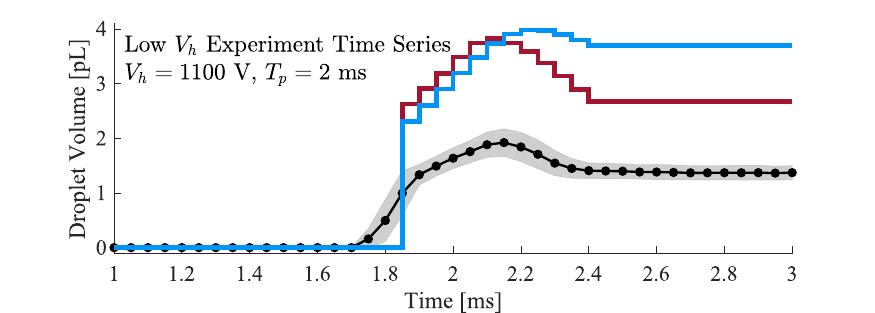}
    \caption{
    Time series plots of droplet volume $\vol_d$ for a representative experiment and the experiment of lowest high voltage $V_h$. Plotted measured data is the mean of the validation data ($N=10$ samples for each time series) with an envelope of plus or minus the standard deviation. The data suggests that the reset is the main source of error in low $V_h$ experiments.
    }
    \label{fig:timeseries}
\end{figure}

\begin{figure}
    \centering
    \includegraphics[scale=0.8,trim={0.28in 0in 0.52in 0.05in},clip]{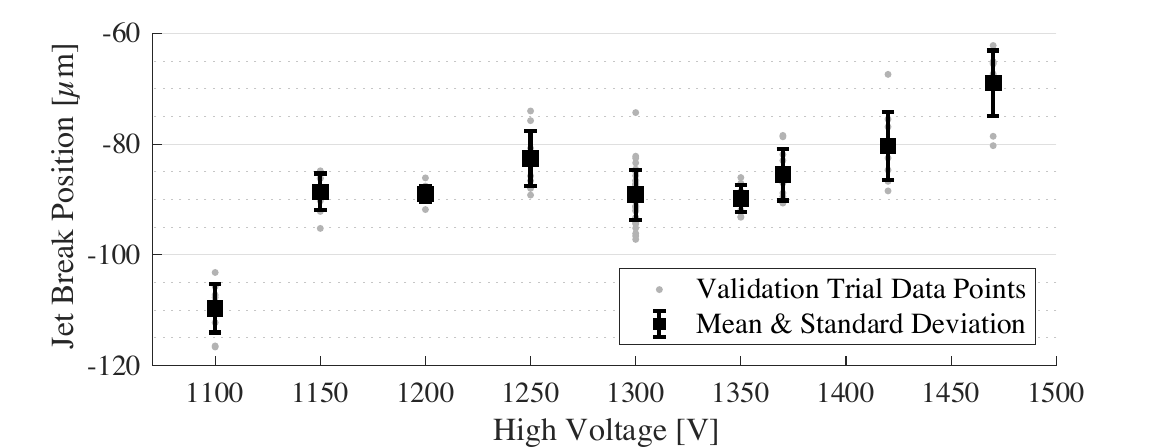}
    \caption{
    Jet break positions of each experiment's validation data against high voltage.
    The nozzle outlet is located at \SI{0}{\micro\meter} and the substrate at \SI{-150}{\micro\meter}.
    $N=10$ samples for all high voltages except $V_h=\SI{1300}{\volt}$, for which $N=40$ samples because four pulse widths are tested at $V_h=\SI{1300}{\volt}$.
    The modest spread of data points at $V_h=\SI{1300}{\volt}$ suggests that high voltage (equivalent to the difference between high and low voltage in this data set) has a greater influence on jet break position than pulse width in the subcritical jetting regime.
    } 
    \label{fig:breakData}
\end{figure}

Because the given model structure does not account for the changing jet break position near the boundaries of the subcritical jetting regime, 
the low $V_h$ experiment is deemed to be outside the applicable domain of the model, and is thus removed from the aggregate model error data, given in Figure \ref{fig:aggregate}.
The lower error yielded when the droplet volume model is driven by measured nozzle flow rate illustrates the validity of the $Q$-to-$\vol_d$ model (\ref{eq:volumeModel}).
When equation (\ref{eq:volumeModel}) is driven by the nozzle flow rate simulated by equation (\ref{eq:flowModel}), making a complete model from $V$ to $\vol_d$, the error increases.
This is due to increased model uncertainty associated with equation (\ref{eq:flowModel}) and its cascading with equation (\ref{eq:volumeModel}). While reducing this model uncertainty will be an important future endeavor, these results demonstrate the foundation of a
dynamical $V$-to-$\vol_d$ model that may be integrated with iterative learning control 
for the sake of \ejet{} control.

\begin{figure}
    \centering
    \includegraphics[scale=0.8,trim={0.28in 0.1in 0.78in 0.2in},clip]{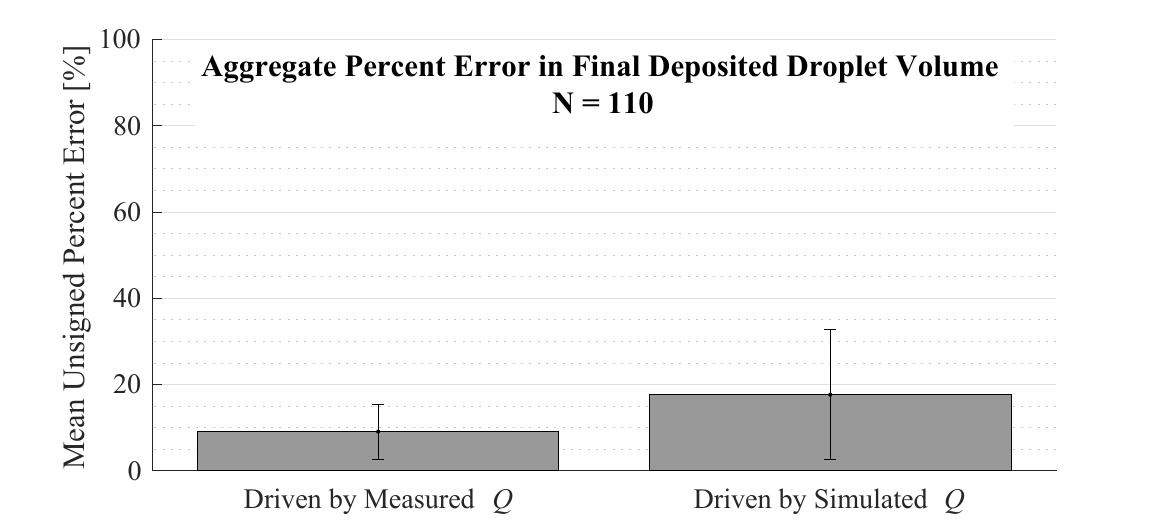}
    \caption{Mean unsigned percent error and standard deviation (given by error bars) of final droplet volume over all validation trials except those of lowest high voltage ($N=110$ samples).}
    \label{fig:aggregate}
\end{figure}

\subsection{Piecewise Affine E-jet Modeling Conclusion}
\label{sec:conclusion}
This section presents a hybrid system model framework for dynamical droplet volume modeling in \ejet{} based on contiguity of the fluid jet between the nozzle and the substrate.
This overarching modeling framework involves the contributions of 
several novel model elements whose structures are motivated by physical analysis of the material ejection process.
These contributions include
a new droplet control volume definition enabling the treatment of droplet volume as a dynamically evolving parameter, a dynamical mapping between nozzle flow rate and droplet volume during contiguity, 
and a reset map circumventing the need for dynamical modeling of droplet volume before contiguity, when data is of limited availability.
Additionally, this section contributes a new computer vision method for extracting time series volume data from videos that, while simple, substantially improves the consistency of volume measurements.
These contributions are validated with physical experiments that show good model performance in the interior of the subcritical jetting regime, but reveal system behaviors not captured by the model towards the boundaries of the subcritical jetting regime, specifically unmodeled changes in the jet break position.

The presented work thus represents an important step towards model-based control of deposited droplet volume in \ejet. To further pursue this goal,
future work will 
focus on
refining the reset model to account for the entire subcritical jetting regime, and on 
finding an ideal balance between simplicity and fidelity in
applied-voltage-to-nozzle-flow-rate
models.
Finally, relaxation of the time-based switching assumption and reintroduction of meniscus position as the switching criterion, and thus as a dynamical state, will substantially improve the flexibility and utility of the droplet volume modeling framework.

\section{Overarching E-jet Printing Conclusion}

This chapter has contributed two hybrid models for e-jet printing: one focused on fidelity to the first principles of the physical system and the other on capturing the final volume of the droplet on the substrate.

Specifically, the former contributes new geometrical and equilibrium analysis, extending the amount of the ejection process that can be modeled by first principles and reducing model reliance on measured data.
Additionally, Section \ref{sec:physicsModeling} introduces and validates hybrid modeling as a means to merge physics-driven modeling with data-driven modeling to produce an ODE-based model of the end-to-end electrohydrodynamic ejection process.

The latter hybrid model of Section \ref{sec:controlModeling} introduces an alternative division of the total ejection process into partial process and defines a new \ac{CV} for droplet volume. These contributions enable dynamical modeling of the droplet volume ultimately deposited on the substrate via \ac{PWA} framework. 

The bigger picture delivered by this chapter is that \ac{e-jet} printing exemplifies a class of physical 
systems
for which hybrid modeling is thus far the only path to ODE-based modeling, and thus to control-oriented modeling.
E-jet printing also represents a class of systems for which \ac{ILC} is necessary to improve performance because of the impracticality of real-time feedback.
Thus, while it is certainly not the only motivation, e-jet printing provides a concrete motivation for the integration of hybrid systems theory and ILC theory, upon which the subsequent chapters focus.
Finally, because e-jet’s modeling
and control challenges are shared by numerous other AM
technologies, the validation of this modeling philosophy may
lower the boundary for the development of similar models for
these AM processes.
\ac{FDM}, perhaps the most ubiquitous \ac{AM} technology, serves as a prime example. While formal hybrid modeling has not been attempted, 
\cite{Plott2018}
identifies several distinct regimes for the filament deposition dynamics. As these regimes arise due to the physical state of the printhead and filament, these dynamics may be well-unified by a hybrid framework.

\chapter{
Enabling ILC of Hybrid Systems:
\texorpdfstring{\\}{ }
Closed-Form Hybrid System Representation
}
\blfootnote{
Content of this chapter also published as:
\\
\indent\indent
I. A. Spiegel and K. Barton, ``A Closed-Form Representation of Piecewise Defined Systems and their Integration with Iterative Learning Control,'' in \emph{2019 American Control Conference (ACC)}, Philadelphia, PA, USA, 2019, pp. 2327-2333, \href{https://doi.org/10.23919/ACC.2019.8814823}{https://doi.org/10.23919/ACC.2019.8814823}
\textcopyright IEEE 2019. Reprinted with permission.
}
\label{ch:3}

In most cases, even when hybrid system control takes inspiration from preexisting control theories, significant work must be done to redevelop the theory specifically for hybrid systems \cite{DiBernardo2013,Zhao2008,VandeWouw2006}.
One reason for this difficulty is that the hybrid systems are, in general, mathematically represented as a type of automaton \cite{Tabuada2009,Cassandras2008} rather than as a closed-form system of ODEs as is typical for control systems. This lack of a closed-form system representation prevents the direct application of many analytical mathematical operations that would ordinarily accompany dynamical system analysis and controller design (e.g. function composition, Jacobian).

The importance of 
performing such operations on piecewise functions, and thus the importance of having closed form representations of piecewise functions,
was originally identified by the nonlinear circuit theory community well before the modern notion of hybrid systems were developed. In fact, in 1977 Chua and Kang published a canonical, closed-form representation of piece-wise linear functions to close this gap \cite{Chua1977}. However, their representation was not designed with dynamical systems in mind, and thus has several features making it incompatible with hybrid systems analysis and control. Most importantly, it is based on interpolation between the breakpoints of the piecewise functions, precluding it from representing nonlinear dynamics within a particular discrete state. Secondly, rather than representing each ``piece'' of a piecewise function independently, 
\cite{Chua1977}
treats each ``piece'' as a superposition of components from all ``pieces'' corresponding to lesser values of the independent variable. Finally, higher dimensional functions are represented by nesting functions of 1 or 2 dimensions, which results in structures far more convoluted than the state-space models employed by today's control engineers.

In the more contemporary literature, Bemporad and Morari's seminal 
\ac{MLD}
Systems seek to 
provide a more analytical hybrid system representation, which is
composed entirely of systems of algebraic equalities and inequalities \cite{Bemporad1999}. However, 
like the systems of \cite{Chua1977}, \ac{MLD} systems
cannot contain nonlinear dynamics within a discrete state. More importantly, while MLD systems integrate readily with control frameworks that involve online optimization, such as 
\ac{MPC},
they can be difficult to integrate with other classes of controllers because they require the solution of a mixed integer program at each time step to determine the system's discrete state.

This restriction on controller options is particularly problematic when real time feedback is unavailable, or when model errors are substantial enough to prevent the fulfillment of performance goals. 
In such cases, 
\ac{ILC}
is attractive.
However, to date no implementation of ILC has been made with hybrid system models.
This is in part because the mathematical operations required to synthesize the controller cannot be performed on contemporary hybrid system representations.

The primary contribution of the present chapter is to deliver a closed-form representation of a particular class of hybrid systems: 
\ac{PWD} 
systems (i.e. a generalization of the popular 
\ac{PWA}
system class).
This closed-form representation is shown to bridge the gap between system representation and control via the application of ILC to an example hybrid system.
Minor contributions are made to the selected ILC method (applicable to both hybrid and non-hybrid systems):
it is reformalized to enable application
to systems of any relative degree, and a novel recommendation is made vis-à-vis implementation in order to facilitate direct application of the control theory and to improve controller scalability.
No special modifications are made for the hybrid nature of the example system, thereby illustrating the utility of the closed-form representation.

The rest of the chapter is organized as follows. 
Section \ref{sec:closedForm} presents the closed-form piecewise defined system representation, i.e. this work's main contribution, via a proposition and constructive proof. 
Section \ref{sec:NILC} details the iterative learning controller to be used. Section \ref{sec:validation} presents the example hybrid system, the methods of controller performance analysis, and presents and discusses the results of the simulation experiments. Finally, Section \ref{sec:conc} provides concluding remarks and suggests future work.

\section{Closed Form Piecewise Defined Systems}
\label{sec:closedForm}

\begin{definition}[\ac{PWD} System]
A discrete-time \ac{PWD} system is a system defined by
\begin{equation}
    \begin{aligned}
    \state(k+1) &= f_\qdx(\state(k),\uMIMO(k),k) \\
    \yMIMO(k) &= h_\qdx(\state(k),\uMIMO(k),k)
    \end{aligned}
    \quad
    \textrm{for}
    \begin{bmatrix}
    \state(k) \\ \uMIMO(k)
    \end{bmatrix}
    \in
    \region_\qdx
    \label{eqn:openForm}
\end{equation}
where 
$k$ is the discrete time index,
$\state\in\mathbb{R}^\xDim$ is the state vector,
$\uMIMO\in\mathbb{R}^\uDim$ is the control input,
$\yMIMO\in\mathbb{R}^\yDim$ is the output vector,
$\qdx\in\{1,2,\cdots,\regionQuant\}$,
$\region_\qdx \in
\region=
\{\region_1, \region_2, \cdots, \region_{\regionQuant}\}$ is a convex polytope (i.e. an intersection of half-spaces) in $\mathbb{R}^{\xDim+\uDim}$,
$f_\qdx:\mathbb{R}^\xDim\times\mathbb{R}^\uDim\times\integer\rightarrow\mathbb{R}^\xDim$ is a closed-form function representing the potentially nonlinear state dynamics in $\region_\qdx$,
and
$h_\qdx:\mathbb{R}^\xDim\times\mathbb{R}^\uDim\times\integer\rightarrow\mathbb{R}^\yDim$ is the potentially nonlinear closed-form output function in $\region_\qdx$. 
All polytopes in $\region$ are disjoint, and their union is equal to $\mathbb{R}^{\xDim+\uDim}$.
\end{definition}

The following theorem and constructive proof constitute the primary contribution of this chapter.

\begin{theorem}[Closed-Form \ac{PWD} System Representation]
\label{prop:main}
Any PWD system (\ref{eqn:openForm}) can be represented in closed form as
\begin{equation}
\begin{aligned}
    \state(k+1) &= f\left(\state(k),\uMIMO(k),k\right)\\
    \yMIMO(k) &= h\left(\state(k),\uMIMO(k),k\right)
\end{aligned}
\end{equation}
where $f$ and $h$ explicitly encapsulate both the component dynamics and switching behavior of the PWD system.
\end{theorem}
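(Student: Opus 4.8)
The plan is fully constructive: I will exhibit $f$ and $h$ as a single finite sum over the cells $\region_1,\dots,\region_{\regionQuant}$, each summand weighted by a closed-form indicator of the corresponding cell, and then verify that the weights pick out the active branch exactly and that the resulting maps genuinely qualify as ``closed form.'' Throughout write $z = [\,\state^{T},\ \uMIMO^{T}\,]^{T} \in \real^{\xDim+\uDim}$ for the stacked state--input vector, and $z(k) = [\,\state(k)^{T},\ \uMIMO(k)^{T}\,]^{T}$.

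First I would use the half-space description that comes for free with each polytope: since $\region_\qdx$ is an intersection of half-spaces, fix a description of $\region_\qdx$ by rows $i = 1,\dots,m_\qdx$, the $i$-th row being either $(\planeMat_\qdx)_i\, z \leq (\offsetVec_\qdx)_i$ or $(\planeMat_\qdx)_i\, z < (\offsetVec_\qdx)_i$. The \ac{PWD} definition guarantees that the cells are pairwise disjoint and that their union is all of $\real^{\xDim+\uDim}$; these two facts force every boundary facet to belong to exactly one cell, and that is precisely the information recorded by the per-row strict/non-strict tagging. Next introduce the scalar step function $\sigma\colon\real\to\{0,1\}$ with $\sigma(s)=1$ for $s\geq 0$ and $\sigma(s)=0$ otherwise, together with its strict-threshold companion, and note that $\sigma(s)=\frac{1}{2}\big(1+\sgn(s)\big)$ away from $s=0$, so that $\sigma$, and any finite product or composition built from it, stays within any reasonable notion of closed form.

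Then I would define the cell indicator $\chi_\qdx(z) = \prod_{i=1}^{m_\qdx}\sigma\big((\offsetVec_\qdx)_i - (\planeMat_\qdx)_i\, z\big)$, using the strict companion of $\sigma$ on strict rows, so that $\chi_\qdx(z)=1$ iff $z\in\region_\qdx$ and $\chi_\qdx(z)=0$ otherwise; because $\{\region_\qdx\}$ partitions $\real^{\xDim+\uDim}$ we get $\sumqQ\chi_\qdx(z)=1$ for every $z$. Finally set
\[
 f(\state(k),\uMIMO(k),k) = \sumqQ \chi_\qdx(z(k))\, f_\qdx\big(\state(k),\uMIMO(k),k\big),
\]
and define $h$ identically with $h_\qdx$ in place of $f_\qdx$. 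At any $(\state(k),\uMIMO(k))$ exactly one $\chi_\qdx$ is nonzero, so these sums collapse to the active branch $f_\qdx,h_\qdx$; hence the closed-form system reproduces the state and output trajectories of (\ref{eqn:openForm}) exactly, while $f$ and $h$ are now single closed-form maps $\real^{\xDim}\times\real^{\uDim}\times\integer\to\real^{\xDim}$ (resp.\ $\to\real^{\yDim}$) assembled by finite composition, multiplication and addition from affine maps, $\sgn$, and the $f_\qdx,h_\qdx$ --- precisely the form on which nested function composition and (piecewise) differentiation are available for the downstream \ac{NILC} synthesis.

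The step I expect to be the real obstacle is the boundary bookkeeping: guaranteeing that the strict/non-strict tagging of facets is globally consistent, so that $\sumqQ\chi_\qdx\equiv 1$ with neither overlap nor gap, in particular at points where three or more cells meet. I would handle this directly from disjointness-plus-coverage --- each point of $\real^{\xDim+\uDim}$ lies in exactly one cell, which dictates facet by facet whether a given facet is included ($\leq$) or excluded ($<$) in each cell's description, and one simply adopts that induced tagging; a perturbation/limiting argument is also possible but less clean. A minor preliminary is to pin down the working meaning of ``closed form'' --- I would take it to be the class of expressions generated from the coordinate functions, real constants, the field operations, $\sgn$, and the given $f_\qdx,h_\qdx$ under finite composition --- after which the construction above is closed form by inspection.
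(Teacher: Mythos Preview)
Your construction is correct and shares the paper's overarching template---write $f=\sumqQ \chi_\qdx\, f_\qdx$ and $h=\sumqQ \chi_\qdx\, h_\qdx$ with closed-form cell indicators $\chi_\qdx$---but the indicator construction itself differs. You build each $\chi_\qdx$ \emph{locally}, as a product of step functions over the half-space rows defining $\region_\qdx$, with per-row strict/non-strict tagging to encode boundary ownership. The paper instead collects all boundary hyperplanes into a single \emph{global} matrix $P$ and offset vector $\offsetVec$, evaluates a binary signature $\locvec(k)=H\bigl(P\,[\state(k)^{T},\uMIMO(k)^{T}]^{T}-\offsetVec\bigr)$ via the elementwise Heaviside function, and then picks out the active region with a Kronecker-delta-style selector $K_\qdx(\locvec)=0^{\lVert \sigvec_\qdx-\locvec\rVert}$. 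Your route is more elementary and handles directly the case where one $\region_\qdx$ spans several cells of the global hyperplane arrangement (the paper covers this by allowing a \emph{set} of signatures per region, discussed in a remark rather than the proof body); the paper's route packages all switching into a single affine map followed by $H$, which yields a uniform matrix representation $(P,\offsetVec,\sigvec_\qdx)$ that is convenient for the downstream \ac{PWA} inversion algebra. Both buy the same theorem; yours is cleaner as a proof, the paper's is cleaner as a data structure.
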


\begin{proof}
Because $\region_\qdx$ are convex polytopes, their boundaries are hyperplanes, which can be represented via
\begin{equation}
    \planeVec^T
    \sysvec
    =
    \offsetNum
\end{equation}
where $\planeVec\in\mathbb{R}^{\xDim+\uDim}$ describes the orientation of the hyperplane and $\offsetNum\in\mathbb{R}$ is an offset. To capture the partitioning of $\mathbb{R}^{\xDim+\uDim}$ into $\regionQuant$ regions by $\planeQuant$ hyperplanes in the system dynamics, this work introduces the auxiliary logical state vector
\begin{align}
    \locvec(k) &= f_\locvec(\state(k),\uMIMO(k)) =
    H \left(
    P\sysvec - \offsetVec
    \right) 
    \label{eqn:delta}
    \\
    P &= \begin{bmatrix}
    \planeVec_1 & \planeVec_2 & \cdots & \planeVec_{\planeQuant}
    \end{bmatrix}^T
    \nonumber
    \\
    \offsetVec &= \begin{bmatrix}
    b_1 & b_2 & \cdots & b_{\planeQuant}
    \end{bmatrix}^T
    \nonumber
\end{align}
where $H$ is the element-wise Heaviside step function\footnote{
Some communities may not consider the Heaviside function to be closed-form. Many others do \cite{Chicurel-Uziel2001,Dona2014,Hetnarski1993,Kim2012,Ochmann2011,Bhattacharyya2010,Peng1998}.
This classification as closed-form is supported by its seamless integration into most symbolic math software, in which its derivative is well-defined.
% Here it is considered closed-form because of its unimpeachable compatibility with the mathematics of dynamical system analysis and control.
% and because the selector functions in this work can account for any convention for its value at the origin.
}
with convention $H(0)=1$, 
and the relationship between $\regionQuant$ and $\planeQuant$ depends on the exact configuration of the hyperplanes.
The $i^\textrm{th}$ element of $\locvec$ indicates on which side of the $i^\textrm{th}$ hyperplane
the system vector $\sysvecT$ falls (with points on hyperplanes being included in the space corresponding to $\planeVec_i^T\sysvecT-b_i>0$, see Remark \ref{rem:sgnvH}). In this manner, each $\region_\qdx$ uniquely corresponds to a particular value of $\locvec$, which will be denoted $\sigvec_\qdx$.

As the system evolves $\locvec$ will be compared to each $\sigvec_\qdx$ via a set of selector functions given by
\begin{equation}
    K_\qdx(\locvec(k)) = 0^{\norm{\sigvec_\qdx-\locvec(k)}} =
    \begin{cases}
    1 & \sigvec_\qdx  = \locvec(k) \\
    0 & \sigvec_\qdx \neq \locvec(k)
    \end{cases}
    \label{eqn:Kronecker}
\end{equation}
Note that this equation is equivalent to the Kronecker delta of $0$ and $\norm{\sigvec_\qdx-\locvec(k)}$ for any norm, but is left in the given zero exponential form for explicitness.

With these selector functions, the original state dynamics and outputs can be represented by the closed-form equations
\begin{equation}
\begin{aligned}
    \state(k+1) &= 
    f(\state(k),\uMIMO(k),k) =
    \sumqQ f_\qdx\left(\state(k),\uMIMO(k),k\right)K_\qdx\left(f_\locvec\left(\state(k),\uMIMO(k)\right)\right) 
\\
    \yMIMO(k) &=
    h(\state(k),\uMIMO(k)) = \sumqQ h_\qdx(\state(k),\uMIMO(k),k)K_\qdx(f_\locvec(\state(k),\uMIMO(k)))
\end{aligned}
\label{eqn:closedForm}
\end{equation}
This completes the closed-form representation of (\ref{eqn:openForm}).
A block diagram of the representation's structure is given in Figure \ref{fig:closedFormBlock}
\end{proof}

\begin{figure}
    \centering
    \includegraphics{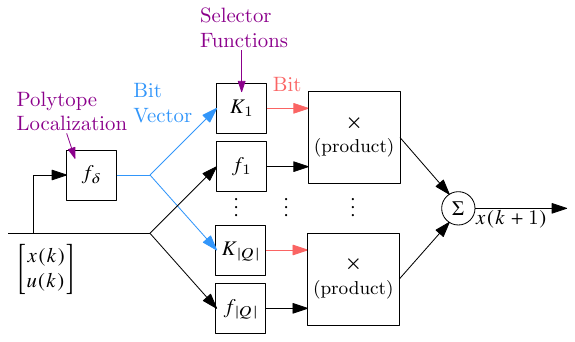}
    \caption{Block diagram of the closed-form \ac{PWD} system representation}
    \label{fig:closedFormBlock}
\end{figure}

\begin{remark}[Continuity]
This representation makes no assumptions regarding the continuity of either the state transition formula $f$ or the output function $h$ over the switching hyperplanes. The representation may model both continuous and discontinuous hybrid systems.
\end{remark}

\begin{remark}[Hyperplane Inclusion in Polytopes]
\label{rem:sgnvH}
The use of the Heaviside function in equation (\ref{eqn:delta}) systematically includes each hyperplane in the polytopes lying to a particular side. In other words, a hyperplane serving as a boundary/face for two adjacent polytopes is included in only one of them, such that one polytope is closed on that face and the other polytope is open. To which side the hyperplane belongs can be chosen by manipulation of $P$ and $\offsetVec$.
For example, ($P=[\,0,\,1\,],\,\offsetVec=0$) and ($P=[\,0,\,-1\,],\,\offsetVec=0$) both define the horizontal axis in $\real^2$.
Under (\ref{eqn:delta}), the former definition yields the same value of $\locvec$ ($\locvec=1$) for points on the hyper plane \emph{and} in top half of $\real^2$, while the bottom half of $\real^2$ \emph{excluding} the horizontal axis yields $\locvec=0$.
The latter definition yields $\locvec=1$ for the hyperplane \emph{and} the \emph{bottom} half of $\real^2$, while the top half of $\real^2$ \emph{excluding} the hyperplane yields $\locvec=0$.
In this manner, the hyperplane can be included in either the top or bottom polytope, but not both. The inability to include the hyperplane in both polytopes is desirable because if the polytopes have a non-null intersection there is uncertainty regarding the state dynamics in the intersection.

In many cases this 
method of determining which polytope contains a face (i.e. which polytope is closed) and which polytope is open at the same face
is
amply flexible, and ensures that there are an equal number of polytopes and $\sigvec$ values. However, if a system designer desires greater flexibility, the Heaviside function can be replaced with the signum function. In this case, each polytope interior and each face uniquely corresponds to a different $\locvec$, and the designer must assemble the complete polytope from an interior and the desired faces. For example, if $\region_\qdx$ corresponds to the interior and face associated with $\sigvec_{\qdx,1}$ and $\sigvec_{\qdx,2}$, then $K_\qdx(\locvec)=0^{\norm{\sigvec_{\qdx,1}-\locvec(k)}\norm{\sigvec_{\qdx,2}-\locvec(k)}}$.
\end{remark}

\begin{remark}[Location Non-Convexity and Switching Condition Nonlinearity]
Despite the convexity requirements of the $\mathbb{R}^{\xDim+\uDim}$ partitioning, and consequential affine-in-$\state$ requirement 
of the switching behavior governed by equations (\ref{eqn:delta}) and (\ref{eqn:Kronecker}),
the representation (\ref{eqn:closedForm}) can encompass a wide range of complex system structures.
There need not be a one-to-one correspondence between the quantity of discrete dynamic regimes and the quantity of polytopes in system (\ref{eqn:closedForm}). Indeed, while each polytope $\region_\qdx$ must be convex, one may construct many interesting nonconvex topologies by stitching together adjacent polytopes $\region_e$ and $\region_d$ by setting $f_e=f_d$, $h_e=h_d$. This is illustrated in Figure \ref{fig:nonConvex}.
Additionally, nonlinear switching conditions such as 
$\sin(\state^{i})>0$ (where $\state^{i}$ is the $i^\textrm{th}$ element of $\state$)
can obviously be accommodated by simply making the nonlinear expression of $\state$ into a new state. This of course increases the dimension of the system, but it notably does \emph{not} increase the dimension of the \ac{NILC} problem, which depends solely on the outputs and quantity of samples in a trial timeseries.
\end{remark}

\begin{figure}
    \centering
    \vspace{0.1in}
    \includegraphics[scale=0.6]{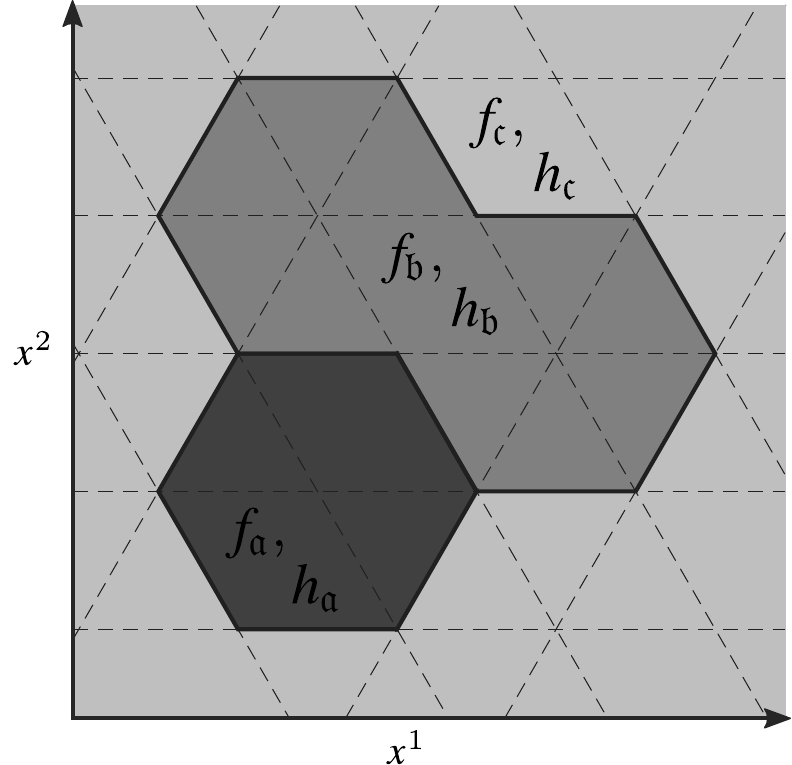}
    \caption{An $\mathbb{R}^2$ topology representable by (\ref{eqn:closedForm}) with 14 hyperplanes (dashed lines) having 1 convex 
    ($\mathfrak{a}$)
    and 2 nonconvex 
    ($\mathfrak{b}$ and $\mathfrak{c}$)
    dynamic regimes.}
    \label{fig:nonConvex}
\end{figure}

\section{Newton Iterative Learning Control}
\label{sec:NILC}

\ac{NILC}, first introduced in terms of abstract Banach space operators by \cite{Avrachenkov1998}, is the use of the Newton-Raphson root-finding algorithm to derive a trial-varying learning matrix $\learnMat_\tdx$ for the classical \ac{ILC} law
\begin{equation}
\label{eq:ILCclassic}
    \uvec_{\tdx+1} = \uvec_\tdx + \learnMat_\tdx \evec_\tdx
\end{equation}
where $\uvec$ is a control input timeseries vector, or ``lifted'' vector, $\evec$ is 
a lifted error vector,
and $\tdx\in\{0,1,\cdots\}$ is the trial index\footnote{
In the literature, $j$ or $k$ is usually used for the trial index. Here,
$\ell$ is used for the trial index because $i$ and $j$ will be used for matrix element indexing, $k$ is used for the discrete time index, $t$ is avoided to prevent confusion with continuous time, and $\ell$ is the next letter in the alphabet and thus commonly used for indexing.
}.

\ac{NILC} was first formalized for discrete-time state space systems by \cite{Lin2006}.
However, the treatment of \ac{NILC} in \cite{Lin2006} is limited in that it only considers time-invariant systems and is only applicable to systems with a relative degree, $\mu$, of 1.
For discrete-time systems, the relative degree of an output is the number of time steps that must transpire before the explicit representation of the output, in terms of inputs and initial conditions, contains any input \cite{Sun2001}. 
In many cases this notion of relative degree is adequate for the synthesis of \ac{NILC} from \ac{PWD} systems. However, there are also \ac{PWD} models for which alternative notions of relative degree are useful for managing issues related to the switching behavior of hybrid systems. A rigorous treatment of such issues and the definition of an alternative relative degree is given in Chapter \ref{ch:5}.
For now, the present
section presents an \ac{NILC} framework generalized for time-varying discrete-time SISO systems of any relative degree $\geq 1$. 
For all systems, the purpose of incorporating knowledge of the relative degree into \ac{NILC} synthesis is to guarantee that $\learnMat_\tdx$ is well-defined regardless of model relative degree. Specifically, $\learnMat_\tdx$ being well-defined is contingent on $\learnMat_\tdx$ being invertible, and any notion of relative degree yielding this invertibility is acceptable.

Consider the SISO, discrete-time 
time-varying 
nonlinear 
model
\begin{IEEEeqnarray}{RL}
\eqlabel{eqn:sys}
\IEEEyesnumber
\IEEEyessubnumber*
\xmod_{\tdx}(k+1) &= \fmod\left(\xmod_\tdx(k),u_\tdx(k),k\right)
\label{eq:modelstate}
\\
\ymod_\tdx(k) &= \hmod(\xmod_\tdx(k))
\label{eq:modelout}
\end{IEEEeqnarray}
where 
$\xmod\in\real^\xDim$, 
$u\in\real$, 
$\ymod\in\real$, 
$f:\mathbb{R}^\xDim\times\mathbb{R}\times\integer\rightarrow\mathbb{R}^\xDim$, and
$h:\mathbb{R}^\xDim\rightarrow\mathbb{R}$.
Hats, $\hat{}\,$, are used to emphasize that \originalmod{} is an imperfect model of some true system,
though it is assumed that the control input and initial condition are perfectly known.

Let the system describe a repetitive process with finite duration. This translates to the assumptions:
\begin{enumerate}[label=(A3.\arabic*),leftmargin=*]
\item 
\label{A1c3}
A trial must have a duration of $N$ time steps, \\
$k\in\{0,1,\cdots,N\}$.
\item
\label{A2c3}
The initial conditions are trial invariant, \\
$\xmod_\tdx(0)=\state_0 \quad \forall \tdx$.
\item
\label{A3c3}
The desired output values, $\refr(k)\in\mathbb{R}$, are trial invariant.
\suspend{ch3Ass1}
\end{enumerate}

By \ref{A1c3}, the input and output trajectories can be represented as timeseries vectors of the same length for every trial:
\begin{align}
    \yLiftmod_\tdx &= 
    \begin{bmatrix}
    \ymod_\tdx(\mu) & \ymod_\tdx(\mu+1) & \cdots & \ymod_\tdx(N)
    \end{bmatrix}^T \in \mathbb{R}^{N-\mu+1}
    \\
    \uvec_\tdx &= 
    \begin{bmatrix}
    u_\tdx(0) & u_\tdx(1) & \cdots & u_\tdx(N-\mu)
    \end{bmatrix}^T \in \mathbb{R}^{N-\mu+1}
    \label{eq:uvec}
\end{align}
Note that $\yLiftmod_\tdx$ is time-shifted forward by the relative degree $\mu$, which is a function of the specific definitions of $f$ and $h$.

By \ref{A2c3} and (\ref{eqn:sys}), if $\state_0$ is known then $\yLiftmod_\tdx$ is entirely a function $\uvec_\tdx$ given by 
$\gmod:\mathbb{R}^{N-\mu+1}\rightarrow\mathbb{R}^{N-\mu+1}$
\begin{IEEEeqnarray}{RL}
\eqlabel{eq:gq}
\IEEEyesnumber
\IEEEyessubnumber*
    \yLiftmod_\tdx^i &= \gmod^i\left(\uLift_\tdx\right)=\ymod_\tdx(\mu+i-1)
    \\
    \label{eq:ymodk}
    \ymod_\tdx(k) &= \hmod\left(\fmod^{(k-1)}(\uvec_\tdx)\right) 
    \qquad k\in\{\mu,\mu+1,\cdots,N\}
\end{IEEEeqnarray}
where the non-parenthetical
superscript $i$ denotes the $i^\textrm{th}$ element of a vector, indexing from 1, 
and the 
parenthetical superscript $(k)$ denotes function composition of the form
\begin{align}
    \fmod^{(k)}(\uvec_\tdx)&=\fmod(\xmod_\tdx(k),u_\tdx(k),k)
    =
    \fmod\left(\fmod\left(\vphantom{\fmod}\cdots,u_\tdx(k-1),k-1\right),u_\tdx(k),k\right)
    \label{eqn:recursion}
\end{align}

\begin{sloppypar}
The recursion of the state dynamics $f$ expressed by (\ref{eqn:recursion}) has a terminal condition of
$\fmod(\xmod_\tdx(0),u_\tdx(0),0)$.
Because 
$\xmod_\tdx(0)=\state_0$ is known 
in advance
and the time argument is determined by the element index of the lifted representation, $\yvecmod_\tdx$ is a function of only $\uvec_\tdx$. Note that because the first element of $\yvecmod_\tdx$ is $\ymod_\tdx(\mu)$ it explicitly depends on $\uLift^1=u_\tdx(0)$.
\end{sloppypar}

Similarly, by \ref{A3c3}, the reference
\begin{equation}
    \rLift = 
    \begin{bmatrix}
    \refr(\mu) & \refr(\mu+1) & \cdots & \refr(N) 
    \end{bmatrix}^T \in\mathbb{R}^{N-\mu+1}
\end{equation}
is fixed, causing the output error timeseries, $\evec_\tdx$ to be approximable by a function of only $\uvec_\tdx$. This relationship is given by
\begin{equation}
    \evec_\tdx(\uvec_\tdx) = \rLift - \yLift_\tdx 
    = \rLift - \gtrue(\uLift_\tdx)
    \approx \rLift - \gmod(\uvec_\tdx)
    \label{eqn:err}
\end{equation}
where $\gtrue:\real^{N-\mu+1}\rightarrow\real^{N-\mu+1}$ represents the unknowable dynamics of the true system, and
$\yLift_\tdx$ is the measured value of this true system output.
Newton's method can thus be applied to iteratively find an argument $\uvec_\tdx$ bringing
$\evec_\tdx(\uvec_\tdx)$
toward 0. This naturally yields a control law of the form 
(\ref{eq:ILCclassic}) with
\begin{equation}
    \learnMat_\tdx=\left(\jacobian{\gmod}{\uvec}(\uvec_\tdx)\right)^{-1}
    \label{eq:gammaprior}
\end{equation}
where $\jacobian{\gmod}{\uvec}$ is the Jacobian (in numerator layout) of $\gmod$ with respect to $\uvec$  as a function of $\uvec$.

However, \directAILC{} is not strictly Newton's method applied to $\evec_\tdx$ because $\gmod(\uvec_\tdx)$ is, like all models, only an approximation of the true physical dynamics $\gtrue(\uLift_\tdx)$.
Because of this, convergence of $\yLift_\tdx$ to $\rLift$ is linear (rather than quadratic) and 
guaranteed if the following 
sufficient conditions
are satisfied \cite{Avrachenkov1998}.
\begin{enumerate}[label=(A3.\arabic*),leftmargin=*]
\resume{ch3Ass1}
    \item
    \label{A4c3}
    The true dynamics $\gtrue(\uvec_\tdx)$ are continuously differentiable 
    and their Jacobian $\jacobian{\gtrue}{\uLift}$ is Lipschitz continuous with respect to $\uLift$
    within some ball around the solution trajectory $\uLift_\text{soln}$.
    
    \item
    The inverse of the model Jacobian (i.e. the learning matrix) always has a bounded norm: 
    \\
    $\norm{L_\tdx}_2<\varepsilon_1\in\real_{>0} \,\,\forall\,\tdx$    
    
    \item
    \label{A6c3}
    The learning matrix is sufficiently similar to the inverse of the true lifted system Jacobian:
    \\
    $\norm{I-L_\tdx\jacobian{\gtrue}{\uvec}(\uvec_\tdx)}_2<1 \,\,\forall\,\tdx$
    
    \item
    \label{A7c3}
    The initial guess $\uvec_0$ is in some basin of attraction around the solution $\uvec_\text{soln}$ (guaranteed to exist by \ref{A4c3}-\ref{A6c3}),\\
    $\norm{\uvec_0-\uvec_\text{soln}}_2<\varepsilon_2\in\mathbb{R}_{>0}$
    \suspend{ch3Ass2}
\end{enumerate}
Note that the continuous differentiability and Lipschitz continuity conditions are on the sufficient condition for the true system, not the model.

In the implementation of this controller, the Jacobian as a function of $\uvec_\tdx$ only needs to be derived once 
(in advance
of trial 0). Still, large sample quantities (i.e. long trial durations and/or high temporal resolution) have historically made deriving this matrix a significant computational burden.
This has forced past authors to approximate $\jacobian{\gmod}{\uLift}$ using techniques ranging from the 
coarse
and simple chord method \cite{Avrachenkov1998}, to more accurate but computationally expensive methods like gradient-based optimization \cite{Freeman2014}.
However, advances in automatic differentiation techniques, such as those in the software tool CasADi \cite{Andersson2018}, now enable rapid derivation of Jacobian functions that are exact to nearly machine precision \cite{Naumann2008}.
This largely eliminates the need for Jacobian approximation methods in many scenarios where a system model is available.

Finally, note that the class of systems (\ref{eqn:openForm}) treated by the closed-form representation in this work is 
broader
than the class of systems (\ref{eqn:sys}) to which the controller may be applied, and that (\ref{eqn:sys}) itself is broader than the class of systems guaranteed to satisfied \ref{A4c3}-\ref{A6c3}.
However, this work chooses to present a broad closed-form representation both because there may be other controllers for which the representation is useful, and because there exists systems of class (\ref{eqn:openForm}) that do converge under the given controller and are not contained by narrower popular subclasses such as piecewise affine and affine-in-the-input systems. An example of such a system is given in Section \ref{sec:validation}.

\section{Validation}
\label{sec:validation}

This section demonstrates both the closed form system representation and \ac{NILC} through the simulated control of a nonlinear mass-spring-damper system over an increasing degree of mismatch between the truth model and control model.

\subsection{Model and Software Implementation}
\label{sec:model3}
\begin{figure}
    \centering
    \includegraphics{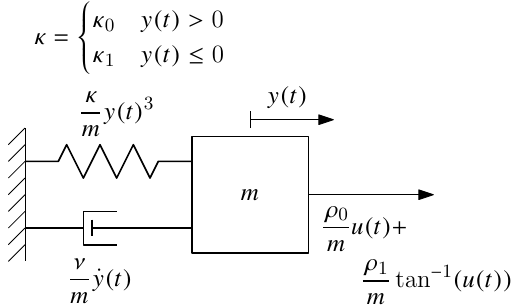}
    \caption{Mass-spring-damper system used for validation. The spring stiffness is a piecewise defined function of the spring extension and the applied force is a nonlinear function of actuator voltage $u$.}
    \label{fig:MSD}
\end{figure}

The truth model is given by the continuous-time mass-spring-damper system pictured in Figure \ref{fig:MSD}. The system's equation of motion is
\begin{align}
    \ddot{y}(t) &= 
    -\frac{\kappa}{m}y(t)^3 - \frac{\nu}{m}\dot{y}(t) + 
    \frac{\rho_0}{m} u(t) + \frac{\rho_1}{m}\tan^{-1}(u(t))
    \label{eqn:truth}
    \\[0.5em]
    \kappa &=
    \begin{cases}
    \kappa_0 & y(t) > 0 \\
    \kappa_1 & y(t) \leq 0
    \end{cases}
    \label{eqn:out_and_switch}
\end{align}
where $y(t)$ is the displacement of the mass from the neutral position (positive in direction of spring extension), $u(t)$ is an applied actuator voltage,  $m$ is the mass, $\kappa$ is the stiffness coefficient, $\nu$ is a damping coefficient, and $\rho_0$, $\rho_1$ are constant coefficients mapping from applied voltage to applied force. The stiffness coefficient depends on whether the system is in compression (softer) or extension (stiffer), and the spring hardens with displacement in either direction due to the cubic exponent on $y(t)$ in (\ref{eqn:truth}).
The ``measurements'' given to the controller after each trial are generated from this model using Runge Kutta 4-step integration and a step size of $T_{s}=\SI{0.001}{\second}$.

\begin{sloppypar}
The controller itself requires a discrete time model, which is derived from (\ref{eqn:truth}) using the forward Euler method. This yields 
output and switching
equations identical to those of the continuous-time system, but the state dynamics become
\end{sloppypar}
\begin{align}
    \hat{x}(k+1) &= \begin{bmatrix}\hat{y}(k+1) & \hat{y}(k+2)\end{bmatrix}^T 
    \end{align}
    where
    \begin{align}
    \hat{y}(k+2) &= 
    2\hat{y}(k+1) - \hat{y}(k) + 
    \frac{\hat{T}_{s}^2}{\hat{m}}\left( 
    - \hat{\kappa} \hat{y}(k)^3 - \frac{\hat{\nu}}{\hat{T}_{s}} \hat{y}(k+1) +
     \hat{\rho}_0 u(k) +  \hat{\rho}_1 \tan^{-1}(u(k))
    \right)
\end{align}
where $\hat{T}_{s}=\SI{0.01}{\second}$, an order of magnitude coarser than the truth model. 
This model is nonlinear in the states and the input, has relative degree $\mu=2$, 2 dynamic regimes and
can be represented in closed form with one auxiliary variable. The hyperplane corresponding to this auxiliary 
variable
is given by
\begin{equation}
    P = \begin{bmatrix}
    -1 &  \phantom{-}0 & \phantom{-}0 \\
    \end{bmatrix}
    \qquad
    \offsetVec = \begin{bmatrix}
    0 
    \end{bmatrix}
\end{equation}
which gives rise to 2 polytopes.

Note that because neither the switching function nor the output function in equation (\ref{eqn:out_and_switch}) depends on the input, and because the system is SISO, this system's closed from representation (\ref{eqn:closedForm}) is equivalent to (\ref{eqn:sys}), the system form accepted by the ILC.

Assumptions \ref{A1c3}-\ref{A3c3} are enforced as follows.
Each trial lasts 9.5 seconds (approximately $3\pi$ seconds), making the length of $\gmod(\uvec_\tdx)$ equal to $N-\mu+1=949$.
The trial invariant initial conditions are 0 for both $\hat{y}(k)$ and $\hat{y}(k+1)$.
The trial invariant reference is chosen to be
\begin{equation}
    % \yvec_d
    \refr(k)
    =\begin{cases}
    \sin((k-1)\hat{T}_{s}) & k>0 \\
    0                     & k=0
    \end{cases}
\end{equation}
where the delay in the sinusoid and the leading 0 ensure that the reference does not demand that the controller alter the initial conditions.

For the derivation of the control law \directAILC{} from the closed form system representation, this work uses the open source software tool CasADi. CasADi is chosen because its combination of symbolic framework, sparse matrix storage, and automatic differentiation make for the rapid computation of exact $\gmod(\uvec_\tdx)$ and $\jacobian{\gmod}{\uLift}$ functions. However, the Heaviside function in CasADi uses the trinary ``0.5-at-origin'' convention rather than the binary ``1-at-origin'' convention,
making it equivalent to signum in the context of determining 
on which side of a hyperplane a point lies. 
Because of this, selector functions in CasADi must be written by assembling the polytope interiors and faces as in Remark \ref{rem:sgnvH}. Here this assembly is done to be equivalent to a ``1-at-origin'' Heaviside convention.

\subsection{Methods}
\label{sec:methods3}

There are two simulation design objectives. The first is to assess the utility of the closed form system representation in enabling the control of a hybrid system via a controller developed for non-hybrid systems. The second is to assess the sensitivity of the combination of \ac{NILC} with this system representation. Here, ``sensitivity'' is characterized by the likelihood that the controller will diverge given a particular degree of modeling error.
Both of these goals are accomplished with simulations wherein the controller is synthesized via increasingly erroneous models and its convergence behavior is analyzed.

In this study, a ``simulation'' is an attempt to produce $\rLift$ within 20 process trials under \ac{NILC} for a particular set of mismatched truth and control model parameters.
The control model is identical for all simulations, with parameters given by the vector
\begin{align}
    \thetamod &=
    \begin{bmatrix} 
    \hat{m} & \hat{\rho}_0 & \hat{\rho}_1 & \hat{\nu} & \hat{\kappa}_0 & \hat{\kappa}_1
    \end{bmatrix}^T
    \\&=
    % =
    \begin{bmatrix}
     \phantom{i}1 &
     \phantom{i}1 & 
     \phantom{i}1 & 
     \phantom{i}1 & 
     \phantom{i}4 &
     \phantom{i}1\phantom{n}
    \end{bmatrix}^T
\end{align}
The truth model parameters are 
perturbations of the control model parameters by a random relative error. Mathematically this is
given by the random vector
\begin{equation}
    \theta=\left(
    1_{6\times 1}
    \modelErrVec^T\odot I+I
    \right)\thetamod
\end{equation}
where $1_{6\times 1}\in\mathbb{R}^6$ is a vector with every element equal to 1, $\modelErrVec\in\mathbb{R}^{6}$ is a random vector, $\odot$ is the matrix Hadamard product, and $I$ is the identity matrix. Each element of $\modelErrVec$ is the (positive or negative) relative error
\begin{equation}
    \modelErrVec^{i}=\frac{\theta^{i}-\thetamod^{i}}{\thetamod^{i}}
\end{equation}
and is regenerated for each simulation.
A scalar parameter describing the degree of modeling error can thus be given by $\modelError$.

Simulations are organized with respect to increasing $\modelError$. There are \setQuant{} sets of \expQuant{} simulations each. Sets are defined by bounds on $\modelError$, and each simulation in a set is characterized by a random $\modelErrVec$ such that $\modelError$ falls within the specified bounds. These bounds are given by 0.05 increments from 0 to 1.
For the sake of comparison, one additional simulation is run with zero modeling error ($\modelError=0$).
To ensure the model mismatch is truly 0, the truth model in this simulation is taken to be the forward Euler discrete model rather than a Runge Kutta integration of the continuous model.

The efficacy of the controller in a simulation is quantified by the normalized root mean square error (NRMSE) between $\rLift$ and $\yvec_{20}$. The RMSE is normalized by the amplitude of $\rLift$, which in this work is 1. Thus the normalization does not alter the numerics here, but it formally nondimensionalizes the results.
$\yvec_\tdx$ is said to have ``completely'' converged to $\rLift$ if this NRMSE is less than 0.005.

Holistic analysis of the integration of the closed-form system representation and ILC is based on the spread of NRMSE values over all sets and the average trajectory of NRMSE versus trial number $\tdx$ for all completely convergent simulations. Comparisons between simulation sets are made by the percentage of simulations within each set that completely converge.

\subsection{Results \& Discussion}
\label{sec:results3}

For the controller derivation, on a desktop computer with \RAM{}
of RAM and a \CPU{} CPU, CasADi performs the repeated function composition to acquire $\gmod$ and the differentiation to acquire the \gLength-by-\gLength{} Jacobian $\jacobian{\gmod}{\uLift}$ in \casadiTime{}. 
Once these functions are derived for the first time, before the first learning operation, they need not be derived again, and can be called ordinarily to generate the trial-varying learning matrix for each subsequent trial.
MATLAB symbolic toolbox is used as a traditional symbolic math tool for comparison on the same task. MATLAB fails to complete the task, running out of memory after \matlabTime{}. This illustrates the substantial computational advantage of automatic differentiation in this context, and its ability to enable direct application of Newton's method in ILC.

\begin{figure}
    \centering
    \includegraphics[scale=0.84]{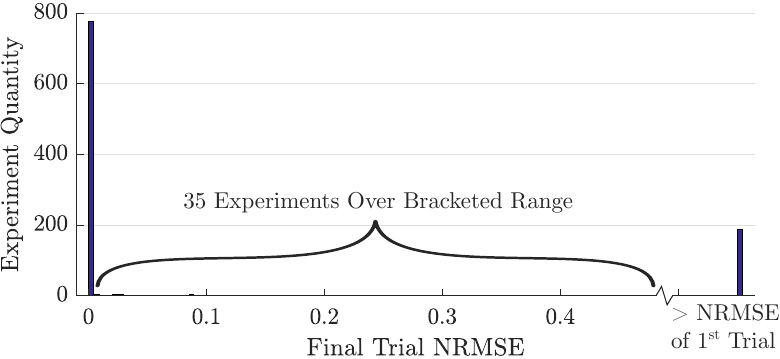}
    \\
    \hspace{1mm}
    \includegraphics[scale=1.12]{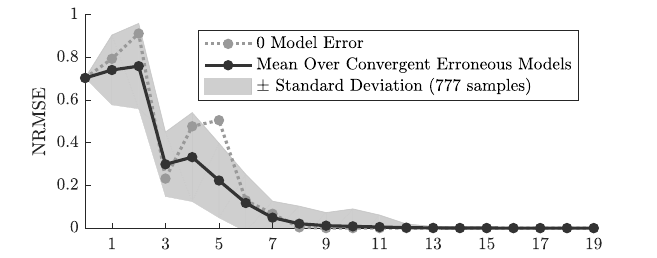}
    \\
    \hspace{0.9mm}
    \includegraphics[scale=1.12]{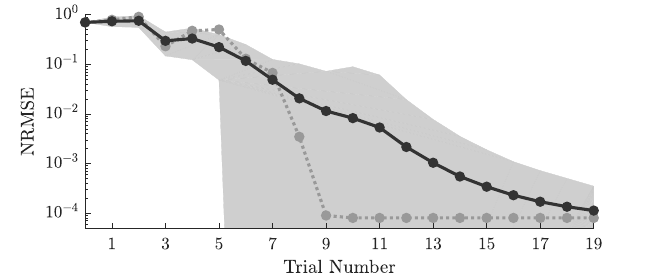}    \caption{\emph{Above:} Histogram of final-trial NRMSE for all experiments. All but the highest error bin are defined by a 0.005 range of NRMSE. \emph{Middle:} Average trajectory of NRMSE vs. Trial number for all completely convergent simulations. \emph{Bottom:} Same on a logarithmic scale, for perspective.}
    \label{fig:convTraj}
\end{figure}

To confirm the efficacy of the combination of \ac{NILC} and the closed-form hybrid system representation, Figure \ref{fig:convTraj} presents the trajectory of output convergence over trial iteration.
The controller reaches the complete convergence threshold in \convZero{} trials under 0 model error, in \convMean{} trials for the mean trajectory over all convergent simulations with erroneous models, and in \convEnv{} trials for a simulation with one standard deviation greater NRMSE than the mean trajectory. This illustrates that while increases in model error can slow convergence, this retardation is modest.
Additionally, it can be noted that nearly all experiments (96.5\%) either completely converge or diverge entirely, yielding final trial NRMSE values in excess of the zero-input NRMSE.
This is illustrated by the large gap in the Figure \ref{fig:convTraj} histogram between the final-trial RMSEs.
The remaining experiments are either almost convergent (likely requiring several more trials to completely converge), or are divergent but with the ``blowing up'' limited to very few points at the very end of the timeseries, thereby having only a small affect on NRMSE.
Finally, to supplement above statistical analysis and to show that the converged simulations indeed yield qualitatively reasonable input trajectories, the timeseries data for a representative simulation is shown in Figure \ref{fig:repTS}.

\begin{figure}
    \centering
    \includegraphics[scale=0.8]{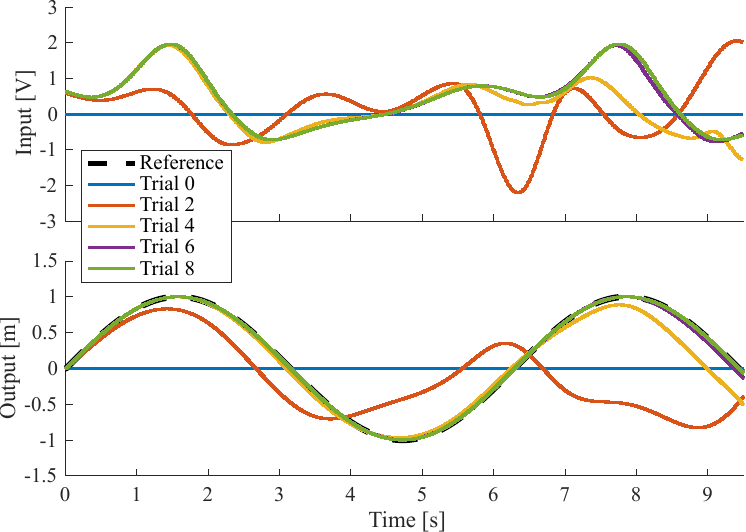}
    \caption{Timeseries evolution over multiple trials for a simulation with $\modelError=0.27$. This simulation took 8 trials to completely converge.}
    \label{fig:repTS}
\end{figure}

The results in the preceding paragraph follow the convergence behavior predicted by \cite{Avrachenkov1998} for a generic input-output model under \ac{NILC} with static modeling error. The adherence of this work's results to prior predictions illustrates that, as expected, the hybrid nature of the example system does not intrinsically necessitate special accommodation in the controller. Instead, all that is needed to apply the controller is the closed-form input-output model, the construction of which is enabled by the closed-form piecewise defined system representation presented.

\begin{figure}
    \centering
    \includegraphics[scale=0.8]{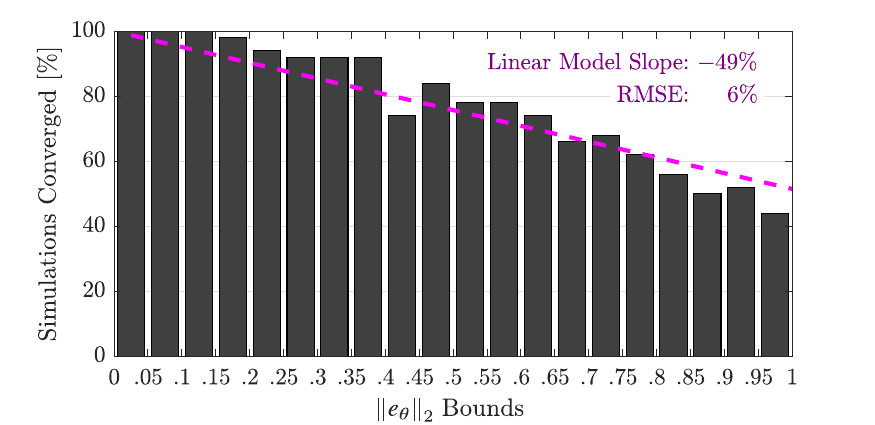}
    \caption{Percentage of experiments that converge for each 0.05 range of relative model error from 0 to 1. The dashed line is a least squares model of the decay in the probability of convergence as model error increases.}
    \label{fig:convProb}
\end{figure}

Finally, to practically evaluate \ref{A6c3}, Figure \ref{fig:convProb} gives the percentage of experiments that converge within each 
experiment set.
Up to $\modelError=0.15$, 100\% of the experiments converge. Beyond a relative model error of 0.15, the convergent experiment percentage decays in a reasonably linear fashion. Taking the conservative assumption that the probability of experiment convergence begins to decay for model error greater than 0, a linear least squares model yields a decay rate of $-49\%$ convergence probability per unit increase in relative model error (or $-0.49$ percent convergence probability per unit increase in percent model error), with an RMSE of $6\%$ between the decay model and data.

Naturally, the convergence probability of other systems may behave differently under increasing $\modelError$, and the above analysis clearly does not constitute a definitive theory. However, because the set of values of $\modelErrVec$ for which \ref{A6c3} is satisfied is usually impossible to compute in practice (because $\gtrue$ is unknown), it is important to have practical references for understanding system convergence. Because mass-spring-damper-like oscillators are ubiquitous across most fields of engineering, the above analysis serves as the first such point of reference for \ac{NILC}.

\section{Conclusion}
\label{sec:conc}
This chapter contributes a closed-form representation of piecewise defined dynamical systems.
The utility of having this representation is demonstrated by the application to a hybrid system of an iterative learning controller based on Newton's method, which was hitherto impossible because of the controller's need for a dynamical system model supporting function composition and Jacobian operations.

Additionally, the controller derivation itself is formally generalized for systems of relative degree greater than 1, 
and for time-varying systems. Additionally,
 automatic differentiation is used to enable faster and more accurate implementation of Newton's method than prior works, which required coarser and/or more computationally expensive approximation of the model Jacobian.

Future work for the closed-form system representation revolves around expanding the representation to
facilitate specification of more complicated switching logic and dynamical state resets upon discrete transitions.
While all discrete-time hybrid automaton topologies may be captured by piecewise definition \cite{Torrisi2004}, doing so often requires augmentation of the state vector and may make model synthesis more challenging.

\chapter{
Nonlinear Systems with Unstable Inverses:
\texorpdfstring{\\}{ }
Invert-Linearize ILC and Stable Inversion
}
\blfootnote{
Content of this chapter also published as:
\\
\indent\indent
I. A. Spiegel, N. Strijbosch, T. Oomen and K. Barton, ``Iterative Learning Control with Discrete-Time Nonlinear Nonminimum Phase Models via Stable Inversion,'' in \emph{International Journal of Robust and Nonlinear Control.} pp. 1-22. Aug. 2021,
\href{https://doi.org/10.1002/rnc.5726}{https://doi.org/10.1002/rnc.5726}
\textcopyright John Wiley \& Sons Ltd 2021. Reprinted with permission.
}
\label{ch:4}

Recall from Chapter \ref{ch:1} that \ac{NILC} is chosen as the foundational \ac{ILC} technique for this dissertation because of its fast convergence rate and because it is synthesizable from a broader range of systems than is considered by other publications  on \ac{ILC} synthesis from nonlinear models.
Specifically, other published \ac{ILC} laws are subject to at least one of the following restrictions on the control model:
\begin{enumerate}[label=(R\arabic*)]
\item
having relative degree of either 0 or 1 \cite{Saab1995,Wang1998},
\item
being affine in the input \cite{Jang1994,Saab1995,Wang1998,Chi2008},
\item
being time-invariant \cite{Jang1994,Sun2003}, and
\item
being smooth (Lipschitz continuous at the most relaxed) \cite{Jang1994,Saab1995,Wang1998,Sun2003,Chi2008}
\end{enumerate}
from all of which \ac{NILC} is free.
Models used to synthesize \ac{NILC} are, however, subject to the restriction of
\begin{enumerate}[label=(R\arabic*)]
\setcounter{enumi}{4}
\item
having a stable inverse.
\label{R4}
\end{enumerate}
Note that the non-\ac{NILC} prior art \cite{Jang1994,Saab1995,Wang1998,Sun2003} 
also suffers from \ref{R4}. While the prior art does not explicitly reveal this shortcoming, evidence of it is given in the appendix.
The ultimate goal of this chapter is a new \ac{ILC} framework inheriting the benefits of \ac{NILC} while surmounting this shortcoming.

Recall too that stable inversion is considered as a foundation for the treatment of unstable inverses in this dissertation, and that for linear systems stable inversion can be summarized as the forward-in-time evolution of a system's stable modes and the backward-in-time evolution of its unstable modes.

Extension of stable inversion to nonlinear models
involves additional complexities.
Some of these challenges, e.g. the difficulty of completely decoupling the stable and unstable parts of a nonlinear system, have been addressed by works such as \cite{Devasia1996, Devasia1998} 
for continuous-time systems and \cite{Zeng2000} for discrete-time systems.
However, 
the following
challenges remain.
First, for
nonlinear models, stable inversion requires 
Picard iteration
to solve a fixed-point-finding problem, and the computational complexity of the solution grows exponentially with the number of iterations.
Thus it is desirable to reach a satisfactory point in as few iterations as possible.
Secondly, the
quality,
i.e. proximity to the solution,
of the initial 
guess
strongly influences these early iterates, but the 
initial Picard iterate prescribed
by \cite{Zeng2000}---the zero state 
trajectory---can be improved upon
for many representations of practical systems, such as those employing both feedback and feedforward control. In fact, the convergence proof in \cite{Zeng2000} relies on an assumption that precludes 
many representations of these systems.
First, this prior art assumes that if the state and input are both zero at a particular time step, then the state will be zero at the next time step.
This is not true for most representations of systems employing both feedback and feedforward control
because the reference becomes a time-varying parameter embedded in the dynamics of the closed loop system. The reference thus drives state change via the feedback controller despite the initial state and feedforward input being zero. 
Stable inversion erroneously based on this assumption can have poor performance, and stable inversion has not been proven to converge when this assumption is relaxed.
Secondly,
\cite{Zeng2000} lacks discussion of the translation from the theoretical solution on a bi-infinite timeline to an implementable solution on a finite timeline, let alone a validation of such.
This chapter addresses these challenges.

In short, while the work to date on \AILC{} and stable inversion has made great strides, 
gaps remain between the prior art and
a synthesis scheme for ILC that is fast and 
applicable to a wide variety of models---including nonlinear \nonminphase{} models.
This chapter first contributes mathematical analysis concretely identifying the failure mechanism of \AILC{} when synthesized from 
 models with unstable inverses. 
This leads to the core contribution of a novel ILC framework for controlling nonlinear, \nonminphase{} systems. The key elements of this framework are
\begin{itemize}[leftmargin=*]
    \item
    restructuring of the linearization and model inversion processes in \AILC{} to circumvent issues associated with matrix inversion,
    \item
    reformulation of the model inversion in \AILC{} as stable inversion,
    \item
    proof of stable inversion convergence with relaxed assumptions on state dynamics, enabling treatment of a wider array of feedback control and other time-varying models,
    and
    \item
    development of a structured method for implementing the stable inversion technique proposed in this work.
\end{itemize}
As a final contribution, the proposed framework is validated in simulation on a nonlinear, relative degree 2, time-varying, \nonminphase{} cart-and-pendulum system with model error and process and measurement noise.

The remainder of the chapter is organized as follows. 
Section \ref{sec:back} provides technical details from the prior art in
nonlinear
stable inversion \cite{Zeng2000} necessary to present the novel contributions of the present work. 
Section \ref{sec:AILC} 
presents analysis that justifies
the attribution of a class of \AILC{} failures to 
inverse instability, 
and provides a new 
ILC framework
that enables the circumvention of this failure mechanism by incorporating stable inversion.
Section \ref{sec:StabInv} provides proof of convergence of stable inversion for an expanded class of systems and provides improved methods for practical implementation.
Section \ref{sec:Val} details and discusses the validation of the new 
ILC
framework
with stable inversion through benchmark simulations on an \nonminphase{} cart-and-pendulum system.
This includes demonstration of 
conventional
\AILC's divergence when applied to the same system.
Section \ref{sec:conc4} presents conclusions and areas for future work.

As a final note: because this chapter seeks to present a new \ac{ILC} framework that is general enough for both hybrid systems and non-hybrid nonlinear systems, this chapter does not explicitly discuss hybrid systems.
Furthermore, the stable inversion techniques discussed in this chapter are focused on smooth nonlinear systems. Consequently, the example system used for validation in this chapter is a smooth nonlinear system.
The stable-inversion-based solution to inverse instability will be brought to bear on hybrid systems in
Chapter \ref{ch:5}.

\section{Nonlinear Stable Inversion Background}
\label{sec:back}

The first step of stable inversion is 
deriving the conventional inverse.
To synthesize a minimal inverse system representation, first assume \originalmod{} is in the normal form
\begin{IEEEeqnarray}{RLR}
\eqlabel{eq:simmuall}
\IEEEyesnumber
\IEEEyessubnumber*
\label{eq:simmuminus}
    \xmod^{i}(k+1) &= \xmod^{i+1}(k)
    &
    \qquad
    i<\mu
\\
\xmod^{i}(k+1) &= \fmod^{i}\left(\xmod(k),u(k),k\right)
    &
    \qquad
    i\geq\mu
    \label{eq:simmu}
    \\
    \label{eq:simoutput}
    \ymod(k)&=\xmod^1
    &
\end{IEEEeqnarray}
where $\xmod(0)=0$, and the superscripts $i$ indicate the vector element index, starting from 1.
Note the ILC trial index subscript $\tdx$ is omitted in this section,  
as stable inversion on its own does not involve incrementing $\tdx$. 
Equation (\ref{eq:simmuminus}) captures the time delay arising from the system relative degree, while equation (\ref{eq:simmu}) captures the remaining system 
dynamics.
One method of deriving 
this normal form from a system not in normal form is given in
\cite{Eksteen2016}.

Given this normal form,
use (\ref{eq:simoutput}) to
replace the first $\mu$ state variables with output variables via
\begin{align}
    \xmod^{i}(k) &= \ymod(k+i-1) \qquad i\leq\mu
    \label{eq:sub1}
\end{align}
Similarly, replace the $\mu$\textsuperscript{th} state variable incremented by one time step 
(i.e. the left side of (\ref{eq:simmu}) for $i=\mu$)
with an output variable via
\begin{align}
    \xmod^\mu(k+1)&= \ymod(k+\mu)
    \label{eq:sub2}
\end{align}
These substitutions are made to facilitate the inversion of system
(\ref{eq:simmuall}),
as the inverse of a system with relative degree $\mu\geq 1$ is necessarily acausal with dependence on some subset of $\{\ymod(k),\,\ymod(k+1),\cdots,\ymod(k+\mu)\}$ at each time step $k$.
For notational compactness, define the $\ymod$-preview vector $\ypreview(k)\coloneqq[\ymod(k),\cdots,\ymod(k+\mu)]^T$.
Then inverting (\ref{eq:simmu}) with $i=\mu$ yields the conventional inverse output function
\begin{equation}
    u(k) = \fmod^{\mu}{\vphantom{\fmod}}^{ -1 }\left(
    \begin{bmatrix} \xmod^{\mu+1}, & \cdots, & \xmod^{n_x}  \end{bmatrix}^T,
    \ypreview(k),
    k\right)
    \label{eq:invoutx}
\end{equation}
where $\fmod^{\mu}{\vphantom{\fmod}}^{ -1 }$ is the inverse of $\fmod^{\mu}$, i.e. (\ref{eq:simmu}, $i=\mu$) solved for $u(k)$.
This output equation is substituted into 
(\ref{eq:simmu}) with $i>\mu$
along with (\ref{eq:sub1})-(\ref{eq:sub2})
to yield the entire inverse state dynamics
\begin{IEEEeqnarray}{RL}
\eqlabel{eq:invall}
\IEEEyesnumber
\IEEEyessubnumber*
\etamod(k+1) &= \fmod_{\eta}\left(\etamod(k),\ypreview(k),k\right)
    \label{eq:invstate}
    \\
    u(k)&=\fmod^{\mu}{\vphantom{\fmod}}^{-1}\left(\etamod(k),\ypreview(k),k\right)
    \label{eq:invout}
\end{IEEEeqnarray}
where $\etamod\in\real^{n_{\eta}}$ ($n_{\eta}=n_x-\mu$) is the inverse state vector defined
\begin{equation}
    \etamod^i(k)\coloneqq\xmod^{\mu+i}(k)
\end{equation}
and $\fmod_\eta:\real^{n_{\eta}}\times\real^{\mu+1}\times\integer\rightarrow\real^{n_{\eta}}$ is the inverse state dynamics 
\begin{equation}
    \fmod_\eta^i(\etamod(k),\ypreview(k),k)\coloneqq \fmod^{i+\mu}(\xmod(k),u(k),k)
\end{equation}

Next, this inverse system is to be similarity transformed to decouple the stable and unstable modes of its linearization about the initial condition. Consider the Jacobian
\begin{equation}
A=\jacobian{\fmod_{\eta}}{\etamod}\left(\etamod=0,\ypreview=\ypreview^\dagger,k=0\right)
\end{equation}
where $\ypreview^\dagger$ is the solution to $\fmod_{\eta}(0,\ypreview^\dagger,0)=0$. 
Then let $V$ be the similarity transform matrix such that
\begin{equation}
    \Asim=V^{-1}AV=
    \begin{bmatrix}
    \Asim_\stab & 0
    \\
    0
    & \Asim_\unstab
    \end{bmatrix}
    \label{eq:transform}
\end{equation}
where $\Asim_\stab\in\real^{\evalq\times \evalq}$ has all eigenvalues inside the unit circle, and $\Asim_\unstab\in\real^{n_{\eta}-\evalq\times n_{\eta}-\evalq}$ has all eigenvalues outside the unit circle. This can be satisfied by deriving the real block Jordan form of $A$. 
The corresponding inverse system state dynamics are
\begin{equation}
    \etamodsim(k+1)=\fetasim\left(\etamodsim(k),\ypreview(k),k\right)
    \coloneq V^{-1}\fmod_{\eta}\left(V\etamodsim(k),\ypreview(k),k\right)
    \label{eq:fetasim}
\end{equation}
where the tilde on $\fetasim$ indicates application to $\etamodsim$ rather than $\etamod$.
Note that despite using a linearization-derived linear similarity transform, (\ref{eq:fetasim}) describes the same nonlinear time-varying dynamics as (\ref{eq:invstate}), but with the linear parts of the stable and unstable modes decoupled.

If \originalmod{} 
has an unstable inverse, 
then (\ref{eq:fetasim}) is unstable and $\etamodsim(k)$ will be unbounded as $k$ 
increases.
However, given an infinite timeline in the positive and negative direction, the equation
\begin{equation}
\label{eq:implicitsoln}
    \etamodsim(k) = \sum_{\idx=-\infty}^{\infty}
    \phi(k-\idx)\left(\fetasim\left(\etamodsim(\idx-1),\ypreview(\idx-1),\idx-1\right)-\Asim\etamodsim(\idx-1)\right)
\end{equation}
where
\begin{align}
    \phi(k) = \begin{cases}
    \begin{bmatrix}
    \Asim^k_\stab & 0_{\evalq\times n_\eta-\evalq} \\
    0_{n_\eta-\evalq\times \evalq} & 0_{n_\eta-\evalq\times n_\eta-\evalq}
    \end{bmatrix}
    & k>0
    \\[10pt]
    \begin{bmatrix}
    I_{\evalq\times \evalq} & 0_{\evalq\times n_\eta-\evalq} \\
    0_{n_\eta-\evalq\times \evalq} & 0_{n_\eta-\evalq\times n_\eta-\evalq}
    \end{bmatrix}
    & k=0
    \\[10pt]
    \begin{bmatrix}
    0_{\evalq\times \evalq} & 0_{\evalq\times n_\eta-\evalq} \\
    0_{n_\eta-\evalq\times \evalq} & -\Asim_\unstab^k
    \end{bmatrix}
    & k<0
    \end{cases}
    \label{eq:phi}
\end{align}
is an exact, bounded solution to (\ref{eq:fetasim}) provided the right hand side of (\ref{eq:implicitsoln}) exists for all $k\in\integer$. However, (\ref{eq:implicitsoln}) is implicit, and thus cannot be directly evaluated. A fixed-point problem
solver---past work uses Picard iteration---must 
be used to find $\etamodsim$, and sufficient conditions for the solver convergence and solution uniqueness must be determined.

\section{
Novel 
ILC
Analysis and Development
}
\label{sec:AILC}

\subsection{Failure for 
Models with Unstable Inverses
}
\label{sec:NILCfailure}
The 
\AILC{}
scheme \directAILC{} provides convergence of $\evec_\tdx$ to 0 in theory.
However, this assumes perfect computation of the matrix inversion in (\ref{eq:gammaprior}).
In practice, the precision to which $\left(\jacobian{\gmod}{\uvec}(\uvec_\tdx)\right)^{-1}$ can be accurately computed is directly dependent on the condition number of $\jacobian{\gmod}{\uvec}(\uvec_\tdx)$.
If the condition number 
of a matrix
is large enough, the values computed for 
its inverse
may become arbitrary, and their order of magnitude may grow directly with the order of magnitude of the condition number \cite[ch. 3.2]{Belsley1980}, \cite{Rump2009}. This ``blowing up'' of the matrix inverse can cause divergence of \directAILC.

Large $\jacobian{\gmod}{\uvec}$ condition numbers have been previously observed for \nonminphase{} linear systems, both time-invariant 
\cite{Chu2013},
\cite[ch. 5.3-5.4]{Moore1993} and time-varying \cite{Norrlof2002}, \cite[ch. 4.1.1]{Dijkstra2004}. 
The fact that the minimum singular value of $\jacobian{\gmod}{\uvec}(\uvec_\tdx)$ decreases with increases in the system frequency response function magnitude at the Nyquist frequency \cite{Lee2000} contributes to this ill-conditioning.
For linear systems, this magnitude is directly dependent on the zero 
magnitudes, and thus on the inverse systems' 
stability.
These phenomena generalize 
to nonlinear systems because the Jacobian evaluated at a particular input trajectory, $\jacobian{\gmod}{\uvec}(\uvec^*)$, is equal to the constant matrix $\jacobian{\glin}{\uvec}$ where $\glin$ is the lifted input-output model of the linearization of \originalmod{} about the trajectory $\uvec^*$.

To illustrate this equality, first consider that the elements of $\jacobian{\gmod}{\uvec}(\uvec^*)$ are given by (\ref{eq:ymodk}) and the chain rule as
\begin{align}
\label{eq:jacobianelements}
    \partialfrac{\ymod(k)}{u(j)}(\uvec^*) =
    \partialfrac{\hmod}{\xmod}\left( \fmod^{(k-1)}(\uvec^*) \right)
    \partialfrac{\fmod^{(k-1)}}{\uvec}(\uvec^*)
    \partialfrac{\uvec}{u(j)}
\end{align}
where
\begin{align}
    \partialfrac{\uvec}{u(j)} = \begin{bmatrix}
    0_{1\times j} & 1 & 0_{1\times N-\mu+1-j}
    \end{bmatrix}^T
\end{align}
and $\partialfrac{\hmod}{\xmod}$ is a row vector.

Then consider the linearization of \originalmod{} about 
$\uvec^*$:
\begin{IEEEeqnarray}{RL}
\eqlabel{eq:linall}
\IEEEyesnumber
\IEEEyessubnumber*
\label{eq:flin}
    \delta \xmod(k+1) &=\flin\left(\delta\xmod(k),\delta u(k),k\right)
    \\
    &= \nonumber
    \jacobian{\fmod}{\xmod}\left(\xmod^*(k),u^*(k),k\right)\,\delta \xmod(k) +
    \jacobian{\fmod}{u}(\xmod^*(k),u^*(k),k)\,\delta u(k)
    \\
    \label{eq:hlin}
    \delta \ymod(k) &= \hlin(\delta\xmod(k)) 
    =
    \jacobian{\hmod}{\xmod}(\xmod^*(k))\,\delta\xmod(k)
\end{IEEEeqnarray}
where
$\xmod^*(k)=\fmod^{(k-1)}(\uvec^*)$ and the $\delta$ notation denotes $\delta\xmod(k)=\xmod(k)-\xmod^*(k)$ for $\xmod$ and similar for $u$.

Lifting \linsys{} in the same manner as \originalmod{} yields the output perturbation as a function of the input perturbation time series $\delta\uvec$ via
\begin{align}
\label{eq:deltaymod}
    \delta \ymod (k) = \jacobian{\hmod}{\xmod}\left(\fmod^{(k-1)}(\uvec^*)\right)\flin^{(k-1)}(\delta\uvec)
\end{align}
Because of \linsys's linearity, $\flin^{(k-1)}(\delta\uvec)$ can be explicitly expanded as
\begin{align}
\label{eq:linfunccomp}
    \flin^{(k-1)}(\delta\uvec)=\left(\prod_{\idx=0}^{k-1}\partialfrac{\fmod^{(\idx)}}{\fmod^{(\idx-1)}}(\uvec^*)\right)\delta\xmod(0) + \partialfrac{\fmod^{(k-1)}}{\uvec}(\uvec^*)\delta\uvec
\end{align}
where $\prod$ is ordered with the factor of least $\ell$ on the right and the factor of greatest $\ell$ on the left.
The terminal condition of the recursive function composition is $\fmod^{(-1)}=\xmod(0)$.
From (\ref{eq:deltaymod}) and (\ref{eq:linfunccomp}) it is clear that the elements of $\jacobian{\glin}{\delta\uvec}$ are given by
\begin{align}
    \partialfrac{\delta\ymod(k)}{\delta u(j)} = \jacobian{\hmod}{\xmod}\left(\fmod^{(k-1)}(\uvec^*)\right)
    \jacobian{\fmod^{(k-1)}}{\uvec}(\uvec^*)
    \jacobian{\delta\uvec}{\delta u(j)}
\end{align}
which is equal to (\ref{eq:jacobianelements}) because $\jacobian{\delta\uvec}{\delta u(j)}=\partialfrac{\uvec}{u(j)}$ 
due to the identical structures (\ref{eq:uvec}) of $\uvec$ and $\delta \uvec$ with respect to $u$ and $\delta u$ time indexing.
Thus, if \originalmod{} is such that its linearization \linsys{} is 
unstable 
it will suffer ill-conditioning and  
(\ref{eq:gammaprior}) may be so difficult to compute in practice that attempts to do so yield a matrix with large 
erroneous
elements. Such a learning gain matrix may in turn cause $\uvec_{\tdx+1}$ to contain large 
erroneous
elements, causing the learning law to diverge.

Therefore, for the learning law (\ref{eq:ILCclassic}) to converge for a 
system with an unstable inverse 
in practice, a learning matrix synthesis that does not require matrix inversion of $\jacobian{\gmod}{\uvec}(\uvec_\tdx)$ is desired.

\subsection{Alternative Learning Matrix Synthesis}
\label{sec:gammanew}
To circumvent issues associated with inverting 
$\jacobian{\gmod}{\uvec}(\uvec_\tdx)$
this work introduces a new learning matrix definition seeking to satisfy the requirements 
\ref{A4c3}-\ref{A6c3}
in the spirit of Newton's method, but without the matrix inversion requirement of (\ref{eq:gammaprior}). The new learning matrix is given by
\begin{equation}
    L_\tdx = 
    \jacobian{\gmod^{-1}}{\yvecmod}(\yvec_\tdx)
    \label{eq:gammanew}
\end{equation}
where $\gmod^{-1}:\real^{N-\mu+1}\rightarrow\real^{N-\mu+1}$ is a lifted model of the inverse of \originalmod{}.
This makes 
$\jacobian{\gmod^{-1}}{\yvecmod}$
a function of the output of \originalmod{}, 
namely $\yvecmod_\tdx$. As stated in Section 
\ref{sec:NILC},
$\yvecmod_\tdx$ is merely a prediction of the accessible, measured output $\yvec_\tdx$. Hence $\yvec_\tdx$ is used as the input to 
$\jacobian{\gmod^{-1}}{\yvecmod}$.
In short, this work proposes using the linearization of the inverse of \originalmod{} rather than the inverse of the linearization,
and thus the new framework \SAILControl{} will be referred to as ``Invert-Linearize ILC'' (\ILILC).

A direct method of inverting \originalmod{} is to solve
\begin{equation}
    \ymod_\tdx(k+\mu) = \hmod(
    \fmod^{(k+\mu-1)}(\uvec_\tdx)
    )
\end{equation}
for $u_\tdx(k)$, and substitute the resulting function of 
$\left\{\ymod_\tdx(k),\,\ymod_\tdx(k+1),\cdots,\ymod_\tdx(k+\mu) \right\}$
into (\ref{eq:modelstate}).
However, if \originalmod{} 
has an unstable inverse, 
this method of inversion will yield unbounded states $\xmod_\tdx(k)$ as $k$ increases. Thus, $\gmod^{-1}$ is derived via stable inversion rather than direct inversion.
Note, though, that (\ref{eq:gammanew}) also admits the use of other stable approximate inverse models for $\gmod^{-1}$ should they be available.

\section{
Novel Stable Inversion Development
}
\label{sec:StabInv}
This section proves a relaxed set of sufficient conditions for the convergence of Picard iteration to the unique solution to the stable inversion problem, i.e. the unique solution to (\ref{eq:implicitsoln}) from Section \ref{sec:back}.
This enables stable inversion---and thus ILC---for a new class of system representations capturing simultaneous feedback and feedforward control.
Additionally, a new initial Picard iterate prescription is given to suit the broadened scope of stable inversion, and a procedure for practical implementation is described.
This procedure enables the derivation of $\gmod^{-1}$.

\subsection{Fixed-Point Problem Solution}
The standard Picard iterative solver \cite[ch. 9]{Agarwal2000} for (\ref{eq:implicitsoln}) is 
\begin{equation}
    \etamodsim_{(m+1)}(k) =
    \sum_{\idx=-\infty}^{\infty}
    \phi(k-\idx)\left(\fetasim\left(\etamodsim_{(m)}(\idx-1),\ypreview(\idx-1),\idx-1\right) - \Asim\etamodsim_{(m)}(\idx-1) \right)
    \label{eq:picardinf}
\end{equation}
where the parenthetical subscript $(m)\in\integer_{\geq 0}$ is the Picard iteration index.

To prove that (\ref{eq:picardinf}) converges to a unique solution, \cite{Zeng2000} makes the 
assumptions\footnote{
The continuous-time literature also makes these assumptions
\cite{Devasia1996,Devasia1998}
}
that
\begin{enumerate}[label=(Z\arabic*)]
\item
\label{(Z2)}
$\fmod(0,0,k)=0$ $\forall$ $k$, and
\item
\label{(Z3)}
$\etamodsim_{(0)}(k)=0$ $\forall$ $k$.
\end{enumerate}
The 
first assumption
is violated for many representations of systems incorporating both feedback and feedforward control. An example of such a system is given in Section \ref{sec:Val}, where $u$ is the feedforward control input and the feedback control is part of the time-varying dynamics of $\fmod$. This feedback control influences $\xmod$ regardless of whether or not $u(k)=0$. While there may often be a change of variables that enables satisfaction of \ref{(Z2)}, (\ref{eq:simmuminus}-\ref{eq:simoutput}) already imposes constraints on the states and outputs, and for many systems it is unlikely for there to exist a change of variables satisfying both assumptions.

Furthermore, while for systems satisfying \ref{(Z2)}, \ref{(Z3)} may 
be the zero-input state trajectory, this is 
untrue
for systems violating \ref{(Z2)}. For these systems, the zero state trajectory \ref{(Z3)} is essentially arbitrary, and may 
degrade the quality of low-$m$ Picard iterates if far from the solution trajectory.
This seriously jeopardizes convergence because the computational complexity of the Picard iteration solution grows exponentially with the number of iterations. It is thus desirable to reach a satisfactory solution in as few iterations as possible, i.e. it is desirable to have high-quality low-$m$ iterates. 

Thus, this work presents a new set of sufficient conditions for the unique convergence of (\ref{eq:picardinf}) that relaxes 
\ref{(Z2)}, \ref{(Z3)}.
Before proof of this, several definitions are presented.

\begin{definition}[Lifted 
Matrices and Third-Order Tensors]
Given the vector and matrix functions of time $a(k)\in\real^{n}$ 
and
$B(k)\in\real^{n\times n}$, the corresponding lifted matrix and third order tensor are given by upright bold notation: $\mathbf{a}\in\real^{n\times\mathcal{K}}$ and $\mathbf{B}\in\real^{n\times n\times\mathcal{K}}$. $\mathcal{K}$ is the time dimension, and may be $\infty$. Elements of the lifted objects are $\mathbf{a}^{i,k}\coloneqq a^i(k)$ and $\mathbf{B}^{i,j,k}\coloneqq B^{i,j}(k)$.
\label{def:lift}
\end{definition}

\begin{definition}[Matrix and Third-Order Tensor Norms]
$\norm{\cdot}_\infty$ refers to the ordinary $\infty$-norm when applied to vectors, and is the matrix norm induced by the vector norm when applied to matrices (i.e. the maximum absolute row sum). Additionally, the entry-wise $(\infty,1)$-norm is defined for the matrices and third-order tensors $\mathbf{a}$ and $\mathbf{B}$ from Definition \ref{def:lift} as
\begin{align}
\norm{\mathbf{a}}_{\infty,1}
\coloneqq \sum_{k\in\mathcal{K}}\norm{a(k)}_\infty
\qquad
\norm{\mathbf{B}}_{\infty,1}\coloneqq \sum_{k\in\mathcal{K}}\norm{B(k)}_{\infty}
\end{align}
\end{definition}

\begin{definition}[Local Approximate Linearity \cite{Devasia1996,Zeng2000}]
$\fetasim$ is locally approximately linear in $\etamodsim(k)$ and its $\etamodsim(k)=0$ dynamics,
 in a closed $s$-neighborhood around $(\etamodsim(k)=0,\fetasim(0,\ypreview(k),k)=0)$, 
 with Lipschitz constants $K_1, K_2>0$ if $\exists s>0$ such that for any vectors
\begin{itemize}
    \item
    $a(k)$, $b(k)\in\real^{n_\eta}$ with $\norm{\cdot}_\infty\leq s$ $\forall k$, and
    \item
    $\mathscr{a}(k)$, $\mathscr{b}(k)\in\real^{\mu+1}$ such that $\norm{\fetasim(0,\mathscr{a}(k),k)}_\infty$, $\norm{\fetasim(0,\mathscr{b}(k),k)}_\infty\leq s$ $\forall k$
\end{itemize}
the following is true $\forall k$
\begin{multline}
    \left\lVert
    \left(
    \fetasim(a(k),\mathscr{a}(k),k)-Aa(k)
    \right) 
    -
    \right.
    \left.
    \left(
    \fetasim(b(k),\mathscr{b}(k),k)-Ab(k)
    \right)
    \right\rVert_\infty
    \\
    \leq
    K_1\norm{
    a(k)-b(k)
    }_\infty+
    K_2\norm{
    \fetasim(0,\mathscr{a}(k),k)-\fetasim(0,\mathscr{b}(k),k)}_\infty
    \label{eq:lal}
\end{multline}
\label{def:lal}
\end{definition}

With these definitions 
a new set of sufficient conditions for Picard iteration convergence 
 may be established.

\begin{theorem}
The Picard iteration (\ref{eq:picardinf}) converges to a unique solution to (\ref{eq:fetasim}) if the following sufficient conditions are met.
\begin{enumerate}[label=(A4.\arabic*),leftmargin=*]
    \item 
    \label{(C5)}
    $\norm{\etavecsim_{(0)}}_{\infty,1}\leq s$
    \item
    \label{(C5b)}
    $\forall k$ $\exists\ypreview(k)=\ypreview^\dagger(k)$ such that $\fetasim(0,\ypreview^\dagger(k),k)=0$
    \item
    \label{(C6)}
    $\fetasim$ is locally approximately linear in the sense of (\ref{eq:lal})
    \item
    \label{(C7)}
    $K_1\norm{\phivec}_{\infty,1}<1$
    \item
    \label{(C8)}
    $\frac{\norm{\phivec}_{\infty,1}K_2
    \norm{
    \tilde{\fvec}_\eta(0,\ypreview)
    }_{\infty,1}
    }{1-\norm{\phivec}_{\infty,1}K_1}\leq s$
\end{enumerate}
where $
\norm{
    \tilde{\fvec}_\eta(0,\ypreview)
    }_{\infty,1} =
    \sum_{k=-\infty}^\infty \norm{\fetasim(0,\ypreview(k),k)}_\infty
$
and $\phivec$ are defined by Definition \ref{def:lift}; i.e. $\phivec$ is the lifted tensor version of (\ref{eq:phi}).
\label{thm:picard}
\end{theorem}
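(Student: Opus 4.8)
The plan is to recast (\ref{eq:implicitsoln}) as a fixed-point equation $\etavecsim = T\etavecsim$ on a complete metric space and to verify the hypotheses of the Banach contraction mapping principle, so that (\ref{eq:picardinf}) --- which is just $\etavecsim_{(m+1)} = T\etavecsim_{(m)}$ --- converges to the unique fixed point. Concretely, I would take $X$ to be the Banach space of bi-infinite lifted sequences $\etavecsim$ (Definition \ref{def:lift} with $\mathcal{K}=\integer$) under the $(\infty,1)$-norm, and restrict to the closed ball $B_s = \{\etavecsim \in X : \norm{\etavecsim}_{\infty,1} \le s\}$, which is a complete metric space as a closed subset of $X$. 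The operator $T$ is defined by letting $(T\etavecsim)(k)$ equal the right-hand side of (\ref{eq:picardinf}); by the discussion in Section \ref{sec:back} a fixed point of $T$ is exactly a bounded solution of (\ref{eq:fetasim}).

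Two preliminaries are needed before the main estimates. First I would show $\norm{\phivec}_{\infty,1} < \infty$: from the block form (\ref{eq:phi}), for $k>0$ the only nonzero block is $\Asim_\stab^k$, which decays geometrically since $\Asim_\stab$ has spectral radius below one by (\ref{eq:transform}); for $k<0$ the only nonzero block is $-\Asim_\unstab^k = -(\Asim_\unstab^{-1})^{|k|}$, which decays geometrically since $\Asim_\unstab^{-1}$ has spectral radius below one; and $k=0$ contributes a single bounded term, so summing the two geometric series gives the bound. Second I would record a discrete Young-type inequality: since, for any sequence $g$, $\sum_\idx \phi(k-\idx) g(\idx-1)$ is a convolution of $\phi$ with a time shift of $g$, compatibility of the induced matrix norm with the vector $\infty$-norm together with interchange of the order of summation (legitimate since the terms are nonnegative) give $\sum_k \norm{\sum_\idx \phi(k-\idx) g(\idx-1)}_\infty \le \norm{\phivec}_{\infty,1}\sum_k \norm{g(k)}_\infty$.

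Next I would prove $T(B_s) \subseteq B_s$. For $\etavecsim \in B_s$, write the summand of $T$ as $g(\idx-1) = \fetasim(\etamodsim(\idx-1),\ypreview(\idx-1),\idx-1) - \Asim\etamodsim(\idx-1)$ and apply local approximate linearity (\ref{eq:lal}) with $a = \etamodsim(\idx-1)$, $b = 0$, $\mathscr{a} = \ypreview(\idx-1)$, and $\mathscr{b} = \ypreview^\dagger(\idx-1)$ --- the preview supplied by \ref{(C5b)} --- so the $\mathscr{b}$-term drops out because $\fetasim(0,\ypreview^\dagger,\cdot) = 0$, yielding $\norm{g(\idx-1)}_\infty \le K_1\norm{\etamodsim(\idx-1)}_\infty + K_2\norm{\fetasim(0,\ypreview(\idx-1),\idx-1)}_\infty$. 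Summing over all time, applying the Young inequality and the definitions of $\norm{\cdot}_{\infty,1}$ and $\norm{\tilde{\fvec}_\eta(0,\ypreview)}_{\infty,1}$, and using $\norm{\etavecsim}_{\infty,1} \le s$ gives $\norm{T\etavecsim}_{\infty,1} \le \norm{\phivec}_{\infty,1}\bigl(K_1 s + K_2\norm{\tilde{\fvec}_\eta(0,\ypreview)}_{\infty,1}\bigr)$, which is $\le s$ precisely by rearranging \ref{(C7)} and \ref{(C8)}. Along the way one must check that the arguments fed into (\ref{eq:lal}) stay inside the $s$-neighborhood of Definition \ref{def:lal} at every step --- $\norm{\etamodsim(k)}_\infty \le s$ from membership in $B_s$, and $\norm{\fetasim(0,\ypreview(k),k)}_\infty \le s$ from \ref{(C8)} and the choice of $s$ --- and this bookkeeping is the most delicate part of the argument.

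Finally I would establish the contraction estimate: for $\etavecsim_a,\etavecsim_b \in B_s$, apply (\ref{eq:lal}) again, now with $a = \etamodsim_a(\idx-1)$, $b = \etamodsim_b(\idx-1)$ and $\mathscr{a} = \mathscr{b} = \ypreview(\idx-1)$, so the $K_2$-term vanishes identically and $\norm{g_a(\idx-1) - g_b(\idx-1)}_\infty \le K_1\norm{\etamodsim_a(\idx-1) - \etamodsim_b(\idx-1)}_\infty$; the Young inequality then gives $\norm{T\etavecsim_a - T\etavecsim_b}_{\infty,1} \le K_1\norm{\phivec}_{\infty,1}\norm{\etavecsim_a - \etavecsim_b}_{\infty,1}$ with contraction factor $K_1\norm{\phivec}_{\infty,1} < 1$ by \ref{(C7)}. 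The Banach fixed-point theorem then delivers a unique fixed point $\etavecsim^\star \in B_s$ and the convergence $\etavecsim_{(m)} \to \etavecsim^\star$ from every starting iterate in $B_s$; assumption \ref{(C5)} guarantees the prescribed $\etavecsim_{(0)}$ lies in $B_s$, and the limit is by construction the unique bounded solution of (\ref{eq:fetasim}). The main obstacle is not any individual inequality but the invariance step above --- ensuring the iterates never leave the region where Definition \ref{def:lal} is valid --- with the rigorous handling of the bi-infinite sums (convergence and interchange of summation order) a close second.
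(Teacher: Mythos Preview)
Your proposal is correct and follows essentially the same route as the paper: both arguments establish that the ball $\{\norm{\etavecsim}_{\infty,1}\le s\}$ is invariant under the Picard map (using \ref{(C5b)}--\ref{(C6)} with the comparison point $(0,\ypreview^\dagger)$ and the rearranged \ref{(C8)}) and then obtain the contraction estimate with factor $K_1\norm{\phivec}_{\infty,1}$ (using \ref{(C6)} with $\mathscr{a}=\mathscr{b}$). The only cosmetic difference is packaging: the paper works directly with the iterates and concludes Cauchy-ness by hand, while you invoke the Banach fixed-point theorem on the same two estimates; your version also spells out the finiteness of $\norm{\phivec}_{\infty,1}$ and the Young-type convolution bound that the paper uses implicitly.
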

\begin{proof}
This proof shares the approach of \cite{Zeng2000} in establishing the Cauchy nature of the Picard sequence. It is also influenced by the proofs of Picard iterate local approximate linearity for continuous-time systems in \cite{Devasia1996}.

Proof 
that (\ref{eq:picardinf}) converges to a unique fixed point begins with 
an induction showing
that $\etamodsim_{(m)}(k)$ remains in the locally approximately linear neighborhood $\forall$ $k$, $m$. The base case of this induction is given by \ref{(C5)}. Then under the premise 
\begin{equation}
\norm{\etavecsim_{(m)}}_{\infty,1}\leq s
\label{eq:premise}
\end{equation}
the induction proceeds as follows.
Here, ellipses indicate the continuation of a line of mathematics.

By 
the Picard iterative solver 
(\ref{eq:picardinf}):
\begin{multline}
    \norm{\etavecsim_{(m+1)}}_{\infty,1}=
    \\
    \sum_{k=-\infty}^\infty\left\lVert
    \sum_{\idx=-\infty}^\infty
    \phi(k-\idx)
    \right.
    \left.
    \vphantom{\sum_{\idx=-\infty}^\infty}
    \left(
    \fetasim\left(\etamodsim_{(m)}(\idx-1),\ypreview(\idx-1),\idx-1\right)
    -A\etamodsim_{(m)}(\idx-1)
    \right)
    \right\rVert_\infty
    \cdots
\end{multline}
By the triangle inequality:
\begin{multline}
    \cdots \leq \sum_{k=-\infty}^\infty \sum_{\idx=-\infty}^\infty
    \left\lVert
    \phi(k-\idx)
    \vphantom{\fetasim}
    \right.
    \left.
    \left(
    \fetasim\left(\etamodsim_{(m)}(\idx-1),\ypreview(\idx-1),\idx-1\right)
    -A\etamodsim_{(m)}(\idx-1)
    \right)
    \right\rVert_{\infty}
    \cdots
\end{multline}
By the fact that for matrix norms induced by vector norms $\norm{Ba}\leq\norm{B}\norm{a}$ for matrix $B$ and vector $a$:
\begin{multline}
    \cdots \leq \sum_{k=-\infty}^\infty \sum_{\idx=-\infty}^\infty
    \norm{
    \phi(k-\idx)
    }_\infty
    \norm{
    \fetasim\left(\etamodsim_{(m)}(\idx-1),\ypreview(\idx-1),\idx-1\right)
    -A\etamodsim_{(m)}(\idx-1)
    }_{\infty}
    \cdots
\end{multline}
\begin{multline}
    \cdots = \sum_{\idx=-\infty}^\infty
    \left\lVert
    \fetasim\left(\etamodsim_{(m)}(\idx-1),\ypreview(\idx-1),\idx-1\right)
    \vphantom{
    -A\etamodsim_{(m)}(\idx-1)
    }
    -A\etamodsim_{(m)}(\idx-1)
    \right\rVert_{\infty}
    \sum_{k=-\infty}^\infty
    \norm{
    \phi(k-\idx)
    }_\infty
    \cdots
\end{multline}
By the fact that $\sum_{k=-\infty}^\infty\norm{\phi(k-\idx)}_\infty$ has the same value $\forall \idx$
\begin{equation}
    \cdots = \norm{\phivec}_{\infty,1}
    \sum_{\idx=-\infty}^\infty
    \norm{
    \fetasim\left(\etamodsim_{(m)}(\idx-1),\ypreview(\idx-1),\idx-1\right)
    -A\etamodsim_{(m)}(\idx-1)
    }_\infty
    \cdots
\end{equation}
By \ref{(C5b)}:
\begin{multline}
    \cdots = \norm{\phivec}_{\infty,1}
    \sum_{\idx=-\infty}^\infty 
    \left\lVert
    \left(
    \fetasim\left(\etamodsim_{(m)}(\idx-1),\ypreview(\idx-1),\idx-1\right)
    -A\etamodsim_{(m)}(\idx-1)
    \right)
    \right.
    \\
    \left.
    -
    \left(
    \fetasim\left(0,\ypreview^\dagger(\idx-1),\idx-1\right)
    -A(0)
    \right)
    \right\rVert_\infty
    \cdots
\end{multline}
By \ref{(C6)}:
\begin{equation}
    \cdots 
    \leq
    \norm{\phivec}_{\infty,1} \sum_{\idx=-\infty}^\infty
    K_1\norm{\etamodsim_{(m)}(\idx-1)}_\infty +
    K_2\norm{\fetasim\left(0,\ypreview(\idx-1),\idx-1\right)}_\infty
    \cdots
\end{equation}
\begin{equation}
    \cdots =
    \norm{\phivec}_{\infty,1}\left(
    K_1\norm{\etavecsim_{(m)}}_{\infty,1}
    +
    K_2\norm{\tilde{\fvec}_\eta(0,\ypreview)}_{\infty,1}
    \right)
    \cdots
\end{equation}
By \ref{(C7)}, both sides of \ref{(C8)} can be multiplied by the denominator in \ref{(C8)} without changing the inequality direction. Thus by (\ref{eq:premise}) and algebraic rearranging of \ref{(C8)}
\begin{equation}
    \cdots \leq \norm{\phivec}_{\infty,1}\left(
    K_1s+K_2\norm{\tilde{\fvec}_\eta(0,\ypreview)}_{\infty,1}
    \right)
    \leq
    s
\end{equation}
$\therefore$ $\norm{\etavecsim_{(m)}}_{\infty,1}\leq s$ $\forall m$. Because $\norm{\etavecsim_{(m)}}_{\infty,1}\geq \norm{\etamodsim_{(m)}(k)}_\infty$ $\forall k$, this implies that $\etamodsim_{(m)}(k)$ is within the locally approximately linear neighborhood $\forall$ $m$, $k$.

To show that (\ref{eq:picardinf}) converges to a unique fixed point, define
\begin{equation}
    \Delta\etamodsim_{(m)}(k)\coloneqq \etamodsim_{(m+1)}(k) - \etamodsim_{(m)}(k)
\end{equation}
Then, by a nearly identical induction
\begin{equation}
    \norm{\Delta\etavecsim_{(m+1)}}_{\infty,1} \leq \norm{\phivec}_{\infty,1}K_1\norm{\Delta\etavecsim_{(m)}}_{\infty,1}
\end{equation}
By \ref{(C7)} 
\begin{equation}
    \lim_{m\rightarrow\infty} \norm{\Delta\etavecsim_{(m)}}_{\infty,1}=0
\end{equation}
which implies
\begin{equation}
    \lim_{m\rightarrow\infty}\norm{\Delta\etamodsim_{
(m)}(k)}_\infty=0\,\,\forall k
\end{equation}
\begin{sloppypar}
$\therefore$ $\forall k$ the sequence $\{\etamodsim_{m}(k)\}$ is a Cauchy sequence, and thus the fixed point 
$\etamodsim(k)=\lim_{m\rightarrow\infty}\etamodsim_{(m)}(k)$
is unique.
\end{sloppypar}
\end{proof}

Neither the preceding presentation nor the nonlinear stable inversion prior art \cite{Zeng2000} explicitly discusses the intuitive foundation of stable inversion: evolving the stable modes of an inverse system forwards in time from an initial condition and evolving the unstable modes backwards in time from a terminal condition. Unlike for linear time invariant (LTI) systems, this intuition is not put into practice directly for nonlinear systems because 
the similarity transforms that completely decouple the stable and unstable modes of linear systems do not necessarily decouple the stable and unstable modes of nonlinear systems.
However, the same principle underpins this work. This is evidenced by the fact that the intuitive LTI stable inversion is recovered from (\ref{eq:implicitsoln}) when $\fmod$ is LTI, as illustrated briefly below.

For LTI $\fmod$, $\fmodsim$ takes the form
\begin{align}
    \etamodsim(k+1)&=
    \Asim\etamodsim(k) + \Bsim\ypreview(k)
    \\
    \begin{bmatrix} \etamodsim_{\stab}(k+1) \\ \etamodsim_{\unstab}(k+1) \end{bmatrix}
    &=
    \begin{bmatrix}
    \Asim_\stab & 0 \\ 0 & \Asim_\unstab 
    \end{bmatrix}\begin{bmatrix}
    \etamodsim_{\stab}(k) \\ \etamodsim_{\unstab}(k)
    \end{bmatrix} + \begin{bmatrix}
    \Bsim_\stab \\ \Bsim_\unstab
    \end{bmatrix}\ypreview(k)
\end{align}
Then the implicit solution (\ref{eq:implicitsoln}) becomes the explicit solution
\begin{align}
    \etamodsim(k) &= \sum_{\idx=-\infty}^\infty \phi(k-\idx)\Bsim\ypreview(\idx-1)
    \\
    \begin{bmatrix}
    \etamodsim_\stab(k) \\ \etamodsim_\unstab(k)  
    \end{bmatrix}
    &= \begin{bmatrix}
    \sum_{\idx=-\infty}^{k}\Asim^{k-\idx}_\stab\Bsim_\stab\ypreview(\idx-1)
    \\
    -\sum_{\idx=k+1}^\infty \Asim_\unstab^{k-\idx}\Bsim_\unstab\ypreview(\idx-1)
    \end{bmatrix}
    \\
    &=
    \begin{bmatrix}
    \Asim_\stab\etamodsim_\stab(k-1) + \Bsim_\stab\ypreview(k-1)
    \\
    \Asim_\unstab^{-1}\etamodsim_\unstab(k+1)-\Asim_\unstab^{-1}\Bsim_\unstab\ypreview(k)
    \end{bmatrix}
\end{align}
which is the forward evolution of the stable modes and backward evolution of the unstable modes where the initial and terminal conditions at $k=\pm\infty$ are zero.

\subsection{Initial Picard Iterate 
\texorpdfstring{$\etamodsim_{(0)}$}{ }
Selection and Implementation}
This subsection addresses the need to select a new initial Picard iterate $\etamodsim_{(0)}(k)$ in the absence of \ref{(Z3)}.
Also addressed is
 the fact that (\ref{eq:picardinf}) is a purely theoretical, rather than implementable, solution because it contains infinite sums along an infinite timeline.

In the context of ILC, the learned feedforward control action is often intended to be a relatively minor adjustment to the primary action of the feedback controller. 
Thus, choosing $\etamodsim_{(0)}(k)$ to be the feedback-only trajectory, i.e. the zero-feedforward-input trajectory, is akin to warm-starting the fixed-point solving process. This trajectory is given by
\begin{equation}
\begin{aligned}
    \xmod(k+1) &= \fmod\left( \xmod(k), 0, k \right) \qquad \xmod(0)=0_{n_x}
    \\
    \etamodsim_{(0)}(k) &=     V^{-1}
    \begin{bmatrix}
    0_{\mu\times\mu} 
    & 0_{\mu\times n_{\eta}} \\ 0_{n_{\eta}\times \mu} & I_{n_{\eta}\times n_{\eta}}
    \end{bmatrix}\xmod(k)
    \end{aligned}
    \label{eq:picard0}
\end{equation}
for $k\in\{0,\cdots,N-\mu\}$.

An implementable version of (\ref{eq:picardinf}) is given by
\begin{equation}
    \etamodsim_{(m+1)}(k) =
    \sum_{\idx=1}^{N-\mu+1}
    \phi(k-\idx)
    \left(\fetasim\left(\etamodsim_{(m)}(\idx-1),\ypreview(\idx-1),\idx-1\right) - \Asim\etamodsim_{(m)}(\idx-1) \right)
    \label{eq:implementableStabInv}
\end{equation}
for $k\in\{1,...,N-\mu\}$,
fixing the initial condition
$\etamodsim_{(m)}(0)=0_{n_{\eta}}$ 
$\forall m$.

Note that (\ref{eq:implementableStabInv}) is equivalent to 
assuming $\etamodsim_{(m)}(k)=0$, 
$\ypreview(k)=0$,
and $\fetasim\left(0,0,k\right)=0$
for $k\in(-\infty,-1]\cup[N-\mu+1,\infty)$
and
extracting the $k\in[1,N-\mu]$ elements of $\etamodsim_{(m+1)}(k)$ generated by (\ref{eq:picardinf}).
These assumptions correspond to a lack of control action prior to $k=0$ and a reference trajectory that brings the system back to its zero initial condition 
with enough trailing zeros for the system to settle by $k=N-\mu$.
This is typical of repetitive motion processes, but admittedly may preclude some other ILC applications.

Furthermore, 
for the first Picard iteration ($m+1=1$)
these assumptions yield identical (\ref{eq:implementableStabInv})- and (\ref{eq:picardinf})-generated $\etamodsim_{(1)}(k)$ on $k\in[0,N-\mu]$.
Because output tracking of \nonminphase{} systems in general requires preactuation,
for this range of $k$ to contain a practical control input trajectory there must be sufficient leading zeros in the reference starting at $k=0$.
For the following Picard iterates the theoretical and implementable trajectories are unlikely to be equal, but can be made closer the more leading zeros are included in the reference.

Ultimately, applying (\ref{eq:implementableStabInv}) for any number of iterations $\mfin\geq 1$ yields an expression for each time step of $\etamodsim_{(\mfin)}(k)$ whose only variable parameters are the elements of $\yvecmod$. This is because the recursion calling $\etamodsim_{(\mfin)}(k)$ terminates at the known trajectory $\etamodsim_{(0)}(k)$, and because $\ymod(k)=0$ for $k\in\{0,...,\mu-1\}$ due to the known initial condition $\xmod(0)=0$. The concatenation of these expressions plugged into 
the inverse output function 
(\ref{eq:invout}) yields the lifted inverse system model
\begin{equation}
    \gmod^{-1}(\yvecmod)=\begin{bmatrix}
    \fmod^{\mu}{\vphantom{\fmod}}^{ -1 }\left(\etamodsim_{(\mfin)}(0),\ypreview(0),0\right)
    \\
    \fmod^{\mu}{\vphantom{\fmod}}^{ -1 }\left(\etamodsim_{(\mfin)}(1) ,\ypreview(1),1\right)
    \\ \vdots \\
    \fmod^{\mu}{\vphantom{\fmod}}^{ -1 }\left(\etamodsim_{(\mfin)}(N-\mu),\ypreview(N-\mu),N-\mu\right)
    \end{bmatrix}
    \label{eq:ginv}
\end{equation}
which enables the 
synthesis of the 
\ILILC{} 
learning matrix (\ref{eq:gammanew}).
With this, the complete synthesis 
of \ILILC{} with stable inversion---starting from a model 
in the normal form (\ref{eq:simmuall})---can
be summarized by Procedure \ref{proc:synth}.

\begin{algorithm}
	\caption{\ILILC{} Synthesis with Stable Inversion}
		\label{proc:synth}
\begin{algorithmic}[1]

\State
Derive the minimal state space representation $\fmod_{\eta}$ and $\fmod^{\mu}{\vphantom{\fmod}}^{-1}$ (from (\ref{eq:invall})) of the conventional inverse of (\ref{eq:simmuall}).

\State
Apply similarity transform $V$ (from (\ref{eq:transform})) to derive the inverse state dynamics representation $\fetasim$ (from (\ref{eq:fetasim})) with decoupled stable and unstable linear parts.

\State
\label{step:etasim}
Use (\ref{eq:picard0})-(\ref{eq:implementableStabInv}) to derive the inverse system state $\etamodsim_{(\mfin)}$ as a function of $\yvecmod$ at each point in time $k\in\{0,\cdots,N-\mu\}$.

\State
\label{step:ginv}
Derive the lifted inverse model $\gmod^{-1}$ 
via (\ref{eq:ginv}).

\State
Use an automatic differentiation tool to 
derive $\jacobian{\gmod^{-1}}{\yvecmod}$ as a function of $\yvec$, i.e. the learning matrix $L_\tdx$ from (\ref{eq:gammanew}).

\State
Compute
$L_\tdx = \jacobian{\gmod^{-1}}{\yvecmod}(\yvec_\tdx)$ at each trial for the ILC law (\ref{eq:ILCclassic}).

\Comment{
Steps \ref{step:etasim}-\ref{step:ginv} are greatly facilitated by using a computer algebra system. CasADi can provide this functionality in addition to automatic 
differentiation.
}

\end{algorithmic}
\end{algorithm}

\section{Validation}
\label{sec:Val}

This section presents validation of the fundamental claim that the original \AILC{} fails for 
models with unstable inverses 
and that the 
newly
proposed
\ILILC{} 
framework---when used with stable inversion---succeeds.
Additionally, while the intent of ILC is to account for model error, overly erroneous modeling can cause
violation of \ref{C3}, which 
may cause 
divergence of the ILC law.
Thus this section also probes the performance and robustness of 
\ILILC{} 
with stable inversion over increasing model error in physically motivated simulations.

The 
\ILILC{} law
(\ref{eq:ILCclassic}), (\ref{eq:gammanew}) is applied as a reference shaping tool to a feedback control system (sometimes called ``series ILC''). 
This represents the common scenario of applying a higher level controller to ``closed source'' equipment.
The resultant system \originalmod{} is a nonlinear time-varying system with relative degree $\mu=2$.

Modeling error
is simulated by synthesizing the ILC laws from a nominal ``control model'' of the example system, and applying the resultant control inputs to a set of ``truth models'' featuring random parameter errors and the injection of process and measurement noise.
Finally, to give context to the results for 
\ILILC{}
with stable inversion, identical simulations are run with a benchmark technique that does not require modification for \nonminphase{} systems: gradient ILC.

\subsection{Benchmark Technique: Gradient ILC}
Aside from those using \AILC{}, the authors know of no prior art explicitly addressing ILC  synthesis from discrete-time nonlinear (particularly nonlinear in the input) time-varying models with relative degree greater than 1. However, with automatic differentiation, ``gradient ILC'' is nearly as easily synthesized from this class of 
models as it is from the LTI models it was proposed for in \cite{Owens2009}.

The most straightforward form of Gradient ILC is gradient descent applied to the optimization problem
\begin{equation}
    \argmin_{\uvec}\frac{1}{2}\evec^T\evec
\end{equation}
which yields the ILC law
\begin{equation}
    \uvec_{\tdx+1} = \uvec_\tdx + \gamma \jacobian{\gmod}{\uvec}\left(\uvec_\tdx\right)^T\evec_j
    \label{eq:gradILC}
\end{equation}
where $\gamma>0$ is the gradient descent step size. Note that (\ref{eq:gradILC}) is free of the matrix inversion that historically inhibited the application of \AILC{} to 
systems with unstable inverses.

$\gamma$ is a tuning parameter that influences the performance-robustness trade off of (\ref{eq:gradILC}). 
Reducing $\gamma$ improves the probability that (\ref{eq:gradILC}) will converge for some unknown model error, but may also reduce the rate of convergence. 
For the sake of comparing the convergence rates between gradient ILC and 
\ILILC, here we choose $\gamma$ such that the two methods have comparable probabilities of convergence over the battery of random model errors tested: $\gamma=1.1$.

\subsection{Example System}
\label{sec:exSystem}
\begin{figure}
    \centering
    \includegraphics[scale=0.3]{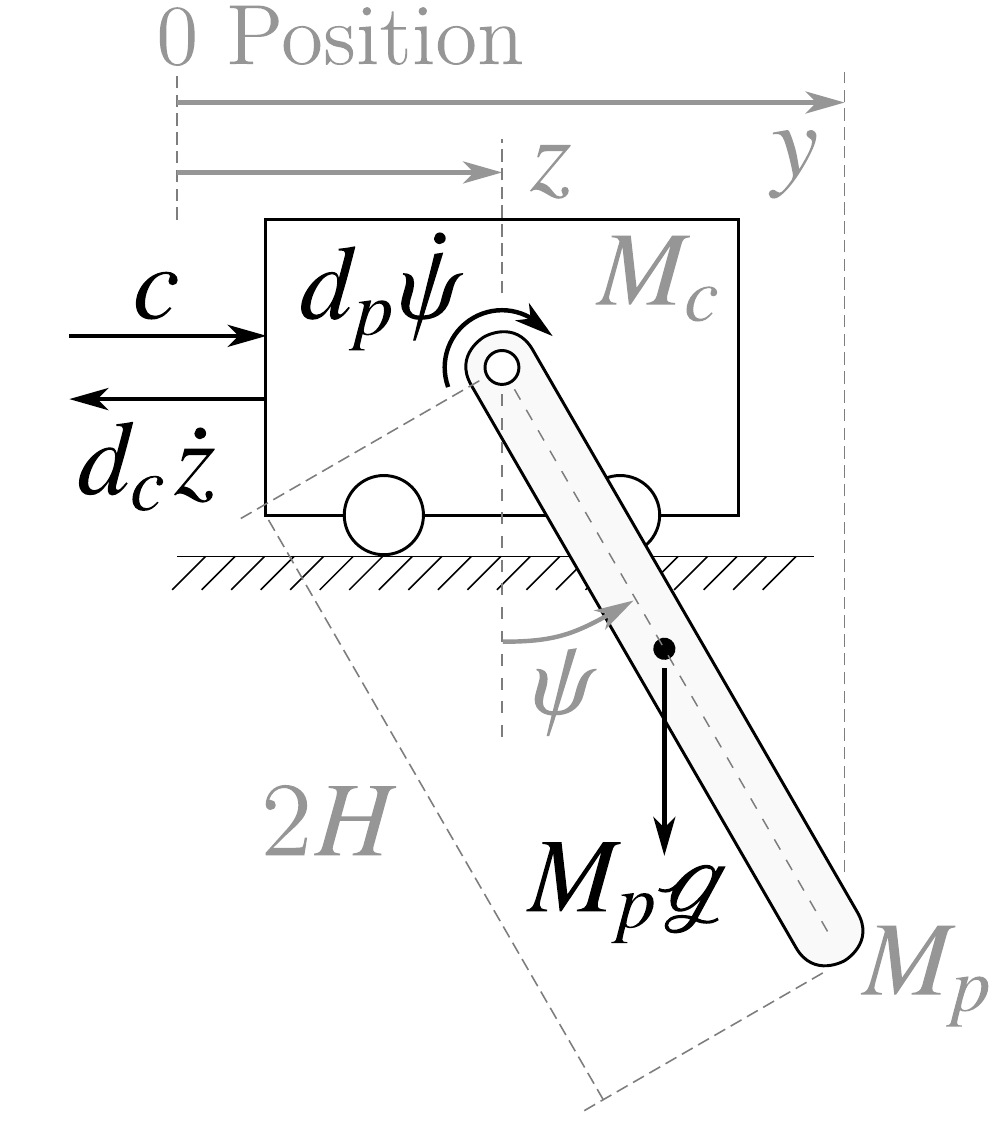}
    \caption{Cart and pendulum system. Dimension, position, and mass annotations are in grey. Force and torque annotations are in black.}
    \label{fig:cartpend}
\end{figure}

Consider the system pictured in Figure \ref{fig:cartpend}, consisting of a pendulum fixed to the mass center of a cart on a rail. 
This subsection presents the first-principles continuous-time equations of motion for this plant, the method for converting these dynamics to the discrete-time normal form (\ref{eq:simmuall}), and the control architecture of the system.

The cart is subjected to an applied force $\force$, and
viscous damping occurs both between the cart and the rail and between the pendulum and the cart.
Equations of motion for this plant 
are given by
\begin{multline}
\label{eq:EOMconttheta}
    \ddot{\angtruth} =
     -3 \left(
     H \Mp \left( \force + \omega_\force \right) \cos (\angtruth ) +
     \dptruth ( \Mc + \Mp ) \dot{\angtruth}+
     \vphantom{\left(\Mc \Mp +\Mp^2\right)}
     \right. \\ \left.
      H^2 \Mp^2 \sin (\angtruth ) \cos (\angtruth ) \dot{\angtruth}^2+
     \grav H \left(\Mc \Mp +\Mp^2\right) \sin (\angtruth ) -
    \right. \\ \left.
    \vphantom{
    \left(\Mc \Mp +\Mp^2\right)
     }
  \dctruth H \Mp \cos (\angtruth )\dot{z}
  \right)
  \frac{1}{ H^2 \Mp \left(
  4 (\Mc+\Mp)-3 \Mp \cos ^2(\angtruth )
  \right) }
  \end{multline}%
\begin{multline}
    \label{eq:EOMcontx}
    \ddot{z} =
    \left(
    4 H \left( \force + \omega_\force \right) +
     3 \dptruth  \cos (\angtruth ) \dot{\angtruth}+
     \right.\\\left.
     4  H^2 \Mp \sin (\angtruth ) \dot{\angtruth}^2+
     3 \grav H\Mp \sin (\angtruth ) \cos (\angtruth ) -
     \right.\\\left. 
     \vphantom{
     4 H \left( \force + \omega_\force \right) +
     3 \dptruth  \cos (\angtruth ) \dot{\angtruth}+
     }
      4 \dctruth  H \dot{z}
     \right)
  \frac{1}{
    H \left(4 (\Mc+\Mp) -3 \Mp \cos ^2(\angtruth )\right)
  }
  \end{multline}
  where
  $\angtruth(k)$ is the pendulum angle, $z(k)$ is the cart's horizontal position, $\grav=\SI{9.8}{\meter\per\second\squared}$ is gravitational acceleration, 
  and the process noise $\omega_\force(k)$ is a random sample from a normal distribution with $0$ mean and standard deviation \SI{3.15e-2}{\newton}. 
  $H$ is the pendulum half-length, $\Mc$ and $\Mp$ are the cart and pendulum masses, and $\dctruth$ and $\dptruth$ are the cart-rail and pendulum-cart damping coefficients, respectively.
  The time argument of $\omega_c$, $\angtruth$, $z$ and their derivatives has been dropped for compactness.

\begin{sloppypar}
The output to be tracked is the pendulum tip's horizontal position, $y$.
Obtaining a discrete-time state space model of this system in the normal form (\ref{eq:simmuall}) requires first a change of coordinates
such that the desired output is a state,
and then discretization. The change of coordinates is
\begin{equation}
    \angtruth = \arcsin\left(\frac{y-z}{2H}\right)
    \label{eq:ChangeOfCoord}
\end{equation}
with
associated derivative substitutions
\begin{align}
    \dot{\angtruth} &= \frac{\dot{y}-\dot{z}}{2H\sqrt{1-\frac{\left(y-z\right)^2}{4H^2}}}
    \\
    \ddot{\angtruth} &=
    \frac{
    \sec\left(\angtruth\right)\left(
    \ddot{y}-\ddot{z} + 2H\sin\left( \angtruth \right)\dot{\angtruth}^2
    \right)
    }{
    2H
    }
    \label{eq:ddang}
\end{align}
Then the equations of motion are solved for in terms of the new coordinates.
In the present case
(\ref{eq:EOMconttheta})-(\ref{eq:ddang}) can be solved for $\ddot{y}(k)$ and $\ddot{z}(k)$ as functions of $y(k)$, $z(k)$, $\dot{y}(k)$, and $\dot{z}(k)$.
Next, forward Euler discretization 
is applied recursively to the equations of motion to reformulate the state dynamics in terms of discrete time increments rather than derivatives, as is required by the normal form. The innermost layer of the recursion is the first derivatives
\begin{equation}
\dot{y}(k)=\frac{y(k+1)-y(k)}{T_s}
\qquad 
\dot{z}(k)=\frac{z(k+1)-z(k)}{T_s}
\end{equation}
where 
the sample period 
$T_s=\SI{0.016}{\second}$ in this case.
These can be plugged into 
$\ddot{y}(k)$ and $\ddot{z}(k)$ to eliminate their dependence on derivatives.
The next---and in this case final---layer is
the forward Euler discretization of the second derivatives.
The outermost layer 
can be rearranged to yield the discrete-time equations of motion 
\begin{equation}
\label{eq:EOMdisc}
\begin{aligned}
    y(k+2)&=\ddot{y}(k)T_s^2+2y(k+1)-y(k)
    \\
    z(k+2)&=\ddot{z}(k)T_s^2+2z(k+1)-z(k)
    ,
\end{aligned}
\end{equation}
which are directly used to define the state dynamics $f$ in terms of the state vector
$x(k)=[y(k),\,y(k+1),\,z(k),\,z(k+1)]^T$.
The explicit expressions of (\ref{eq:EOMdisc}) 
are too long to print here, but can be easily obtained in Mathematica, MATLAB symbolic toolbox, etc. via the algebra 
described 
in (\ref{eq:ChangeOfCoord})-(\ref{eq:EOMdisc}).
\end{sloppypar}

The output must track the reference $r(k)$ given in Figure \ref{fig:ref}. To accomplish this the plant is equipped with a full-state feedback controller modeled as
\begin{align}
    \force(k)&= \kappa_0r^*(k) - 
    \begin{bmatrix}
    \kappa_1 & \kappa_2 & \kappa_3 &\kappa_4
    \end{bmatrix}x(k)
    \\
    \rprcv(k)&= r(k)+u(k)
\end{align}
Here, $\rprcv(k)$ is the effective reference and $u(k)$ is the control input generated by the ILC law. In other words, the ILC law adjusts the reference delivered to the feedback controller to eliminate the error transients inherent to feedback control.
Finally, the error signal input to the ILC law is subject to measurement noise $\omega_y(k)$
\begin{equation}
    e(k)=r(k)-y(k)-\omega_y(k)
\end{equation}
where the noise's distribution has 0 mean and standard deviation \SI{5e-5}{\meter}.

\begin{figure}
    \centering
    \includegraphics[scale=0.4]{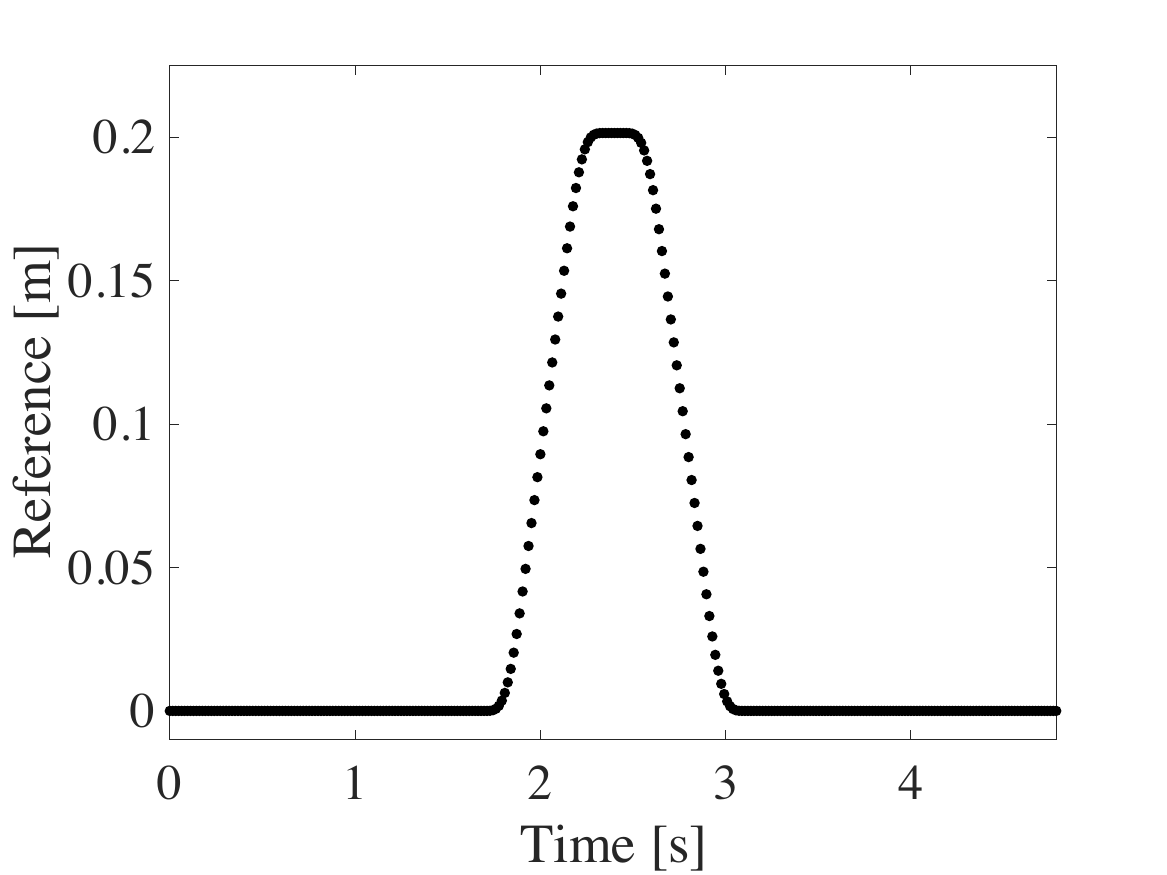}
    \caption{Reference}
    \label{fig:ref}
\end{figure}

The ILC law itself is synthesized from a ``control model'' that is identical in structure to the ``truth model'' presented above, but has $\hat{\omega}_\force=\hat{\omega}_y=0$ and uses the model parameters tabulated in Table \ref{table:cartpend}.
Stable inversion for the synthesis of 
learning matrix 
(\ref{eq:gammanew}) is performed with a single Picard iteration, i.e. $\mfin=1$ in (\ref{eq:ginv}).
To simulate model error, the hatless truth model parameters differ from the behatted control model parameters in a manner detailed in Section \ref{sec:method}.
This ultimately results in the system block diagram given in Figure \ref{fig:block}.

\begin{table}
\centering
\caption{
Cart-Pendulum Control Model Parameters
}
\label{table:cartpend}
\renewcommand{\arraystretch}{1.25}
\begin{tabular}{|l|c|c|}
\hline
Parameter & Symbol & Value 
\\
\hhline{|=|=|=|}
Cart Mass & $\Mcmod$ & \SI{0.5}{\kilo\gram}
\\
\hline
Pendulum Mass & $\Mpmod$ & \SI{0.25}{\kilo\gram}
\\
\hline
Pendulum Half-Length & $\hat{H}$ & \SI{0.225}{\meter}
\\
\hline
Cart-Rail Damping Coefficient & $\dcmod$ & \SI{10}{\kilo\gram\per\second}
\\
\hline
Pendulum-Cart Damping Coefficient & $\dpmod$ & 
\SI{0.01}{\kilo\gram\meter\squared\per\second}
\\
\hline
Full State Feedback Gain 0 & $\kmod_0$ & \SI{630}{}
\\
\hline
Full State Feedback Gain 1 & $\kmod_1$ & \SI{-5900}{}
\\
\hline
Full State Feedback Gain 2 & $\kmod_2$ & \SI{5900}{}
\\
\hline
Full State Feedback Gain 3 & $\kmod_3$ & \SI{-3700}{}
\\
\hline
Full State Feedback Gain 4 & $\kmod_4$ & \SI{4300}{}
\\
\hline
\end{tabular}
\end{table}

\begin{figure}
    \centering
    \includegraphics[scale=1.2]{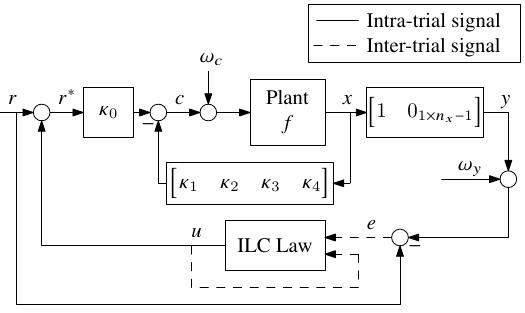}
    \caption{
    System Block Diagram. The control law outputting $u$ is synthesized from the control models defined by the behatted parameters of Table \ref{table:cartpend} and by $\unoise(k)=\ynoise(k)=0$. The plant and controller gain blocks are defined with the truth model parameters generated according to Section \ref{sec:method}.
    Inter-trial signals from trial $\tdx$ are stored and used to compute the input for trial $\tdx+1$.
    }
    \label{fig:block}
\end{figure}

\subsection{Simulation and Analysis Methods}
\label{sec:method}
Let $\parammod\in\real^{10}$ be a vector of the control model parameters in Table \ref{table:cartpend}. Then a truth model can be specified by the vector $\theta$, generated via
\begin{equation}
    \param = \left( 1_{10\times 1} e_{\param}^T\odot I + I \right)\parammod
    \label{eq:truthModlGen}
\end{equation}
where $\odot$ is the Hadamard product and $e_{\param}\in\real^{10}$ is a random sample of a uniform distribution. Under (\ref{eq:truthModlGen}), each element of $e_{\param}$ is the relative error between the corresponding elements of $\param$ and $\parammod$. Thus, $\norm{e_{\param}}_2$ provides a scalar metric for the model error between the control model and a given truth model. The range $\norm{e_{\param}}_2\in[0,0.1]$ is divided into 20 bins of equal width, and 50 truth models are generated for each bin. Both ILC schemes are applied to each truth model with 50 trials, 
and $u_0(k)=0$ $\forall k$.
A full set of 50 trials of one of the ILC laws applied to a single truth model is referred to as a ``simulation.''
The results of these simulations are used to characterize the probability of convergence and rate of convergence of each ILC law.

For each iteration of a simulation, the normalized root mean square error (NRMSE) is given by
\begin{equation}
    \text{NRMSE}_\tdx \coloneqq \frac{\RMS\left(\evec_\tdx\right)}{\norm{\rvec}_\infty}
\end{equation}
A simulation is deemed convergent if there exists $\tdx^*$ such that $\text{NRMSE}_\tdx$ is less than some tolerance for all $\tdx\geq \tdx^*$. This work uses a tolerance of \SI{5e-4}{}, which is close to the NRMSE floor created by noise.

Let $\tdx^{\beta,\tau,\lambda}$ be the minimum $\tdx^*$ for truth model $\tau\in[1,50]$ in bin $\beta\in[1,20]$ under ILC law $\lambda\in\{\text{
\ILILC
},\,\text{gradient ILC}\}$, and
let $\convset$ be the set of all $(\beta,\tau)$ for which both 
\ILILC{}
and gradient ILC converge. Then
the mean transient convergence rate 
\begin{equation}
    \convRate_{\lambda}=\mean_{\convset,\tdx\in[1,\tdx^{\beta,\tau,\lambda}]}
    \left( \frac{\text{NRMSE}^{\beta,\tau,\lambda}_{\tdx}}{\text{NRMSE}^{\beta,\tau,\lambda}_{\tdx-1}} \right)
    \label{eq:convRate}
\end{equation}
offers a numerical performance metric.
Note that \cite{Avrachenkov1998} gives a theoretical convergence analysis for the ILC structure (\ref{eq:ILCclassic}) in general (covering 
\AILC,
\ILILC,
and gradient ILC). This analysis can be used to lower bound performance (i.e. upper bound convergence rate) via multiple parameters computed from the learning matrix $L_\tdx$ and the true dynamics $\gtrue$. The mean transient convergence rate (\ref{eq:convRate}) may thus serve as a specific, measurable counterpart to any theoretical worst-case-scenario analyses performed via the formulas in \cite{Avrachenkov1998}.

Finally, to verify the fundamental necessity and efficacy of 
\ILILC{} for systems with unstable inverses, 
2 trials of
traditional stable-inversion-free \AILC{} \directAILC{}
are applied to each truth model.

\subsection{Results and Discussion}
The condition number of $\jacobian{\gmod}{\uvec}\left(\uvec_0\right)$ is \SI{1e17}{}.
Attempted inversion of this matrix in MATLAB 
yields an inverse matrix with average nonzero element magnitude of \SI{4e13}{} and max element magnitude of \SI{3e16}{}. Consequently, $\uvec_1$ generated by \directAILC{} has an average element magnitude of \SI{2e10}{\meter} and a max element magnitude of \SI{8e11}{\meter}, which is so large that $\yvec_1$ and $\jacobian{\gmod}{\uvec}\left(\uvec_1\right)$ contain \texttt{NaN} elements for all simulations.
Conversely, while some simulations using 
\ILILC{}, i.e. 
\SAILControl{}, diverge due to excessive model error, the majority converge. This validates the fundamental claim that the direct application of Newton's method in \AILC{} is insufficient for 
systems with unstable inverses, 
and that slight modification of the learning matrix and the incorporation of stable inversion addresses this gap.

To accompany the quantitative metric $\norm{e_\theta}_2$,
Figure \ref{fig:solution} 
offers a qualitative sense of the degree of model 
error
in this study by
comparing two representative 
\ILILC{} 
solution trajectories $u_{50}(k)$ with the solution to the $\norm{e_\theta}_2=0$, $\omega_\force(k)=\omega_y(k)=0$ scenario.
The lower-model-error representative solution is from within the range of $\norm{e_\theta}_2$ for which all simulations converged, while the higher-model-error solution comes from a bin in which some simulations diverged.
A more detailed analysis of the boundaries in $\theta$-space determining convergence or divergence of a simulation is beyond the scope of this work.
However, the given trajectories illustrate that even in the conservative subspace defined by the 100\% convergent bins learning bridges a visible performance gap, and that beyond this subspace there are far greater performance gains to be had.

\begin{figure}
    \centering
    \includegraphics[scale=0.9]{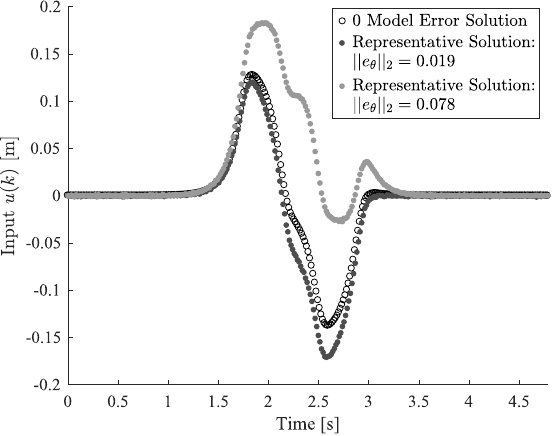}
    \caption{
    Representative input solution trajectories from low- and high-model-error 
    \ILILC{} 
    simulations compared with the solution to the zero-model-error problem.
    The zero-model-error solution is the input trajectory that would be chosen for feedforward control in the absence of learning, and differs notably from both minimum-error trajectories found by 
    \ILILC{}  
    with stable inversion.
    }
    \label{fig:solution}
\end{figure}

Finally, a statistical
comparison
of the performance and robustness of 
\ILILC{} 
with stable inversion and gradient ILC is given in Figure \ref{fig:data}.
The tuning of gradient ILC indeed yields comparable robustness to 
\ILILC,
with 
\ILILC{} 
97\%
as likely to converge as gradient ILC over all simulations.
The convergence rates of the two ILC schemes, however, differ substantially, with gradient ILC taking over 3 times as many trials as 
\ILILC{} 
to converge on average.
The mean transient convergence rate values tabulated in Table \ref{table:convRate} give a more portable quantification of 
\ILILC's 
advantage,
having a convergence rate 
nearly half
that of gradient ILC's.

This analysis confirms that 
\ILILC{} 
with stable inversion is an important addition to the engineer's toolbox because it enables ILC synthesis from nonlinear non-minimum phase models and delivers the fast convergence characteristic of algorithms based on Newton's method.

\begin{figure}
    \centering
    \includegraphics[scale=0.925,trim={0.14in 0in 0.3in 0.1in},clip]{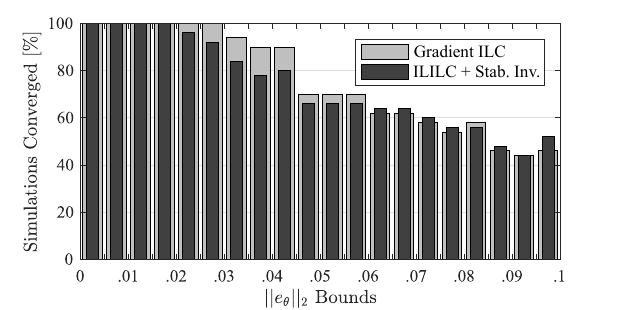}
    \\
    \includegraphics[scale=0.925,trim={0.14in 0in 0.3in 0.1in},clip]{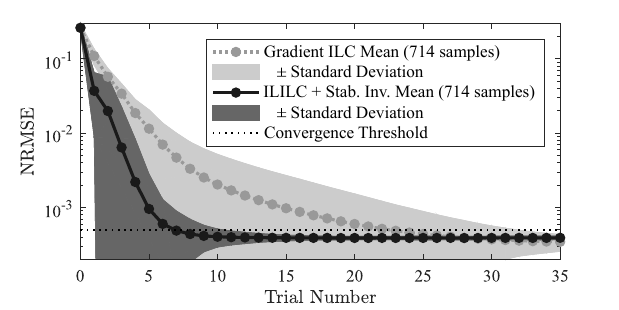}
    \caption{
    \emph{Top:} 
    Histogram giving the percentage of simulations converged 
    in each bin of the model error metric $\norm{e_\theta}_2$.
    \emph{Bottom:}
    Mean value of NRMSE for each ILC trial over all simulations that are convergent for both gradient ILC and 
    \ILILC{}
    with stable inversion.
    This illustrates that for comparable robustness to model error, 
    \ILILC{} 
    converges substantially faster than gradient ILC.
    }
    \label{fig:data}
\end{figure}

\begin{table}
\caption{Transient Convergence Rates for 
\ILILC{}
and Gradient ILC}
\centering
\renewcommand{\arraystretch}{1.25}
\begin{tabular}{|l|c|c|}
\hline
ILC Law           & \multicolumn{1}{l|}{Mean} & \multicolumn{1}{l|}{Standard Deviation} \\ \hhline{|=|=|=|}
Gradient ILC      & 0.76                      & 0.17                                    \\ \hline
\ILILC{}
+ Stab. Inv. & 0.41                      & 0.27                                    \\ \hline
\end{tabular}
\label{table:convRate}
\end{table}

\section{Conclusion}
\label{sec:conc4}
This chapter introduces and validates 
a new
ILC synthesis scheme applicable to nonlinear time-varying 
systems with unstable inverses and 
relative degree greater than 1.
 This is done with the support of nonlinear stable inversion, which is 
 advanced from the prior art via proof of convergence for an expanded class of systems and methods for improved practical implementation.
In all, this results in a new, 
broadly implementable ILC scheme 
displaying
a competitive convergence speed under benchmark testing.

Future work may focus on
further broadening the applicability of 
\ILILC{} 
by relaxing reference and initial condition repetitiveness assumptions, and on
the extension of 
\ILILC{} 
with a potentially adaptive tuning parameter or other means to enable the exchange of some speed for robustness when called for.
Levenberg-Marquardt-Fletcher algorithms may offer one source of inspiration for such work.

\chapter{
Hybrid Systems with Unstable Inverses:
\texorpdfstring{\\}{ }
Stable Inversion of Piecewise Affine Systems
}
\blfootnote{
Content of this chapter also to be submitted as:
\\
\indent\indent
I. A. Spiegel, N. Strijbosch, R. de Rozario, T. Oomen and K. Barton, ``Stable Inversion of Piecewise Affine Systems with Application to Feedforward and Iterative Learning Control,'' in \emph{IEEE Transactions on Automatic Control}.
Copyright may be transferred without notice, after which this version may no longer be accessible.
}
\label{ch:5}

As discussed in Chapter \ref{ch:1}, despite the call for it \cite{VandeWouw2008}, little work has been done on the feedforward control of \ac{PWA} systems. Naturally, a lack of feedforward control research in general implies a lack of research on the more specific problem of feedforward control for \ac{NMP} \ac{PWA} systems.

Thus, while the combination of \ac{ILILC} and stable inversion is promising for the \ac{ILC} of \ac{PWA} systems with \ac{NMP} dynamics, there are two major gaps in the literature that must be filled before such control can be realized.
First, there does not exist any published theory on stable inversion for \PWA{} systems (or hybrid systems of any kind).
Second, there does not exist sufficient literature on 
the 
conventional
closed-form
\PWA{} system inversion 
prerequisite for stable inversion.

The most relevant prior art on \PWA{} system inversion is Sontag's foundational work on piecewise linear systems \cite{Sontag1981}. Sontag proposes that these systems are potentially 
invertible-with-delay, and explains the signal time shifting necessary to accommodate this delay. This serves as a beginning for the concept of the relative degree of a \PWA{} system. However, these concepts require further development to account for the 
fact
that 
a \PWA{} system's apparent relative degree
may change during switching between component models and
the fact
that \PWA{} systems may have multiple inverses. Additionally, Sontag leaves the derivation of the inverse dynamics other than the delay open as a ``nontrivial part'' of the inversion process. Finally, for reference tracking it is desirable to invert a system without delay. This frequently results in anticausal inverses---those in which future reference values are required to compute a current input---but this is no problem for typical feedforward control scenarios where the entire reference is known 
in advance.

Further work of Sontag
\cite{Sontag1982}
delves into the abstract algebra of piecewise linear functions and includes the useful facts that it is decidable whether two sets are isomorphic under piecewise linear functions, that 
the class of
piecewise linear functions 
is
closed under composition, and that if a piecewise linear function has an inverse, it has a piecewise linear inverse. These facts are promising 
and may support dynamical systems research, but are still quite distant from the problem of actually finding the inverse of a dynamical system. Additionally, beyond the suggestion of a direction for future work in abstract algebra, no information is given on the uniqueness of the inverse. Finally, these results do not apply to time-varying systems, which must be considered if one is to perform feedforward control of a system also subject to feedback control with a time-varying reference.

In short, 
the lack of general \PWA{} system inversion theory stymies the output reference tracking control of a broad range of practical systems, especially those with NMP models, for which ILC-based workarounds are currently inapplicable. The present work fills this gap by contributing
\begin{enumerate}
    \item
    formal inversion formulas for a class of \PWA{} systems,
    \item
    sufficient conditions for inverse uniqueness and the corresponding explicit inverse representations,
    \item
    stable inversion 
    theory and implementation
    methods for \PWA{} models with NMP components, 
    and
    \item
    a simulation validation of this theory via the \NILC{} of an NMP \PWA{} system.
\end{enumerate}

Section \ref{sec:sysdef} formally defines the class of \PWA{} systems treated in this work and a concept of relative degree for deterministic hybrid systems in general. Section \ref{sec:exactinverse} proves inverse \PWA{} system formulas and sufficient conditions for inverse uniqueness. Section \ref{sec:stabinv} presents the stable inversion theory for \PWA{} systems. Section \ref{sec:val} presents and discusses the validation of the inversion and stable inversion theory via the \ILILC{} of an NMP system. Finally, Section \ref{sec:conc5} presents conclusions and recommendations for future work.

\section{System Definition}
\label{sec:sysdef}
A variety of similar 
discrete-time 
\PWA{} system definitions appear in the literature. For the sake of 
familiarity and 
simplicity, the following 
time-varying 
system definition uses a representation similar to \cite{Heemels2001} and \cite{Gao2009}.
\begin{definition}[\PWA{} System] A \PWA{} system is given by
\begin{equation}
\begin{aligned}
    \state\kdxplus{1} &= A_{q,k}\state\kdx + B_{q,k}\uMIMO\kdx + F_{q,k}
    \\
    \yMIMO\kdx &= C_{q,k}\state\kdx + D_{q,k}\uMIMO\kdx + G_{q,k}
\end{aligned}
\quad \textrm{for }
    \state\kdx 
    \in Q_q
\end{equation}
where $k\in\integer$ is the time-step index, 
$\state\in\real^{n_x}$, $\uMIMO\in\real^{n_u}$, and $\yMIMO\in\real^{n_y}$ are the state, input, and output vectors,
and $Q_q\in \Qset$ where $\Qset$ is the set of ``locations,'' i.e. a set of disjoint 
regions with union equal to 
$\real^{n_x}$.
Each location $Q_q$ is the union of a set of disjoint 
convex
polytopes.
Here, a
convex
polytope is defined simply as an intersection of half spaces.
Additionally, let the relative degree of the $q$\textsuperscript{th} component model be denoted $\mu_q$ for $q\in\llbracket\, 1,\regionQuant\,\rrbracket$ ($\llbracket$ $\rrbracket$ indicates a closed set of integers).
\end{definition}

To facilitate both mathematical analysis and controller synthesis, the remainder of the 
chapter 
uses the equivalent closed-form representation
\begin{IEEEeqnarray}{RL}
\eqlabel{eq:sysDefPWA}
\IEEEyesnumber
\IEEEyessubnumber*
\label{eq:sysDef_x}
    \state\kdxplus{1} &= 
    \Asum\kdx \state\kdx + \Bsum\kdx u\kdx + \Fsum\kdx
\\
    y\kdx &= \Csum\kdx\state\kdx + \Dsum\kdx u\kdx + \Gsum\kdx
\label{eq:sysDef_y}
\end{IEEEeqnarray}
with the upright, bold, capital letter notation defined as
\begin{align}
\Msum\kdx &\coloneq \sumqQ M_{q,k} K_q( \locvec\kdx )
\label{eq:Mshorthand}
\\
K_q(\locvec\kdx) &\coloneq 0^{
\prod_{i=1}^{|\sigvecset_q|}
\norm{\sigvec_{q,i} - \locvec\kdx}
} = 
\begin{cases}
1 & \locvec\kdx\in\sigvecset_q
\\
0 & \textrm{otherwise}
\end{cases}
\label{eq:selector}
\\
\locvec\kdx=\locvec(\state\kdx) &\coloneqq H\left( P\state\kdx - \offsetvec \right)
\label{eq:localization}
\end{align}
where $\Msum$ and $M_{q,k}$ stand in for any of
$\{\Asum,\Bsum,\Fsum,\Csum,\Dsum,\Gsum\}$,
and 
$\{ A_{q,k},\allowbreak B_{q,k},\allowbreak F_{q,k},\allowbreak C_{q,k},\allowbreak D_{q,k},\allowbreak G_{q,k} \}$, respectively.
$H$ is the Heaviside step function evaluated element-wise on its vector argument,
$K_q$ is the binary-output selector function for the $q$\textsuperscript{th} location,
$\sigvecset_q=\{\sigvec_{q,1},\allowbreak\,\sigvec_{q,2},\allowbreak\,\allowbreak\cdots\}$ is the set of binary vector signatures of the $q$\textsuperscript{th} location,
$P\in\real^{n_P\times n_\state}$ is a matrix consisting of concatenated hyperplane orientation vectors, and $\offsetvec\in\real^{n_P}$ is a vector of hyperplane offsets. 
For more information about closed form representations of piecewise defined systems, 
see Chapter
\ref{ch:3}.

Furthermore,
the following assumptions are made for all systems
\begin{enumerate}[label=(A5.\arabic*),leftmargin=*]
    \item
    \label{C:reachable}
    $\state_0 \in X_0$, where $X_0$ is the set of initial conditions from which all locations $Q_q \in Q$ are reachable in finite time
    \item 
    \label{C:SISO}
    the system is single-input-single-output
    (SISO),
    \\
    $n_u=n_y=1$
    \item
    \label{C:stateswitch}
    switching depends only on the states, not the input
    \item
    \label{C:muc}
    all component models have the same relative degree, $\mu_c$, for all time,
    $\mu_q=\mu_c$ $\forall q\in \llbracket 1,\regionQuant\rrbracket$ and $\forall k$
    \suspend{listA}
\end{enumerate}

Note that while the assumption \ref{C:stateswitch} is implied by (\ref{eq:localization}), the other assumptions are not implied by the system representation 
\sysDef-(\ref{eq:localization}).

Finally, this work introduces the concept of the ``global dynamical relative degree:''
\begin{definition}[Global Dynamical Relative Degree]
\label{def:mu}
The \emph{global dynamical relative degree} of a SISO \PWA{} system
is the smallest number $\dmu\geq 0$ such that 
the explicit expression of $y_{k+\dmu}$ in terms of component state space matrices, selector functions, $\state_k$, and $u_i, \,i\geq k$ contains $u_k$ outside of a selector function for all switching sequences on the interval $\llbracket k,k+\dmu \rrbracket$.
\end{definition}

This $\dmu$ is essentially the traditional relative degree, but neglecting inputs appearing in selector functions. This neglect is introduced to avoid situations in which the only explicit appearance of the input in the output function is within the Heaviside function, leading to a potentially infinite number of input values yielding the same output.
Such non-injectiveness would make inversion unusually challenging. The assumption \ref{C:stateswitch} is made for similar reasons.

\section{Conventional \PWA{} System Inversion}
\label{sec:exactinverse}
This section presents the conventional exact inverses of \PWA{} systems under assumptions \baseAssumptions. 
Conventional inversion is the process of
\begin{enumerate}[label=(Step \arabic*),leftmargin=*]
    \item
    \label{step1}
    obtaining an expression for the previewed output $y\kdxplus{\dmu}$ in terms of $\state\kdx$, $\Msum\kdxplus{\idx}$, and $u\kdxplus{\idx}$ where $\idx\geq 0$,
    \item
    \label{step2}
    solving the previewed output equation for $u\kdx$ in terms of $\state\kdx$, $\Msum\kdxplus{\idx}$, and $u\kdxplus{\idx+1}$, and finally
    \item
    \label{step3}
    taking this expression of $u\kdx$ as the output function of the inverse system, and plugging it in to 
    (\ref{eq:sysDef_x}) to obtain the state transition formula of the inverse system.
\end{enumerate}
Note that $y\kdxplus{\dmu}$ is necessarily an explicit function of $u\kdx$ by Definition \ref{def:mu}.

For systems with $\dmu=0$, this inverse system is unique, and can be expressed explicitly. For systems with $\dmu\geq1$, there may be multiple solutions to the problem of solving $y\kdxplus{\dmu}$ for $u\kdx$ \ref{step2}, and thus the inverse system cannot be expressed explicitly without additional assumptions. This section gives both the general, implicit inverse system for $\dmu\geq 1$ systems and sufficient conditions for the uniqueness of system inversion for $\dmu\in\{1,2\}$ systems along with the corresponding explicit inverse systems.

\subsection{Unique Exact Inversion For 
\texorpdfstring{$\dmu=0$}{mu=0}
}

\begin{lemma}[Relative Degree of 0]
\label{lem:mu0}
The global dynamical relative degree of a reachable SISO \PWA{} system, i.e. a \PWA{} system satisfying \ref{C:reachable} and \ref{C:SISO}, is 0 if and only if the relative degree of all component models are 0 for all time:
\begin{equation}
\label{eq:mu0}
\mu_q=0 \,\,\,\, \forall q\in \llbracket 1, \regionQuant \rrbracket \text{, } \forall k \iff \dmu=0
\end{equation}
\end{lemma}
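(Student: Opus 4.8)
The plan is to prove the biconditional (\ref{eq:mu0}) by establishing the two implications separately, with the forward direction ($\mu_q=0$ for all $q$, all $k$ $\implies$ $\dmu=0$) being the straightforward one and the contrapositive of the reverse direction being the more delicate. For the forward direction, I would simply unwind Definition \ref{def:mu}: if every component model has relative degree $0$, then $D_{q,k}\neq 0$ for every $q$ and every $k$, so the explicit expression for $y_k$ obtained from (\ref{eq:sysDef_y}) is $y_k = \Csum_k \state_k + \Dsum_k u_k + \Gsum_k$ where $\Dsum_k = \sum_q D_{q,k} K_q(\locvec(\state_k))$. Since the selector functions $K_q$ depend only on $\state_k$ (by \ref{C:stateswitch}, as encoded in (\ref{eq:localization})) and partition $\real^{n_x}$, exactly one $K_q$ equals $1$, so $\Dsum_k$ equals some nonzero $D_{q,k}$ regardless of the state. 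Hence $u_k$ appears explicitly in $y_k$ outside any selector function, and since this holds for the trivial length-zero switching sequence on $\llbracket k,k\rrbracket$, we get $\dmu=0$.

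For the reverse direction I would argue the contrapositive: if some component model has nonzero relative degree at some time, i.e. $\mu_{q^*}\geq 1$ for some $q^*$ and some $k^*$ (equivalently $D_{q^*,k^*}=0$), then $\dmu\neq 0$. The idea is to exhibit a state at which the relative-degree-zero expression for the output fails to contain $u$ explicitly. Here is where reachability \ref{C:reachable} does the work: since $\state_0\in X_0$, the location $Q_{q^*}$ is reachable in finite time, so there is an admissible trajectory reaching some $\state_{k^*}\in Q_{q^*}$. At that state and time, the output equation (\ref{eq:sysDef_y}) reads $y_{k^*} = C_{q^*,k^*}\state_{k^*} + D_{q^*,k^*}u_{k^*} + G_{q^*,k^*} = C_{q^*,k^*}\state_{k^*} + G_{q^*,k^*}$, which contains no explicit $u_{k^*}$. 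Therefore the length-zero switching sequence on $\llbracket k^*,k^*\rrbracket$ witnesses that $\dmu=0$ fails (the defining condition of Definition \ref{def:mu} requires $u_k$ to appear explicitly for \emph{all} switching sequences), so $\dmu\geq 1$, i.e. $\dmu\neq 0$. Combining the two implications yields (\ref{eq:mu0}).

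\textbf{Main obstacle.} I expect the subtle point to be the careful handling of Definition \ref{def:mu}'s quantifier ``for all switching sequences,'' and making precise that the contrapositive argument only needs a \emph{single} reachable state in the offending location to break relative degree $0$. I would want to be explicit that $X_0$ being the set of initial conditions from which all $Q_q$ are reachable means the witnessing trajectory genuinely exists and is consistent with the closed-form dynamics \sysDef{}. A secondary care point is the edge case where $C_{q^*,k^*}$ and $G_{q^*,k^*}$ conspire so that $y_{k^*}$ happens to equal something that \emph{looks} input-dependent through a later time step — but this cannot happen because $\dmu$ is about the \emph{smallest} such number and we are testing whether $y_{k^*+0}=y_{k^*}$ itself contains $u_{k^*}$; the structure of (\ref{eq:sysDef_y}) makes $y_{k^*}$ a fixed affine function of $\state_{k^*}$ and $u_{k^*}$ only, so no hidden dependence is possible. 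Apart from that, the argument is essentially bookkeeping over the selector-function partition, so I would keep the write-up short.
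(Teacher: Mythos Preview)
Your proposal is correct and follows essentially the same route as the paper's proof: both establish the forward implication directly by observing that $\mu_q=0$ for all $q,k$ forces $D_{q,k}\neq 0$ and hence $\Dsum_k\neq 0$, and both prove the reverse implication by contrapositive, using reachability \ref{C:reachable} to land in the offending location where $D_{q^*,k^*}=0$ so that (\ref{eq:sysDef_y}) loses its explicit $u_k$ term. Your additional remarks on the quantifier in Definition~\ref{def:mu} and the selector-function partition are accurate elaborations of exactly the points the paper's proof uses implicitly.
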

\begin{proof}
First, the forward implication is proven directly.
\begin{align}
    \mu_q=0 \,\,\,\, 
    \forall q,k
    &\implies 
    D_{q,k} \neq 0 \,\,\,\, 
    \forall q,k
    \\
    D_{q,k} \neq 0 \,\,\,\, 
    \forall q,k
    &\implies
    \Dsum\kdx  \neq 0 \,\,\,\, \forall k
    \\
    \Dsum\kdx  \neq 0 \,\,\,\, \forall k
    &\implies
        \text{(\ref{eq:sysDef_y}) always explicitly contains $u\kdx$}
\\
    \therefore \,\, 
    \mu_q=0 \,\, \forall \,\, q\in Q 
    &\implies 
    \dmu=0
    \label{eq:mu0direct}
\end{align}

Now the backwards implication is proven by proving the contrapositive.
\begin{align}
    \exists q,k \textrm{ s.t. } \mu_q \neq 0
    &\implies
    \exists q,k \textrm{ s.t. } D_{q,k} = 0
\end{align}
By \ref{C:reachable}, there exists some finite sequence of inputs to bring the system to this location $\Qset_q$ at some time step $k$ such that $\Dsum\kdx  = D_{q,k} = 0$, and  (\ref{eq:sysDef_y}) becomes
\begin{equation}
    y\kdx = \Csum\kdx \state\kdx + \Gsum\kdx
\end{equation}
which is not an explicit function of $u\kdx $, meaning $\dmu \neq 0$.
\begin{equation}
\label{eq:mu0contrapositive}
    \therefore 
    \neg \left(
    \mu_q=0 \,\,\,\, \forall q\in \llbracket 1,\regionQuant \rrbracket, \,\,\forall k
    \right)
    \implies
    \neg \left(
    \dmu=0
    \right)
\end{equation}
By (\ref{eq:mu0direct}) and (\ref{eq:mu0contrapositive}), (\ref{eq:mu0}) must be true.
\end{proof}

\begin{theorem}[$\dmu=0$ \PWA{} System Inverse]
\label{thm:mu0inv}
The inverse of a \PWA{} system satisfying \baseAssumptions{} with $\mu_c=0$ is itself a \PWA{} system satisfying \baseAssumptions{} and is given by
\begin{align}
    \state\kdxplus{1} &= 
    \inv{\Asum}\kdx \state\kdx  + \inv{\Bsum}\kdx y\kdx  + \inv{\Fsum}\kdx 
    \\
    u\kdx  &= \inv{\Csum}\kdx  \state\kdx  + \inv{\Dsum}\kdx  y\kdx  + \inv{\Gsum}\kdx
\end{align}
where
\begin{equation*}
\begin{aligned}
    \inv{\Asum}\kdx  &= \Asum\kdx -\Bsum\kdx \Dsum\kdx ^{-1}\Csum\kdx 
    & & &
    \inv{\Bsum}\kdx  &= \Bsum\kdx \Dsum\kdx ^{-1}
    & & &
    \inv{\Fsum}\kdx  &= \Fsum\kdx -\Bsum\kdx \Dsum^{-1}\kdx \Gsum\kdx 
    \\
    \inv{\Csum}\kdx  &= -\Dsum^{-1}\kdx \Csum\kdx
    & & &
    \inv{\Dsum}\kdx  &= \Dsum^{-1}\kdx
    & & &
    \inv{\Gsum}\kdx  &= -\Dsum^{-1}\kdx  \Gsum\kdx 
\end{aligned}
\end{equation*}
such that $y\kdx $ is the input of the inverse system and $u\kdx $ is the output.
\end{theorem}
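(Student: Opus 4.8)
\textbf{Proof proposal for Theorem \ref{thm:mu0inv}.}

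The plan is to invert the system algebraically at the level of the closed-form representation \sysDef-(\ref{eq:localization}), exploiting the fact that $\mu_c=0$ guarantees $\Dsum\kdx$ is invertible (a nonzero scalar, by \ref{C:SISO}) at every time step. First I would carry out \ref{step2}: since by Lemma \ref{lem:mu0} we have $\dmu=0$, equation (\ref{eq:sysDef_y}) reads $y\kdx = \Csum\kdx\state\kdx + \Dsum\kdx u\kdx + \Gsum\kdx$, which is affine in $u\kdx$ with nonzero coefficient $\Dsum\kdx$; solving for $u\kdx$ yields $u\kdx = \Dsum^{-1}\kdx\left(y\kdx - \Csum\kdx\state\kdx - \Gsum\kdx\right) = -\Dsum^{-1}\kdx\Csum\kdx\,\state\kdx + \Dsum^{-1}\kdx\, y\kdx - \Dsum^{-1}\kdx\Gsum\kdx$, which is exactly the claimed output equation of the inverse with $\inv{\Csum}\kdx$, $\inv{\Dsum}\kdx$, $\inv{\Gsum}\kdx$ as stated. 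Then for \ref{step3} I substitute this expression for $u\kdx$ into the state recursion (\ref{eq:sysDef_x}): $\state\kdxplus{1} = \Asum\kdx\state\kdx + \Bsum\kdx\left(-\Dsum^{-1}\kdx\Csum\kdx\state\kdx + \Dsum^{-1}\kdx y\kdx - \Dsum^{-1}\kdx\Gsum\kdx\right) + \Fsum\kdx$, and collecting the coefficient of $\state\kdx$, of $y\kdx$, and the constant term gives precisely $\inv{\Asum}\kdx = \Asum\kdx - \Bsum\kdx\Dsum^{-1}\kdx\Csum\kdx$, $\inv{\Bsum}\kdx = \Bsum\kdx\Dsum^{-1}\kdx$, and $\inv{\Fsum}\kdx = \Fsum\kdx - \Bsum\kdx\Dsum^{-1}\kdx\Gsum\kdx$. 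This establishes the stated formulas; uniqueness of the inverse follows because each step above was a forced algebraic equivalence — there was no choice in solving an affine equation with invertible leading coefficient.

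The remaining work is to confirm that the inverse system so constructed is again a \PWA{} system satisfying \baseAssumptions{}. Here the key observation is that the selector functions $K_q$ and localization map $\locvec\kdx = H(P\state\kdx - \offsetvec)$ depend only on $\state\kdx$, which is the \emph{shared} state of the forward and inverse systems; hence the inverse inherits the same location partition $\Qset$, the same hyperplane data $(P,\offsetvec)$, and the same polytopic structure, so the barred matrices $\inv{\Msum}\kdx$ are themselves of the form (\ref{eq:Mshorthand}) with component matrices $\inv{M}_{q,k}$ obtained by the displayed formulas with $\Msum\kdx$ replaced by $M_{q,k}$. I would then verify the assumptions one at a time: \ref{C:SISO} is immediate since the inverse swaps the (scalar) roles of $u$ and $y$; \ref{C:stateswitch} holds because switching still depends only on $\state\kdx$; \ref{C:reachable} holds because the state trajectories of the inverse driven by $y\kdx$ coincide with those of the forward system driven by the corresponding $u\kdx$, so the same $X_0$ works; and \ref{C:muc} for the inverse amounts to checking that every component $\inv{M}_{q,k}$ has a nonzero $\inv{D}_{q,k} = D_{q,k}^{-1}$, which is true precisely because $D_{q,k}\neq 0$ (the $\mu_q=0$ hypothesis), so the inverse has global dynamical relative degree $0$ as well — giving the pleasing symmetry that inversion is an involution on this class.

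The step I expect to be the main obstacle is not the algebra — which is routine affine manipulation — but the bookkeeping needed to justify that $\inv{\Msum}\kdx$ genuinely equals $\sumqQ \inv{M}_{q,k} K_q(\locvec\kdx)$, i.e. that ``forming the closed-form combination'' and ``applying the inversion formula'' commute. This requires using the idempotence/orthogonality of the selector functions: $K_q K_{q'} = \delta_{qq'}K_q$ and $\sumqQ K_q = 1$, so that for instance $\Dsum^{-1}\kdx = \big(\sumqQ D_{q,k}K_q\big)^{-1} = \sumqQ D_{q,k}^{-1}K_q$ on the region where exactly one $K_q$ is active, and products like $\Bsum\kdx\Dsum^{-1}\kdx\Csum\kdx$ collapse termwise to $\sumqQ B_{q,k}D_{q,k}^{-1}C_{q,k}K_q$. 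I would state this as a short preliminary computation (perhaps citing the selector-function properties from Chapter \ref{ch:3}) and then the theorem follows cleanly; everything else is substitution and coefficient matching.
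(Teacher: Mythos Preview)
Your proposal is correct and follows essentially the same approach as the paper: solve the output equation (\ref{eq:sysDef_y}) for $u\kdx$ using that $\Dsum\kdx$ is a nonzero scalar (via Lemma \ref{lem:mu0} and \ref{C:SISO}), then substitute into the state recursion (\ref{eq:sysDef_x}) to read off the barred matrices. The paper's proof is considerably terser---it stops after the substitution and does not spell out the verification that the inverse again satisfies \baseAssumptions{} or the selector-function bookkeeping you outline---so your additional paragraphs fill in details the paper leaves implicit rather than constituting a different route.
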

\begin{proof}
\ref{step1} is satisfied by (\ref{eq:sysDef_y}) and \ref{step2} 
by
\begin{equation}
    u\kdx  = \Dsum\kdx ^{-1}(y\kdx -\Csum\kdx \state\kdx  - \Gsum\kdx )
    \label{eq:umu0}
\end{equation}
Equation (\ref{eq:umu0}) is always well-defined because by Lemma \ref{lem:mu0} ~$\mu_c=0\implies\dmu=0$, which in turn implies $\Dsum\kdx  \neq 0$ $\forall k$, and because $\Dsum$ is always scalar because the system is SISO by \ref{C:SISO}.
Plugging (\ref{eq:umu0}) into (\ref{eq:sysDef_x}) yields the inverse system state transition formula, satisfying \ref{step3}.
\end{proof}
\subsection{Non-unique Exact Inversion For 
\texorpdfstring{$\dmu\geq 1$}{mu>=1}
}
For systems with $\dmu\geq 1$, (\ref{eq:sysDef_y}) does not explicitly contain $u\kdx$ because $\Dsum\kdx = 0$ $\forall k$; this is corollary to Lemma \ref{lem:mu0}, Definition \ref{def:mu}, and \ref{C:muc}. Consequently, \ref{step1} necessitates the derivation of an explicit formula for the output preview $y\kdxplus{\dmu}$, i.e. the output at a time after the current time step $k$. This preview of future output is necessary for deriving an output equation that explicitly depends on the current input $u\kdx$.

\begin{lemma}[$\dmu\geq 1$ \PWA{} System Output Preview]
\label{lem:previewOutput}
Given a \PWA{} system satisfying \baseAssumptions{} with known global dynamical relative degree $\dmu$, the output function with minimum preview such that the function is explicitly dependent on an input term outside of the selector functions for any switching sequence is given by
\begin{equation}
\label{eq:previewOutput}
    y\kdxplus{\dmu} = \prevC\kdx \state\kdx + \prevD\kdx u\kdx + \prevG\kdx + \anticaus\kdx(u\kdxplus{1},\cdots,u\kdxplus{\dmu-1})
\end{equation}
with
\begin{align*}
&\prevC\kdx \coloneq
\Csum\kdxplus{\dmu} \left(\prod_{m=0}^{\dmu-1}\Asum\kdxplus{m}\right)
\qquad
\prevD\kdx \coloneq
\Csum\kdxplus{\dmu}
    \left( \prod_{m=1}^{\dmu-1} \Asum\kdxplus{m} \right) \Bsum_{k}
\\
&\prevG\kdx \coloneq
\Csum\kdxplus{\dmu} \sum_{s=0}^{\dmu-1} \left(
    \left( \prod_{m=s+1}^{\dmu-1} \Asum\kdxplus{m} \right)
    \Fsum\kdxplus{s}
    \right) + \Gsum\kdxplus{\dmu}
\\
&\anticaus\kdx(u\kdxplus{1},\cdots)
\coloneq
\Csum\kdxplus{\dmu}
    \sum_{s=1}^{\dmu-1} 
    \left(
    \left( \prod_{m=s+1}^{\dmu-1} \Asum\kdxplus{m} \right) 
    \Bsum\kdxplus{s} u\kdxplus{s}
    \right)
\end{align*}
where the factors in the products generated by $\prod$ are ordered sequentially by the index $m$. The factor corresponding to the greatest value of the index must be on the left, and the factor corresponding to the smallest value of the index must be on the right. 
For example,
\begin{equation}
    \prod_{m=0}^{2}\Asum\kdxplus{m} \equiv \Asum\kdxplus{2}\Asum\kdxplus{1}\Asum\kdxplus{0} 
    \not\equiv \Asum\kdxplus{0}\Asum\kdxplus{1}\Asum\kdxplus{2}
\end{equation}
because matrices do not necessarily commute.
Additionally, if the lower bound on the product index exceeds the upper bound on the product index (an ``empty product''), then the product resolves to the identity matrix. Similarly, empty sums resolve to $0$. 
For example
\begin{align}
    \prod_{m=1}^{0}\textup{anything} \equiv I
    \qquad \qquad
    \sum_{s=1}^{0}\textup{anything} \equiv 0
\end{align}
\end{lemma}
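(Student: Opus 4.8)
The plan is to derive (\ref{eq:previewOutput}) by forward substitution of the state recursion (\ref{eq:sysDef_x}) into the output equation (\ref{eq:sysDef_y}), that is, by ``unrolling'' $\dmu$ steps exactly as one does for a linear time-varying system, while keeping in mind that here the coefficient objects $\Asum\kdxplus{m},\Bsum\kdxplus{m},\Fsum\kdxplus{m},\dots$ are themselves determined by the state trajectory through the selector functions (\ref{eq:selector})--(\ref{eq:localization}). First I would establish, by induction on $j\ge 0$, the discrete variation-of-constants identity
\begin{equation}
\state\kdxplus{j} = \left(\prod_{m=0}^{j-1}\Asum\kdxplus{m}\right)\state\kdx + \sum_{s=0}^{j-1}\left(\prod_{m=s+1}^{j-1}\Asum\kdxplus{m}\right)\left(\Bsum\kdxplus{s}u\kdxplus{s}+\Fsum\kdxplus{s}\right),
\end{equation}
using the empty-product${}={}I$ and empty-sum${}={}0$ conventions stated in the lemma. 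The base case $j=0$ is the tautology $\state\kdx=\state\kdx$; for the inductive step one applies (\ref{eq:sysDef_x}) once more as $\state\kdxplus{j+1}=\Asum\kdxplus{j}\state\kdxplus{j}+\Bsum\kdxplus{j}u\kdxplus{j}+\Fsum\kdxplus{j}$, substitutes the inductive hypothesis, and reindexes. The only subtlety is non-commutativity: $\Asum\kdxplus{j}$ must be prepended on the \emph{left} of each product, and the newly produced $s=j$ term must be checked to match the claimed pattern with an empty-product prefactor.

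Second, I would use assumption \ref{C:muc} together with Lemma \ref{lem:mu0} (equivalently, the corollary stated just above the present lemma): since $\dmu\ge 1$ rules out $\dmu=0$, Lemma \ref{lem:mu0} and \ref{C:muc} force $\mu_c\ge 1$, hence every component feedthrough $D_{q,k+\dmu}$ vanishes and therefore $\Dsum\kdxplus{\dmu}=\sumqQ D_{q,k+\dmu}K_q(\locvec\kdxplus{\dmu})=0$ for every trajectory, \emph{independently of the switching sequence}. Consequently (\ref{eq:sysDef_y}) evaluated at time $k+\dmu$ reduces to $y\kdxplus{\dmu}=\Csum\kdxplus{\dmu}\state\kdxplus{\dmu}+\Gsum\kdxplus{\dmu}$, with no feedthrough term.

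Third, I would substitute the $j=\dmu$ instance of the variation-of-constants identity into this reduced output equation and split the inner sum at $s=0$. After left-multiplication by $\Csum\kdxplus{\dmu}$, the $s=0$ summand $\bigl(\prod_{m=1}^{\dmu-1}\Asum\kdxplus{m}\bigr)\bigl(\Bsum\kdx u\kdx+\Fsum\kdx\bigr)$ supplies the $\prevD\kdx u\kdx$ term and the $s=0$ part of $\prevG\kdx$; the $s\ge 1$ summands carrying $\Bsum\kdxplus{s}u\kdxplus{s}$ collect into $\anticaus\kdx(u\kdxplus{1},\dots,u\kdxplus{\dmu-1})$; the remaining $\Fsum$ summands together with $\Gsum\kdxplus{\dmu}$ form $\prevG\kdx$; and the coefficient of $\state\kdx$ is exactly $\prevC\kdx$. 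Matching these groups term by term against the displayed formulas closes the identity. As a sanity check, $\dmu=1$ gives $\anticaus\kdx\equiv 0$ and $y\kdxplus{1}=\Csum\kdxplus{1}\Asum\kdx\state\kdx+\Csum\kdxplus{1}\Bsum\kdx u\kdx+\Csum\kdxplus{1}\Fsum\kdx+\Gsum\kdxplus{1}$, which is precisely a one-step direct substitution of (\ref{eq:sysDef_x}) into (\ref{eq:sysDef_y}).

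Finally, the \emph{minimality} clause---that $\dmu$ is the least preview for which the output is explicitly $u\kdx$-dependent outside the selector functions, for every switching sequence---is essentially Definition \ref{def:mu} restated: once (\ref{eq:previewOutput}) is in hand one observes that $\prevD\kdx u\kdx$ is the explicit occurrence of $u\kdx$ (every other occurrence of $u\kdx$ is sealed inside the selector functions feeding $\Asum\kdxplus{m},\Bsum\kdxplus{m}$ for $m\ge 1$), so no larger preview is needed, while no smaller one suffices by the very definition of $\dmu$. The main obstacle I anticipate is not any isolated computation but the index bookkeeping: aligning the product and sum ranges and the left-to-right ordering of the non-commuting factors with the claimed $\prevC\kdx,\prevD\kdx,\prevG\kdx,\anticaus\kdx$, and phrasing carefully the ``for any switching sequence'' quantifier---the point being that $\Asum\kdxplus{m}$ at step $m$ is fixed once the partial trajectory up to step $k+m$ is fixed, so the unrolled expression is a genuine identity no matter which locations are traversed. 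Keeping the selector-function (hence earlier-input) dependence inside the coefficient objects, rather than attempting to expand it, is what makes the argument clean.
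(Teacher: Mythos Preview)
Your proposal is correct. The approach differs from the paper's in a small but notable way: the paper inducts directly on $\dmu$, taking the output-preview formula (\ref{eq:previewOutput}) at $\dmu=\nu$ as the hypothesis, shifting it forward one time step to obtain an expression for $y\kdxplus{\nu+1}$ in terms of $\state\kdxplus{1}$, and then expanding $\state\kdxplus{1}$ via (\ref{eq:sysDef_x}) to recover the $\dmu=\nu+1$ formula. You instead first prove the state variation-of-constants identity for $\state\kdxplus{j}$ by induction on $j$, and then substitute once into the output equation at $j=\dmu$. Both routes are standard and the algebra is essentially the same; your version is slightly more modular (the state identity is reusable and the single substitution avoids re-deriving the output structure at each inductive step), while the paper's version keeps the argument tied to the output object throughout and makes the minimality discussion flow naturally inside the induction (arguing at each step why expanding $\state\kdxplus{1}$ is forced when $\dmu=\nu+1$). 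Your treatment of minimality---reading it off Definition~\ref{def:mu} once the formula is established---is correct but terser than the paper's, which spells out in the base case that $\Csum\kdxplus{1}\Bsum\kdx\neq 0$ and $\Dsum\kdx=0$, and in the inductive step that the previewed-input coefficients can vanish for some switching sequence.
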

\begin{proof}
Let the base case of the proof by induction be $\dmu=1$ such that
\begin{equation}
\label{eq:BaseCase}
    y\kdxplus{1} = \Csum\kdxplus{1} \left(
    \Asum\kdx  \state\kdx 
    +
    \Fsum_{k}
    +
    \Bsum\kdx  u\kdx 
    \right)
    +
    \Gsum\kdxplus{1}
\end{equation}
which is achieved equivalently from (\ref{eq:previewOutput}) and from the system definition by plugging (\ref{eq:sysDef_x}) into (\ref{eq:sysDef_y}) incremented by one time step (i.e. plugging the equation for $\state\kdxplus{1}$ into the equation for $y\kdxplus{1}$).
The preview is minimal because, by Definition \ref{def:mu}, $\Csum\kdxplus{1}\Bsum\kdx \neq 0$ for all switching sequences on $\llbracket k,k+1 \rrbracket$, and by Lemma \ref{lem:mu0} $\dmu\geq 1 \implies \Dsum\kdx =0$ $\forall k$. 
In other words, for $\dmu=1$, (\ref{eq:BaseCase}) is always an explicit function of $u\kdx$. Thus, regardless of switching sequence, outputs further in the future, such as $y\kdxplus{2}$, never need to be considered in order to explicitly relate the current input to an output.

Then consider (\ref{eq:previewOutput}) with $\dmu=\nu$ as the foundation of the induction step. To prove (\ref{eq:previewOutput}) holds for $\dmu=\nu+1$, first increment (\ref{eq:previewOutput}) with $\dmu=\nu$ by one time step, yielding
\begin{multline}
    y\kdxplus{\nu+1} =
    \Csum\kdxplus{\nu+1}\left( \prod_{m=0}^{\nu-1}\Asum\kdxplus{1+m}   \right)\state\kdxplus{1}
    +
    \Csum\kdxplus{\nu+1}\left(\prod_{m=1}^{\nu-1}\Asum\kdxplus{1+m}\right)\Bsum\kdxplus{1}u\kdxplus{1}
    +
    \\
    \Csum\kdxplus{\nu+1}\sum_{s=0}^{\nu-1}\left( \left(\prod_{m=s+1}^{\nu-1}\Asum\kdxplus{1+m}\right)\Fsum\kdxplus{1+s} \right) + \Gsum\kdxplus{1+\nu}
    +
    \Csum\kdxplus{\nu+1}\sum_{s=1}^{\nu-1}\left(\left(\prod_{m=s+1}^{\nu-1}\Asum\kdxplus{1+m}\right) \Bsum\kdxplus{1+s}u\kdxplus{1+s}\right)
\end{multline}
This can be simplified by first factoring out $\Csum\kdxplus{\nu+1}$ and adjusting the product $\Pi$ indices to subsume the constant $+1$ in $\Asum\kdxplus{1+m}$:
\begin{multline}
    y\kdxplus{\nu+1} =
    \Csum\kdxplus{\nu+1}
    \left(\left( \prod_{m=1}^{\nu}\Asum\kdxplus{m}   \right)\state\kdxplus{1}
    +
    \left(\prod_{m=2}^{\nu}\Asum\kdxplus{m}\right)\Bsum\kdxplus{1}u\kdxplus{1}
    +
    \sum_{s=0}^{\nu-1}\left( \left(\prod_{m=s+2}^{\nu}\Asum\kdxplus{m}\right)\Fsum\kdxplus{1+s} \right) 
    \right.
    \\
    \left.
    +
    \sum_{s=1}^{\nu-1}\left(\left(\prod_{m=s+2}^{\nu}\Asum\kdxplus{m}\right) \Bsum\kdxplus{1+s}u\kdxplus{1+s}\right)
    \right)
    +
    \Gsum\kdxplus{\nu+1}
\end{multline}
Similarly, the sum $\sum$ indices may be adjusted to subsume the constant $+1$ in $\Fsum\kdxplus{1+s}$, $\Bsum\kdxplus{1+s}$, and $u\kdxplus{1+s}$. Because the sum index $s$ also appears in the lower bound of the products, $m=s+2$, the product index lower bound must also be adjusted.
\begin{multline}
    y\kdxplus{\nu+1} =
    \Csum\kdxplus{\nu+1}
    \left(\left( \prod_{m=1}^{\nu}\Asum\kdxplus{m}   \right)\state\kdxplus{1}
    +
    \left(\prod_{m=2}^{\nu}\Asum\kdxplus{m}\right)\Bsum\kdxplus{1}u\kdxplus{1}
    +
    \sum_{s=1}^{\nu}\left( \left(\prod_{m=s+1}^{\nu}\Asum\kdxplus{m}\right)\Fsum\kdxplus{s} \right) 
    \right.
    \\
    \left.
    +
    \sum_{s=2}^{\nu}\left(\left(\prod_{m=s+1}^{\nu}\Asum\kdxplus{m}\right) \Bsum\kdxplus{s}u\kdxplus{s}\right)
    \right)
    +
    \Gsum\kdxplus{\nu+1}
\end{multline}
Finally, the two input terms (those containing $u$, arising from $\prevD\kdxplus{1}$ and $\anticaus\kdxplus{1}$) can be combined to achieve
\begin{multline}
\label{eq:lem2increment}
    y\kdxplus{\nu+1} = 
    \Csum\kdxplus{\nu+1}
    \left(
    \left(\prod_{m=1}^{\nu}\Asum\kdxplus{m}\right) 
    \state\kdxplus{1}
    +
    \sum_{s=1}^{\nu} \left(
    \left( \prod_{m=s+1}^{\nu} \Asum\kdxplus{m} \right)
    \Fsum\kdxplus{s}
    \right)
    \right.
    \\
    \left.
    +
    \sum_{s=1}^{\nu} 
    \left(
    \left( \prod_{m=s+1}^{\nu} \Asum\kdxplus{m} \right) 
    \Bsum\kdxplus{s} u\kdxplus{s}
    \right)
    \right)
    +
    \Gsum\kdxplus{\nu+1}
\end{multline}
which is a function of $\state\kdxplus{1}$ and potentially $u\kappadx$ for $\kappa \in~ \llbracket k+~1,k+\nu+1 \rrbracket$.
The dependence of $y\kdxplus{\nu+1}$ on the input terms is conditioned on the switching sequence. Definition \ref{def:mu} implies that if $\dmu=\nu+1$ there exists some switching sequence on $\llbracket k,k+\nu+1 \rrbracket$ such that the input coefficients in (\ref{eq:lem2increment}) are zero, i.e.
\begin{equation}
\exists\, \{\state\kappadx\, |\, \kappa \in \llbracket k,k+\nu+1\rrbracket\} 
\quad
\text{ s.t. }
\quad
\forall s\in\llbracket 1,\nu\rrbracket \quad
    \Csum\kdxplus{\nu+1}
    \left( \prod_{m=s+1}^{\nu} \Asum\kdxplus{m} \right) 
    \Bsum\kdxplus{s}
    =
    0
\end{equation}
Thus, to guarantee the expression for $y\kdxplus{\nu+1}$ explicitly contains the input for all switching sequences,
$\state\kdxplus{1}$ in (\ref{eq:lem2increment}) must be expanded (via (\ref{eq:sysDef_x})) to be in terms of $\state\kdx $ and $u\kdx $ explicitly.
The resulting expression can be rearranged as follows:
\begin{multline}
    y\kdxplus{\nu+1} = 
    \Csum\kdxplus{\nu+1}
    \left(
    \left(\prod_{m=1}^{\nu}\Asum\kdxplus{m}\right) 
    \left(
    \Asum\kdx  \state\kdx  + \Bsum\kdx  u\kdx  + \Fsum\kdx 
    \right)
    \right.
    \\
    \left.
    +
    \sum_{s=1}^{\nu} \left(
    \left( \prod_{m=s+1}^{\nu} \Asum\kdxplus{m} \right)
    \Fsum\kdxplus{s}
    \right)
    \right)
    +
    \anticaus\kdx(u\kdxplus{1},\cdots,u\kdxplus{\nu})
    +
    \Gsum\kdxplus{\nu+1}
\end{multline}
\begin{multline}
    y\kdxplus{\nu+1} = 
    \Csum\kdxplus{\nu+1}
    \left(
    \left(\prod_{m=1}^{\nu}\Asum\kdxplus{m}\right) 
    \Asum\kdx  \state\kdx 
    +
    \left(\prod_{m=1}^{\nu}\Asum\kdxplus{m}\right)
    \Bsum\kdx  u\kdx 
    \right.
    \\
    \left.
    +
    \sum_{s=1}^{\nu} \left(
    \left( \prod_{m=s+1}^{\nu} \Asum\kdxplus{m} \right)
    \Fsum\kdxplus{s}
    \right)
    +
    \left(\prod_{m=1}^{\nu}\Asum\kdxplus{m}\right) \Fsum\kdx 
    +
    \Gsum\kdxplus{\nu+1}
    \right)
    % \\
    +\anticaus\kdx(u\kdxplus{1},\cdots,u\kdxplus{\nu})
\end{multline}
\begin{multline}
    y\kdxplus{\nu+1} = 
    \Csum\kdxplus{\nu+1}
    \left(
    \left(\prod_{m=0}^{\nu}\Asum\kdxplus{m}\right) 
    \state\kdx 
    +
    \left(\prod_{m=1}^{\nu}\Asum\kdxplus{m}\right)
    \Bsum\kdx  u\kdx
    \right.
    \\
    \left.
    +
    \sum_{s=0}^{\nu} \left(
    \left( \prod_{m=s+1}^{\nu} \Asum\kdxplus{m} \right)
    \Fsum\kdxplus{s}
    \right)
    \right)
    +
    \Gsum\kdxplus{\nu+1}
    +\anticaus\kdx(u\kdxplus{1},\cdots,u\kdxplus{\nu})
\label{eq:lem2conc}
\end{multline}
Equation (\ref{eq:lem2conc}) is 
equation (\ref{eq:previewOutput}) for $\dmu=\nu+1$, thereby proving the lemma.
\end{proof}

Using the output preview equation (\ref{eq:previewOutput}), a general \PWA{} system inverse for $\dmu\geq 1$ can be found in the same manner as for $\dmu=0$.

\begin{theorem}[General $\dmu\geq 1$ \PWA{} System Inverse]
\label{thm:genInv}
Given any \PWA{} system satisfying \ref{C:reachable}-\ref{C:muc} with known global dynamical relative degree $\dmu \geq ~1$, the inverse system with $u\kdx $ as the output is given by the implicit, anticausal system
\begin{align}
    \state\kdxplus{1} &= 
    \inv{\Asum}\kdx  \state\kdx  
    +
    \inv{\Bsum}\kdx y\kdxplus{\dmu}
    + \inv{\Fsum}\kdx
    -\Bsum\kdx\prevD\kdx^{-1}\anticaus\kdx\left(u\kdxplus{1},\cdots,u\kdxplus{\dmu-1}\right)
    \label{eq:xGen}
    \\
    u\kdx  &=  \inv{\Csum}\kdx\state\kdx+\inv{\Dsum}y\kdxplus{\dmu}+\inv{\Gsum}\kdx
    -\prevD\kdx^{-1}\anticaus\kdx\left(u\kdxplus{1},\cdots,u\kdxplus{\dmu-1}\right)
    \label{eq:uGen}
\end{align}
where
\begin{align*}
\inv{\Asum}\kdx &= \Asum\kdx + \Bsum\kdx\inv{\Csum}\kdx
&
\inv{\Bsum}\kdx &= \Bsum\kdx\inv{\Dsum}\kdx
&
\inv{\Fsum}\kdx &= \Fsum\kdx+\Bsum\kdx\inv{\Gsum}\kdx
\\
\inv{\Csum}\kdx &= -\inv{\Dsum}\kdx\prevC\kdx
&
\inv{\Dsum}\kdx &= \prevD\kdx^{-1}
&
\inv{\Gsum}\kdx&=-\inv{\Dsum}\kdx\prevG\kdx
\end{align*}
\end{theorem}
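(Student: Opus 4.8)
The plan is to mirror the three-step inversion procedure \ref{step1}--\ref{step3} already used in the proof of Theorem~\ref{thm:mu0inv}, now invoking Lemma~\ref{lem:previewOutput} to supply the previewed output relation that \ref{step1} requires. Because $\dmu\geq 1$ forces $\Dsum\kdx=0$ for all $k$ (corollary to Lemma~\ref{lem:mu0}, Definition~\ref{def:mu}, and \ref{C:muc}), equation (\ref{eq:sysDef_y}) cannot be inverted directly, so \ref{step1} is instead discharged by (\ref{eq:previewOutput}), which writes $y\kdxplus{\dmu}$ explicitly as $\prevC\kdx\state\kdx + \prevD\kdx u\kdx + \prevG\kdx + \anticaus\kdx(u\kdxplus{1},\dots,u\kdxplus{\dmu-1})$. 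The remaining work is \ref{step2}, solving this relation for $u\kdx$, and \ref{step3}, substituting the result back into the state dynamics (\ref{eq:sysDef_x}).

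For \ref{step2} I would first argue that $\prevD\kdx$ is an invertible (scalar, nonzero) quantity at every reachable state configuration. By \ref{C:SISO} the coefficient $\prevD\kdx$ multiplying $u\kdx$ in (\ref{eq:previewOutput}) is scalar; by \ref{C:stateswitch} the selector functions $K_q$ carry no dependence on $u\kdx$; and the anticausal term $\anticaus\kdx$ depends only on inputs strictly later than $k$. Hence the sole explicit occurrence of $u\kdx$ in $y\kdxplus{\dmu}$ outside a selector is the term $\prevD\kdx u\kdx$. Since $\dmu$ is, by Definition~\ref{def:mu}, the \emph{global} dynamical relative degree, $y\kdxplus{\dmu}$ must contain $u\kdx$ outside a selector for every switching sequence on $\llbracket k,k+\dmu\rrbracket$, and by \ref{C:reachable} each such configuration is genuinely attained; therefore $\prevD\kdx\neq 0$ identically. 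Dividing then yields $u\kdx=\prevD\kdx^{-1}\big(y\kdxplus{\dmu}-\prevC\kdx\state\kdx-\prevG\kdx-\anticaus\kdx(u\kdxplus{1},\dots,u\kdxplus{\dmu-1})\big)$, and collecting coefficients produces exactly (\ref{eq:uGen}) with $\inv{\Dsum}\kdx=\prevD\kdx^{-1}$, $\inv{\Csum}\kdx=-\inv{\Dsum}\kdx\prevC\kdx$, $\inv{\Gsum}\kdx=-\inv{\Dsum}\kdx\prevG\kdx$, the residual term $-\prevD\kdx^{-1}\anticaus\kdx(\cdot)$ being left intact.

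For \ref{step3} I would insert this expression for $u\kdx$ into (\ref{eq:sysDef_x}) and regroup: the state-multiplying part becomes $\Asum\kdx+\Bsum\kdx\inv{\Csum}\kdx$, the $y\kdxplus{\dmu}$-multiplying part becomes $\Bsum\kdx\inv{\Dsum}\kdx$, the constant part becomes $\Fsum\kdx+\Bsum\kdx\inv{\Gsum}\kdx$, and the anticausal remainder becomes $-\Bsum\kdx\prevD\kdx^{-1}\anticaus\kdx(\cdot)$, which is precisely (\ref{eq:xGen}) with the stated $\inv{\Asum}\kdx$, $\inv{\Bsum}\kdx$, $\inv{\Fsum}\kdx$. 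Declaring the output signal $\{y\kdx\}$ the input of the new system (accessed as $y\kdxplus{\dmu}$ at running step $k$) and $u\kdx$ its output completes the construction: the system is anticausal because $u\kdx$ and $\state\kdxplus{1}$ use $y\kdxplus{\dmu}$, and implicit because $\anticaus\kdx$ invokes the later inverse-system outputs $u\kdxplus{1},\dots,u\kdxplus{\dmu-1}$, matching the theorem statement. For $\dmu=1$ the empty sum gives $\anticaus\kdx\equiv 0$, so the inverse is explicit and causal-with-delay, echoing Theorem~\ref{thm:mu0inv}.

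The main obstacle I anticipate is the uniform invertibility of $\prevD\kdx$: the algebra of \ref{step2}--\ref{step3} is routine, but the delicate point is translating the qualitative, switching-sequence-quantified condition in Definition~\ref{def:mu} into the quantitative fact that the specific scalar $\prevD\kdx$ appearing in Lemma~\ref{lem:previewOutput} never vanishes at any reachable configuration. This demands care in verifying that $\prevD\kdx$ truly captures \emph{all} the $u\kdx$-dependence of $y\kdxplus{\dmu}$ outside selectors (using \ref{C:SISO} for scalarness, \ref{C:stateswitch} so selectors conceal no input, and \ref{C:muc} so the corollary $\Dsum\kdx=0$ of Lemma~\ref{lem:mu0} holds for all $k$), and that \ref{C:reachable} renders the ``for all switching sequences'' hypothesis non-vacuous, so the nonvanishing must actually hold rather than merely being consistent.
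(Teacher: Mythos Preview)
Your derivation of the formulas (\ref{eq:uGen}) and (\ref{eq:xGen}) via \ref{step1}--\ref{step3} is correct and matches the paper's approach essentially verbatim, including the justification that $\prevD\kdx$ is scalar by \ref{C:SISO} and nonzero by Definition~\ref{def:mu}. However, your explanation of why the resulting system is \emph{implicit} is wrong, and it leads you to a false conclusion.

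You attribute the implicitness to $\anticaus\kdx$ referencing the future inverse-system outputs $u\kdxplus{1},\dots,u\kdxplus{\dmu-1}$. That is a form of anticausality, not implicitness. The paper's point---substantiated by a concrete two-location counterexample and elaborated in the Remark following the theorem---is that the coefficients $\prevC\kdx$, $\prevD\kdx$, $\prevG\kdx$ themselves contain $\Csum\kdxplus{\dmu}$ and $\Asum\kdxplus{1},\dots,\Asum\kdxplus{\dmu-1}$, whose selector functions depend on $\state\kdxplus{1},\dots,\state\kdxplus{\dmu}$, and those future states depend on $u\kdx$. Thus $u\kdx$ appears on \emph{both} sides of (\ref{eq:uGen}), and the counterexample shows this can yield genuinely non-unique solutions. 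Your closing claim that for $\dmu=1$ the vanishing of $\anticaus\kdx$ renders the inverse explicit is therefore false: even with $\anticaus\kdx\equiv 0$, the factor $\Csum\kdxplus{1}$ in $\prevC\kdx$ and $\prevD\kdx$ still hides $u\kdx$-dependence through $\state\kdxplus{1}$, which is precisely why Corollary~\ref{corollary:mu1} requires the additional hypothesis \ref{C:locIndep} to eliminate it.
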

\begin{proof}
The sole term in
Lemma \ref{lem:previewOutput}'s
(\ref{eq:previewOutput})
containing $u\kdx $ outside of a selector function 
is $\prevD\kdx u\kdx$. The coefficient $\prevD\kdx=~\Csum\kdxplus{\dmu}\left( \prod_{m=1}^{\dmu-1} \Asum\kdxplus{m} \right) \Bsum_{k}$ is always scalar because the system is SISO, \ref{C:SISO}, and always nonzero by Definition \ref{def:mu}.
Thus (\ref{eq:previewOutput}) can be divided by $\prevD\kdx$ and $u\kdx $ can be arithmetically maneuvered onto one side of the equation by itself, yielding (\ref{eq:uGen}).

Equation (\ref{eq:uGen}) is implicit in general because (\ref{eq:previewOutput}) cannot be uniquely solved for $u\kdx$ in general. This is proven by presenting an example in which multiple input trajectories have the same output trajectory. Consider the two-location system
\begin{equation}
\begin{aligned}
    &
    A_{1,k}=\begin{bmatrix}0&1\\0&0\end{bmatrix}
    \quad
    A_{2,k}=\begin{bmatrix}0&2\\0&0\end{bmatrix}
    \quad
    B_{1,k}=B_{2,k}=\begin{bmatrix}0\\1\end{bmatrix}
    \\
    &
    C_{1,k}=C_{2,k}=\begin{bmatrix}1&0\end{bmatrix}
    \\
    &
    P=\begin{bmatrix}0&1\end{bmatrix}
    \qquad
    \offsetvec=1.5
    \qquad
    \sigvecset_1=\{1\}
    \qquad
    \sigvecset_2=\{2\}
\end{aligned}
\end{equation}
with $F$, $D$, and $G$ matrices all equal to zero. Given $\state\kdx=~[\,0,\,\,0\,]^T$, both $u\kdx=2$ and $u\kdx=1$ yield $y\kdxplus{\dmu}=2$. Because there is not a unique solution to (\ref{eq:previewOutput}) for $u\kdx$, there does not exist an explicit formula for the solution (\ref{eq:uGen}).

The system is necessarily anticausal because $u\kdx $ is necessarily a function of $y\kdxplus{\dmu}$ and $\dmu > 0$.
\end{proof}

\begin{remark}[Inverse Implicitness]
Analytically, the implicitness of (\ref{eq:uGen}) arises in $\inv{\Csum}\kdx$ through $\prevC\kdx$, which is a function of $\Csum\kdxplus{\dmu}$ by definition, 
and $\Csum\kdxplus{\dmu}$
is a function of $\state\kdxplus{\dmu}$ by 
(\ref{eq:Mshorthand})-(\ref{eq:localization}). 
Finally, $\state\kdxplus{\dmu}$ is a function of $u\kdx$ via
\begin{multline}
    \state\kdxplus{\dmu} = 
    \left(\prod_{m=0}^{\dmu-1}\Asum\kdxplus{m}\right) 
    \state\kdx 
    +
    \sum_{s=0}^{\dmu-1} \left(
    \left( \prod_{m=s+1}^{\dmu-1} \Asum\kdxplus{m} \right)
    \Fsum\kdxplus{s}
    \right)
    +
    \\
    \sum_{s=1}^{\dmu-1} 
    \left(
    \left( \prod_{m=s+1}^{\dmu-1} \Asum\kdxplus{m} \right) 
    \Bsum\kdxplus{s} u\kdxplus{s}
    \right)
    +
    \left( \prod_{m=1}^{\dmu-1} \Asum\kdxplus{m} \right) 
    \Bsum_{k} u_{k}
\end{multline}
following from $y\kdxplus{\dmu}=\Csum\kdxplus{\dmu}\state\kdxplus{\dmu}+\Gsum\kdxplus{\dmu}$ and Lemma \ref{lem:previewOutput}.
\end{remark}

\begin{remark}[Input Preview and Inter-location Relative Degree]
Note $\anticaus\kdx$ is written as a function of a set of previewed
$u$-values 
The written set of $u$-values is the maximum quantity of $u$-values that may be required by $\anticaus\kdx$. Depending on the switching sequence, fewer previewed $u$-values may be required. 
In fact, by the definition of global dynamical relative degree $\dmu$, there must exist a switching sequence for which no previewed $u$-values are required, because otherwise $\dmu$ would be smaller. In other words, Definition \ref{def:mu} and Lemma \ref{lem:previewOutput} imply
\begin{multline}
\label{eq:uPreviewless}
    \exists\{\state\kappadx \, |\,  \kappa\in\llbracket k,k+\dmu\rrbracket\} \text{ s.t. }
    \\
    \forall s\in\llbracket 1,\dmu-1\rrbracket \,\,\,
    \Csum\kdxplus{\dmu}
    \left(\prod_{m=s+1}^{\dmu
    -
    1}\Asum\kdxplus{m}\right)\Bsum\kdxplus{s}=0 
    \\
    \land \,\,\, \Csum\kdxplus{\dmu}\left(\prod_{m=1}^{\dmu-1}\Asum\kdxplus{m}\right)\Bsum\kdx\neq 0
\end{multline}

For affine time-invariant systems without piecewise definition, there is never required input preview because the expressions claimed equal to zero in (\ref{eq:uPreviewless}) reduce as
\begin{equation}
    \Csum\kdxplus{\dmu}
    \left(\prod_{m=s+1}^{\dmu
    -
    1}\Asum\kdxplus{m}\right)\Bsum\kdxplus{s}
    =
    CA^{\dmu-1-s}B
    \qquad
    s\in\llbracket 1,\,\dmu-1\rrbracket
\end{equation}
which are always zero regardless of state sequence.
However, it is important to emphasize that this is not the case for \PWA{} systems.
Even when the relative degrees of all component models 
are equal to $\mu_c$, \ref{C:muc}, the inter-location relative degree may not be equal to $\mu_c$.
More formally
\begin{align}
    &C_{1,k+1}B_{1,k} = 0 \land C_{2,k+1}B_{2,k} = 0 \centernot\implies C_{1,k+1}B_{2,k}=0
    \end{align}
    and
    \begin{align}
    &C_{1,k+1}B_{1,k} \neq 0 \land C_{2,k+1}B_{2,k} \neq 0 \centernot\implies C_{1,k+1}B_{2,k}\neq0
\end{align}
and this lack of conclusiveness regarding the inter-state relative degree generalizes to larger $\mu_c$. In short, the inter-state relative degree may be either lower or higher than $\mu_c$, and may be different for different switching sequences (with a maximum value of $\dmu$ as given by Definition \ref{def:mu}).
\end{remark}

Because of the implicitness and potential requirement for input preview in the general \PWA{} system inverse,
it is nontrivial to use it for computing input trajectories from output trajectories.
However, there are conditions under which inversion of a \PWA{} system with $\dmu\geq 1$ is unique, and the inverse itself becomes an explicit \PWA{} system as is the case for $\dmu=0$.
The remainder of this section provides such sufficient conditions for the cases of $\dmu=1$ and $\dmu=2$.
\subsection{Unique Exact Inverses For 
\texorpdfstring{$\dmu\in\{1,2\}$}{mu=\{1,2\}}
}
First the ``location-independent output function'' assumption is introduced:
\begin{enumerate}[label=(A5.\arabic*),leftmargin=*]
    \resume{listA}
    \item 
    \label{C:locIndep}
    $\Csum\kdx=C\kdx$, $\Dsum\kdx=D\kdx$, $\Gsum\kdx=G\kdx$, with $C\kdx$, $D\kdx$, $G\kdx$ indicating parameters that potentially vary with time but that are identical $\forall q\in\llbracket 1,\regionQuant\rrbracket$
    \suspend{listB}
\end{enumerate}

\begin{lemma}[Relative Degree of 1]
\label{lem:mu1}
A \PWA{} system satisfying \ref{C:reachable}-\ref{C:locIndep} has a global dynamical relative degree of 1 if and only if the relative degree of all component models are 1 for all time:
\begin{equation}
    \mu_c=1 \iff \dmu=1
    \label{eq:LemMu1}
\end{equation}
\end{lemma}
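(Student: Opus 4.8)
The plan is to follow the template of the proof of Lemma~\ref{lem:mu0}: prove the forward implication $\mu_c=1\implies\dmu=1$ directly, and the reverse implication by contraposition. The single tool I would use throughout is the base-case output expansion~(\ref{eq:BaseCase}) from the proof of Lemma~\ref{lem:previewOutput},
\begin{equation*}
    y\kdxplus{1} = \Csum\kdxplus{1}\left(\Asum\kdx\state\kdx + \Fsum\kdx + \Bsum\kdx u\kdx\right) + \Gsum\kdxplus{1},
\end{equation*}
in which the coefficient of $u\kdx$ is $\Csum\kdxplus{1}\Bsum\kdx$. Since the locations partition $\real^{n_x}$, at time $k$ exactly one selector is nonzero, say $K_{q^*}(\locvec\kdx)=1$, so $\Bsum\kdx=B_{q^*,k}$; and by \ref{C:locIndep} one has $\Csum\kdxplus{1}=C\kdxplus{1}=C_{q^*,k+1}$ regardless of which location is occupied at time $k+1$. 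Hence the coefficient of $u\kdx$ reduces to $C_{q^*,k+1}B_{q^*,k}$, a quantity determined entirely by the single component model $q^*$ --- this is the reduction that makes the piecewise case behave like the affine case.

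For the forward direction I would first note that $\mu_c=1\neq 0$, so by Lemma~\ref{lem:mu0} the component relative degrees are not all zero and hence $\dmu\geq 1$. To obtain $\dmu\leq 1$ it suffices to show $y\kdxplus{1}$ explicitly contains $u\kdx$ for every switching sequence: by the preceding paragraph its $u\kdx$-coefficient is $C_{q^*,k+1}B_{q^*,k}$, which is nonzero precisely because $\mu_{q^*}=\mu_c=1$ forces $D_{q^*,k}=0$ but $C_{q^*,k+1}B_{q^*,k}\neq 0$; and by \ref{C:stateswitch} the selectors carry no input, so this is a genuine appearance of $u\kdx$ outside the selectors in the sense of Definition~\ref{def:mu}. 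As $q^*$ and the time-$(k+1)$ location were arbitrary, this holds for all switching sequences, so $\dmu=1$.

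For the reverse direction I would prove the contrapositive $\mu_c\neq 1\implies\dmu\neq 1$, splitting on the cases $\mu_c=0$ and $\mu_c\geq 2$, which under \ref{C:muc} exhaust $\mu_c\neq 1$. If $\mu_c=0$, Lemma~\ref{lem:mu0} immediately gives $\dmu=0\neq 1$. If $\mu_c\geq 2$, then for every $q$ and every $k$ one has $D_{q,k}=0$ and $C_{q,k+1}B_{q,k}=0$; so in~(\ref{eq:BaseCase}) the $u\kdx$-coefficient $C_{q^*,k+1}B_{q^*,k}$ vanishes for every switching sequence, meaning $y\kdxplus{1}$ never explicitly contains $u\kdx$, and since also $\Dsum\kdx=0$ implies $y\kdx$ does not contain $u\kdx$ either, we conclude $\dmu\geq 2\neq 1$. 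Combining the two cases completes the contrapositive, and hence~(\ref{eq:LemMu1}).

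The step I expect to be the main obstacle is making airtight the reduction of the first paragraph --- that \ref{C:locIndep} really forces the cross-term $\Csum\kdxplus{1}\Bsum\kdx$ to equal $C_{q^*,k+1}B_{q^*,k}$ and thus to be governed by a \emph{single} component's relative degree. This is exactly where the inter-location relative degree subtlety flagged in the remark following Theorem~\ref{thm:genInv} could bite: without \ref{C:locIndep}, a switch between time $k$ and time $k+1$ could make the effective relative degree differ from $\mu_c$, and the clean equivalence would fail. Once this reduction is justified, both implications follow mechanically from~(\ref{eq:BaseCase}) and Lemma~\ref{lem:mu0}.
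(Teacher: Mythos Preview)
Your proposal is correct and essentially matches the paper's proof: both hinge on the observation that \ref{C:locIndep} collapses the cross-term $\Csum\kdxplus{1}\Bsum\kdx$ to $C\kdxplus{1}B_{q^*,k}$, a single-component quantity governed by one component's relative degree. The only difference is organizational: for the backward implication the paper argues directly via the existence statement (\ref{eq:uPreviewless}) together with \ref{C:muc}, whereas you use the contrapositive with a clean case split on $\mu_c=0$ versus $\mu_c\geq 2$.
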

\begin{proof}
The foreward implication follows directly from \ref{C:muc}, \ref{C:locIndep}, and $\mu_c=1$:
    \begin{align}
        \mu_c=1 &\implies D\kdxplus{1}=0 
        \,\land\, C\kdxplus{1}B_{q,k}\neq 0 \,\,\,\forall q\in\llbracket 1,\regionQuant\rrbracket
        \\
        &\implies
        \Dsum\kdxplus{1}=0 \,\land\, \Csum\kdxplus{1}\Bsum\kdx\neq 0 \,\,\,\forall k
    \end{align}
This is equivalent to implying that the inter-location 
relative degree
can be neither higher nor lower than $\mu_c=1$ under \ref{C:muc} and \ref{C:locIndep}
\begin{equation}
    \therefore \,\,\, \mu_c=1\implies \dmu=1
\end{equation}
The backward implication follows from (\ref{eq:uPreviewless}) (i.e. the combination of Definition \ref{def:mu} and Lemma \ref{lem:previewOutput}), which implies
\begin{align}
    \dmu=1\implies \exists\{\state\kdx,\state\kdxplus{1}\} \text{ s.t. } \Csum\kdxplus{1}\Bsum\kdx\neq 0
\end{align}
Then, by \ref{C:muc}, \ref{C:locIndep}, and the fact that $\dmu=1\implies \Dsum\kdx=0$ by Lemma \ref{lem:mu0},
one finds that
$\Csum\kdxplus{1}\Bsum\kdx=C\kdxplus{1}\Bsum\kdx\neq 0\implies \mu_c=1$. Therefore (\ref{eq:LemMu1}) is true.
\end{proof}

\begin{corollary}[Unique Inverse of $\dmu=1$ \PWA{} Systems]
\label{corollary:mu1}
The inverse of a \PWA{} system satisfying \ref{C:reachable}-\ref{C:locIndep} with 
$\mu_c=1$
is 
given by the following explicit, anticausal \PWA{} system.
\begin{align}
    \state\kdxplus{1} &= 
    \inv{\Asum}\kdx \state\kdx  + \inv{\Bsum}\kdx y\kdxplus{1} + \inv{\Fsum}\kdx 
    \\
    u\kdx  &= \inv{\Csum}\kdx  \state\kdx  + \inv{\Dsum}\kdx  y\kdxplus{1} + \inv{\Gsum}\kdx 
\end{align}
where
\begin{align*}
    \inv{\Asum}\kdx  &= \Asum\kdx +\Bsum\kdx \inv{\Csum}\kdx
    &
    \inv{\Bsum}\kdx  &= \Bsum\kdx \inv{\Dsum}\kdx
    &
    \inv{\Fsum}\kdx  &= \Fsum\kdx +\Bsum\kdx\inv{\Gsum}\kdx
    \\
    \inv{\Csum}\kdx  &= 
    - \inv{\Dsum}\kdx
    C\kdxplus{1}\Asum\kdx 
    &
    \inv{\Dsum}\kdx  &= 
    \left( C\kdxplus{1} \Bsum_{k}\right)^{-1}
    &
    \inv{\Gsum}\kdx  &= 
    -
    \inv{\Dsum}\kdx\left(
    C\kdxplus{1} \Fsum\kdx  + G\kdxplus{1} \right)
\end{align*}
\end{corollary}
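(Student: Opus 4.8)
The plan is to obtain Corollary~\ref{corollary:mu1} as a specialization of Theorem~\ref{thm:genInv} to the case $\dmu = 1$, invoking Lemma~\ref{lem:mu1} to identify $\mu_c = 1$ with $\dmu = 1$ and then using assumption~\ref{C:locIndep} to strip away the implicitness inherent in the general inverse. First I would apply Lemma~\ref{lem:mu1} so that $\mu_c = 1$ guarantees $\dmu = 1$; Theorem~\ref{thm:genInv} then provides the implicit, anticausal inverse (\ref{eq:xGen})--(\ref{eq:uGen}) together with the coefficient definitions written in terms of $\prevC_k$, $\prevD_k$, $\prevG_k$, and $\anticaus_k$, and it remains only to simplify these at $\dmu = 1$.

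The key observation is that for $\dmu = 1$ the anticausal term $\anticaus_k(u_{k+1},\cdots,u_{k+\dmu-1})$ of Lemma~\ref{lem:previewOutput} is an empty sum (its index $s$ ranges over $\llbracket 1, 0 \rrbracket$), so $\anticaus_k \equiv 0$; this deletes the $-\Bsum_k\prevD_k^{-1}\anticaus_k$ and $-\prevD_k^{-1}\anticaus_k$ contributions from (\ref{eq:xGen})--(\ref{eq:uGen}), eliminating any need for input preview. Next I would evaluate $\prevC_k$, $\prevD_k$, $\prevG_k$ using the empty-product and empty-sum conventions of Lemma~\ref{lem:previewOutput}, getting $\prevD_k = \Csum_{k+1}\Bsum_k$ (the product $\prod_{m=1}^{0}$ being $I$), $\prevC_k = \Csum_{k+1}\Asum_k$, and $\prevG_k = \Csum_{k+1}\Fsum_k + \Gsum_{k+1}$; assumption~\ref{C:locIndep} then replaces $\Csum_{k+1}$ by $C_{k+1}$ and $\Gsum_{k+1}$ by $G_{k+1}$. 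Substituting the resulting $\prevD_k = C_{k+1}\Bsum_k$, $\prevC_k = C_{k+1}\Asum_k$, $\prevG_k = C_{k+1}\Fsum_k + G_{k+1}$ into the coefficient formulas of Theorem~\ref{thm:genInv} reproduces exactly the stated $\inv{\Asum}_k,\dots,\inv{\Gsum}_k$, and with $\anticaus_k = 0$ the general inverse collapses to the claimed explicit \PWA{} system with $y\kdxplus{1}$ as input.

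Two loose ends then need closing. For well-definedness, $\prevD_k = C_{k+1}\Bsum_k$ is scalar by~\ref{C:SISO} and nonzero by Definition~\ref{def:mu} (equivalently, $\mu_c = 1$ forces $C_{k+1}B_{q,k} \neq 0$ for every $q$ and every $k$), so $\inv{\Dsum}_k = (C_{k+1}\Bsum_k)^{-1}$ exists for all switching sequences. For uniqueness and explicitness, the crucial point is that under~\ref{C:locIndep} the coefficients $\Csum_{k+1}$ and $\Gsum_{k+1}$ no longer depend on $\state_{k+1}$ (hence not on $u_k$), so the output preview $y\kdxplus{1} = C_{k+1}\Asum_k\state_k + C_{k+1}\Bsum_k u_k + C_{k+1}\Fsum_k + G_{k+1}$ is affine in $u_k$ with nonzero leading coefficient and therefore uniquely solvable for $u_k$ with no residual right-hand-side dependence on $u_k$ --- precisely the mechanism that removes the implicitness described in the Remark on inverse implicitness. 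I would then note that the inverse is still a \PWA{} system in the paper's sense, since its coefficients are assembled from the $M_{q,k}$ through the same selector functions $K_q(\locvec(\state_k))$, so switching depends only on $\state_k$, and it is anticausal because $u_k$ depends on $y\kdxplus{1}$ with $\dmu = 1 > 0$.

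The main obstacle I anticipate is bookkeeping rather than anything substantive: carefully tracking the empty-product/empty-sum conventions of Lemma~\ref{lem:previewOutput} at $\dmu = 1$, and verifying that assumption~\ref{C:locIndep} is exactly the hypothesis needed to sever the dependence of $\Csum_{k+1}$ (equivalently $\prevC_k$) on $\state_{k+1}$. Once that is in place, matching the coefficient expressions against Theorem~\ref{thm:genInv} is routine algebra.
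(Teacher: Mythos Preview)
Your proposal is correct and follows essentially the same approach as the paper: invoke Lemma~\ref{lem:mu1} to obtain $\dmu=1$, specialize Theorem~\ref{thm:genInv} at $\dmu=1$ so that $\anticaus_k$ vanishes and $\prevC_k,\prevD_k,\prevG_k$ reduce via the empty-product/empty-sum conventions, then use \ref{C:locIndep} to replace $\Csum_{k+1},\Gsum_{k+1}$ by $C_{k+1},G_{k+1}$ and thereby eliminate the implicit dependence on $u_k$. The paper's own proof is considerably terser---it simply plugs \ref{C:locIndep} and $\dmu=1$ into (\ref{eq:uGen}) and notes that $(C_{k+1}\Bsum_k)^{-1}$ is well-defined by \ref{C:SISO} and Definition~\ref{def:mu}---but your more explicit bookkeeping is a faithful unpacking of the same argument.
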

\begin{proof}
$\dmu=1$ by Lemma \ref{lem:mu1}.
Plugging \ref{C:locIndep} and $\dmu=1$ into (\ref{eq:uGen}) 
yields
\begin{equation}
    u\kdx  = (C\Bsum\kdx )^{-1} (y\kdxplus{1}-C(\Asum\kdx \state\kdx +\Fsum\kdx )-G)
\end{equation}
which is explicit.
Note that 
$\left(C\Bsum\kdx\right)^{-1}$ is always well defined because it is scalar by \ref{C:SISO} and is nonzero
by $\dmu=1$ and Definition \ref{def:mu}.
\end{proof}

To derive explicit inverses for \PWA{} systems with $\dmu=2$, \ref{C:locIndep} is used along with the new ``output-based switching'' assumption. Assuming $\dmu>0$, this assumption is expressed as
\begin{enumerate}[label=(A5.\arabic*),leftmargin=*]
\resume{listB}
\item
\label{C:outputSwitch}
$P = P_o\Csum\kdx $ and $\offsetvec = \offsetvec_o - P_o\Gsum\kdx $
\suspend{listC}
\end{enumerate}
where $P_o$ and $\offsetvec_o$ contain the orientation vectors and offsets of hyperplanes in the output space $\mathbb{R}^{n_y}$.

\begin{corollary}[Unique Inverse of $\dmu=2$ \PWA{} Systems]
The inverse of a \PWA{} system satisfying \ref{C:reachable}-\ref{C:outputSwitch} with known global dynamical relative degree $\dmu=2$ is given by the following explicit, anticausal \PWA{} system.
\begin{align}
    \state\kdxplus{1} &= 
    \inv{\Asum}\kdx \state\kdx  + \inv{\Bsum}\kdx y\kdxplus{2} + \inv{\Fsum}\kdx 
    \\
    u\kdx  &= \inv{\Csum}\kdx  \state\kdx  + \inv{\Dsum}\kdx  y\kdxplus{2} + \inv{\Gsum}\kdx 
\end{align}
    where
\begin{align*}
    \inv{\Asum}\kdx  &= 
    \Asum\kdx +\Bsum\kdx 
    \inv{\Csum}\kdx 
    &
    \inv{\Bsum}\kdx  &= \Bsum\kdx 
    \inv{\Dsum}\kdx 
    &
    \inv{\Fsum}\kdx  &= \Fsum\kdx +\Bsum\kdx 
    \inv{\Gsum}\kdx 
\end{align*}
\vspace{-40pt}
\begin{align*}
    \inv{\Csum}\kdx  &= 
    -
    \inv{\Dsum}\kdx
    C\kdxplus{2}\Asum\kdxplus{1}\Asum\kdx 
    &
    \inv{\Dsum}\kdx  &= 
    \left( C\kdxplus{2} \Asum\kdxplus{1}\Bsum\kdx \right)^{-1}
\end{align*}
\vspace{-40pt}
\begin{align*}
    \inv{\Gsum}\kdx  &= 
    -
    \inv{\Dsum}\kdx
    \left(
    C\kdxplus{2} \Asum\kdxplus{1}\Fsum\kdx  + C\kdxplus{2}\Fsum\kdxplus{1} + G\kdxplus{2} \right)
\end{align*}
\end{corollary}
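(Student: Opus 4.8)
The plan is to prove this $\dmu=2$ corollary by specializing the general inverse of Theorem \ref{thm:genInv} to the case $\dmu=2$, exactly mirroring the structure of the $\dmu=1$ proof in Corollary \ref{corollary:mu1}. First I would invoke Lemma \ref{lem:previewOutput} with $\dmu=2$ to write the output preview explicitly as
\begin{equation*}
    y\kdxplus{2} = \Csum\kdxplus{2}\Asum\kdxplus{1}\Asum\kdx\state\kdx + \Csum\kdxplus{2}\Asum\kdxplus{1}\Bsum\kdx u\kdx + \Csum\kdxplus{2}\left(\Asum\kdxplus{1}\Fsum\kdx + \Fsum\kdxplus{1}\right) + \Gsum\kdxplus{2} + \anticaus\kdx(u\kdxplus{1}),
\end{equation*}
then apply \ref{C:locIndep} to replace $\Csum\kdxplus{1}$, $\Csum\kdxplus{2}$, $\Gsum\kdxplus{1}$, $\Gsum\kdxplus{2}$, $\Dsum$ by their location-independent versions $C\kdxplus{1}$, $C\kdxplus{2}$, $G\kdxplus{1}$, $G\kdxplus{2}$, $0$. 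Solving for $u\kdx$ gives the candidate $\inv{\Csum}\kdx$, $\inv{\Dsum}\kdx$, $\inv{\Gsum}\kdx$ in the statement (the scalar $C\kdxplus{2}\Asum\kdxplus{1}\Bsum\kdx$ is invertible by \ref{C:SISO} and Definition \ref{def:mu}), and plugging this output function into (\ref{eq:sysDef_x}) gives the $\inv{\Asum}\kdx$, $\inv{\Bsum}\kdx$, $\inv{\Fsum}\kdx$ formulas, which have the same algebraic form as in Theorems \ref{thm:mu0inv} and \ref{thm:genInv}.

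The key step — and the main obstacle — is showing that under \ref{C:outputSwitch} the preview term $\anticaus\kdx(u\kdxplus{1})$ vanishes, so that the inverse is genuinely explicit rather than implicit. The idea is that \ref{C:outputSwitch} forces all switching hyperplanes to be pullbacks of hyperplanes in output space: $P\state\kdx - \offsetvec = P_o(\Csum\kdx\state\kdx + \Gsum\kdx) - \offsetvec_o = P_o y\kdx - \offsetvec_o$ once we use \ref{C:locIndep} to identify $\Csum\kdx\state\kdx+\Gsum\kdx$ with the actual (delay-free part of the) output. Hence the localization vector $\locvec\kdx = H(P_o y\kdx - \offsetvec_o)$ depends only on $y\kdx$, not on any state or input detail. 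I would then argue that for the inverse system — which takes the output preview $y\kdxplus{2}$ (equivalently the whole output sequence) as its \emph{input} — the matrices $\Asum\kdxplus{1}$, $\Csum\kdxplus{2}$ appearing in $\anticaus\kdx$ are determined by $y\kdxplus{1}$, $y\kdxplus{2}$, which are available data. More importantly, I must show the coefficient $\Csum\kdxplus{2}\Bsum\kdxplus{1}$ multiplying $u\kdxplus{1}$ inside $\anticaus\kdx$ is identically zero for every switching sequence reachable under \ref{C:outputSwitch}. Since $\Csum\kdxplus{2}=C\kdxplus{2}$ is location-independent by \ref{C:locIndep}, this reduces to $C\kdxplus{2}B_{q,k+1}$, which by \ref{C:muc} and $\dmu=2$ (together with Lemma \ref{lem:mu0} giving $\Dsum=0$ and Lemma \ref{lem:mu1}'s contrapositive ruling out $\mu_c=1$) must equal $C\kdxplus{2}B_{q,k+1}=0$ for all $q$ — because $\mu_c=2$ means each component has $CB=0$ and $CAB\neq0$, and \ref{C:locIndep} makes the ``$CB$'' uniform across locations, so the inter-location relative degree cannot drop below $2$; since Definition \ref{def:mu} forces $\dmu=2$ to be the \emph{maximum} inter-location relative degree, it also cannot exceed $2$. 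Thus $\anticaus\kdx\equiv 0$.

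Once $\anticaus\kdx$ is eliminated, the general inverse (\ref{eq:xGen})-(\ref{eq:uGen}) collapses term-by-term to the stated formulas: $\prevC\kdx = C\kdxplus{2}\Asum\kdxplus{1}\Asum\kdx$, $\prevD\kdx = C\kdxplus{2}\Asum\kdxplus{1}\Bsum\kdx$, $\prevG\kdx = C\kdxplus{2}\Asum\kdxplus{1}\Fsum\kdx + C\kdxplus{2}\Fsum\kdxplus{1} + G\kdxplus{2}$, and substituting these into the $\inv{\Asum}\kdx$, $\inv{\Bsum}\kdx$, $\inv{\Fsum}\kdx$, $\inv{\Csum}\kdx$, $\inv{\Dsum}\kdx$, $\inv{\Gsum}\kdx$ definitions of Theorem \ref{thm:genInv} yields precisely the corollary's expressions. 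The anticausality (dependence on $y\kdxplus{2}$) is immediate from $\dmu=2>0$, exactly as in Theorem \ref{thm:genInv}. I anticipate the only genuinely delicate point is the rigorous argument that \ref{C:outputSwitch} plus \ref{C:locIndep} makes every switching decision along the horizon $\llbracket k, k+2\rrbracket$ expressible purely in terms of the output preview — so that $\Asum\kdxplus{1}$, $\Csum\kdxplus{2}$ are legitimate functions of the inverse system's input — and I would handle this by explicitly writing $\locvec\kdxplus{i} = H(P_o y\kdxplus{i} - \offsetvec_o)$ for $i\in\llbracket 0,2\rrbracket$ and noting the inverse system receives $\{y\kdxplus{i}\}$ as data; everything else is bookkeeping inherited from the already-proven Theorem \ref{thm:genInv} and Corollary \ref{corollary:mu1}.
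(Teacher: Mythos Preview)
Your proof is correct and uses the same ingredients as the paper, but the emphasis is inverted. You frame the vanishing of $\anticaus\kdx(u\kdxplus{1})$ as the ``key step'' and the selector issue as a secondary ``delicate point''; the paper does the opposite. It writes (\ref{eq:mu2ass}) directly---tacitly using $C\kdxplus{2}\Bsum\kdxplus{1}=0$ to drop $\anticaus\kdx$---and devotes its explicit work to showing that $K_q(\locvec(\state\kdxplus{1}))$, and hence $\Asum\kdxplus{1},\Fsum\kdxplus{1}$, does not depend on $u\kdx$: substituting \ref{C:outputSwitch} into the selector argument gives $P_oC\kdxplus{1}(\Asum\kdx\state\kdx+\Bsum\kdx u\kdx+\Fsum\kdx)+P_oG\kdxplus{1}-\offsetvec_o$, and then $C\kdxplus{1}\Bsum\kdx=0$ kills the $u\kdx$ term, leaving the selector an explicit function of $\state\kdx$ alone. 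That computation is the actual reason \ref{C:outputSwitch} is in the hypotheses.

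Two small points of friction with your version. First, the $\anticaus\kdx=0$ argument does not use \ref{C:outputSwitch} at all---only \ref{C:locIndep} and $\mu_c\geq 2$---so attributing it to \ref{C:outputSwitch} misidentifies that assumption's role. Second, your resolution of the selector issue via ``$\locvec\kdxplus{1}=H(P_o y\kdxplus{1}-\offsetvec_o)$ and $y\kdxplus{1}$ is data'' shows the inverse is explicit given the whole output sequence, but the corollary states a PWA inverse with \emph{single} input $y\kdxplus{2}$ and state $\state\kdx$ at each step. To match that form you need the extra observation that $y\kdxplus{1}=C\kdxplus{1}(\Asum\kdx\state\kdx+\Fsum\kdx)+G\kdxplus{1}$ is itself computable from $\state\kdx$ (again via $C\kdxplus{1}\Bsum\kdx=0$)---which is precisely the paper's reduction (\ref{eq:implicitK})$\to$(\ref{eq:explicitK}). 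You have the ingredient; you just haven't wired it to the stated form.
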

\begin{proof}
    Plugging \ref{C:locIndep} and $\dmu=2$ into (\ref{eq:uGen}) yields
\begin{equation}
        u\kdx =
        \left(
        C\kdxplus{2} \Asum\kdxplus{1} \Bsum\kdx 
        \right)^{-1}
        \left(
        y\kdxplus{2} - 
        C\kdxplus{2}
        \Asum\kdxplus{1}\Asum\kdx 
        \state\kdx 
        \right.
        \left.
        -
        C\kdxplus{2}
        \Asum\kdxplus{1}
        \Fsum\kdx 
        -
        C\kdxplus{2}
        \Fsum\kdxplus{1}
        -
        G\kdxplus{2}
        \right)
        \label{eq:mu2ass}
\end{equation}
In general, (\ref{eq:mu2ass}) would be implicit because of $\Asum\kdxplus{1}$ and $\Fsum\kdxplus{1}$'s dependence on $\state\kdxplus{1}$, and thus $u\kdx $, via the selector functions
\begin{equation}
    K_q(\locvec(\state\kdxplus{1})) = 
    0^{
    \prod_{i=1}^{\sigvecset_q}
    \norm{\sigvec_{q,i}-H\left(
    P\left(
    \Asum\kdx \state\kdx  + \Bsum\kdx u\kdx  + \Fsum\kdx 
    \right)
    - \offsetvec
    \right)}}
    \label{eq:implicitK}
\end{equation}
However, 
under
\ref{C:outputSwitch} (in combination with \ref{C:locIndep}) this becomes
\begin{equation}
    K_q(\locvec(\state\kdxplus{1})) = 
    0^{
    \prod_{i=1}^{\sigvecset_q}
    \norm{\sigvec_{q,i}-H\left(
    P_oC\kdxplus{1}\left(
    \Asum\kdx \state\kdx  + \Fsum\kdx 
    \right)
    - \offsetvec_o + P_oG\kdxplus{1}
    \right)}}
    \label{eq:explicitK}
\end{equation}
which is not a function of $u\kdx $ and is thus explicit.

The reduction of (\ref{eq:implicitK}) to (\ref{eq:explicitK}) relies on the fact that $C\kdxplus{1}\Bsum\kdx ~=0$ $\forall k$. This is true for $\dmu=2$ systems under \ref{C:locIndep} because $\dmu=2\implies\mu_c>1$ by Lemma \ref{lem:mu1}.
\end{proof}

\begin{remark}[$\mu_c$, $\dmu$ Relationship]
Note that unlike for relative degrees of 0 and 1, $\mu_c=2\centernot\implies\dmu=2$ under assumptions \ref{C:reachable}-\ref{C:outputSwitch}. If $n_x>2$, there exists systems for which $\mu_c=2$ but $\dmu>2$ due to inter-location dynamics.
\end{remark}

\section{Stable Inversion of \PWA{} Systems}
\label{sec:stabinv}
\subsection{Exact Stable Inversion}

For many \PWA{} systems, evolving the inverse systems derived in Section \ref{sec:exactinverse} forward in time from an initial state 
at time $k=k_0$ 
and with a bounded reference $y\kdxplus{\dmu}=r\kdxplus{\dmu}$
will yield an inverse system trajectory 
$u\kdx$ that is bounded for all $k\geq k_0$
and suitable for feedforward control.
However, inverse \PWA{} system instabilities may arise from NMP component dynamics, 
causing $u\kdx$ to become unbounded under this conventional system evolution scheme, despite the bounded reference.
In such cases, a bounded $u\kdx$ may still be achievable on a bi-infinite timeline via stable inversion.
Formally, the stable inversion problem may be given as follows.
\begin{definition}[\PWA{} Stable Inversion Problem Statement]
\label{def:SIproblem}
Given an explicit inverse \PWA{} system representation
\begin{IEEEeqnarray}{RL}
\eqlabel{eq:inverse}
\IEEEyesnumber
\IEEEyessubnumber*
\state\kdxplus{1} &= \inv{\Asum}\kdx\state\kdx + \inv{\Bsum}\kdx y\kdxplus{\dmu} + \inv{\Fsum}\kdx
\label{eq:inverseState}
\\
u\kdx &=\inv{\Csum}\kdx\state\kdx + \inv{\Dsum}\kdx y\kdxplus{\dmu}  +\inv{\Gsum}\kdx
\label{eq:inverseOutput}
\end{IEEEeqnarray}
and a reference trajectory $y\kdxplus{\dmu}=r\kdxplus{\dmu}$ known for all $k\in\integer$, a two point boundary value problem is formed by (\ref{eq:inverseState}) and the boundary conditions $\state_{-\infty}=\state_\infty=0$.
The solution to the stable inversion problem is the bounded bi-infinite time series $u\kdx\in\real$ $\forall k$,
which is
generated by (\ref{eq:inverseOutput}) and the bounded bi-infinite solution $\state\kdx$ to the boundary value problem.
\end{definition}

The following assumptions on system parameter boundedness and boundary conditions are common in some form across much stable inversion literature.
\begin{enumerate}[label=(A5.\arabic*),leftmargin=*]
\resume{listC}

\item
\label{C:boundedSystem}
There exists a supremum to the norms of the inverse system matrices:
\begin{equation}
\sup_k\norm{\inv{\Asum}\kdx},
\,
\sup_k\norm{\inv{\Bsum}\kdx},
\,
\sup_k\norm{\inv{\Fsum}\kdx},
\,
% \\
\sup_k\norm{\inv{\Csum}\kdx},
\,
\sup_k\norm{\inv{\Dsum}\kdx},
\,
\sup_k\norm{\inv{\Gsum}\kdx} \in \real
\end{equation}
Any vector norm may be used, and the matrix norm is that induced by the vector norm.

\item
\label{C:refdecay}
The reference $y\kdxplus{\dmu}$ and bias terms $\inv{\Fsum}\kdx$, $\inv{\Gsum}\kdx$ decay to zero at the extremities of the bi-infinite time series:
\begin{multline}
    \forall \varepsilon \in \real_{>0}
    \quad
    \exists \eta_1, \eta_2 \in\integer
    \quad
    \text{ s.t. }
    \\
    \norm{y\kdxplus{\dmu}}
    ,
    \norm{\inv{\Fsum}\kdx}
    ,
    \norm{\inv{\Gsum}\kdx}
    <\varepsilon
    \quad
    \forall k\in(-\infty,\eta_1\rrbracket\cup\llbracket \eta_2,\infty)
\end{multline}

\suspend{listD}
\end{enumerate}

Additionally, stable inversion of \PWA{} systems involves two challenges not faced in the stable inversion of linear systems. First, 
the dynamics of all locations and 
the inter-location dynamics must
be
simultaneously 
accounted for when decoupling the stable and unstable system modes. 
Second, there must be a way to manage switching in the two partial system evolutions.
These challenges are manifested in 
the following assumptions.

\begin{enumerate}[label=(A5.\arabic*),leftmargin=*]
\resume{listD}

\item
\label{C:decouplable}
There exists a similarity transform matrix $V\in\real^{n_x\times n_x}$ that decouples the stable and unstable modes of 
(\ref{eq:inverseState}).
Formally this decoupling can be expressed as
\begin{align}
    V \inv{\Asum}\kdx V^{-1}=\begin{bmatrix}[1.5] 
    \Asumdc^\stab\kdx & 0_{n_\stab \times n_\unstab}
    \\
    0_{n_\unstab\times n_\stab} & \Asumdc^\unstab\kdx
    \end{bmatrix}
    \quad \forall k
\end{align}
where 
$n_\stab$ is the number of stable modes, $n_\unstab$ is the number of unstable modes, $n_\stab+n_\unstab=n_x$,
$\Asumdc^\unstab\kdx$ has all eigenvalue magnitudes $>1$ $\forall k$,
and the free systems
\begin{align}
    z\kdxplus{1}^\stab = \decoup{\Asum}\kdx^\stab z\kdx^\stab
    \qquad
    z\kdxplus{1}^\unstab = \left(\decoup{\Asum}\kdx^\unstab\right)^{-1}z\kdx^\unstab
    \label{eq:free}
\end{align}
with 
appropriately 
sized
state vectors $z^\stab$, $z^\unstab$ 
are globally uniformly asymptotically stable about the origin.
\suspend{listE}
\end{enumerate}
\addtocounter{listE}{1}
\begin{enumerate}[label=(A5.\arabic{listE}\alph*),leftmargin=*]
    \item
    \label{C:switchStable}
    Switching is exclusively dependent on the stable modes:
    \begin{equation}
    PV^{-1}=\begin{bmatrix}
    \decoup{P}^\stab  & 0_{n_P\times n_\unstab}
    \end{bmatrix}
    \end{equation}
    \item
    \label{C:switchUnstable}
    Switching is exclusively dependent on the unstable modes and 
    all unstable states 
    arising from (\ref{eq:inverse})
    are reachable in one time step from some predecessor state 
    for all $k$:
    \begin{equation}
    \left(
    PV^{-1}=\begin{bmatrix} 0_{n_P\times n_\stab} &  \decoup{P}^\unstab \end{bmatrix}
    \right)
    \,\,
    \land
    \,\,
    \left(
    \forall k,\,\,\forall \statedc^\unstab\kdxplus{1}\in\mathcal{X}^\unstab\subseteq\real^{n_\unstab}  \,\,\, \text{Pre}(\{\statedc^\unstab\kdxplus{1}\})\neq \emptyset
    \right)
    \end{equation}
    where $\statedc^\unstab\kdx=
    \begin{bmatrix} 0_{n_\unstab\times n_\stab}& I_{n_\unstab\times n_\unstab} \end{bmatrix}
    V\state\kdx$.
    $\mathcal{X}^\unstab$ is 
    a set containing at least all solution values of $\statedc^\unstab\kdxplus{1}$ (see Section \ref{sec:practicalSI} for elaboration).
    $\text{Pre}(\mathcal{X})$ is the set of predecessor states whose one-step successors belong to the set $\mathcal{X}$. 
\end{enumerate}

For detailed theorems on the sufficient conditions for uniform asymptotic stability of systems of the form (\ref{eq:free}), see \cite{Chen2020}.

As implied by the separation of \exclusiveSwitching{} into two opposing assumptions, the challenges associated with switching management precipitate different stable inversion procedures for the stable-mode-dependent 
switching 
and unstable-mode-dependent 
switching 
cases. In general, the trajectory of the modes upon which switching is dependent are computed first. This allows the switching signal for the overall system to be computed and given as an exogenous input to the evolution of the remaining modes.

The theorems for these cases are supported by the following notation for the decoupled system in addition to the above-defined 
$\Asumdc^\stab\kdx$, $\Asumdc^\unstab\kdx$, $\Pdc^\stab$, $\Pdc^\unstab$.
\begin{equation}
\begin{aligned}
    \statedc\kdx^\stab &\coloneqq \extracts V\state\kdx
    &
    \Bsumdc\kdx^\stab &\coloneqq \extracts V\Bsum\kdx
    &
    \Fsumdc\kdx^\stab &\coloneqq \extracts V\Fsumdc\kdx
    \\
    \statedc\kdx^\unstab&\coloneqq \extractu V\state\kdx
    &
    \Bsumdc\kdx^\unstab &\coloneqq \extractu V\Bsum\kdx
    &
    \Fsumdc\kdx^\unstab &\coloneqq \extractu V\Fsumdc\kdx
\end{aligned}
\end{equation}
where
\begin{align}
    \extracts \coloneqq \begin{bmatrix} I_{n_\stab\times n_\stab} & 0_{n_\stab\times n_\unstab} \end{bmatrix}
    \qquad
    \extractu \coloneqq \begin{bmatrix} 0_{n_\unstab\times n_\stab}& I_{n_\unstab\times n_\unstab} \end{bmatrix}
\end{align}

\begin{theorem}[\PWA{} Stable Inversion with Stable-Mode-Dependent Switching]
\label{thm:SIstable}
Given an explicit inverse \PWA{} system (\ref{eq:inverse}) satisfying \ref{C:boundedSystem}-\ref{C:decouplable} and \ref{C:switchStable}, 
the solution to the stable inversion problem exists and can be found by
first computing the stable mode time series $\statedc^\stab\kdx$ and location time series $\locvec\kdx$ $\forall k$ forwards in time via
\begin{align}
    \locvec\kdx &= H\left( \decoup{P}^\stab\statedc^\stab\kdx  -\offsetvec \right)
    \label{eq:locstable}
    \\
    \statedc^\stab\kdxplus{1} &= \Asumdc^\stab\kdx\statedc^\stab\kdx + \Bsumdc^\stab\kdx y\kdxplus{\dmu} + \Fsumdc^\stab\kdx
    \label{eq:forwardstable}
\end{align}
The location time series being now known, the unstable mode time series $\statedc^\unstab\kdx$ can be computed backwards in time via
\begin{equation}
    \statedc^\unstab\kdx = \left(\Asumdc^\unstab\kdx\right)^{-1}\left(\statedc^\unstab\kdxplus{1}-\Bsumdc^\unstab\kdx y\kdxplus{\dmu} - \Fsumdc^\unstab\kdx\right)
    \label{eq:backwardstable}
\end{equation}
with $\locvec\kdx$ input directly to 
the selector functions in
(\ref{eq:Mshorthand}).
Finally the solution $u\kdx$ is computed via (\ref{eq:inverseOutput}) with
$\state\kdx=V^{-1}\left[\left(\statedc^\stab\kdx\right)^T,\,\,\left(\statedc^\unstab\kdx\right)^T\right]^T$.
\end{theorem}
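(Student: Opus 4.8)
The plan is to verify that the trajectory $(\statedc^\stab\kdx,\statedc^\unstab\kdx,\locvec\kdx)$ constructed by \eqref{eq:locstable}--\eqref{eq:backwardstable} is well-defined, bounded for all $k\in\integer$, consistent (i.e. the recovered $\state\kdx$ genuinely satisfies the inverse state dynamics \eqref{eq:inverseState} together with the boundary conditions $\state_{\pm\infty}=0$), and that this solution is unique; the solution $u\kdx$ then inherits boundedness directly from \eqref{eq:inverseOutput} and \ref{C:boundedSystem}. First I would observe that under \ref{C:switchStable} the localization map $\locvec\kdx=H(P\state\kdx-\offsetvec)=H(\decoup P^\stab\statedc^\stab\kdx-\offsetvec)$ depends \emph{only} on the stable modes, so the stable subsystem \eqref{eq:forwardstable} is genuinely autonomous once the exogenous input $y\kdxplus{\dmu}$ is supplied: its selector functions $K_q(\locvec\kdx)$ close the loop purely within $\statedc^\stab$. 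Hence \eqref{eq:locstable}--\eqref{eq:forwardstable} can be integrated \emph{forward} from $k=-\infty$ with terminal condition $\statedc^\stab_{-\infty}=0$, and I would show this forward integration produces a bounded sequence: by \ref{C:decouplable} the free map $z^\stab\mapsto\decoup\Asum^\stab z^\stab$ is globally uniformly asymptotically stable (uniformly over the finitely many locations and over $k$), so a standard input-to-state-stability / discrete variation-of-constants argument — using \ref{C:boundedSystem} to bound the matrix norms and \ref{C:refdecay} to guarantee the forcing $\Bsumdc^\stab\kdx y\kdxplus{\dmu}+\Fsumdc^\stab\kdx$ decays at $\pm\infty$ — yields $\sup_k\norm{\statedc^\stab\kdx}<\infty$ and $\statedc^\stab\kdx\to0$ as $k\to-\infty$. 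The $\dmu$-step preview in $y\kdxplus{\dmu}$ is harmless here since the whole reference is known in advance.

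Next, with $\{\locvec\kdx\}_{k\in\integer}$ now a fixed, known, bounded switching sequence, the unstable subsystem becomes an \emph{affine time-varying} recursion $\statedc^\unstab\kdxplus{1}=\Asumdc^\unstab\kdx\statedc^\unstab\kdx+\Bsumdc^\unstab\kdx y\kdxplus{\dmu}+\Fsumdc^\unstab\kdx$ whose selector functions are driven by the exogenous $\locvec\kdx$ rather than by $\statedc^\unstab$ — this is exactly the point of computing the switching signal first. Since $\Asumdc^\unstab\kdx$ is invertible with all eigenvalue magnitudes $>1$, and the time-reversed free map $z^\unstab\mapsto(\decoup\Asum^\unstab)^{-1}z^\unstab$ is globally uniformly asymptotically stable by \ref{C:decouplable}, I would run the same variation-of-constants / ISS estimate \emph{backward} in time (equation \eqref{eq:backwardstable}) from the terminal condition $\statedc^\unstab_{+\infty}=0$, again invoking \ref{C:refdecay} and \ref{C:boundedSystem} to conclude $\sup_k\norm{\statedc^\unstab\kdx}<\infty$ and $\statedc^\unstab\kdx\to0$ as $k\to+\infty$. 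Assembling $\state\kdx=V^{-1}[(\statedc^\stab\kdx)^T,(\statedc^\unstab\kdx)^T]^T$ and checking that each half of the block-diagonalized dynamics reproduces \eqref{eq:inverseState} under the transform $V\inv\Asum\kdx V^{-1}=\operatorname{diag}(\Asumdc^\stab\kdx,\Asumdc^\unstab\kdx)$ shows $\state\kdx$ solves the two-point boundary value problem of Definition \ref{def:SIproblem}; boundedness of $\state\kdx$ follows from boundedness of both blocks and $\norm{V^{-1}}<\infty$, and then $u\kdx$ from \eqref{eq:inverseOutput} is bounded. For uniqueness I would argue that the forward stable recursion \eqref{eq:forwardstable} with the left boundary condition determines $\statedc^\stab\kdx$ uniquely (hence $\locvec\kdx$ uniquely), after which the backward unstable recursion \eqref{eq:backwardstable} with the right boundary condition determines $\statedc^\unstab\kdx$ uniquely; any other bounded solution of the boundary value problem would have to agree with these, because in each subsystem a nonzero difference of two bounded solutions would be a bounded trajectory of an (anti-)exponentially-expanding free system, forcing it to be identically zero.

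The main obstacle I anticipate is making the boundedness / decay estimates genuinely rigorous in the \emph{switched, time-varying} setting: the global uniform asymptotic stability granted by \ref{C:decouplable} is for each decoupled free subsystem, but the actual recursions \eqref{eq:forwardstable} and \eqref{eq:backwardstable} carry location-dependent matrices $\Asumdc^\stab\kdx,\Asumdc^\unstab\kdx$ whose realized values depend on the switching sequence $\locvec\kdx$. So I need a uniform (Lyapunov-type or comparison-function) bound valid along \emph{every} admissible switching sequence — ideally by exhibiting a common Lyapunov function, or by citing the sufficient conditions of \cite{Chen2020} referenced just after \ref{C:switchUnstable} — and then convolve that uniform decay rate against the $\ell^\infty$-but-decaying forcing from \ref{C:refdecay} to get a bounded bi-infinite solution. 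A secondary subtlety is the self-consistency of the stable subsystem: one must confirm that feeding $\locvec\kdx=H(\decoup P^\stab\statedc^\stab\kdx-\offsetvec)$ back into \eqref{eq:forwardstable} via the selectors $K_q$ does not destroy well-posedness (it does not, because at each step $\statedc^\stab\kdx$ is already known before $\locvec\kdx$ is evaluated, so the recursion is explicit), and that the requirement \ref{C:switchUnstable} is genuinely \emph{not} needed in this stable-switching case — only \ref{C:switchStable} is invoked, which is why the companion theorem for unstable-mode-dependent switching is stated separately. Everything else is routine discrete variation-of-constants bookkeeping.
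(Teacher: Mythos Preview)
Your approach is essentially the same as the paper's: forward-integrate the stable modes to determine the switching sequence, then backward-integrate the unstable modes with that switching sequence treated as exogenous, reassemble via $V^{-1}$, and read off $u\kdx$. The paper's proof is considerably more terse than yours---it simply checks that \eqref{eq:locstable} agrees with \eqref{eq:localization} under \ref{C:switchStable}, that the concatenated block dynamics reproduce \eqref{eq:inverseState} under \ref{C:decouplable}, that \ref{C:boundedSystem}--\ref{C:refdecay} force the recursions to decay to the free systems \eqref{eq:free} at $k\to\pm\infty$ so the boundary conditions hold, and that every step is explicit and well-defined (with $(\Asumdc^\unstab\kdx)^{-1}$ existing by the eigenvalue condition).

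Your proposal goes further in two respects. First, you supply the ISS/variation-of-constants machinery to argue boundedness of $\statedc^\stab\kdx$ and $\statedc^\unstab\kdx$ over all of $\integer$, not just at the limits; the paper leaves this implicit in the phrase ``outputs exist because system parameters and input signals are bounded.'' Second, you argue uniqueness, which the paper does not prove (the theorem statement says ``the solution'' but the proof only establishes existence and construction; compare the companion Theorem~\ref{thm:SIunstable}, which explicitly allows multiple solutions). Your uniqueness argument for the stable-switching case is sound---both recursions are explicit once the switching is fixed by the stable modes---so this is a genuine addition rather than a gap. The ``main obstacle'' you flag about uniform decay under arbitrary switching is exactly what the global uniform asymptotic stability clause of \ref{C:decouplable} is designed to furnish, so you are invoking the right hypothesis.
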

\begin{proof}
The prescribed formula represents a solution to the stable inversion problem because
\begin{itemize}[leftmargin=*]
\item
by \ref{C:switchStable}, (\ref{eq:locstable}) is equivalent to (\ref{eq:localization}), 
\item
by \ref{C:decouplable}, the concatenated evolutions of (\ref{eq:forwardstable}) and (\ref{eq:backwardstable}) are equivalent to (\ref{eq:inverseState}), and
\item
by \ref{C:boundedSystem} and \ref{C:refdecay}, (\ref{eq:forwardstable}) and (\ref{eq:backwardstable}) decay to the form of (\ref{eq:free}) in the limits as $k$ approaches $\infty$ or $-\infty$, and thus by \ref{C:decouplable} the boundary conditions at these limits are satisfied.
\end{itemize}
The solution is guaranteed to exist because
\begin{itemize}[leftmargin=*]
\item
(\ref{eq:locstable})-(\ref{eq:backwardstable}) and (\ref{eq:inverseOutput}) are all explicit functions with all variables in the right-hand side known due to the order of time series computation, 
and
\item
the outputs of (\ref{eq:locstable})-(\ref{eq:backwardstable}) exist because the system parameters and input signals are bounded by \ref{C:boundedSystem} and \ref{C:refdecay}, and $\Asumdc^\unstab\kdx$ is guaranteed invertible by the eigenvalue condition of \ref{C:decouplable}.
\end{itemize}
\end{proof}

\begin{theorem}[\PWA{} Stable Inversion with Unstable-Mode-Dependent Switching]
\label{thm:SIunstable}
Given an explicit inverse \PWA{} system (\ref{eq:inverse}) satisfying \ref{C:boundedSystem}-\ref{C:decouplable} and \ref{C:switchUnstable}, the solution to the stable inversion problem exists and can be found in the following manner.
First solve the implicit backward-in-time evolution of the unstable modes, (\ref{eq:backwardstable}), for $\statedc^\unstab\kdx$ at each time step using any of the applicable algorithms (e.g. brute force search over all locations or computational geometry methods 
\cite{Rakovic2006},
see Section \ref{sec:practicalSI} for elaboration).
For each time step at which one of the potentially multiple solution values of $\statedc^\unstab\kdx$ is chosen (any selection method is valid), the location vector $\locvec\kdx$ may be computed by
\begin{equation}
    \locvec\kdx=H\left(\Pdc^\unstab\statedc^\unstab\kdx-\offsetvec\right)
\end{equation}
The location time series being computed, $\locvec\kdx$ may be directly plugged in to
the selector functions in
(\ref{eq:Mshorthand}) to make the forward-in-time evolution of the stable modes, (\ref{eq:forwardstable}), explicit such that it can be evaluated at each time step for $\statedc^\stab\kdx$.
The solution $u\kdx$ is then computed via (\ref{eq:inverseOutput}) with $\state\kdx=V^{-1}\left[\left(\statedc^\stab\kdx\right)^T,\,\left(\statedc^\unstab\kdx\right)^T\right]^T$, as in Theorem \ref{thm:SIstable}.
\end{theorem}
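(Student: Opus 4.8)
The plan is to mirror the proof of Theorem \ref{thm:SIstable} (the stable-mode-dependent switching case), interchanging the roles of the two mode-evolutions so that the \emph{unstable} trajectory is computed first and then fed in as an exogenous switching signal to the \emph{stable} trajectory. The three structural claims to establish are the same as before: (i) the prescribed recursions reconstruct a genuine solution of the inverse system \linsys—wait, of \sysDef's inverse (\ref{eq:inverse})—together with the stable-inversion boundary conditions; (ii) that reconstructed trajectory satisfies Definition \ref{def:SIproblem}; and (iii) every quantity in the prescription is well-defined and exists, so the procedure actually terminates.

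First I would verify equivalence with the original inverse system. By assumption \ref{C:switchUnstable}, $PV^{-1}=\begin{bmatrix} 0_{n_P\times n_\stab} & \decoup{P}^\unstab \end{bmatrix}$, so $\locvec\kdx = H(P\state\kdx-\offsetvec) = H(\Pdc^\unstab\statedc^\unstab\kdx-\offsetvec)$; hence the location vector is genuinely a function of the unstable modes alone and can legitimately be computed before the stable modes. Then, exactly as in Theorem \ref{thm:SIstable}, the decoupling assumption \ref{C:decouplable} guarantees that the pair consisting of the backward recursion (\ref{eq:backwardstable}) for $\statedc^\unstab\kdx$ and the forward recursion (\ref{eq:forwardstable}) for $\statedc^\stab\kdx$ is, after the change of coordinates $\state\kdx = V^{-1}[(\statedc^\stab\kdx)^T,(\statedc^\unstab\kdx)^T]^T$, precisely the inverse state equation (\ref{eq:inverseState}); and (\ref{eq:inverseOutput}) then produces $u\kdx$. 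For the boundary conditions, \ref{C:boundedSystem} and \ref{C:refdecay} force the driving terms $\Bsumdc^{(\cdot)}\kdx y\kdxplus{\dmu}$ and $\Fsumdc^{(\cdot)}\kdx$ to vanish in both tails, so (\ref{eq:backwardstable}) and (\ref{eq:forwardstable}) asymptotically reduce to the free systems (\ref{eq:free}), whose global uniform asymptotic stability (again \ref{C:decouplable}) drives $\statedc^\stab\kdx$ and $\statedc^\unstab\kdx$—and thus $\state\kdx$—to zero as $k\to\pm\infty$. Boundedness on the whole line follows from the same stability estimate plus the supremum bounds of \ref{C:boundedSystem}.

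The step I expect to be the genuine obstacle—and the reason this case is stated separately—is existence and solvability of the implicit backward recursion (\ref{eq:backwardstable}) when the selector functions depend on $\statedc^\unstab\kdx$ itself. Solving (\ref{eq:backwardstable}) for $\statedc^\unstab\kdx$ means finding a location $q$ and a point in that location whose one-step image under $\Asumdc^\unstab\kdx$ (plus the known affine terms) equals the already-computed $\statedc^\unstab\kdxplus{1}$; this is a union-of-polytopes preimage computation. Here is exactly where the second conjunct of \ref{C:switchUnstable}, $\text{Pre}(\{\statedc^\unstab\kdxplus{1}\})\neq\emptyset$ for every reachable successor, does the work: it guarantees the backward step is never infeasible, so the recursion can be carried all the way back. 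I would argue that at each step the solution set is the finite union over locations $q$ of the images under $(\Asumdc^\unstab\kdx)^{-1}$ (invertible by the eigenvalue condition in \ref{C:decouplable}) of the intersection of location $Q_q$ with the affine constraint set, that this union is nonempty by \ref{C:switchUnstable}, and that it is computable by finitely many polytope operations (brute-force enumeration over locations, or the reachability/preimage computational-geometry machinery of \cite{Rakovic2006}); any selection rule among the possibly several components is admissible because each yields a valid switching sequence. Once $\locvec\kdx$ is fixed for all $k$, the forward recursion (\ref{eq:forwardstable}) becomes an ordinary explicit affine time-varying recursion whose iterates exist by \ref{C:boundedSystem}–\ref{C:refdecay}, and the remainder of the argument is identical to Theorem \ref{thm:SIstable}. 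I would close by noting the caveat deferred to Section \ref{sec:practicalSI}: the set $\mathcal{X}^\unstab$ appearing in \ref{C:switchUnstable} must be large enough to contain all solution values that actually arise, which is why the nonemptiness hypothesis is phrased over $\mathcal{X}^\unstab$ rather than over an a priori unknown solution set.
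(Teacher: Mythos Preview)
Your proposal is correct and follows essentially the same approach as the paper's proof: you reduce to the argument of Theorem \ref{thm:SIstable} using the first conjunct of \ref{C:switchUnstable} in place of \ref{C:switchStable}, and then argue existence from the second conjunct (nonempty predecessor sets), the \PWA{} structure enabling the preimage algorithms of \cite{Rakovic2006}, and explicitness of the remaining recursions once the switching signal is fixed. The paper's version is terser, but the logical skeleton is identical.
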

\begin{proof}
\begin{sloppypar}
The prescribed formula represents a solution to the stable inversion problem for the same reasons as Theorem \ref{thm:SIstable}, but with the first proposition of \ref{C:switchUnstable}---i.e. $PV^{-1}=
\left[
0_{n_P\times n_\stab} 
,\,\,
\decoup{P}^\unstab
\right]
$---used in place of \ref{C:switchStable}. The solution is guaranteed to exist because
\end{sloppypar}
\begin{itemize}[leftmargin=*]
\item
the second proposition of \ref{C:switchUnstable} guarantees the solution set of (\ref{eq:backwardstable}) is non-empty,
\item
the \PWA{} nature of the original inverse system (\ref{eq:inverse}) enables application of existing algorithms guaranteed to find $\text{Pre}(\{\statedc^\unstab\kdxplus{1}\})$ and thus solve (\ref{eq:backwardstable})
\cite{Rakovic2006},
and
\item
with solutions to (\ref{eq:backwardstable}) chosen $\forall k$, the remaining equations are explicit with bounded outputs for the same reasons as in Theorem \ref{thm:SIstable}.
\end{itemize}
\end{proof}

Note that while
\cite{Rakovic2006} focuses on \PWA{} systems with time-invariant components, 
only the one-step predecessor set need be computed at each time step. Thus, because a time-varying system is indistinguishable from a time-invariant system over a single time step, the algorithms of \cite{Rakovic2006} are still applicable.

\subsection{Practical Considerations}
\label{sec:practicalSI}

The most immediate issue with Theorems \ref{thm:SIstable} and \ref{thm:SIunstable} is that, while they provide an exact solution to the stable inversion problem, their procedures cannot be implemented because of the infinite nature of the time series involved.
This issue applies to past works on stable inversion as well, and the same means of addressing the issue is taken here.
Namely, an approximate solution is 
obtained
by prescribing a finite reference $\refr\kdxplus{\dmu}$ for $k\in\llbracket 0,\,\,N-\dmu \rrbracket$ and strictly enforcing the boundary conditions on only the 
initial/terminal states of the stable/unstable mode evolution. In other words, $\state^\stab_{0}=0$ and $\state^\unstab_{N-\dmu}=0$ but $\state^\stab_{N-\dmu}$ and $\state^\unstab_{0}$ may be nonzero.

In
general the closer $\state^\stab\kdx$ and $\state^\unstab\kdx$ come to decaying to zero by $k=N-\dmu$ and $k=0$, respectively, the higher quality the approximation of the $u\kdx$ time series will be. In other words, the closer $u\kdx$ comes to returning $y\kdxplus{\dmu}=\refr\kdxplus{\dmu}$ when input to the original system from which the inverse (\ref{eq:inverse}) was derived.
To achieve this high quality approximation, one may specify $\refr\kdxplus{\dmu}$ to begin and end with a number of zero elements to allow space for the $u\kdx$ time series to contain the pre- and post-actuation typically necessary for the control of NMP systems.
The number of zero elements required to achieve a satisfactorily low error is case dependent. One typical heuristic is to ensure that the durations of the leading and trailing zeros are approximately equal to the system settling time.

In addition to the practical need for finite references with leading and trailing zeros, the case of unstable-mode-dependent switching warrants special attention regarding implementation.

First, consider methods to solve
the implicit equation
(\ref{eq:backwardstable}) at each time step. Any method will consist of two parts: identifying a set of valid solutions and then choosing one of them.
Choosing a solution 
can be formalized as the minimization of some cost function, $\norm{\state^\unstab\kdxplus{1}-\state^\unstab\kdx}$ being a straightforward and universally applicable option. If the inverse system (\ref{eq:inverse}) has exclusively unstable modes, input-based costs such as $\norm{u\kdx}$ and $\norm{u\kdxplus{1}-u\kdx}$ may also be used.
Note that any such optimization is combinatorial, i.e. 
the decision variable can only take on values from a particular finite set.
For the problem considered here, the cardinality of this 
set 
is at most
the number of locations $\regionQuant$ in the system.
This is because each location contains at most one solution to (\ref{eq:backwardstable}) due to the eigenvalue condition in \ref{C:decouplable} making $(\Asumdc^\unstab\kdx)^{-1}$ full rank and thus one-to-one.

Having an upper bound of one solution per location leads to a direct method for deriving the set of valid solutions to (\ref{eq:backwardstable}). For each location $\Qset_q\in\Qset$, compute
\begin{equation}
    \left(\decoup{A}^\unstab_{q,k}\right)^{-1}
    \left(\statedc^\unstab\kdxplus{1}-B_{q,k}^\unstab \refr\kdxplus{\dmu}-F_{q,k}^\unstab\right)
\end{equation}
and check whether the result lies in $\Qset_q$. If so, the result is a solution to (\ref{eq:backwardstable}). Naturally, logic relating to solution selection criteria may be incorporated to reduce computational cost, e.g. checking locations in order of proximity to the current location to avoid checking all locations in the case that the solution selection cost function is something like $\norm{\state^\unstab\kdxplus{1}-\state^\unstab\kdx}$.
Alternatively, $\text{Pre}\left(\left\{ \statedc^\unstab\kdxplus{1} \right\}\right)$ may be derived whole using the computational-geometry-supported algorithms of 
\cite{Rakovic2006}.
As noted in 
\cite{Rakovic2006},
the algorithm of least cost may be case-dependent.

Finally, consider verification of the existence of a solution to the stable inversion problem with unstable-mode-based switching.
The verification method recommended here is to first verify \ref{C:boundedSystem}-\ref{C:decouplable} and the first proposition of \ref{C:switchUnstable} directly, then run the procedure given in Theorem \ref{thm:SIunstable} for finding a solution.
If \ref{C:boundedSystem}-\ref{C:decouplable} and the first proposition of \ref{C:switchUnstable} have been verified, then the procedure is guaranteed to find a solution to (\ref{eq:backwardstable}) if a solution exists. Equivalently, failure to find a solution implies no solution exists.

This method is recommended over directly attempting to verify the second proposition of \ref{C:switchUnstable} because this second proposition may be conservative, difficult to verify, and yield very limited computational savings over the recommended method.
The conservativeness arises from the choice of set of possible $\statedc^\unstab\kdxplus{1}$ values, $\mathcal{X}^\unstab$.
It is unlikely 
for one
to possess knowledge of the 
solution
$\statedc^\unstab\kdxplus{1}$ values 
prior to actually solving the stable inversion problem, so $\mathcal{X}^\unstab$ may need to be set much larger than necessary to ensure it contains the solution trajectory.
This containment is necessary for truth of the proposition to imply verification of solution existence.
Conversely, the proposition may evaluate to false despite containing a true solution if $\mathcal{X}^\unstab$ also contains unreachable states.
Selection of $\mathcal{X}^\unstab$ may thus be a delicate, challenging task.

The expectation of high computational cost arises from the subtle discrepancy between the capabilities of established \ac{PWA} system verification methods and the second proposition of \ref{C:switchUnstable}.
Multiple methods exist for verifying the reachability/controllability of a \PWA{} system to a target set $\mathcal{X}^\unstab$
\cite{Rakovic2006,Bemporad2000}.
However, these methods typically verify whether 
at least one
element of the target set is reachable, whereas 
\ref{C:switchUnstable} requires
verification that every element of the target set is reachable.
In other words, the computational savings  
one might expect from
existing reachability/controllability verification schemes may not be available.

In short, the second proposition of \ref{C:switchUnstable} is useful for specifying a condition under which a solution is guaranteed to exist, and thus for the derivation and proof of Theorem \ref{thm:SIunstable}. But it is not recommended as a tool for existence verification.
This is not a significant loss, however, because the procedure given in Theorem \ref{thm:SIunstable} for finding a solution is itself a valid verification tool.

\section{Validation: Application to ILC}
\label{sec:val}
This section uses the stable inversion theory of Section \ref{sec:stabinv}, and thus also the conventional inversion theory of Section \ref{sec:exactinverse}, 
to simulate the application of \ILILC{} to a \PWA{} system with \NMP{} component dynamics: an inkjet printhead positioning system.
This system uses feedback and feedforward control simultaneously.
In addition to \ac{ILILC}, for benchmarking purposes a number of other controllers are applied to the system:
\begin{itemize}
    \item
    feedback-only control (i.e. zero feedforward input),
    \item
    learning-free \ac{PWA} stable inversion (i.e. stable inversion without \ac{ILILC}),
    \item
    gradient \ac{ILC}, and
    \item
    P-type \ac{ILC}.
\end{itemize}
P-type \ac{ILC} is used as a benchmark in addition to the gradient \ac{ILC} benchmark introduced in Chapter \ref{ch:4} because P-type \ac{ILC} is among the most common forms of \ac{ILC} used in industry, is considered by many to be a form of ``model-free'' \ac{ILC}, and like gradient \ac{ILC} does not necessarily cause instability when applied to systems with \ac{NMP} dynamics. Details for the formulation of all \ac{ILC} schemes are given in Section \ref{sec:ILCSchemes}.

These simulations are
subject to a variety of model errors and other disturbances to validate
the stable-inversion-supported learning controller's practicality.
In other words, the controller is synthesized from a ``control model'' and applied to a ``truth model'' representing a physical system. The control model features mismatches in parameter values, sample rate, model order, and relative degree.

Additionally, the truth model is subject to copious process noise and measurement noise. While the physical system has virtually no output-measurable noise, the injection here is done as a preliminary test to ensure that noise does not corrupt the learning process beyond the remedial power of conventional filtering.

\subsection{Example System}
The truth model is based on the physical desktop inkjet printhead positioning testbed at the Eindhoven University of Technology, pictured in Figure \ref{fig:photo}.
The input to this system is an applied motor voltage, $\controltotal\kdx$, and the output is the printhead position along a \SI{0.3}{\meter} guide rail, $y^{P}\kdx$, measured by a linear optical encoder with a resolution of \SI{50}{\micro\meter}.
The applied motor voltage $\controltotal\kdx$ is the sum of a feedback component, $\controlfb^{C}\kdx$, and feedforward component, $u\kdx$. Finally, to add additional disturbance to the simulation, Gaussian white process noise $\omega_{\controltotal}$ (zero mean, standard deviation \SI{0.03}{\volt}) and measurement noise $\omega_{y}$ (zero mean, standard deviation \SI{50}{\micro\meter}) are added at the input and output of the plant, respectively. This ultimately results in the block diagram of Figure \ref{fig:blockA3}.

\begin{figure}
    \centering
    \includegraphics[scale=0.28]{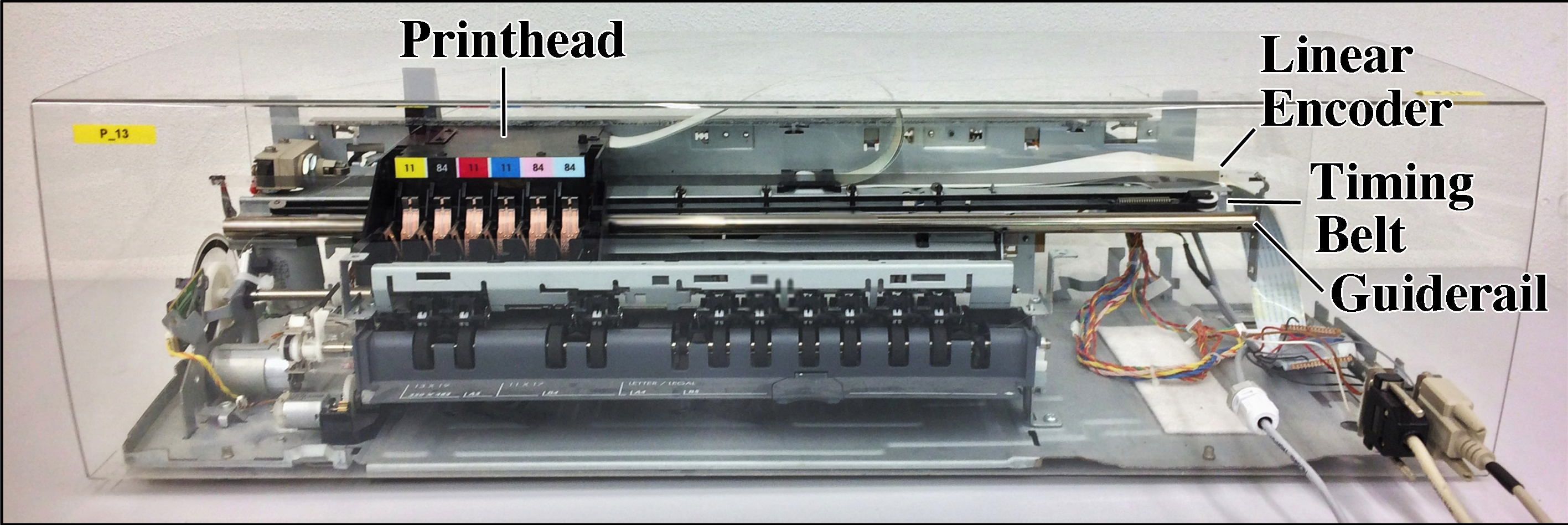}
    \caption{
    Photo of desktop inkjet printer with the case removed.
    The motor actuates the printhead motion along a guide rail via a timing belt, and the motion is measured by a linear optical encoder with resolution of \SI{50}{\micro\meter} (about 600 dots per inch).}
    \label{fig:photo}
\end{figure}

\begin{figure}
    \centering
    \includegraphics[scale=1.1]{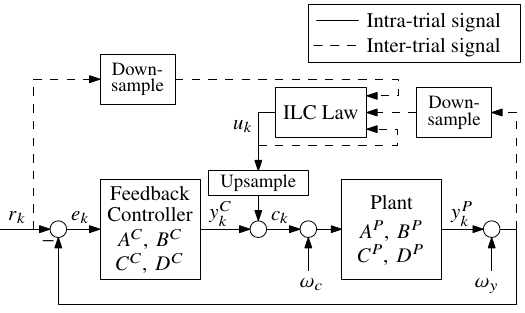}
    \caption{System block diagram. The plant block uses the truth model of the printer system obtained by experimental system identification while the ILC law is synthesized using the control model. The downsample and upsample blocks account for the difference in sample period between the ILC law and the truth model.}
    \label{fig:blockA3}
\end{figure}

System identification of the printer yields a discrete-time LTI 
model,
which is used 
as
the truth model.
Truncation-based model order reduction by 1 (MATLAB function \texttt{balred}), zero-order-hold-based sample period reduction by a factor of 2 (MATLAB function \texttt{d2d}), and random perturbation of model parameters results yields a new LTI model, which is used 
as the control model 
and can be represented by the state space system ($\Acm^{P}$, $\Bcm^{P}$, $\Ccm^{P}$, $\Dcm^{P}$) with state vector $\xcm^{P}\in\real^{n_{\xcm^{P}}}$.

To account for the change in sample period, truth model output signals are decimated by a factor of 2 before being input to the ILC law and ILC law output signals are upsampled by a factor of 2 with a zero order hold before being applied to the truth model.
Parameters for the truth and control models are given in terms of pole, zero, and gain values in Table \ref{table:modelParams}.
A Bode plot of the experimental data, truth model, and control model are given in Figure \ref{fig:bode}.

\begin{figure}
    \centering
    \includegraphics[scale=1.2]{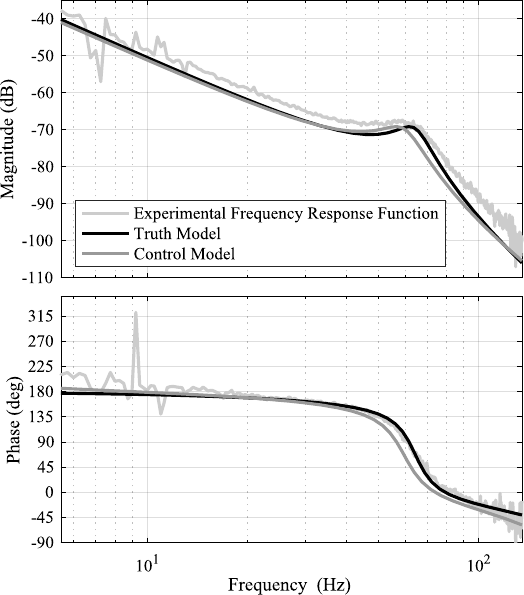}
    \caption{Bode plot of the experimental plant data, truth model of the plant, and control model of the plant.}
    \label{fig:bode}
\end{figure}

The feedback controller is of identical structure and tuning for the truth and control models, but has different parameter values due to the difference in sample period between the two models. In either case, the feedback controller is composed of a second order lowpass filter given by the discrete-time transfer function
\begin{equation}
\label{eq:LP}
    C^{LP}(z) = \frac{b z(z+1)}{z^2+a_1z+a_2}
\end{equation}
in series with a hybrid Proportional-Derivative (PD) controller. The proportional gain $K_p$ is set to a high value when the reference-output error is greater than some magnitude threshold, and is set to a lower value otherwise:
\begin{align}
    C^{PD}(z) &= \frac{
    \left(K_p + \frac{K_d}{T_s}\right)z - \frac{K_d}{T_s}
    }{z}
    \\
    K_p &= \begin{cases}
    K_{p,1} & \left|e\kdxminus{1}\right| \leq e_{\text{switch}}
    \\
    K_{p,2} & \left|e\kdxminus{1}\right| > e_{\text{switch}}
    \end{cases}
\end{align}
where $T_s$ is the sample period in seconds and $K_d$ is the derivative gain.
Note that switching is the error of the previous time step rather than the current time step in order for switching to be state-based, and thus satisfy \ref{C:stateswitch}. Mathematically this switching is made state-based by augmenting the minimal state-space representation of $C^{PD}(z)C^{LP}(z)$ with an extra state that stores the error input to the lowpass filter. In other words, the feedback controller model is given by
\begin{align}
    \Acm^{C} &= \begin{bmatrix}
    0 & 1 & 0 \\
    -a_2 & -a_1 & 0 \\
    0 & 0 & 0
    \end{bmatrix}
    \qquad
    % &
    \Bcm^{C} = \begin{bmatrix}
    0 \\ 1 \\ 1
    \end{bmatrix}
    \\
    \Ccm^{C}&= 
    -b\begin{bmatrix}
    \frac{K_d(1+a_2)}{T_s} + K_pa_2
    &
    \frac{K_da_1}{T_s} + K_p(a_1-1)
    \end{bmatrix}
    \\
    \Dcm^{C}&=\begin{bmatrix} b\left(K_p+\frac{K_d}{T_s}\right)\end{bmatrix}
\end{align}
with state vector $\xcm^{C}\kdx\in\real^{n_{\xcm^{C}}}$ having its final element equal to $e\kdxminus{1}$.
Parameter values for the feedback controller are given in Table \ref{table:modelParams}. For both the truth model and the control model, the lowpass filter has a roll off frequency of \SI{40}{\hertz} and a damping 
ratio
of 0.7.

\begin{table}
    \centering
    \caption{Simulation Model Parameters}
    \label{table:modelParams}
    \setstretch{1.1}
    \begin{tabular}{|c||c|c|}
\hline
            & Truth Model         & Control Model       \\ \hhline{|=|=|=|}
Plant Poles & $0.88\pm 0.37 i$    & $0.67 \pm 0.61 i$   \\ \arrayrulecolor{lightgray}\hline\arrayrulecolor{black} 
            & $1.00$              & $0.99$              \\ \arrayrulecolor{lightgray}\hline\arrayrulecolor{black}
            & $1.00$              & $1.00$              \\ \arrayrulecolor{lightgray}\hline\arrayrulecolor{black}
            & $0$                 & N/A                 \\ \hline
Plant Zeros & $-5.10$             & $33.10$             \\ \arrayrulecolor{lightgray}\hline\arrayrulecolor{black}
            & $-0.44$             & $-2.21$             \\ \arrayrulecolor{lightgray}\hline\arrayrulecolor{black}
            & $0.16$              & $0.16$              \\ \hline
Plant Gain  & \SI{2.42e-7}{}      & \SI{2.38e-7}{}      \\ \hline
$a_1$       & $-1.65$             & $-1.31$             \\ \hline
$a_2$       & $0.70$              & $0.50$              \\ \hline
$b$         & $0.027$             & $0.093$             \\ \hline
$K_d$       & $3$                 & $3$                 \\ \hline
$K_{p,1}$   & $40$                & $40$                \\ \hline
$K_{p,2}$   & $160$               & $160$               \\ \hline
$e_{\text{switch}}$ & \SI{2}{\milli\meter} & \SI{2}{\milli\meter} \\ \hline
$T_s$       & \SI{0.001}{\second} & \SI{0.002}{\second} \\ \hline
\end{tabular}
\end{table}

To perform stable inversion of the system dynamics from the feedforward input to the output,
a monolithic \PWA{} model of the form \sysDef{} is needed. This is given by
\begin{align}
\label{eq:monox}
    \statemod\kdxplus{1} &= \Asummod\kdx\statemod\kdx + \Bsummod\kdx u\kdx + \Fsummod\kdx
    \\
    \label{eq:monoy}
    \ymod\kdx &= \Csummod\kdx\statemod\kdx + \Dsummod\kdx u\kdx + \Gsummod\kdx
\end{align}
where
\begin{align}
    \Asummod\kdx &= 
    \begin{bmatrix}[1.375]
    \Acm^{P} - \Bcm^{P}\Dcm^{C}\Ccm^{P}
    &
    \Bcm^{P}\Ccm^{C}
    \\
    -\Bcm^{C}\Ccm^{P}
    &
    \Acm^{C}
    \end{bmatrix}
    \qquad
    \Bsummod\kdx = 
    \begin{bmatrix}[1.375]
    \Bcm^{P} \\ 0_{n_{\xcm^{C}} \times 1}
    \end{bmatrix}
    \qquad
    \Fsummod\kdx =
    \begin{bmatrix}[1.375]
    \Bcm^{P}\Dcm^{C} \\ \Bcm^{C}
    \end{bmatrix}
    \refr\kdx
    \\
    \Csummod\kdx &=
    \begin{bmatrix}
    \Ccm^{P} & 0_{1\times n_{\xcm^{C}}}
    \end{bmatrix}
    \qquad
    \Dsummod\kdx = 0
    \qquad
    \Gsummod\kdx=0
    \qquad
    \statemod\kdx = 
    \begin{bmatrix}[1.375]
    {\xcm^{P}\kdx} \\ {\xcm^{C}\kdx}
    \end{bmatrix}
    & & & &
\end{align}
This system is \NMP{}, as it has all the zeros of the plant model given in Table \ref{table:modelParams} (as well as additional zeros).

The monolithic control model has two locations based on the switching of $K_p$ in $\Ccm^{C}$ and $\Dcm^{C}$. Let the location $q=1$ correspond to low error with $K_{p,1}$ and $q=2$ correspond to high error and $K_{p,2}$. Then the switching parameters are
\begin{align}
    P &= \begin{bmatrix}
    0_{1\times n_{\xcm^{P}}+n_{\xcm^{C}}-1} & -1
    \\
    0_{1\times n_{\xcm^{P}}+n_{\xcm^{C}}-1} & 1
    \end{bmatrix}
    \qquad
    \offsetvec = \begin{bmatrix}
    -e_{\text{switch}} \\ -e_{\text{switch}}
    \end{bmatrix}
    \\
    \sigvecset_1 &= \left\{ \begin{bmatrix} 1\\1 \end{bmatrix} \right\}
    \qquad
    \sigvecset_2 = \left\{ \, \begin{bmatrix} 1\\0 \end{bmatrix},\,\begin{bmatrix}0\\1\end{bmatrix}\, \right\}
    \label{eq:deltaCellCell}
\end{align}
Note that $\locvec = [ 0, \,\, 0]^T$ is not reachable, and thus need not be included in $\sigvecset_2$.

This monolithic model is of global dynamical relative degree $\dmu=1$ and satisfies \ref{C:reachable}-\ref{C:locIndep}, enabling the use of Corollary \ref{corollary:mu1} for derivation of the conventional inverse.
The resultant inverse system satisfies
\ref{C:boundedSystem}-\ref{C:switchStable},
enabling the use of 
stable inversion
for the generation of stable
inverse state trajectories.
The decoupling similarity transform $V$ is derived by the MATLAB function \texttt{canon} applied to the dynamics of location 1.

Finally, the reference $r\kdx$ for the truth model to track is given in Figure \ref{fig:referenceA3}.

\begin{figure}
    \centering
    \includegraphics[scale=0.65]{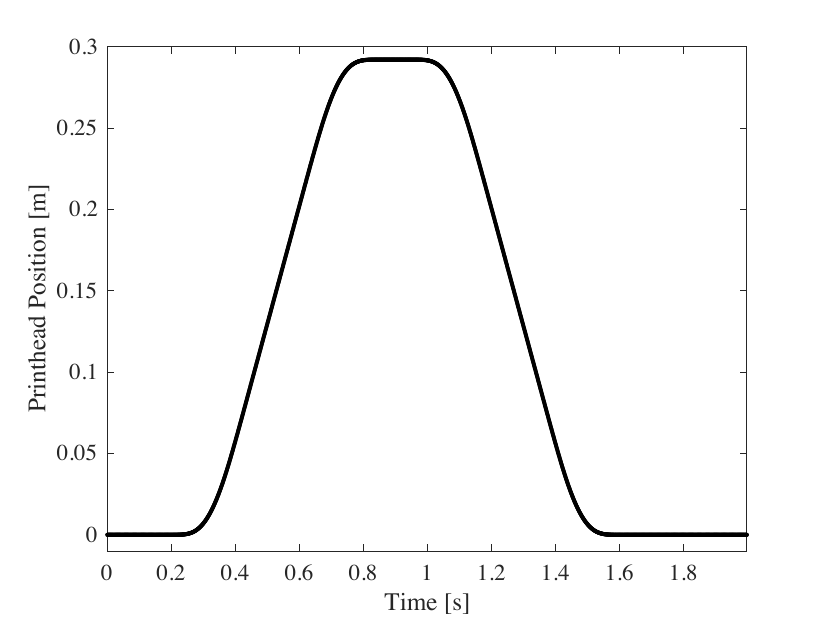}
    \caption{Reference. The reference is 1999 samples long for the truth model, and is downsampled to 1000 samples for the control model.}
    \label{fig:referenceA3}
\end{figure}

\subsection{ILC Schemes}
\label{sec:ILCSchemes}
\subsubsection{Invert-Linearize ILC}
In Chapter
\ref{ch:4},
\ILILC{} is presented as a means to derive the trial-varying learning matrix $\learnMat_\ldx$ of the classical ILC law
\begin{equation}
\eqcomment{\ref{eq:ILCclassic}}
    \uLift_{\ldx+1} = \uLift_{\ldx} + \learnMat_\ldx\left(\rLift - \yLift_\ldx\right)
\end{equation}
where $\ell\in\integer_{\geq 0}$ is the iteration index, $\uLift$, $\rLift$, $\yLift\in\real^{N-\dmu+1}$ are the lifted vectors (i.e. time series vectors)
\begin{align}
    \uLift_\ldx &= \begin{bmatrix}
    u_{\ldx,0} & u_{\ldx,1} & \cdots & u_{\ldx,N-\dmu}
    \end{bmatrix}^T
    \\
    \rLift_\ldx &= \begin{bmatrix}
    r_{\ldx,\dmu} & r_{\ldx,\dmu+1} & \cdots & r_{\ldx,N}
    \end{bmatrix}^T
    \\
    \yLift_\ldx &= \begin{bmatrix}
    y_{\ldx,\dmu} & y_{\ldx,\dmu+1} & \cdots & y_{\ldx,N}
    \end{bmatrix}^T
\end{align}
and $N\in\integer_{>\dmu}$ is the number of time steps in a trial of the output reference tracking task (the number of samples is $N+1$).

To derive $\learnMat_\ldx$, \ILILC{} calls for a lifted input-output model inverse $\ginvLift:\real^{N-\dmu+1}\rightarrow\real^{N-\dmu+1}$ taking in the \emph{measured} output $\yLift$ and outputting the control signal $\uLift$ \emph{predicted} to yield $\yLift$ when input to the true, unknown system.
Equivalently, $\ginvLift$ takes in the \emph{model} output $\yLiftmod$ and outputs the control signal $\uLift$ that yields $\yLiftmod$ when input to the known model \emph{approximating} the true system.

This $\ginvLift$ must be 
closed-form, such that the \ILILC{} learning matrix
\begin{equation}
\eqcomment{\ref{eq:gammanew}}
    \learnMat_\ldx = \jacobian{\ginvLift}{\yLiftmod}\left(\yLift_\ldx\right)
\end{equation}
can be derived via an automatic differentiation tool such as CasADi \cite{Andersson2018}. Here, $\jacobian{\ginvLift}{\yLiftmod}$ is the Jacobian (in numerator layout) of $\ginvLift$ with respect to $\yLiftmod$. Furthermore, as stated in Chapter 
\ref{ch:4},
when the known system model is \NMP, $\jacobian{\ginvLift}{\yLiftmod}$ is likely to be ill-conditioned unless $\ginvLift$ is synthesized using stable inverse trajectories. For a \PWA{} system satisfying \ref{C:boundedSystem}-\ref{C:switchStable}, Theorem \ref{thm:SIstable} provides a method for generating these closed-form state trajectories;
given trial-invariant conditions $\decoup{\state}_{\ldx,0}^{\stab}=0$ and $\decoup{\state}_{\ldx,N-\dmu}^{\unstab}=0$, each element of the time series $\state\kdx$ and $\locvec\kdx$ is a function only of $\yLiftmod$. Then $\ginvLift$ is given by
\begin{equation}
    \ginvLift = \begin{bmatrix}
    \Csummod_{0}\statemod_{0} + \Dsummod_{0}y_{\dmu} + \Gsummod_0
    \\
    \Csummod_{1}\statemod_{1} + \Dsummod_{1}y_{1+\dmu} + \Gsummod_1
    \\
    \vdots
    \\
    \Csummod_{N-\dmu}\statemod_{N-\dmu} + \Dsummod_{N-\dmu}y_{N} + \Gsummod_{N-\dmu}
    \end{bmatrix}
\end{equation}

Finally, for the particular example system studied here, the common
practice
(see, e.g. \cite{Bristow2006}) 
of adding filters to the ILC law is implemented. Two filters are used. First, a zero-phase-shift version of the feedback controller's lowpass filter is applied to the input and output of the ILC law. In lifted form, the feedback controller's filter is given by the lower diagonal, square, Toeplitz matrix $\mathscr{F}$ whose first column is the unit magnitude impulse response of the lowpass filter (\ref{eq:LP}) on $k\in\llbracket0,N-\dmu\rrbracket$. The zero phase shift is achieved by first filtering the signals forwards in time, and then backwards in time. The resultant lifted zero-phase-shift lowpass filter is
\begin{equation}
    \mathscr{Q} = \cancel{I}\mathscr{F}\cancel{I}\mathscr{F}
\end{equation}
where $\cancel{I}$ is a square matrix with ones on the antidiagonal and zeros elsewhere.
Second, to eliminate time series edge effects the first 35 and last 35 samples of the 1000 sample ILC law output are forced to zero. These edge effects may arise because the finite stable inversion trajectories have nearly zero---rather than zero---initial conditions for the unstable modes and similar for the terminal conditions of the stable modes. In lifted form this filter is given by the identity matrix with the first 35 and last 35 diagonal elements set to zero, notated as $\mathscr{E}$.

Thus, the \ILILC{} law used here is ultimately given by
\begin{equation}
    \uLift_{\ldx+1} = \mathscr{E}\mathscr{Q}\left(\uLift_\ldx + \learnMat_\ldx\left(\rLift-\mathscr{Q}\yLift_\ldx\right)\right)
    \label{eq:ILCfiltered}
\end{equation}
with $\uLift_0=0$ and $y_{\ldx,k}=y^{P}_{\ldx,k}+\omega_y$.

\subsubsection{Gradient and Lifted P-Type ILC}
Gradient \ac{ILC} and lifted P-type \ac{ILC} both also use the filtered \ac{ILC} law (\ref{eq:ILCfiltered}), but with different definitions of $\learnMat_\tdx$. 

Gradient \ac{ILC} uses
\begin{equation}
    \learnMat_\tdx = \gamma \jacobian{\gmod}{\uLift}(\uLift_\tdx)^T
\end{equation}
as in Chapter \ref{ch:4}, where $\gamma$ is the learning gain and $\gmod$ is the lifted system input-output model
\begin{equation}
    \yvecmod_\tdx = \gmod(\uLift_\tdx)
\end{equation}
which can be synthesized via (\ref{eq:gq})-(\ref{eqn:recursion}).

P-type \ac{ILC} is typically expressed without lifting as \cite{Ratcliffe2005}
\begin{equation}
u_{\tdx+1,k} = u_{\tdx,k} + \mathscr{P}\left(\refr_{\tdx,k+\mu} - y_{\tdx,k+\mu}\right)
\label{eq:Ptype}
\end{equation}
where $\mathscr{P}$ is a constant scalar learning gain and $\mu$ is the system relative degree (here the global dynamical relative degree $\dmu$ is used).
In lifted form, (\ref{eq:Ptype}) manifests as the trial invariant learning matrix
\begin{equation}
    \learnMat_\tdx = \mathscr{P}I_{N-\dmu+1 \times N-\dmu+1}
    \qquad
    \forall \tdx
\end{equation}
which can be plugged into the filtered learning law (\ref{eq:ILCfiltered}). At this time, there is no literature prescribing a stable synthesis procedure for the P-type \ac{ILC} of \ac{PWA} systems. In fact, the literature lacks application of either gradient \ac{ILC} or P-type \ac{ILC} to \ac{PWA} models.
However, because of its simplicity it is ubiquitous in industry and often synthesized heuristically, much like the PID feedback control for which it was named. Thus, P-type \ac{ILC} makes for an important benchmark.

Tuning of the learning gains for both benchmark methods is described in Section \ref{sec:methods5}.

\subsection{Methods}
\label{sec:methods5}
The presented stable inversion theory's ability to derive the inverse of a non-minimum phase \PWA{} model is tested by applying $\uLift=\ginvLift(\rLift)$ to the control model.
Then, to assess a more practical utility, 
five
independent simulations with the truth model are performed.
First,
a simulation is run with the feedforward input fixed to zero for all time, yielding a feedback-only simulation to serve as a baseline against which the four feedforward controllers can be compared.
Next,
learning-free stable inversion is applied to the truth model.
In other words, a simulation is run with $\uLift=\ginvLift(\rLift)$. 
The remaining three simulations each use one of the \ac{ILC} techniques described Section \ref{sec:ILCSchemes} (\ac{ILILC}, gradient \ac{ILC}, or P-type \ac{ILC}) with 9 trials (8 learning operations). For all \ac{ILC} simulations, $\uLift_0$ is the zero vector.

The primary metric for assessing control performance in a given trial is the normalized root mean square error (NRMSE) of the truth model, defined as
\begin{equation}
    \text{NRMSE} = \frac{\RMS_{k\in\llbracket 0,N\rrbracket}\left(
    e\kdx
    \right)}{\norm{\rLift}_\infty}
\end{equation}
The peak error magnitude $\max_k(|e\kdx |)$ is also considered.

For gradient \ac{ILC} and P-type \ac{ILC}, tuning of the learning gains is done to achieve the most aggressive stable controller possible. Specifically $\gamma$ and $\mathscr{P}$ are chosen as the largest whole number such that the NRMSE decreases monotonically over all trials or drops below the convergence tolerance, set to 0.0005 here. These numbers are found via a line search over many ILC simulations using the given truth and control model, varying only the learning gains. By this method, $\gamma=4255$ (dimensionless) and $\mathscr{P}=\SI{27}{\volt\per\meter}$. If the learning gains are increased above these values, the benchmark \ac{ILC} schemes begin to exhibit instability.

\subsection{Results and Discussion}
When $\uLift=\ginvLift(\rLift)$ is input to the noise-free control model \controlModel{} from which it was derived, the resulting NRMSE and peak error magnitude are 
\SI{1e-7}{}
and
\SI{49}{\nano\meter}. 
This is nearly zero compared to the other simulation errors, tabulated in Table  \ref{table:truthsim}, and what error there is can be attributed to the approximation error expected of finite-time stable inversion, as discussed in Section \ref{sec:practicalSI}.
This validates the fundamental theoretical contributions of this chapter: the stable inversion---and thus also conventional inversion---of \PWA{} systems.

To analyze the more practical application of these techniques to the truth model, Figure \ref{fig:NRMSEA3} plots the evolution of the 
\ac{ILC} schemes' NRMSEs
over the iteration process, and 
Figures \ref{fig:timeseriesA3} and \ref{fig:timeseriesA32} plot the input and error time series for the five simulations. Specifically, Figure \ref{fig:timeseriesA3} compares the learning-free simulations to \ac{ILILC} and Figure \ref{fig:timeseriesA32} compares \ac{ILILC} against the other benchmark learning techniques.
The NRMSE and peak error magnitude for these simulations are tabulated in Table \ref{table:truthsim}.

\begin{table}
\caption{NRMSE and Peak Error Magnitude of Truth Model Simulations}
    \label{table:truthsim}
    \centering
    \setstretch{1.2}
    \begin{tabular}{|c||c|c|}
\hline
                               & NRMSE         & Peak Error Magnitude    \\ \hhline{|=|=|=|}
Feedback-Only                  & \SI{5.6e-3}{} & \SI{3.9}{\milli\meter}  \\ \hline
Learning-Free Stable Inversion & \SI{5.4e-3}{} & \SI{2.8}{\milli\meter}  \\ \hline
P-type ILC - Final Trial       & \SI{3.1e-3}{} & \SI{2.1}{\milli\meter} \\ \hline
Gradient ILC - Final Trial     & \SI{1.2e-3}{} & \SI{1.1}{\milli\meter} \\ \hline
\ILILC{} - Final Trial            & \SI{3.6e-4}{} & \SI{0.34}{\milli\meter} \\ \hline
\end{tabular}
\end{table}

\begin{figure}
    \centering
    \includegraphics[scale=1.25]{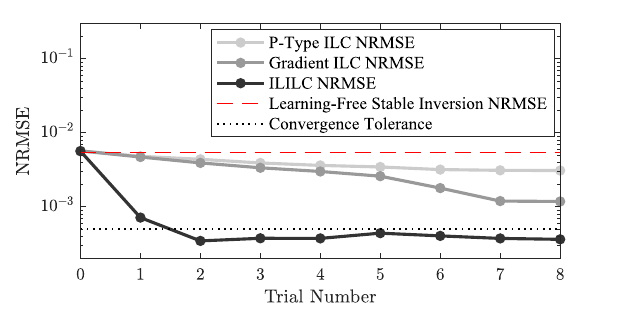}
    \caption{NRMSE of each ILC trial, illustrating convergence of \ILILC{} to a plateau determined by the noise injected to the system, and dramatically surpassing the convergence speed of both benchmark \ac{ILC} techniques. The NRMSE of the learning-free stable inversion simulation is also pictured. It is 4\% smaller than the feedback-only simulation (\ILILC{} trial 0), but is much larger than the performance achievable with learning.
    Because the noise injected in this chapter is of the same distribution and injection location as the noise in Chapter
    \ref{ch:4},
    the same convergence threshold can be used to approximate the minimum NRMSE achievable by \ILILC{}.
    }
    \label{fig:NRMSEA3}
\end{figure}

\begin{figure}
    \centering
    \includegraphics[scale=1.2]{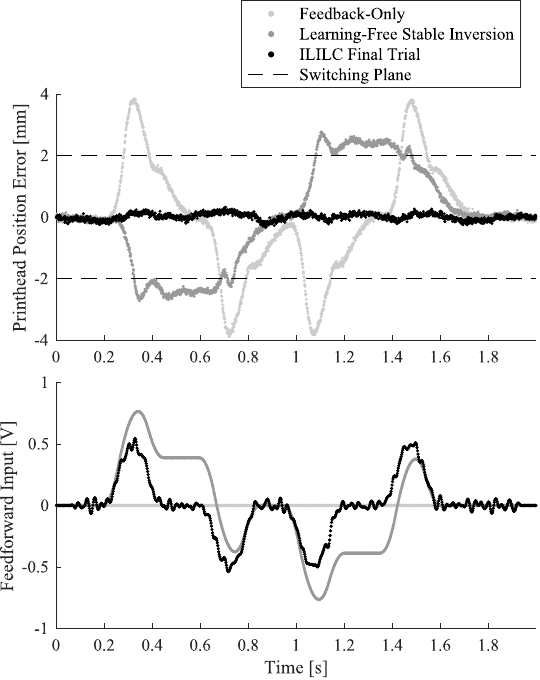}
    \caption{Error (top) and Input (bottom) time series data for the feedback-only simulation, learning-free stable inversion simulation, and the final trial of the \ILILC{} simulation. Both stable inversion and \ILILC{} perform as expected, but due to model error learning is required to reap the full benefit of feedforward control.}
    \label{fig:timeseriesA3}
\end{figure}

\begin{figure}
    \centering
    \includegraphics[scale=1.2]{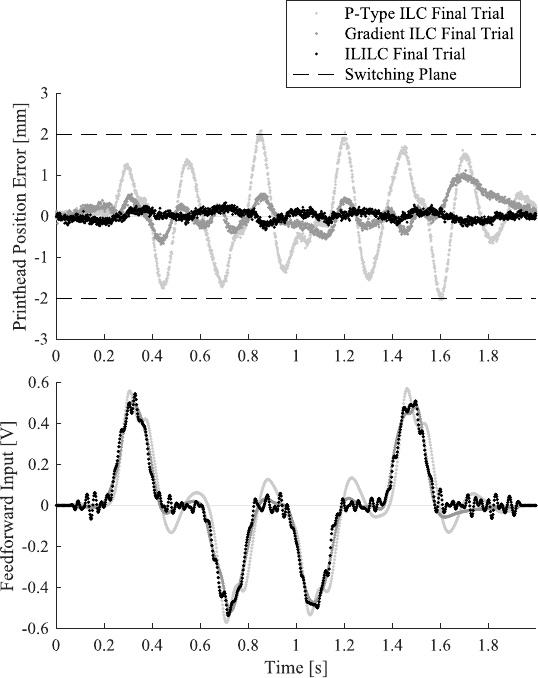}
    \caption{
    Error (top) and Input (bottom) time series data for the three \ac{ILC} simulations. Even the worst-performing \ac{ILC} technique---P-type \ac{ILC}---yields a reduction in maximum error magnitude when compared to the learning-free techniques of Figure \ref{fig:timeseriesA3}, but \ac{ILILC} clearly yields the lowest-error performance. 
    This superiority is in spite of \ac{ILILC} acquiring more high frequency content via learning than the other \ac{ILC} schemes, which appear less noisy but appear to contain higher amplitude, lower frequency oscillations that degrade performance.
    }
    \label{fig:timeseriesA32}
\end{figure}

Because the model error is relatively small, learning-free stable inversion does yield some improvement 
(4\%) 
over the feedback-only NRMSE. (Naturally, one expects that the NRMSE of learning-free stable inversion would grow if the model error increased). Learning-free stable inversion also reduces the peak error magnitude, which is a critical safety criterion in many applications, by 
28\%.
However, because of the model error that does exist, all of the \ac{ILC} schemes defeat learning-free stable inversion in both metrics. P-type \ac{ILC} shows the least improvement, but still yields a 40\% reduction in NRMSE from learning-free stable inversion. This improvement is dwarfed by that of \ac{ILILC}, however, which yields a 70\% NRMSE improvement from gradient \ac{ILC} (the next best technique), or equivalently, a 93\% improvement from learning-free stable inversion.

In addition to these quantitative metrics, inspection of Figure \ref{fig:timeseriesA32} reveals a qualitative comparison worth making between the \ac{ILC} schemes: that of noise acquisition. While even qualitatively, \ac{ILILC} clearly demonstrates the highest quality reference tracking, it also accumulates the most noise in its learned feedforward input. There may be applications in which this noise acquisition is unattractive. However, it must also be noted that gradient \ac{ILC} and P-type \ac{ILC} are not free from unwanted frequency content. P-type \ac{ILC} especially appears to develop mid-frequency oscillations that substantially degrade performance. Gradient \ac{ILC} is more subtle; its main source of error appears to be oscillations arising from overshoot of the feedforward input.
As a final note, it is unclear at this time whether the noise acquisition of the \ac{ILILC} simulation is a result of the Newton-like learning intrinsic to \ac{ILILC} or the stable inversion used to eliminate the inverse instability.

In all, this validation constitutes the first demonstration of ILC applied to any hybrid system with \ac{NMP} component dynamics. Because stable inversion is a required part of the highest performing ILC scheme used here, this validates the utility of this chapter's theoretical contributions for high performance control.

\section{Conclusion}
\label{sec:conc5}
This chapter has derived theory for the inversion of a class of \PWA{} systems.
This includes the implicit inverse formula for systems of any relative degree and the explicit formulas for systems with global dynamical relative degree (a concept introduced in this work) of 0, 1, or 2, along with the proof of sufficient conditions for the inverse of the original \PWA{} system to be explicit.
Additionally, for cases in which the inverse system is unstable, a stable inversion procedure is created, along with proof of the sufficient conditions for the procedure to be applicable.

The ability to analytically produce inverse system models for hybrid systems has multiple applications in controls. Demonstrated here is the newfound ability to apply ILC to \PWA{} systems with unstable inverses to achieve low error output reference tracking.

There are many avenues for future work. Of particular interest is the relaxation of the constraints on relative degree. For some hybrid systems, it may be desirable to have locations in which the input cannot affect the output, and the state is governed by natural dynamics alone. In such cases the global dynamical relative degree would be undefined (infinite), which is not considered here. There may also be more cases in which different locations feature different component relative degrees, which would violate \ref{C:muc}. Relaxing these constraints would dramatically expand the class of systems addressed. Extension to multi-input-multi-output systems and input-based switching would also be significant contributions.

%%% Down with Murphy's law
\addtocontents{toc}{\protect\pagebreak}
%%% :-)

\chapter{Conclusion}
\label{ch:6}
\section{Synthesis of Research Contributions}

Spurred by the long-term goal of achieving high performance droplet volume control in \ac{e-jet} printing, this dissertation makes substantial contributions to both the modeling of \ac{e-jet} printing and the control theory necessary to leverage those models.
At each stage, practical obstacles give rise to novel scientific and mathematical research.

First, while Chapter \ref{ch:2} presents significant new progress in the traditional physics-based modeling of meniscus electrohydrodynamics, the inability of traditional models to completely capture the ejection process from end to end motivates the development of hybrid \ac{e-jet} modeling frameworks.
Likewise, the inability to process the computer-vision-based jet measurements fast enough for real-time feedback control motivates the study of \ac{ILC} for this system.
Finally, the lack of \ac{ILC} theory for hybrid systems 
in the preexisting literature
serves as the impetus for the controls research making up 
Chapters \ref{ch:3}-\ref{ch:5}.

The first of these chapters lays the groundwork for the remaining hybrid systems and \ac{ILC} research by providing a closed-form \ac{PWD} system representation and integrating it with \ac{NILC}. Based on the original convergence analysis for \ac{NILC}, one might expect the performance of this integration to
be mostly uniform
across all \ac{PWD} models. 
However, while the lifted model derivation given in Chapter \ref{ch:3} is guaranteed to have a theoretically invertible Jacobian if the relative degree is constant over the trial, this does not account for the practical ability to compute the inverse. As found in Chapter \ref{ch:4}, this practical ability is compromised when the system model has an unstable inverse.
Such \ac{NMP} systems are not negligible edge cases. They arise when modeling many practical motion control devices---a key application space for \ac{ILC}---from piezoactuators \cite{Schitter2002} to DC motor and tachometer assemblies \cite{Awtar2004}.
Thus, to leave the issue of \ac{NMP} hybrid systems unaddressed would be to deliver more of a minefield than a control theory.

Chapters \ref{ch:4}-\ref{ch:5} sweep this minefield by introducing the new \ac{ILILC} framework, integrating it with stable inversion, and developing the first theory for the stable inversion of hybrid systems.
Ultimately this enables \ac{ILC} synthesis from a model of the physical device that originally sounded the alarm on \ac{NILC} of \ac{NMP} systems, an inkjet printhead positioning system.

\section{Broader Impacts}

Both the hybrid \ac{e-jet} modeling research and the control theory research in this dissertation have ramifications beyond the validations presented here, 
and 
beyond
the intended future use for droplet volume control. 
In fact,
the
broader impacts of 
hybrid \ac{e-jet} modeling
have already begun to manifest.

Within the world of \ac{e-jet} printing, the hybrid models provided by this dissertation serve as an end-to-end process model that is more easily interpreted and analyzed by human researchers than computational multiphysics models.
Because of this, the physics-driven hybrid model is currently being used for the benchmarking and development of such sophisticated partial-differential-equation-based simulations.

This dissertation's modeling contributions have also
gained attention in the broader \ac{AM} community, outside of \ac{e-jet}-specific research.
The promotion and validation of hybrid modeling for \ac{e-jet} printing helped pave the way for the proposal of a more general hybrid \ac{AM} modeling framework, which was used for modeling the multi-level workflow of \ac{FDM} \cite{Balta2019}.
There is also great promise for more thorough hybrid modeling of the physical dynamics of \ac{FDM}: research has been published identifying distinct dynamic regimes \ac{FDM} may occupy depending on the physical state of the printhead and filament 
\cite{Plott2018}.
Additionally, because \ac{AM} processes are mostly open loop at this time, they are compelling candidates for \ac{ILC}.
In other words, there is evidence of a rich field of systems whose modeling would benefit from following in the footsteps of this dissertation, and whose control may benefit directly.

To see the broader impacts of Chapters \ref{ch:3}-\ref{ch:5}, it may be beneficial to revisit the ``castle-of-building-blocks'' visualization from Figure \ref{fig:gardenBasic}. Such a revisitation is given in Figure \ref{fig:gardenBroad}, which illustrates a number of new classes of control 
systems
that may be enabled by the fundamental contributions of this work. Most obviously, where there are contributions to inversion and stable inversion theory, there is opportunity for feedforward control. While this dissertation focuses on \ac{ILC} because of the performance advantages it yields over learning-free methods, one should not discount the importance of ordinary feedforward control, for it can be applied in non-repetitive scenarios where the learning mechanisms of most \ac{ILC} schemes may falter.

% -----------------------------
% PRINT FIGURE VIEW
% -----------------------------
% % \global\pdfpageattr\expandafter{\the\pdfpageattr/Rotate 90}
% \begin{sidewaysfigure}
%     \centering
%     \includegraphics{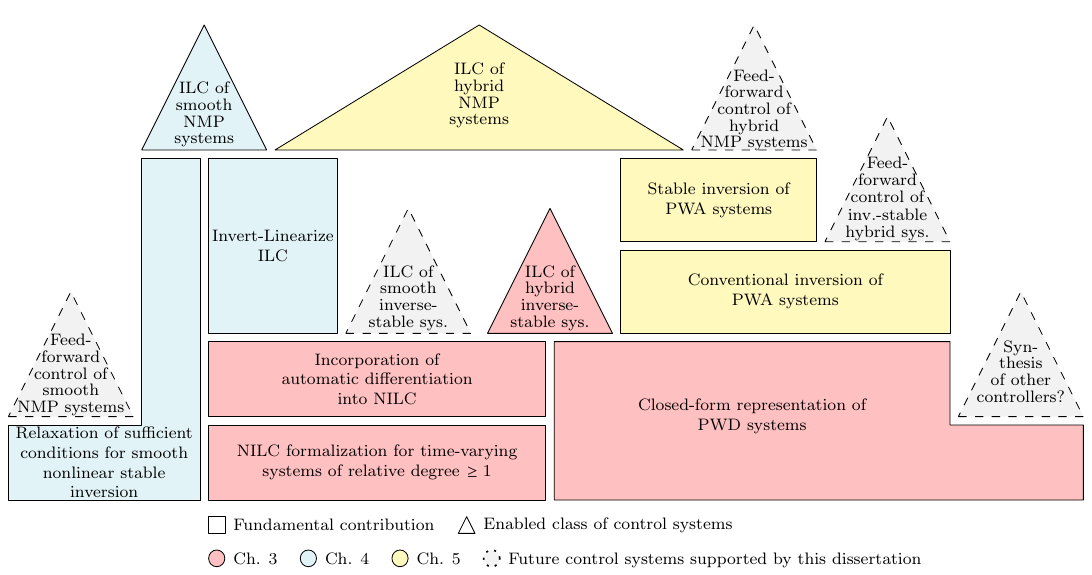}
%     \caption{Castle of Control Contributions, revisited for visualization of
    % this dissertation's potential broader impacts.
    % These may take the form of other new classes of control systems leveraging the theoretical contributions presented here.}
%     \label{fig:gardenBroad}
% \end{sidewaysfigure}
% \pagebreak
% % \global\pdfpageattr\expandafter{\the\pdfpageattr/Rotate 0}

% -----------------------------
% COMPUTER FIGURE VIEW
% -----------------------------
\begin{landscape}
\begin{figure}
    \centering
    \includegraphics[scale=1.2]{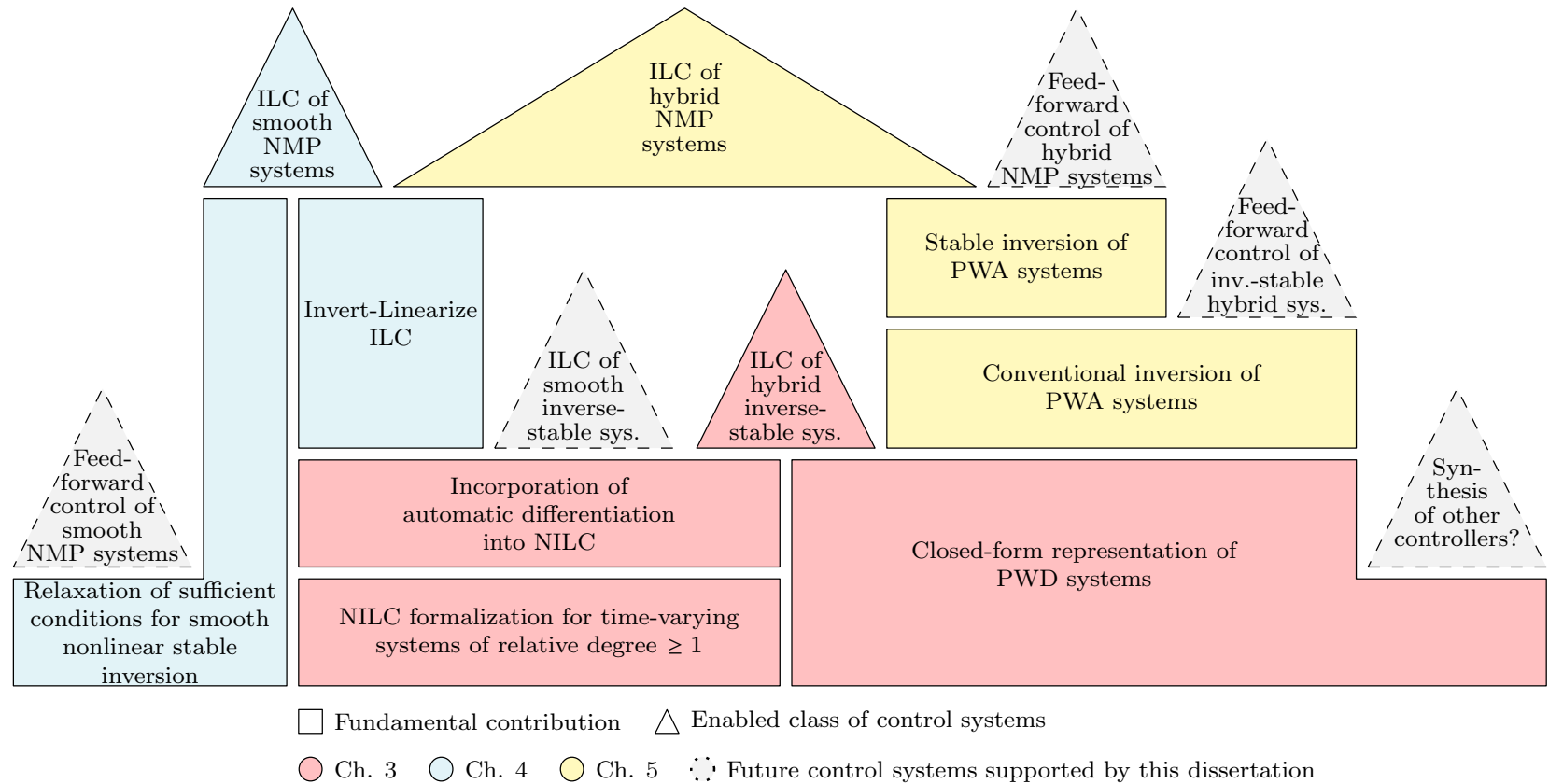}
    \caption{Castle of Control Contributions, revisited for visualization of
    this dissertation's potential broader impacts.
    These may take the form of other new classes of control systems leveraging the theoretical contributions presented here.}
    \label{fig:gardenBroad}
\end{figure}
\end{landscape}
\pagebreak

The ability to generate feedforward control signals can also be useful in scenarios where feedback control is a primary focus. A concrete example is given by \cite{VandeWouw2008}, where a combined feedback/feedforward control scheme for \ac{PWA} system reference tracking is derived, but is limited based on the availability of the feedforward signal. This dissertation alleviates that limitation.

Finally, a further impact on feedback control may be possible with the closed-form \ac{PWD} representation. The new ability to holistically differentiate a hybrid state transition function or output function may facilitate the synthesis of a diverse range of controllers, e.g. feedback linearization.

\section{Future Directions}
\label{sec:futurework}

Clearly, these potential broader impacts themselves constitute a large arena for future work. In particular, investigations into the hybrid modeling of other \ac{AM} technologies and the use of the closed-form \ac{PWD} system representation for feedback control synthesis may be valuable contributions.
Of course, further progress towards the original long-term objective of \ac{e-jet} droplet volume control is of primary interest.

The three main challenges to overcome here are the need for point-to-point \ac{ILC}, managing the trade-off between time-based location transitions and model stability, and encoding the limits of the safe printing region into the model.

Clearly, the droplet volume modeling of the \ac{PWA} \ac{e-jet} model (the second proposed model, Section \ref{sec:controlModeling}) is necessary to achieve droplet volume control. However, it is undesirable to require reference specification for the entire droplet volume time series; only the final droplet volume value matters. \ac{ILC} with reference specification for only a subset of the trial time series exists and is called point-to-point \ac{ILC} \cite{Freeman2013a}, but has not been combined with either \ac{NILC} or hybrid models. This leaves a gap between the prior art and the needs of \ac{e-jet} printing.

Additionally, while the \ac{PWA} \ac{e-jet} model introduces droplet volume modeling, it removes state-based transitions and replaces them with time-based transitions. Time-based transitions cannot be altered by \ac{ILC}, but the time of transition will certainly change in practice if the input voltage changes in magnitude from trial to trial. This makes state-based switching desirable. To achieve state-based transitions, dynamical modeling of the meniscus position is necessary. This presents an issue for \ac{ILC} because a linear model fit to meniscus position data during the approach location is likely to be unstable (this may be deduced from Figure \ref{fig:timeseries01}), and asymptotic model stability is prerequisite to \ac{ILC} in general \cite{Bristow2006}.

Constrained system identification or extension of the physics-focused \ac{e-jet} model's build-up or jetting locations may enable stable modeling of the meniscus position.
However, this is unlikely to completely solve the \ac{e-jet} \ac{ILC} problem.
Stable linear modeling during jetting implies that the meniscus position may be completely controllable during jetting (potentially with all control actions experiencing the same time delay modeled for the physics-focused model's jetting location). There is no evidence to suggest this degree of controllability in practice.
Basing \ac{ILC} on this assumption may have undesirable results. For example, if a droplet volume is desired to be smaller than it was for trial $\ldx$, \ac{ILC} may request a large negative voltage during jetting to arrest the jet or remove material from the substrate. This may be possible according to the \ac{LTI} component models, but may not be possible in reality, and attempts to do so may take the physical system outside the regime in which the model is applicable. Such risks lead to the final gap identified for the \ac{ILC} of \ac{e-jet} printing: encoding of the safe input range.

In other words,
a major area for future work in \ac{e-jet} modeling is the prediction and encoding of the boundaries of the subcritical regime. This is both a performance and safety issue. Attempts to project subcritical regime behavior beyond the regime limits may result in failure to eject, misplaced droplets due to tilted ejection angles, or destruction of the nozzle via flooding or arcing.

Beyond \ac{e-jet} printing, there
are also 
exciting control theory developments to be built directly off the contributions of this dissertation.
Two meaningful areas for future work are identified. First is the investigation of compatibility between the closed-form \ac{PWD} representation and forms of \ac{ILC} not using lifted models. While \ac{ILILC}'s sufficient conditions for convergence are very broad, in some cases it may be desirable to prioritize computational cost. In such cases the large matrix operations in lifted \ac{ILC} may be a disadvantage. Thus, the use of the closed-form \ac{PWD} representation to synthesize filter-based \ac{ILC}, such as that of \cite{Strijbosch2020}, 
may be valuable.

Second is the relaxation of the assumptions under which an \ac{NMP} \ac{PWA} system may be controlled via \ac{ILILC}. Currently, these assumptions are those of \ac{PWA} stable inversion, but \ac{ILILC} can admit other inverse system approximations as well. Thus, assumption relaxation efforts could focus on improving \ac{PWA} stable inversion or introducing a new stable inverse approximation method.

Finally, it must be remembered that 
the research and new engineering tools provided here were developed 
in large part as a
response to the \emph{unexpected} challenges that leapt up from physical systems. Surely nature's surprises are not exhausted.

\appendix

\newcommand{\appATitle}{
Neglect of 
Inverse Instability
by Non-NILC Prior Art
}
\addcontentsline{toc}{chapter}{Appendix A: \appATitle}
\chapter{
\appATitle
}

This appendix demonstrates that the sufficient conditions for convergence proposed by past works \cite{Jang1994,Saab1995,Wang1998,Sun2003} on ILC for discrete-time nonlinear systems are in actuality not sufficient for some cases of systems having unstable inverses.
This is done by running model-error-free ILC simulations that are guaranteed to converge by the past works, and observing them to diverge
instead.

Each of \cite{Jang1994,Saab1995,Wang1998,Sun2003} proposes sufficient conditions for the 
convergence 
$\lim_{\tdx\rightarrow\infty}\evec_\tdx=0_{N-\mu+1}$
of a particular ILC scheme applied to a particular class of nonlinear dynamics.
All of these classes of nonlinear dynamics are supersets of the SISO LTI dynamics
\begin{IEEEeqnarray}{RL}
\eqlabel{eq:LTI}
\IEEEyesnumber
\IEEEyessubnumber*
    x_\tdx(k+1) &= Ax_\tdx(k) + Bu_\tdx(k)
    \\
    y_\tdx(k) &= Cx_\tdx(k)
\end{IEEEeqnarray}
with relative degree $\mu=1$, i.e. $CB\neq 0$. Additionally, assume (\ref{eq:LTI}) is stable and $x_\tdx(0)$ is such that $y_\tdx(0)=r_\tdx(0)$ $\forall\tdx$.
Given a system of this 
structure, the ILC schemes and 
convergence conditions of 
the
past work
reduce to the following.

From \cite{Jang1994} the learning law is
\begin{equation}
    u_{\tdx+1}(k) = u_\tdx(k) + L_\tdx(k)\left(\gamma_1e_\tdx(k+1) + \gamma_0e_\tdx(k) \right)
    \label{eq:JangLearn}
\end{equation}
where $L\in\real$ is a potentially time-varying and trial-varying part of the learning gain and $\gamma_1$, $\gamma_0\in\real$ are trial-invariant, time-invariant learning gains with $\gamma_1\neq 0$. The learning laws of \cite{Saab1995,Wang1998,Sun2003} are special cases of (\ref{eq:JangLearn}): \cite{Saab1995} sets $\gamma_1=1$, $\gamma_0=-1$, \cite{Wang1998} sets $\gamma_1=1$, $\gamma_0=0$, and \cite{Sun2003} sets $\gamma_1=1$ and leaves $\gamma_0$ free.

Each 
work presents a different variation of convergence analysis, but all propose 
a
sufficient condition 
of the form
\begin{enumerate}[label=(CA.\arabic*),leftmargin=*]
    \item 
    $
    |
    1-L_\tdx(k)\gamma_1CB
    |
    <1$ $\forall \,k,\,\tdx$ .
    \label{CJang}
\suspend{conditions03}
\end{enumerate}
In \cite{Jang1994,Wang1998,Sun2003} \ref{CJang} is used exactly, while 
the convergence analysis in \cite{Saab1995} 
implies
the additional sufficient condition
\begin{enumerate}[label=(CA.\arabic*),leftmargin=*]
\resume{conditions03}
\item
$\norm{A}>1$
\label{CSaab}
\end{enumerate}
where any consistent norm may be chosen for $\norm{\cdot}$.

Consider the example system and learning gain
\begin{equation}
    \begin{aligned}
    A &= \begin{bmatrix}
    -0.3 & -0.79 & 0.53
    \\
     0   &  0.5  & 1
     \\
     0   & -0.36 & 0.5
    \end{bmatrix}
    \qquad
    &B &= \begin{bmatrix}
    0 \\ 0 \\ 1.34
    \end{bmatrix}
    \\
    C &= \begin{bmatrix}
    0.7 & 1.1 & -0.74
    \end{bmatrix}
    \qquad 
    &x_\tdx(0)&=0 \,\,\forall\tdx
    \end{aligned}
    \label{eq:LTIex}
\end{equation}
\begin{equation}
    L_\tdx(k)=0.5(CB)^{-1} \quad \forall \, k,\,\tdx
    \label{eq:LTILex}
\end{equation}
with the reference given in Figure \ref{fig:ref}.
This system has an unstable inverse.

The plant (\ref{eq:LTIex}) satisfies \ref{CSaab}, and with (\ref{eq:LTILex}) it satisfies \ref{CJang} for $\gamma_1=1$. Thus, according to \cite{Jang1994,Saab1995,Wang1998,Sun2003} the ILC scheme (\ref{eq:JangLearn}) is guaranteed to 
yield tracking error convergence
in a model-error-free simulation.
However, Figure \ref{fig:neglect} shows that 
the tracking error
diverges
under (\ref{eq:JangLearn}),
meaning that satisfaction of \ref{CJang} and \ref{CSaab} is not actually sufficient for the convergence of all systems (\ref{eq:LTI}) under the learning law (\ref{eq:JangLearn}).
This illustrates that the failure to account for phenomena arising from inverse instability is not unique to \AILC, but rather pervades the literature on ILC with discrete-time nonlinear systems.

\begin{figure}
    \centering
    \includegraphics[scale=0.9,trim={0.14in 0in 0.3in 0.1in},clip]{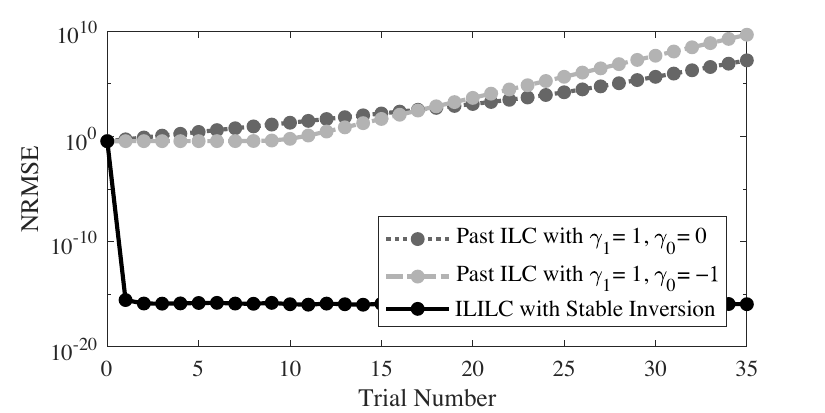}
    \caption{
    NRMSE versus trial number of past works' ILC schemes (\ref{eq:JangLearn}) applied with learning gain (\ref{eq:LTILex})  
    to the system (\ref{eq:LTIex}).
    These NRMSEs monotonically increase, confirming the inability of the past work on ILC with discrete-time nonlinear systems to account for unstable inverses.
    The NRMSE trajectory yielded by the stable-inversion-supported \ILILC{} scheme proposed by this article is also displayed.
    The convergence of this ILC scheme when applied to (\ref{eq:LTIex}) reiterates its ability to control such \nonmin{} phase systems.
    }
    \label{fig:neglect}
\end{figure}

% Works Cited
\bibliographystyle{IEEEtranMod}
\bibliography{spiegelBibDissertation}

\end{document}